\documentclass[aos]{imsart_MB}

\RequirePackage{amsthm,amsmath,amsfonts,amssymb}
\RequirePackage[authoryear]{natbib}
\RequirePackage[colorlinks,citecolor=blue,urlcolor=blue]{hyperref}
\RequirePackage{graphicx}

\usepackage{amsxtra,accents}
\usepackage{graphicx}
\usepackage{float}
\usepackage[utf8]{inputenc}
\usepackage[T1]{fontenc}
\usepackage{fancyhdr}
\usepackage{bbm}
\usepackage{color}
\usepackage[linesnumbered, ruled,vlined]{algorithm2e}
\usepackage{rotating}
\usepackage{colortbl} 
\usepackage{xcolor} 
\usepackage{xfrac}
\usepackage{enumerate}
\RequirePackage[OT1]{fontenc}
\RequirePackage[colorlinks]{hyperref} 
\usepackage{multirow}
\usepackage{amsmath}
\usepackage{mathtools}
\usepackage{mathrsfs}
\usepackage{paralist}
\usepackage{bigints}
\usepackage{lscape}
\usepackage{bm}
\usepackage{booktabs}
\usepackage{epstopdf}
\usepackage{accents} 
\usepackage{scalerel}
\usepackage{stackengine,wasysym}
\usepackage{booktabs}

\usepackage{xr}
\externaldocument{STRF_aos_app}

\DeclareMathAccent{\wtilde}{\mathord}{largesymbols}{"65}

\setcounter{MaxMatrixCols}{10}

\startlocaldefs
\theoremstyle{plain}

\newtheorem{theorem}{Theorem}[section]
\newtheorem{lemma}[theorem]{Lemma}
\newtheorem{proposition}[theorem]{Proposition}
\newtheorem{Corollary}[theorem]{Corollary}

\theoremstyle{remark}
\newtheorem{definition}{Definition}[section]

\newtheorem{ass}{Assumption}[section]
\newtheorem{rem}{Remark}[section]
\newtheorem{cond}{Condition}

\def\Y{{\boldsymbol Y}}

\def\h{{\boldsymbol h}}
\def\k{{\boldsymbol k}}
\def\K{{\boldsymbol K}}

\def\rbf{{\boldsymbol r}}
\def\m{{\boldsymbol m}}
\def\p{{\boldsymbol p}}
\def\s{{\boldsymbol s}}
\def\u{{\boldsymbol u}}
\def\x{{\boldsymbol x}}
\def\0{{\boldsymbol 0}}
\def\1{{\boldsymbol 1}}
\def\2{{\boldsymbol 2}}
\def\3{{\boldsymbol 3}}
\newcommand{\beq}{\begin{equation}}
\newcommand{\eeq}{\end{equation}}
\newcommand{\bea}{\begin{eqnarray}}
\newcommand{\eea}{\end{eqnarray}}
\newcommand{\bit}{\begin{itemize}}
\newcommand{\eit}{\end{itemize}}
\newcommand{\ben}{\begin{enumerate}} 
\newcommand{\een}{\end{enumerate}}
\newcommand{\bpm}{\begin{pmatrix}}
\newcommand{\epm}{\end{pmatrix}}
\newcommand{\bbm}{\begin{bmatrix}}
\newcommand{\ebm}{\end{bmatrix}}

\renewcommand{\l}{\left}
\renewcommand{\r}{\right}

\newcommand{\E}[0]{\textrm{E}}

\newcommand{\Cov}[0]{\textrm{Cov}}

\newcommand{\nn}{\nonumber}
\newcommand{\wh}{\widehat}

\newcommand{\e}{\boldsymbol{e}}
\newcommand{\sbf}{\boldsymbol{s}}
\newcommand{\hbf}{\boldsymbol{h}}
\newcommand{\kbf}{\boldsymbol{\kappa}}

\newcommand{\Sgmbf}{\boldsymbol{\Sigma}_{n}^{x}}
\newcommand{\pbf}{\boldsymbol{p}}
\newcommand{\mbf}{\mathbf}

\newcommand{\spn}{\overline{\text{span}}}
\newcommand{\A}{\boldsymbol{A}}
\newcommand{\Pbf}{\boldsymbol{P}}

\newcommand{\bfpsi}{\boldsymbol{\psi}}
\newcommand{\bfxi}{\boldsymbol{\xi}}
\newcommand{\Epsbf}{\boldsymbol{\epsilon}}

\newcommand{\bchi}{\boldsymbol{\chi}}

\newcommand{\bth}{\boldsymbol{\theta}}

\newcommand{\Sigmabf}{\boldsymbol{\Sigma}}

\newcommand{\thbf}{\boldsymbol{\theta}}
\newcommand{\Thbf}{\boldsymbol{\Theta}}
\newcommand{\vsbf}{\boldsymbol{\varsigma}}
\newcommand{\taubf}{\boldsymbol{\tau}}

\newcommand{\n}{^{(n)}}
\def\pr{^{\prime}}


\endlocaldefs

\begin{document}

\begin{frontmatter}
\title{General spatio-temporal factor models  for high-dimensional random fields on a lattice}
\runtitle{General spatio-temporal factor models}

\begin{aug}
\author[A]{\fnms{Matteo}~\snm{Barigozzi}\ead[label=e1]{matteo.barigozzi@unibo.it}},
\author[B]{\fnms{Davide}~\snm{La Vecchia}\ead[label=e2]{Davide.LaVecchia@unige.ch}}
\and
\author[C]{\fnms{Hang}~\snm{Liu}\ead[label=e3]{hliu01@ustc.edu.cn}}
\address[A]{Department of Economics,
University of Bologna\printead[presep={,\ }]{e1}}

\address[B]{
Geneva School of Economics and Management, University of Geneva\printead[presep={,\ }]{e2}}

\address[C]{
International Institute of Finance, School of Management, University of Science and Technology
of China\printead[presep={,\ }]{e3}}
\end{aug}

\begin{abstract}
Motivated by the need for analysing large spatio-temporal panel data, we introduce a novel dimensionality reduction methodology for $n$-dimensional random fields observed across  {a number} $S$ spatial locations and $T$ time periods. We call it General Spatio-Temporal Factor Model  (GSTFM). First, we provide the probabilistic and mathematical underpinning needed for the representation of a random field as the sum of two components: the common component (driven by a small number $q$ of latent factors) and the idiosyncratic component (mildly cross-correlated). We show that the two components are identified as  $n\to\infty$.
Second, we propose an estimator of the common component and derive its statistical guarantees (consistency and rate of convergence) as $\min(n,S,T)\to\infty$. Third, we propose an information criterion to determine the number of factors. Estimation makes use of Fourier analysis in the frequency domain and thus we fully exploit the  information on the  spatio-temporal covariance structure of the whole panel. 
Synthetic data examples illustrate the applicability of GSTFM and its advantages over the extant generalized dynamic factor model that ignores the spatial correlations. 
\end{abstract}

\begin{keyword}[class=MSC]
\kwd{62H25}
\kwd{62M15}
\kwd{62M40}
\kwd{62H20}
\kwd{60G35}
\end{keyword}

\begin{keyword}
\kwd{Factor models} 
\kwd{high-dimensional random fields}
\kwd{spectral analysis}
\kwd{functional/phsyical dependence}
\end{keyword}

\end{frontmatter}

\section{Introduction}
\subsection{Big data on spatio-temporal processes}

Many data analysis problems in economics, finance, medicine, environmental sciences, and other scientific areas need to conduct inference on random phenomena observed over time  and registered at different locations. 

Supervised and unsupervised learning methods for random fields  (henceforth, rf)  are suitable tools for the statistical analysis of this type of data: they provide an understanding of the key spatial and/or temporal dynamics of the studied phenomena.  
For instance, rf are routinely applied in medicine for fMRI data analysis (see e.g. \citealp{L08}, Ch.6), in geostatistics for satellite images analysis (see e.g. \citealp{C15, CW15}), in natural sciences  for modeling complex phenomena (see e.g. \citealp{V10,Cr17} for applications in physics and engineering), in economics for the analysis of spatial panel data (see e.g. \citealp{BaltagiBook}) just to mention few  book-length introductions.  

In this paper we consider  datasets containing records on spatio-temporal  rf over a lattice; see e.g. \cite{C15}. We let $(s_1 \ s_2)\in\mathbb Z\times \mathbb Z=\mathbb Z^2$  denote the spatial position in and $t\in\mathbb Z$ represent the time index---in principle, the dimension of the spatial lattice can be larger than two. For instance, the points $(s_1 \ s_2)$ can be: in geostatistics, geographical regions represented as a network with a given adjacency matrix; in image analysis, the position of pixels in an image.  At each $(s_1 \ s_2)$ and time $t$, the object of interest is the $n$-dimensional rf: $\bm x_n=\{\boldsymbol{x}_{n\vsbf}=(x_{1\vsbf}   \cdots x_{\ell\vsbf} \cdots  x_{n \vsbf})^\top, \vsbf=( s_1\ s_2\ t)^\top\in\mathbb Z^3\}$, for $n\in \mathbb{N}$.  
Typical inference goals for these types of data include e.g. constructing and analyzing generative models, quantifying spatio-temporal dependency, prediction or image restoration.

One key aspect related to the statistical analysis of this data is that 
$n$ is of the order of several hundreds and the number of locations and time points may have the same magnitude. A common approach to analyze the resulting large spatio-temporal rf datasets is to resort on standard time series methods, like e.g. spatio-temporal autoregressive models (see e.g. \citealp{CW15}). Nevertheless, because of the high-dimensionality, standard parametric approaches are not feasible (e.g.  in a vector autoregressive model with one time lag for the time series available at each location $\sbf$, the number of parameters is $n^2$) and  dimensionality reduction techniques are needed.


 To solve the curse of dimensionality, one may look at the literature on high-dimensional time series and think of relying on factor models, which allow for a low-dimensional description of high-dimensional data and a limited number of factors capture the common behaviour of the studied phenomena \citep[see, e.g.,][among many others]{FHLR00,stock2002forecasting,bai2002determining,lam2012factor,fan2013large}.  
 
 Among the existing approaches to factor analysis, the General Dynamic Factor Model (GDFM)  of \citet{FHLR00}  defines the most general, nonparametric, factor model which is based on a decomposition of the observations  into the sum of two mutually orthogonal (at all leads and all lags) components: the common component (driven by a small number $q$ of factors) and the idiosyncratic component
(mildly cross-correlated). This decomposition looks attractive since it is able to capture not only contemporaneous correlations but all leading and lagging  co-movements in time among the $n$ components of the the time series.

In the case of a rf the set of correlations among its $n$ components is much richer. Indeed, an observation at time $t$ and spatial location $(s_1\
s_2)$ might depend on observations at time $t{\pr}$ in the same location, or on observations at the same time but at spatial location $(s_1\pr\ s_2\pr)$, but also on observations at time $t\pr$ and spatial location $(s_1\pr\ s_2\pr)$. Thus, factor models for spatio-temporal rf have to account for this richer correlation structure.

\subsection{Our contributions: the paper in brief}
We introduce the General Spatio-Temporal Factor Model  (GSTFM), a new a class of factor models which allows us to reduce the dimensionality of a high-dimensional spatio-temporal datasets by capturing all relevant correlations, across both time and space.
Our results contribute to different streams of literature on rf theory and inference on high-dimensional data.  

(i) We derive  the decomposition of a  spatio-temporal  rf into a common component, which depends on $q$ unobservable factors,  and an idiosyncratic component, see Theorem \ref{Th. q-DFS}. 
To obtain this result, we need to tackle an important challenge rooted into probability theory:   because of the lack of ordering in $\mathbb{Z}^2$, the GDFM results already available in the literature on high-dimensional time series cannot be applied in our setting. Indeed, the extant results are available for discrete time (regularly spaced) time series indexed by $t\in \mathbb{Z}$ and rely on a generalization of the Wold representation to the case of infinite dimensional stationary processes as derived by \citet{FL01} and \citet{HL13}. Similar concepts are not readily available for a rf indexed in $\mathbb{Z}^3$. As a possible solution, we might specify a notion of spatial past, selecting e.g. the half-plane or the quarter-plane formulation. However, this choice entails the drawback that different versions of the Wold decomposition (see \citealp{MR17}) are available, 
with no obvious indication on which one has to be preferred in our context.  To avoid this issue, we resort on the Fourier analysis in the frequency domain. This methodological approach requires a careful extension to rf of the time series notions of canonical isomorphism, dynamic averaging sequences, aggregation space,  dynamic eigenvalues and eigenvectors, spatio-temporal linear  filters, idiosyncratic variables, and many others. Our theoretical developments would not be justified without these preliminary results. 
Clearly, our results nest as a special case the GDFM results.

(ii) The mentioned decomposition is at the population level: to apply our methodology we need an estimation procedure of the common component. To this end, we derive a complete and operational estimation theory, which contributes to the literature on the statistical analysis of rf. More in detail, we  build on \cite{DPW17} and, making use of a suitable notion of functional dependence measure for spatio-temporal rf, we derive a consistent estimator of a high-dimensional spectral density matrix.  
 We provide its statistical guarantees, proving consistency (with rate) of the proposed estimator. These general results (which are of their own theoretical interest, see Appendix \ref{Spec.Y}) substantially extend the applicability of spectral analysis to non-linear, non-Gaussian, or non-strong mixing rf. 
 Thanks to these novel results,  we derive the rate of converge of our estimator of the common component of spatio-temporal rf.
 The asymptotic regime that we consider  is very flexible: it simply requires that the number of locations and the time diverge, but does not need a specification of the type of asymptotics (in-fill or a long-span); see Theorems \ref{Thmsigmax} and \ref{Thm.hatChin} for the mathematical detail. 

(iii) The above theoretical developments hinge on a central aspect: the selection of the number of latent factors. We take care of this aspect and state a simple and operational criterion, providing its theoretical underpinning in Theorem \ref{Prop.qselect_Sample}. 

(iv) We consider the computational aspects needed to implement our methodology by studying synthetic data examples (see the supplementary material 
for additional numerical exercices), in which the underlying data generating process involves different types of convolutions over the lattice, which in turn imply different levels of spatio-temporal aggregation. 

\subsection{Related work}
In the literature on panel data and time series, dimensionality reduction is often achieved by factor models, which allow for a low-dimensional description of high-dimensional data. Modern factor models essentially originate in four pioneering contributions: \cite{G77}, \cite{SS77}, \cite{C83}, and \cite{CR83}. The reference factor model for this work is the GDFM introduced by \citet{FHLR00} and \citet{FL01}, where few latent factors capture all leading and lagging main comovements among the observed variables. The GDFM was then studied by \citet{hallin2011dynamic} in presence of a block structure in the data (where blocks can be seen are spatial locations), and further developed in a predictive context by \citet{forni2005generalized,FHLZ15,FHLZ17}. A criterion for the number of factors is proposed by \citet{HallinLiska2007}. 
The GDFM has been successfully applied to many macroeconomic and financial time series problems; see, e.g., \citet{altissimo2010new,cristadoro2005core,proietti2021nowcasting,hallin2021forecasting,trucios2022forecasting}.

There are many other influential papers on factor models as, e.g., \citet{stock2002forecasting,bai2002determining,lam2012factor,fan2013large}.




Spatial factor models and related techniques for the analysis of large spatial datasets are also available in the statistical literature. For instance, \cite{Sciamenna2002} introduce a generalized shifted-factor model for purely spatial data; \cite{WW03} study correlations which are caused by a common spatially correlated underlying factors;  \cite{Hetal2018} consider many methods for analyzing large spatial data; { \cite{park2009time,BDLV22} propose a semiparametric  (robust) factor model which is connected to the GDFM and achieves dimensionality reduction of  spatio-temporal data.}

Last, a spatio-temporal dataset can in principle also be modeled as a tensor time series, with some of its modes corresponding to the spatial dimensions. Thus, a spectral approach to the analysis of tensor data represents a possible alternative. Factor models for tensor time series data have recently been studied by many authors, see, e.g., \citet{chen2022factor,chang2023modelling}. However, none of these approaches is dynamic in the sense that it allows for the factors, which might be tensors too, to be loaded by the data with lags. 

%

\subsection{Outline}  
The paper has the following structure. In Section \ref{Sec: MotEx} we provide a motivating example for the necessity of introducing a new class of dynamic spatio-temporal factor models. In Section \ref{Sec: basic notation} we review main concepts for the spectral analysis of rf. In Sections \ref{Sec: repr}-\ref{sec:DPCA} we derive the representation theorem for the GSTFM and we define the spatio-temporal dynamic principal components. In Section \ref{Sec: estim} we present our estimator and its asymptotic properties. In Section \ref{Sec.Selectq} we introduce a criterion to estimate the number of factors. In Section \ref{Sec: sim} we show numerical results on simulated data. In Section \ref{Sec:conc} we conclude. 

In the supplementary material: we prove all theoretical results (Appendices \ref{App1_Def}, \ref{dimostraloseriesci}, \ref{appC}, \ref{APPE}, and \ref{App.PFqselect}), we prove new results for the estimation of a large spectral density of a spatio-temporal rf (Appendix \ref{Spec.Y}), we give an algorithm to estimate the number of factors (Appendix \ref{sec:HLABC}), we apply our methodology to fMRI data (Appendix \ref{Sec: real}), and we provide further numerical results (Appendix \ref{App.sim}).

\subsection{Notation}  

Given a complex matrix $\mbf{D}$, we denote by $\mathbf{{D}}^\dag$ the complex conjugate of the transposed of $\mathbf{D}$, by $\mbf{D}^\top$ its transposed, by
$\bar{\mathbf D}$ its complex conjugate, 
 and for a real matrix $\mbf D$ we have $\bar {\mbf D}={\mbf D}$ and $\mathbf{{D}}^\dag=\mathbf{{D}}^\top$. 
A similar notation holds for complex and real vectors. Given a complex scalar $z$ its complex conjugate is denoted as $z^\dag$.
%
Given two complex row vectors $\bm w=(w_1\cdots w_m)$ and $\bm v=(v_1\cdots v_m)$ we let $\langle\bm w,\bm v\rangle = \bm w{\bm v}^\dag=\sum_{i=1}^m w_i  v_i^\dag$ and $\Vert \bm w\Vert = \sqrt{\langle\bm w,\bm w\rangle}$ is the $L_2$ or Euclidean norm. Real or integer vectors are always column vectors and given two such vectors $\bm w=(w_1\cdots w_m)^\top$ and $\bm v=(v_1\cdots v_m)^\top$ we let $\langle\bm w,\bm v\rangle = \bm w^\top{\bm v}=\sum_{i=1}^m w_i v_i$ and $\Vert \bm w\Vert = \sqrt{\langle\bm w,\bm w\rangle}$. For a complex scalar we have $|z|=\sqrt {zz^\dag}$.
We use the notation $\mathcal L$ for the Lebesgue measure either on $\mathbb R^d$ or on $\mathbb C^d$ or on  $\Thbf=[-\pi,\pi) \times [-\pi,\pi) \times [-\pi,\pi)$.
When no ambiguity can arise, we use the shortcuts
$
\sum_{\hbf} = \sum_{h_1\in\mathbb Z}\sum_{h_2\in\mathbb Z}\sum_{h_3\in\mathbb Z}
$
and
$
\int_{\bm\Theta} \mathrm d\bm\theta = \int_{-\pi}^\pi \mathrm d\theta_1 \int_{-\pi}^\pi \mathrm d\theta_2 \int_{-\pi}^\pi \mathrm d\theta_3$.

\section{Motivating example} \label{Sec: MotEx}


To motivate our investigation, we illustrate via numerical examples, the inadequacy of the classical GDFM by \citet{FHLR00} in the spatio-temporal setting. 
Assume we are given realizations of $n$ random variables in $S_1\times S_2$ spatial locations ({therefore, the total number of locations is $S=S_1 + S_2$}) and $T$ time points. We organize the data into
an $n$-dimensional rf  
$$
\bm x_n=\{{x}_{\ell\vsbf},\ \ell=1,\ldots,n,\ \vsbf=( s_1\ s_2\ t)^\top,\ \s_1=1,\ldots,S_1,\ s_2=1,\ldots,S_2,\ t=1,\ldots,T\}.
$$
Under our GSTFM the $\ell$-th component  of $\bm x_n$ 
is such that $x_{\ell \vsbf}= \chi_{\ell\vsbf} + \xi_{\ell\vsbf}$, for $ \ell = 1,\ldots, n$. The term $\chi_{\ell\vsbf}$ is called common component and it is a linear combination of $q$ latent rf, with $q\ll n$, located at the same spatial location and time period as well as at neighbouring spatial points and at various lags. The term $\xi_{\ell\vsbf}$ is called idiosyncratic component and is assumed to be weakly cross-sectionally correlated.
In Theorem \ref{Th. q-DFS} we show that 
under the considered setting the presence of an eigen-gap in the eigenvalues of the spatio-temporal spectral density matrix (see Section \ref{Subsect: spectral} for a definition) is a key distinctive feature. In particular,  the $q$ largest eigeinvalues of the spatio-temporal spectral density matrix diverge as $n\to\infty$ while the remaining $n-q$ stay bounded if and only if the common component $\chi_{\ell\vsbf}$ is driven by $q$ spatio-temporal factors. As $n\to\infty$, we can then disentangle the common and idiosyncratic components and, consequently, we can identify the GSTFM.  This is the main feature of general factor models, sometimes called blessing of dimensionality, as opposed to the curse of dimensionality typically affecting large dimensional models. 

To verify this phenomenon we simulate the common component of the GSTFM with $q=2,3$ common factors, loaded according to a quite general and commonly encountered   configuration (see Model (a) in (\ref{modelAR}) for details) 
 of the spatio-temporal dependencies. For ease of simulation, the idiosyncratic component is generated from the standard normal distribution. 
Then, for different subsets of dimension $m=1,\ldots, n$ we estimate the $m\times m$ spatio-temporal spectral density matrix of $\bm x_m$ and we compute its $q+1$ largest eigenvalues, averaged across all frequencies. 
In Figure \ref{Sim_Eigen_Modela_GSTFR}, we display these eigenvalues as a function of the cross-sectional dimension $m$: we clearly see that the eigen-gap becomes more and more pronounced as  $m$ increases, a manifestation of the blessing of dimensionality. 

 \begin{figure}[h]
 \begin{center}
 \begin{tabular}{cc}
 \includegraphics[width=0.45\textwidth, height=0.35\textwidth]{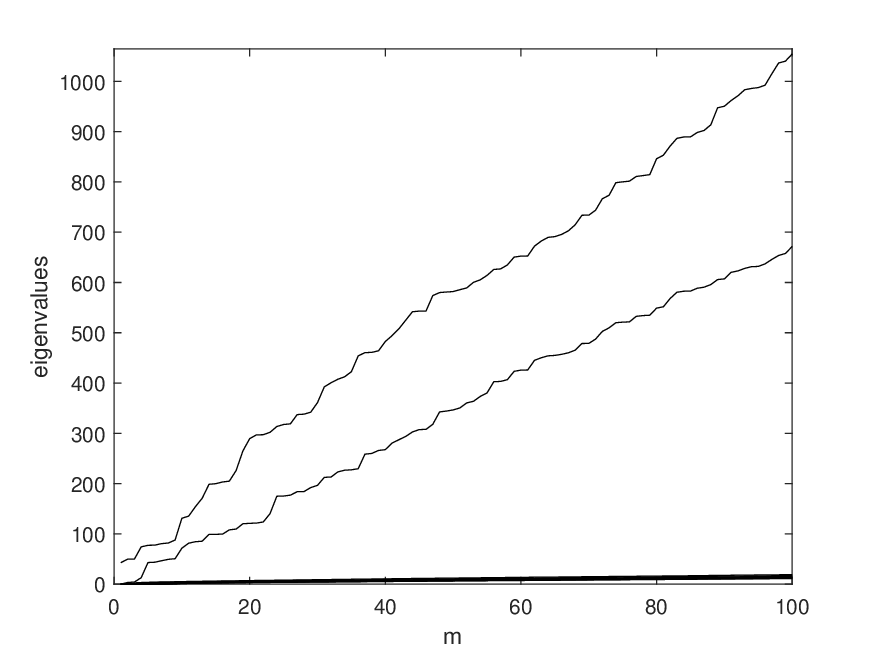}
 &
\includegraphics[width=0.45\textwidth, height=0.35\textwidth]{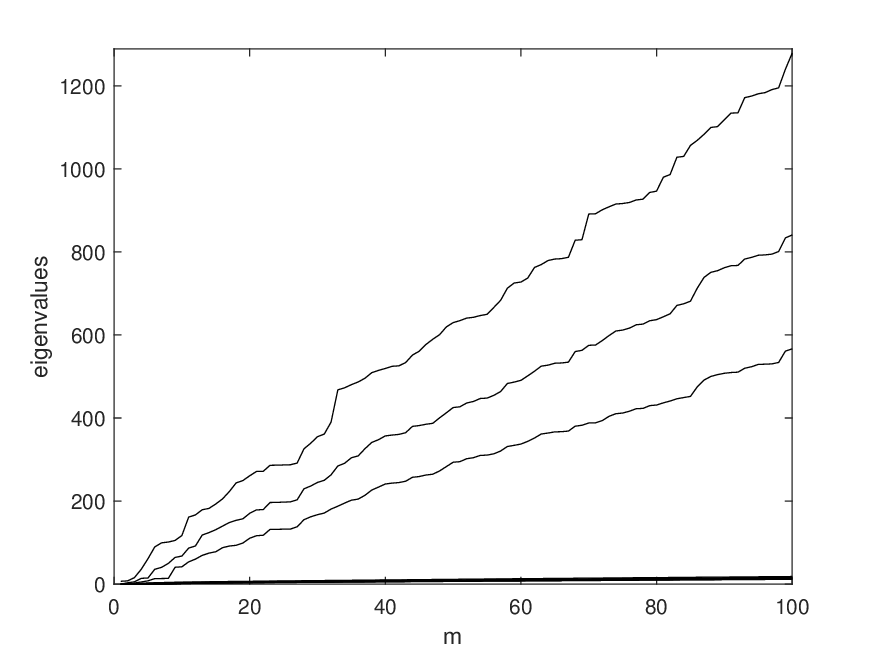}
 \end{tabular}
 \caption{Spatio-temporal dynamic eigenvalues (averaged over all frequencies) for Model (a) in \eqref{modelAR}, with $n= 100$, $(S_1,S_2, T) =(10,10, 100)$, and $q = 2$ (left) or  $q = 3$ (right). The x-axis represents the size of the rf $m=1,\ldots,n$. } \label{Sim_Eigen_Modela_GSTFR} 
 \end{center} 
 \end{figure}

If instead 
we decide to resort on the extant GDFM, then the natural thing to do is to stack at each time point $t$ the data into an $nS_1S_2$-dimensional time series (the order in which the locations and variables are stacked is irrelevant for this discussion):
$$
\mbf x_N=\{x_{it},\ i=1,\ldots, N, \ N=(nS_1S_2),\ t=1,\ldots, T\}.
$$ 
Notice that the rf $\bm x_n$ and the time series $\mbf x_N$ contain the same data points but encoded in different ways. 
Under the GDFM the $i$-th component  of $\mbf x_n$ 
is also such that $x_{it}= \chi_{it}^{\text {\tiny GDFM}} + \xi_{it}^{\text {\tiny GDFM}}$, for $i = 1,\ldots,N$,
where  now $\chi^{\text {\tiny GDFM}}_{it}$ is a linear combination of $r$ latent time series, with $r\ll N$, at the same time period as well as at various lags. Given that the stacking procedure yields a very large dataset,  the asymptotic results in \cite{FHLR00} should apply: the eigenvalues of the estimated spectral density of $\mbf x_n$ should display an eigen-gap, between the $r$-th and $(r+1)$-th eigenvalues, increasing as $N\to\infty$. In fact, if there is no spatial correlation in the data, then we would expect to have $r=q$, as the only correlations left would be cross-sectional and temporal and the GDFM is designed precisely to capture them. But, if there are spatial correlations, then the proposed stacking approach is likely to be flawed:  ignoring spatial correlations 
might generate spurious factors. 
So if the data follows a GSTFM with $q$ factors, but instead we fit a GDFM, at best we might find a number of factors $r>q$.
Indeed, if there are ignored spatial correlations, the GDFM might not even be correctly identified. To show this, we consider again the data simulated from the GSTFM and that yield Figure \ref{Sim_Eigen_Modela_GSTFR}.  
We estimate the spectral density of the stacked vector $\mbf x_m$ for $m=1,\ldots, N$ and, in  Figure 
\ref{Sim_Eigen_Modela_GDFM}, we display, as functions of $m$, the ten largest corresponding eigenvalues averaged over all frequencies. Since now $N\gg n$, we might  expect an eigen-gap even more evident than the one clearly visible in Figure \ref{Sim_Eigen_Modela_GSTFR}. In contrast, in Figure 
\ref{Sim_Eigen_Modela_GDFM} no eigen-gap is detectable at all, 
even for very large cross-sectional dimensions: this means that the  true number $r$ of factors  cannot be recovered and the GDFM is not identifiable in this setting {(for further details on identification of factor models, see Corollary \ref{Mythm3} and the related discussion).}

\begin{figure}[h]
 \begin{center}
 \begin{tabular}{cc}
 \includegraphics[width=0.45\textwidth, height=0.35\textwidth]{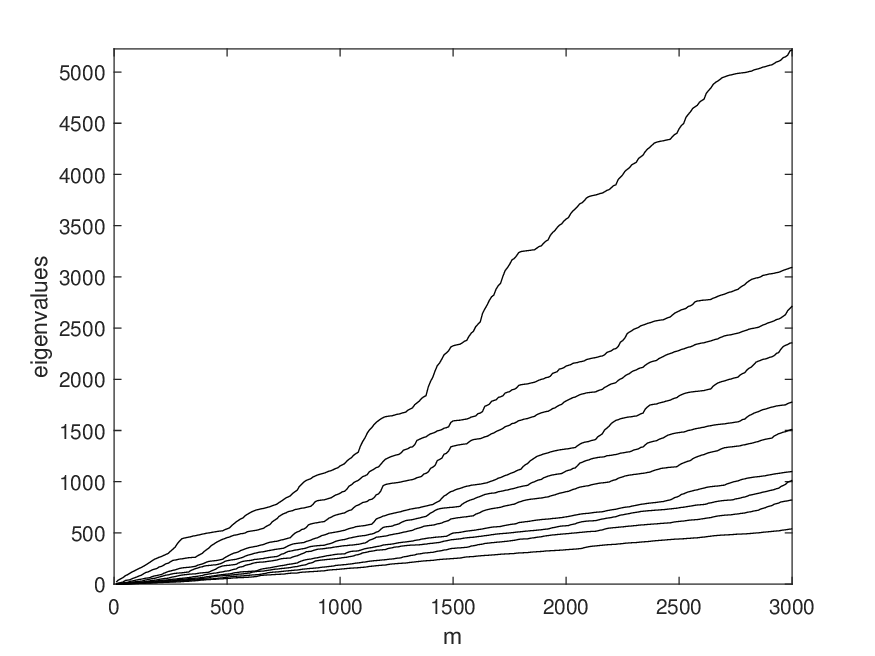}
 &
\includegraphics[width=0.45\textwidth, height=0.35\textwidth]{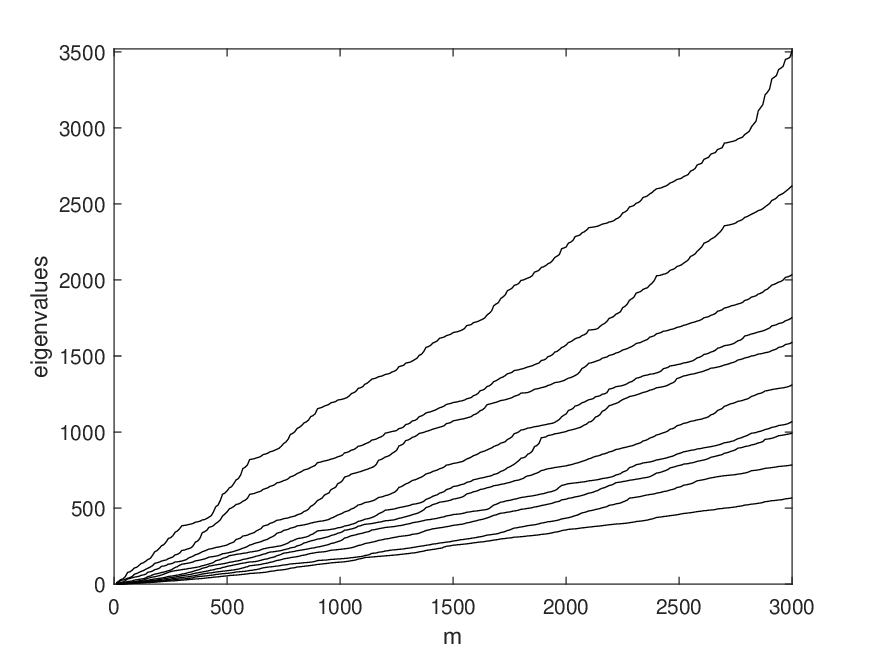}
 \end{tabular}
 \caption{Temporal dynamic eigenvalues (averaged over all frequencies) for Model (a) in \eqref{modelAR}, with $ n= 30$, $(S_1,S_2, T) =(10,10, 100)$, and $q = 2$ (left) or  $q = 3$ (right). The x-axis represents the size of the stacked vector $m=1,\ldots,(nS_1S_2)$.} \label{Sim_Eigen_Modela_GDFM} 
 \end{center} 
 \end{figure}

The above arguments illustrate a two-fold statistical problem related to the development of a novel theory of general factor models for spatio-temporal rf. On the one hand, there is the need for a representation theory for large dimensional rf, which  allows us to capture the common spatio-temporal factors that explain the spatio-temporal co-movements of the process. On the other hand, there is the central need for estimation methods which have statistical guarantees and yield an estimate of the common spatio-temporal component converging to true spatio-temporal common component. In the next sections, we illustrate how to solve this statistical problem. 

\section{Basic theory of linear random fields}\label{Sec: basic notation}

\subsection{Random fields}
Our object of interest is the infinite dimensional random field (hereafter rf)  on a lattice
$\bm x=\{\boldsymbol{x}_{(\sbf\ t)}=(x_{1,(\sbf \ t)} \ x_{2,(\sbf \ t)} \cdots x_{\ell,(\sbf \ t)}\cdots)^\top, \sbf\in\mathbb Z^2, t\in \mathbb Z\}$.
We index space-time points as $\vsbf=(\sbf \ t)^\top=(s_1\ s_2\ t)^\top\in\mathbb Z^3$, with $s_1,s_2$, and $t$ allowed to vary independently. So, for any $\vsbf\in\mathbb Z^3$, we define the infinite dimensional  random vector $\boldsymbol{x}_{\vsbf}=(x_{1\vsbf} \ x_{2,\vsbf} \cdots x_{\ell\vsbf}\cdots)^\top$ and for any $n\in\mathbb N$,we also define the $n$-dimensional column random vector $\boldsymbol{x}_{n\vsbf}=(x_{1\vsbf} \cdots x_{n \vsbf})^\top$, which is an element of the $n$-dimensional rf $\boldsymbol{x}_n=\{\boldsymbol x_{n\vsbf}, \vsbf \in \mathbb{Z}^3\}$. Clearly, $\boldsymbol{x}_n$ is a sub-process of $\boldsymbol{x}$.

Throughout, we let $\mathcal{P}=(\Omega,\mathcal{F},P)$ be a probability space and let $L_2(\mathcal{P},\mathbb{C})$ be the linear space
of all complex-valued, square-integrable random variables defined on $\Omega$. Then, if $x_{\ell\vsbf}\in L_2(\mathcal P,\mathbb C)$, for any $\ell\in\mathbb N$,  the process $\{x_{\ell\vsbf}, \vsbf \in\mathbb Z^3\}$ is a complex valued scalar random field with finite variance, and for any fixed $n$ the process $\{\bm x_{n\vsbf}, \vsbf \in\mathbb Z^3\}$ is a complex valued vector rf with all its elements having finite variance, and the process $\boldsymbol{x}=\{\boldsymbol x_{\vsbf}, \vsbf \in \mathbb{Z}^3\}$ is an infinite dimensional complex valued rf with all its elements having finite variance. Notice that the space $L_2(\mathcal{P},\mathbb{C})$ 
is a complex Hilbert space, thus it possesses the usual inner product given by $\Cov(x_{i\vsbf},x_{j\vsbf'})=\E[(x_{i\vsbf}-\E(x_{i\vsbf}))( x_{j\vsbf'}-\E( x_{j\vsbf'}))^\dag]$, where $\E(x_{i\vsbf})=\int_{\mathbb C} u \mathrm d P(u)$ represents the expected value taken w.r.t. the probability $P$. 

Last, we define $\bm{\mathcal X}_{n} = \spn(\boldsymbol{x}_n)$ as the minimum closed linear subspace of $L_2(\mathcal{P},\mathbb{C})$, containing $\boldsymbol{x}_n$, i.e., the set of all $L_2$-convergent linear combinations of $x_{\ell\vsbf}$'s. Therefore, a generic element of $\bm{\mathcal X}_n$ is of the form 
$
\zeta_{n \vsbf} =\sum_{\ell=1}^{n}\sum_{\kappa_1\in\mathbb Z}\sum_{\kappa_2\in\mathbb Z}\sum_{\kappa_3\in\mathbb Z} \alpha_{\ell\kbf}\ x_{\ell, \vsbf+\kbf}$ with $\alpha_{\ell\kbf}\in\mathbb C$ and $\kbf=(\kappa_1 \ \kappa_2 \ \kappa_3)^\top\in \mathbb Z^3$. 
 Moreover, define $\bm{\mathcal X}= \spn(\boldsymbol{x})$, which is such that
$\bm{\mathcal X}=\overline{\cup_{n=1}^\infty \bm{\mathcal X}_{n}}$ and it contains  also the limits, as $n\to\infty$, of all $L_2$-convergent sequences thereof. 
Hence, both $\bm{\mathcal X}_{n} $ and $\bm{\mathcal X}$ are Hilbert spaces.

\subsection{Spatio-temporal autocovariance and spectral density matrices}\label{Subsect: spectral}
A spatio-temporal shift between pairs of points $\vsbf=(\sbf \ t)^\top \in \mathbb{Z}^3$ and $\vsbf'=(\boldsymbol{s}' \ t')^\top\in \mathbb{Z}^3$ is defined as a vector 
$\hbf=(\vsbf-\vsbf') \in\mathbb{Z}^3$ such that
$\hbf = ( ({s_1}-{s'_1}) \ ({s_2}-{s'_2}) \  (t-t') )^\top=(h_1\ h_2 \ h_3)^\top$. 
We need to set an origin $\boldsymbol 0 =(0 \ 0 \ 0)^\top$, whose location is arbitrary, but once it has been chosen it remains fix. To make our theory
{insensitive to} origin shifts, we impose space-time homostationarity, which  implies that the first two moments (mean and covariance) of a space-time rf are invariant respect to space-time translation,  i.e. a homostationary spatio-temporal random field features homogeneity in space and stationarity in time.

To formalize this property, for any $n\in\mathbb N$ and any $\vsbf,\vsbf'\in\mathbb Z^3$, we define the $n\times n$ autocovariance matrix: 
$\Cov(\boldsymbol x_{n,\vsbf}, \boldsymbol x_{n,\vsbf'} )= \E[(\boldsymbol x_{n,\vsbf}-\E[\boldsymbol x_{n,\vsbf}])(\boldsymbol x_{n,\vsbf'}-\E[\boldsymbol x_{n,\vsbf'}])^\dag].$
Notice that the covariance matrix $\Cov(\boldsymbol x_{n,\vsbf}, \boldsymbol x_{n,\vsbf} ) $ is non-negative definite; see, e.g., \citet[p.15]{S12}.  Then, we introduce the following assumption of homostationarity.
{\begin{ass} \label{Ass.A} 
For any $n \in \mathbb{N}$ the random field $\boldsymbol{x}_n=\{\bm x_{n\vsbf} = (x_{1\vsbf}\cdots x_{n\vsbf} )^\top, \vsbf \in\mathbb Z^3\}$ is such that  $x_{\ell\vsbf}\in L_2(\mathcal P,\mathbb C)$ for any $\ell\le n$, and, for any $\vsbf \in \mathbb{Z}^3$: 
\begin{inparaenum}
\item [(i)] $\text{\upshape E}( x_{\ell\vsbf})=  0$ and  $\text{\upshape Var}( x_{\ell\vsbf})>0$; 
\item [(ii)] $\text{\upshape Cov}(\bm x_{n\vsbf},\bm x_{n\vsbf+\hbf})= \text{\upshape E}(\bm x_{n\vsbf}\bm x^\dag_{n \vsbf+\hbf})
=\bm\Gamma_n^x(\hbf)$ for any $\hbf \in \mathbb{Z}^3$.
\end{inparaenum}
\end{ass} }

A few comments on Assumption \ref{Ass.A}. First, the zero-mean assumption can be made without any loss of generality.
Second, since, all elements of the rf $\bm x_n$ are in $L_2(\mathcal P,\mathbb C)$ then for any fixed $n$ the covariance matrix $\bm\Gamma_n^x(\bm 0)$ is finite and so all autocovariances $\bm\Gamma_n^x(\hbf)$, $\hbf \ne \bm 0$ are finite too. 
Third, note that letting 
$\bm\Gamma_n^x(\hbf)\equiv \bm\Gamma_n^{x}(h_1,h_2,h_3)$, then the following relations holds
\begin{align}
&\bm\Gamma_n^x(-h_1,-h_2,-h_3)=\bm\Gamma_n^{x\dag}(h_1,h_2,h_3),\qquad \bm\Gamma_n^x(\mp h_1,\pm h_2, \pm h_3)=\bm\Gamma_n^{x\dag}(\pm h_1,\mp h_2, \mp h_3),\nonumber\\
&\bm\Gamma_n^x(\pm h_1,\mp h_2, \pm h_3)=\bm\Gamma_n^{x\dag}(\mp h_1,\pm h_2,\mp h_3),\quad \bm\Gamma_n^x(\pm h_1,\pm h_2,\mp h_3)=\bm\Gamma_n^{x\dag}(\mp h_1,\mp h_2,\pm h_3),\nonumber
\end{align}
which are much weaker requirements than assuming space isotropy---for which we would have
$\text{\upshape Cov}(\bm x_{n\vsbf},\bm x_{n\vsbf+\hbf})= \bm\Gamma_n^x(\Vert (\pm h_1\ \pm h_2)^\top\Vert, h_3)$, thus imposing that the second-order moments are invariant under all rigid axes motions ({\citealp{S12}, p.17}). 
Then, we introduce (see \citealp[p. 45]{MR17})

\begin{definition} [Orthonormal white noise rf] \label{RFWN} For a given finite integer $q$, let $\boldsymbol w=\{\boldsymbol w_{\vsbf}=({w}_{1 \vsbf}\cdots{w}_{q \vsbf})^\top, \vsbf \in \mathbb{Z}^3 \}$  be a $q$-dimensional rf such that ${w}_{\ell \vsbf}\in L_2(\mathcal P,\mathbb C)$ and $\text{\upshape E}({w}_{\ell \vsbf })=0$, for any $\ell=1,\ldots, q$ and $\vsbf\in \mathbb{Z}^3$. Then we call $\boldsymbol w$ an orthonormal white noise rf if, for any $\vsbf, \vsbf' \in \mathbb{Z}^3$, $\text{\upshape{Cov}}\left( \boldsymbol w_{\vsbf},\boldsymbol w_{\vsbf'}  \right)=\mbf{I}_q$  if $\vsbf=\vsbf'$ and it is $\mbf 0$ otherwise.
\end{definition}




To conduct spectral analysis we add the following

\begin{ass} \label{Ass.B} For any $n \in \mathbb{N}$,
the spectral measure of $\boldsymbol{x}_{n}$ is absolutely continuous (with respect to $\mathcal L$ on $\Thbf$), so $\boldsymbol{x}_{n}$ admits an $n\times n$ spectral density matrix given by 
\begin{equation}
\bm\Sigma_n^x(\thbf)
= \sum_{\hbf} \bm\Gamma_n^x(\hbf) e^{-i\langle\hbf, \thbf\rangle}, \nn
\end{equation} 
where  $i=\sqrt {-1}$ 
and $\bm\theta=(\theta_1\ \theta_2\ \theta_3)^\top\in\Thbf$.
\end{ass}


For a rf the spectral density matrix depends on a vector of frequencies $\thbf$ \citep{brillinger1970frequency}, and under Assumptions \ref{Ass.A} and \ref{Ass.B}, $\Sgmbf(\thbf)$ is Hermitian and non-negative definite (\citealp[p.13]{leonenko1999limit}) for all $\thbf \in \boldsymbol{\Theta}$ and any $n\in\mathbb N$. 
The lag-$\hbf$ autocovariance matrix is then given by
$
\bm\Gamma_n^x(\hbf)
= \frac{1}{8\pi^3}\int_{\boldsymbol{\Theta}} e^{i\langle\hbf, \thbf\rangle} \Sgmbf(\thbf)  \mathrm d\thbf \nonumber
$, for any $\hbf \in \mathbb{Z}^3 $.

Let $\Sigmabf^{x}(\thbf)$ denote the infinite matrix having the matrix $\Sgmbf(\thbf)$ as its $n \times n$ top-left sub-matrix and notice again that as $n\to\infty$ we allow for the possibility of its eigenvalues to diverge (see below), while all its entries are finite by assumption. Moreover, we notice that a $q$-dimensional orthonormal white noise rf has spectral density $\mbf I_q$. Then we have 
\begin{definition}[Spatio-temporal dynamic eigenvalues] For any $n\in\mathbb N$ and $\ell\le n$, let $\lambda^x_{n\ell}: \Thbf \to \mathbb{R}^+$ 
be defined as the function associating with $\thbf \in \Thbf$ the $\ell$-th eigenvalue in decreasing order of $\Sgmbf(\thbf)$.
 We call $\lambda^x_{n\ell}(\thbf)$ the spatio-temporal dynamic eigenvalues of $\Sgmbf(\thbf)$. 
 \end{definition} 
 
\begin{definition}[Spatio-temporal dynamic eigenvectors] \label{Def. eigenvectors}
For any $n\in\mathbb N$ and $\ell\le n$ let $\pbf^{x}_{n\ell}:\bm\Theta\to \mathbb C^n$ be such that, for any $\thbf \in \Thbf$, the row vector $\pbf_{n\ell}^x(\thbf)$ satisfies
\begin{inparaenum}
\item[(i)]  $\Vert \pbf_{n\ell}^x(\thbf) \Vert=1$;
\item[(ii)]  $\pbf_{n\ell}^x(\thbf) {\pbf}_{nj}^{x\dag}(\thbf) =0$, for $\ell\neq j$; 
\item[(iii)] $\pbf_{n\ell}^x(\thbf)\Sgmbf (\thbf) = \lambda^x_{n\ell}(\thbf)\pbf_{n\ell}^x(\thbf)$.
\end{inparaenum}
Then, the functions $\{\pbf^{x}_{n\ell}, \ell=1,\ldots, n\}$ constitute a set of spatio-temporal dynamic eigenvectors associated with the spatio-temporal eigenvalues  $\lambda^x_{n\ell}$ and the rf $\boldsymbol{x}_n$.
\end{definition}

Hereafter, we also assume strict positive definiteness of $\Sigmabf_n^x(\bm\theta)$
\begin{ass}  \label{Ass.C} 
For any $n \in \mathbb{N}$ and $\ell \leq n$, and any $\thbf \in \Thbf $, $\lambda^{x}_{n\ell}(\thbf) > 0$. 
\end{ass}

\begin{rem}
\label{rem:dyneval} \upshape{
Following \citet[Lemma 3 and 4]{FL01}, one can easily prove that: 
\begin{inparaenum} [(i)]
\item the real functions $\lambda^x_{n\ell}$ are Lebesgue-measurable and integrable in $\Thbf$, for any given $n\in\mathbb N$ and $\ell \leq n$; 
\item $\lambda^x_{n\ell}(\thbf)$ is a non-decreasing function of $n$, for any  $\thbf \in \Thbf$.
\end{inparaenum}
In particular, from (ii) it follows that, for all $\thbf\in\Thbf$,
$\lim_{n\to\infty}\lambda^x_{n\ell}(\thbf)=\sup_{n\in\mathbb N}\lambda^x_{n\ell}(\thbf),
$
and it is well defined for any $\ell\le n$.
}
\end{rem}


\subsection{Spatio-temporal linear filters}

First, for any $n\in\mathbb N$ and $\ell\le n$, let us consider the three linear operators, $L_j:\bm{\mathcal X}_n \to \bm{\mathcal X}_n$, $j=1,2,3$, such that, for any $\vsbf \in\mathbb Z^3$,
\begin{align}
L_1 \, x_{\ell \vsbf } = x_{\ell({s_1}-1 \ {s_2} \ t)},  \quad L_2 \, x_{\ell \vsbf } = x_{\ell ({s_1} \ {s_2}-1 \ t) },  \quad
L_3 \, x_{\ell \vsbf } = x_{\ell ({s_1} \ {s_2} \ t-1) },  \label{Eq: operators} 
\end{align}
so that when $L_j$ is applied to the vector $\bm x_{n\vsbf}$ it shift all its $n$ components along the space or time dimension.  $L_1$
and $L_2$  act on the (spatial) dimensions of the lattice (see \citet{W54}), 
while $L_3$ is the usual time lag operator. 
We also set $L\equiv L_1  L_2  L_3$ and $L^{\kbf}\equiv L_1^{\kappa_1} L_2^{\kappa_2} L_3^{\kappa_3}$. The operators are commutative, e.g. $L_1  L_2  L_3\, x_{\ell \vsbf  }= L_3  L_1  L_2\, x_{\ell \vsbf }$.
In Lemma \ref{Lemma1} in Appendix \ref{App1_Def}, we show that $L_j$ are unitary operators that can be extended to $\bm{\mathcal X}$. 

Second, consider a generic $n$-dimensional row vector of functions $\boldsymbol{f}_n=(f_1\   \cdots f_n)$ with $f_\ell:\Thbf\to\mathbb C$ being measurable for any $\ell\le n$ and such that the following conditions hold: \begin{inparaenum}
\item [(i)]  $ \Vert\boldsymbol{f}_n\Vert^2_{\Sigmabf^{x}_n} = 
\frac 1{8\pi^3}\int_{\Thbf} \boldsymbol f_n(\thbf)\Sigmabf^{x}_n (\thbf) {\boldsymbol f}_n^\dag(\thbf) \mathrm d\thbf <\infty$, and 
\item [(ii)] 
$ \Vert\boldsymbol{f}_n\Vert^2 =  \Vert\boldsymbol{f}_n\Vert^2_{\mbf I_n}
<\infty$, respectively. 
\end{inparaenum}
The space of such functions is a complex Hilbert space denoted as $L_2^{n}(\bm\Theta,\mathbb C,\bm\Sigma_n^x, \mbf I_n)$, obtained from the intersection of two Hilbert spaces each endowed with  inner products derived from one of the two norms defined above.

Third, consider the map $\mathcal{J}:L_2^{n}(\bm\Theta,\mathbb C,\bm\Sigma_n^x, \mbf I_n)\to \bm{\mathcal X}_n$, such that, for any $n\in\mathbb N$ and $\ell\le n$,  
\begin{equation}
\mathcal{J} \left[  (\delta_{\ell1}  \cdots \delta_{\ell k} \cdots \delta_{\ell n} ) e^{i \langle \vsbf , \cdot\rangle}  \right] = x_{\ell \vsbf}, \quad \text{for any} \ \vsbf\in\mathbb Z^3,
\label{JJ}
\end{equation}
where $\delta_{\ell k}=1$ if $k=\ell$ and $\delta_{\ell k}=0$ if $k\ne \ell$ and $e^{i \langle \vsbf , \cdot\rangle}$ indicates the map from $\bm\Theta$ to $\mathbb C$ such that $\bm\theta\mapsto e^{i \langle \vsbf , \thbf \rangle}$. Thus, we have
\beq\label{filter1}
L\boldsymbol x_{n\vsbf} = 
\mathcal J\l[\bm\iota_n e^{i\langle(s_1-1\ s_2-1 \ t-1)^\top,\cdot\rangle} \r]
=\mathcal J\l[\bm\iota_n e^{-i\langle(1\ 1\ 1)^\top,\cdot\rangle}e^{i\langle\vsbf,\cdot\rangle} \r],
\eeq
where $\bm\iota_n$ is an $n$-dimensional vector of ones. In Lemma \ref{Lemma2} in Appendix \ref{App1_Def} we prove that $\mathcal{J}$ is an isomorphism, also called {canonical isomorphism}, and it can be extended to the Hilbert space of infinite dimensional functions $\bm f$, with norms $\Vert \bm f\Vert_{\Sigmabf^x}^2=\lim_{n\to\infty}\Vert \bm f_n\Vert_{\Sigmabf^x_n}^2<\infty$ and $\Vert \bm f\Vert^2=\lim_{n\to\infty}\Vert \bm f_n\Vert^2<\infty$. Notice that $\mathcal J$ is an isomorphism between the measure spaces $(\bm\Theta, \mathcal{B}(\bm\Theta), \mathcal L)$ and ($\bm{\mathcal X}_n,\mathcal{B}(\mathbb C^{n}), \mathcal L)$, where $\mathcal{B}(\bm\Theta)$ and $\mathcal{B}(\mathbb C^{n})$ are the Borel $\sigma$-fields on $\bm\Theta$ and $\mathbb C^{n}$, respectively. The definition of $\mathcal J$ extends to our setting the classical isomorphism 
typically applied  in  time series analysis (see e.g. \citealp{BD06}, Section 4.8).

%


Now, for any $\bm f_n\in L_2^{n}(\bm\Theta,\mathbb C,\bm\Sigma_n^x, \mbf I_n)$ consider the Fourier expansion 
\begin{align} 
\boldsymbol{{f}}_n(\thbf) &=  \sum_{\kbf} \mathrm{\bf  f}_{n\kbf} e^{-i\langle\kbf,\thbf\rangle}, \label{Eq. FS}\\
\mathrm{\bf  f}_{n\kbf}  &= \frac{1}{8\pi^3} \int_{\Thbf} e^{i\langle\kbf,\thbf\rangle}\boldsymbol f_n(\thbf) \mathrm d\thbf. \label{Eq. FC}
\end{align} 
where the equality in \eqref{Eq. FS} holds in the $L_2$-norm, and $\{\mathrm{\bf  f}_{n\kbf},\kbf \in \mathbb Z^3\}$ 
are the Fourier coefficients.

 Since also $(\mathrm {\mbf {f}}_{n\kbf} e^{-i\langle\kbf,\cdot\rangle})\in L_2^{n}(\bm\Theta,\mathbb C,\bm\Sigma_n^x, \mbf I_n)$, then we  apply to it
 the canonical isomorphism $\mathcal J$ to map it into elements $\bm{\mathcal X}_n$. This defines the filtered processes associated to $\boldsymbol{{f}}_n$ 
\begin{equation}
\boldsymbol{\underline{f}}_n(L)\boldsymbol{x}_{n\vsbf} = \mathcal J\l[ \boldsymbol{{f}}_n e^{i\langle \vsbf,\cdot\rangle}\r]. \label{fbar}
\end{equation}
Therefore, from \eqref{Eq. FS}, \eqref{Eq. FC}, and \eqref{fbar}, and by linearity of the canonical isomorphism, we have
\begin{equation}
\boldsymbol{\underline{f}}_n(L)\boldsymbol{x}_{n\vsbf} =  \sum_{\kbf} \mathrm{\bf  f}_{n\kbf} L^{\kbf} \boldsymbol{x}_{n\vsbf}=\left\{ \sum_{\kbf} \left[\frac{1}{8\pi^3} \int_{\Thbf} e^{i\langle\kbf,\thbf\rangle}\boldsymbol f_n(\thbf) \mathrm d\thbf\right] L^{\kbf}\right\} \boldsymbol{x}_{n\vsbf},\label{fFF}
\end{equation}
which defines an $n$-dimensional linear spatio-temporal filter. Note that $\boldsymbol{\underline{f}}_n(L)\boldsymbol{x}_{n\vsbf}\in\bm{\mathcal X}_n$ is the isomorphic map of $(\boldsymbol{{f}}_n e^{i\langle \vsbf,\cdot\rangle})\in  L^{n}_2(\Thbf,\mathbb{C},\Sigmabf^x,\mbf I_n) $, that is multiplications become convolutions via the isomorphism $\mathcal J$ and viceversa via $\mathcal J^{-1}$.  Hereafter, the composition of two linear filters, is denoted as
\beq\label{pachistrani}
\boldsymbol{\underline{g}}_n(L)\star \boldsymbol{\underline{f}}_n(L)\boldsymbol{x}_{n\vsbf} = \left\{ \sum_{\kbf} \left[\frac{1}{8\pi^3} \int_{\Thbf} e^{i\langle\kbf,\thbf\rangle}
\boldsymbol g_n(\thbf)\boldsymbol f_n(\thbf) \mathrm d\thbf\right] L^{\kbf}\right\} \boldsymbol{x}_{n\vsbf}.
\eeq



\section{General Spatio-Temporal Factor Model} \label{Sec: repr}

We show that any $n$ dimensional rf $\bm x_{n }$, satisfying Assumptions \ref{Ass.A}-\ref{Ass.C}, 
 can be summarized
by its projection, $\bm\chi_{n}$, on a $q$-dimensional sub-space generated by $q$ cross-sectional and spatio-temporal aggregation of the components of $\bm x_{n }$ and where $q$ is a given finite positive integer  independent of $n$. 
The rf $\bm\chi_{n}$ is such that as $n\to\infty$ it survives under cross-sectional and spatio-temporal aggregation, i.e., it converges in mean-square to a finite variance rf. The residual rf $\bm\xi_{n}=\bm x_{n }-\bm\chi_{n}$ instead vanishes under cross-sectional and space-time aggregation as $n\to\infty$.
Intuitively, the distinct asymptotic behavior of the two components under aggregation means that if any pervasive signal is present in the rf $\bm x_{n}$ an aggregation operation should help recovering it in the limit $n\to\infty$ and the signal will appear in the elements of $\bm \chi_{n}$.
To make this argument formal we start by introducing 



\begin{definition}[Spatio-temporal aggregation of rf]\label{def:STDAS}
For any $n\in\mathbb N$, consider an $n$-dimensional row vector of functions $\bm a_n\in L_2^n(\bm\Theta,\mathbb C, \Sigmabf_n,\mbf I_n)$. The sequence $\{ \boldsymbol{a}_n, n \in \mathbb{N} \}$ is a spatio-temporal dynamic averaging 
sequence (STDAS) if
\[
\lim_{n\to\infty} \Vert \boldsymbol{a}_n \Vert=\lim_{n\to\infty} \l(\frac 1{8\pi^3} \int_{\Thbf}\boldsymbol{a}_n(\thbf)\boldsymbol{a}_n^\dag(\thbf)\mathrm d\thbf\r)^{1/2}
\!\!\! = 0.
\]
Moreover, we say that $y$ is an aggregate if for any $\vsbf \in\mathbb Z^3$  there exists a STDAS $\{ \boldsymbol{a}_n, n \in 
\mathbb{N} \}$ such that $\lim_{n\to\infty} \underline{\boldsymbol{a}}_n (L) \boldsymbol{x}_{n\vsbf} = y_{\vsbf}$ in mean-square and $y_{\vsbf} \in \bm{\mathcal X}$. 
We denote the set of all aggregates by
$\mathcal{G}(\boldsymbol{x})$ and we refer to it as the aggregation space of $\bm{\mathcal X}$.
\end{definition}

Intuitively, the aggregation via a STDAS corresponds to averaging an infinite dimensional rf both in the cross-section and in the space-time dimensions, simultaneously. 
Notice that, because of the definition of $\bm{\mathcal X}$, any aggregate, i.e., any element of $\mathcal{G}(\boldsymbol{x})$, has variance either finite strictly positive or zero.
By generalizing to rf the definitions given by \citet{FL01} and \citet{HL13}, we have

\begin{definition}[Idiosyncratic and common components] \label{Def. Idiosy}
We say that an infinite dimensional rf $\bm w$ with elements $w_{\ell\vsbf}\in\bm{\mathcal X}$ for any $\vsbf\in\mathbb Z^3$ and $\ell\in\mathbb N$
\begin{inparaenum}
\item [(i)]  is idiosyncratic if for any $\vsbf\in\mathbb Z^3$ and any STDAS $\{\boldsymbol{a}_n, n\in \mathbb{N}\}$, $\lim_{n\to\infty} \underline{\boldsymbol{a}}_n (L)\bm w_{n\vsbf}=0$ in mean-square;
\item [(ii)] is common if it is not idiosyncratic, i.e., if for any $\vsbf\in\mathbb Z^3$ and any STDAS $\{\boldsymbol{a}_n, n\in \mathbb{N}\}$, $\lim_{n\to\infty} \underline{\boldsymbol{a}}_n (L)\bm w_{n\vsbf}=y_{\vsbf}^a$ in mean-square such that $y^a_{\vsbf}\in\bm{\mathcal X}$ and $0<\text{\upshape Var}(y^a_{\vsbf})<\infty$.
\end{inparaenum}
\end{definition}

Hereafter, we also refer to the components $w_\ell$ of $\bm w$ as idiosyncratic or common if $\bm w$ is idiosyncratic or common, respectively.
Note that if $\boldsymbol w$ is idiosyncratic then $\mathcal G(\boldsymbol w)=\{0\}$, that is it contains only the zero element. Moreover, $\boldsymbol w$ is idiosyncratic if and only if its largest dynamic spatio-temporal eigenvalue is an essentially bounded function (see Proposition \ref{Theorem 1} in Appendix \ref{app:suffFL}). 
In contrast, if $\bm w$ is common, by aggregating it we get a rf with finite and strictly positive variance, in other words, $\mathcal G(\boldsymbol w)$ contains only non-degenerate rf. This yields


\begin{definition}
[Common factors] \label{Def. Common}
Given an $n$-dimensional rf $\bm x_n$ with elements $x_{\ell\vsbf}\in\bm{\mathcal X}$ for any $\vsbf\in\mathbb Z^3$ and $\ell\le n$, we say that a scalar rf $w$
is a common factor if 
there exists a STDAS $\{ \boldsymbol{a}_n, n \in 
\mathbb{N} \}$ such that $ w_{\vsbf}=\lim_{n\to\infty} \underline{\boldsymbol{a}}_n (L) \boldsymbol{x}_{n\vsbf} $ in mean-square, $w_{\vsbf}\in\bm{\mathcal X}$, and $0<\text{\upshape Var}(w_{\vsbf})<\infty$.
%
\end{definition}

Clearly, by comparing Definition \ref{Def. Common} with Definition \ref{Def. Idiosy}(ii) we see that the common factors are elements of the aggregation space of the common components.

Denote the  sub-space of all components of an idiosyncratic rf (which are scalars) as $\bm{\mathcal X}^{\text{\tiny idio}}\subseteq \bm{\mathcal X}$ and the sub-space of all components of a common rf as $\bm{\mathcal X}^{\text{\tiny com}}\subseteq \bm{\mathcal X}$.
Given the above definition we have the decomposition 
\beq
\bm{\mathcal X}=\bm{\mathcal X}^{\text{\tiny com}} \oplus\bm{\mathcal X}^{\text{\tiny idio}}.\label{Eq. CanonicalX}
\eeq
Moreover, since the set $\mathcal{G}(\boldsymbol{x})$ is a closed subspace of $\bm{\mathcal X}$, we can also define 

\begin{definition}[Canonical decomposition] \label{cane}
For any $\ell\in\mathbb N$ and any $\vsbf\in\mathbb Z^3$, the orthogonal projection equation:
\begin{equation}
x_{\ell\vsbf} = \text{\upshape proj} (x_{\ell\vsbf}\vert \mathcal{G}(\boldsymbol{x})) + \delta_{\ell \vsbf}\label{Eq. Canonical}
\end{equation}
is called the canonical decomposition of the rf $x_{\ell\vsbf}$. 
\end{definition}

We show that  the decomposition \eqref{Eq. CanonicalX} and the canonical decomposition \eqref{Eq. Canonical} are equivalent. In particular, we will show that 
%
there exists a $q$-dimensional orthonormal white noise rf $\bm u$ with $q\ge 0$ and independent of $n$, such that: 
(i)  $ \text{\rm $\spn$}(\boldsymbol{u})=\mathcal{G}(\boldsymbol{x})$, hence, according to Definition \ref{Def. Common}, $\bm u$ is a vector of common factors;
(ii) $\gamma_{\ell\vsbf}=\text{proj}(x_{\ell \vsbf}\vert \mathcal{G}(\boldsymbol{x}))$ is common and $\bm\gamma=\{\gamma_{\ell\vsbf}, \ell\in\mathbb N, \vsbf \in\mathbb Z^3\}$ has a spectral density of rank $q$; 
%
(iii) $\bm\delta=\{\delta_{\ell\vsbf}, \ell\in\mathbb N, \vsbf \in\mathbb Z^3\}$ is idiosyncratic.

Summing up, given an observed $n$-dimensional rf $\bm x_n$, common factors are obtained as aggregates of $\bm x_n$ and the common component is obtained by projecting $\bm x_n$ onto such factors. Indeed, projecting onto the aggregation space of $\bm x$ or onto the aggregation space of the common component is equivalent, since the aggregation space of the idiosyncratic component contains only the zero element.
 To formalize the above projection argument, we first state 

\begin{definition}[$q$-General Spatio-Temporal Factor Model] \label{Def. q_DFS}
Let $q$ be a non-negative integer. We say that the rf $\boldsymbol{x}=\{x_{\ell \vsbf}, \ell\in\mathbb N, \vsbf \in\mathbb Z^3\}$ with $x_{\ell \vsbf}\in L_2(\mathcal{P},\mathbb{C})$ follows a $q$-General  Spatio-Temporal Factor Model ($q$-GSTFM) if $L_2(\mathcal{P},\mathbb{C})$ contains:
\begin{inparaenum} 
\item [(a)] an orthonormal $q$-dimensional white noise rf $\boldsymbol{u}=\{\boldsymbol{u}_{\vsbf}= (u_{1\vsbf} \ \cdots \ u_{q\vsbf})^\top, \vsbf  \in \mathbb{Z}^3\}$;
\item [(b)] an infinite dimensional rf $\boldsymbol\xi=\{ \xi_{\ell \vsbf}, \ell\in \mathbb{N}, \vsbf \in \mathbb{Z}^3 \}$; 
\end{inparaenum}
both fulfilling Assumptions \ref{Ass.A} and \ref{Ass.B} and such that:
\begin{compactenum}
\item[(i)]  for any $\ell \in \mathbb{N}$ and any $\vsbf\in\mathbb Z^3$
\begin{align}
x_{\ell \vsbf}&= \chi_{\ell\vsbf} + \xi_{\ell\vsbf}, \label{Eq. x-decomp}\\
\chi_{\ell \vsbf} &=  \underline{\boldsymbol b}_{\ell}(L) \boldsymbol{u}_{\vsbf}=\sum_{\kbf}\sum_{j=1}^q  {\mathrm b}_{\ell j,\kbf} {u}_{j,\vsbf-\kbf},\label{Eq. x-decomp2}
\end{align}
thus defining an infinite dimensional rf $\boldsymbol\chi=\{ \chi_{\ell \vsbf}, \ell\in \mathbb{N}, \vsbf \in \mathbb{Z}^3 \}$;

\item [(ii)]  letting $b_{\ell j}(\bm\theta)=\sum_{\kbf}\mathrm b_{\ell j,\kbf}e^{-i\langle \kbf,\thbf\rangle}$, $j=1,\ldots, q$, $\bm\theta\in\bm\Theta$, and  $\bm b_\ell(\bm\theta)=(b_{\ell 1}(\bm\theta)\cdots b_{\ell q}(\bm\theta))$, it holds that  
$
\Vert\boldsymbol b_{\ell} \Vert^2 =\frac 1 {8\pi^3}\int_{\Thbf}\bm b_\ell(\bm\theta)\bm b_\ell^\dag(\bm\theta) \mathrm d\thbf <\infty;
$
\item[(iii)] for any $\ell \in \mathbb{N}$, $j=1,\ldots,q$, and  $\vsbf,\vsbf^\prime\in\mathbb Z^3$ such that $\vsbf\ne \vsbf^\prime$,  it holds that $\text{\upshape E}[\xi_{\ell \vsbf} u_{j \vsbf^\prime}]=0$.
\end{compactenum}
Furthermore, for any $n\in\mathbb N$ consider the $n$-dimensional sub-processes 
$
\boldsymbol{\chi}_n=\{ \bm\chi_{n \vsbf}=(\chi_{1\vsbf}\cdots \chi_{n\vsbf})^\top, \vsbf \in \mathbb{Z}^3 \}$ and
$\boldsymbol{\xi}_n=\{ \bm\xi_{n \vsbf}=(\xi_{1\vsbf}\cdots \xi_{n\vsbf})^\top, \vsbf \in \mathbb{Z}^3 \},$
with $j$-th largest dynamic spatio-temporal eigenvalues $\lambda_{nj}^\chi(\bm\theta)$ and $\lambda_{nj}^\xi(\bm\theta)$, respectively, then
\begin{compactenum}
\item[(iv)] 
$\inf\{M:\mathcal L [\bm\theta: \lim_{n\to\infty} \lambda^{\xi}_{n1}(\bm\theta)>M]=0\}<\infty$;
%
\item[(v)]  $\lim_{n\to\infty}\lambda^{\chi}_{nq}(\thbf)= \infty$, $\mathcal L$-a.e. in $\Thbf$. 
\end{compactenum}
\end{definition}


We refer to the infinite dimensional rf $\boldsymbol{\chi}$ and $\boldsymbol{\xi}$ as the common and idiosyncratic components of the representation  \eqref{Eq. x-decomp}. Indeed, part (iv) implies that $\bm \xi$ is idiosyncratic, since, as proved in Proposition \ref{Theorem 1} in Appendix \ref{app:suffFL}, a rf is idiosyncratic if and only if its dynamic spatio-temporal eigenvalues are essentially bounded functions. Moreover, since by part (iii) $\bm \xi$ and $\bm\chi$ have orthogonal elements, then $\bm\chi$ cannot be idiosyncratic and must be common. 

The $q$-GSTFM has two main features. First, differently from the GDFM, 
the common component in (\ref{Eq. x-decomp2}) accounts for the  spatio-temporal dependence. The $q$-dimensional rf of factors $\bm u$ is loaded by each element of $\bm x$ dynamically in time (possibly in a causal way, see Remark \ref{rem:1side}) and in space, since the filters depend on both dimensions. This means that, being a rf,  a common shock to $\bm x$ can impact different points in space heterogeneously at the same time and at different points in time and 
 it can impact also the variables observed in a given point in space at different times. Second, we do not impose any specific structure on the second moment of the of vector of idiosyncratic components $\bm\xi$ whose elements can be both cross-sectionally and spatio-temporally cross-auto-correlated, as long as part (iv) is satisfied. By allowing for spatial dependencies we then generalize to rf the representation derived for pure time series by \citet{FL01}, which in turn extended the approximate static factor model by \citet{C83} and \citet{CR83} and the exact dynamic factor model by \citet{G77} and \citet{SS77}, as well as the standard classical exact static factor model for cross-sectional data \citep[see ][]{lawley1971factor}.

\begin{rem}\label{ss1}
\upshape{A sufficient condition for part (ii) to hold is to ask for square summability of the coefficients of the linear filter $\underline {\bm b}_\ell(L)$, i.e., to assume $\sum_{\kbf}\vert {\mathrm b}_{\ell j,\kbf}\vert^2\le C$ for some finite $C>0$ independent of $j$. Indeed, by definition
\begin{align}\nn
\Vert\boldsymbol b_{\ell} \Vert^2
&=\frac 1 {8\pi^3}\int_{\Thbf} \sum_{j=1}^q
\sum_{\kbf}\l\lvert{\mathrm b}_{\ell j,\kbf}  \r\vert^2\mathrm d\thbf\le \max_{j=1,\ldots, q} \sum_{\kbf}\l\lvert{\mathrm b}_{\ell j,\kbf}  \r\vert^2\le C.
\end{align}
}
\end{rem}

\begin{rem}\label{rem:alt_idio}
\upshape{
Parts (iv) and (v) require some further clarifications. Because of Remark \ref{rem:dyneval}, the function $\lim_{n\to\infty} \lambda^{\chi}_{nq}$ is the $q$-largest dynamic spatio-temporal eigenvalue of the infinite dimensional rf $\bm\chi$, and, similarly the function $\lim_{n\to\infty} \lambda^{\xi}_{n1}$ is the largest dynamic spatio-temporal eigenvalue of the infinite dimensional rf $\bm\xi$. Now,  by (v) the former is to be intended as an extended function in the sense that its value is infinite but measurable (\citealp[p. 55]{royden1988real}), 
while, by (iv) the latter is instead an essentially bounded function \citep[p. 66]{rudin1987realcomplex}. 
}
\end{rem}

\begin{rem}\label{rem:eval_idio}
\upshape{
From  (iv) in Definition \ref{Def. q_DFS} and Remark \ref{rem:alt_idio}, it follows that there exists a finite $C>0$ independent of $\bm\theta$ such that:
$\lim_{n\to\infty}\lambda^\xi_{n1}(\thbf)\le C$, $\mathcal L\text{-a.e. in } \bm\Theta$.
And by the monotone convergence theorem, which holds because of Remark \ref{rem:dyneval}, we have 
\[
\lim_{n\to\infty} \int_{\bm\Theta}\lambda^{\xi}_{n1}(\bm\theta) \mathrm d\bm\theta=\int_{\bm\Theta}\lim_{n\to\infty}\lambda^{\xi}_{n1}(\bm\theta) \mathrm d\bm\theta 
 \le C.
\]
This, in turn implies that the idiosyncratic covariance matrix $\bm\Gamma_n^\xi(\bm 0)=\text{\upshape E}(\bm \xi_{n\vsbf}\bm \xi^\dag_{n \vsbf})$ has largest eigenvalue $\mu_{n1}^\xi$ such that 
\begin{align}
\lim_{n\to\infty}\mu_{n1}^\xi &
=\lim_{n\to\infty}  \max_{\bm b :\bm b^\top\bm b=1}\bm b^\top\l(\int_{\bm\Theta}\bm\Sigma_n^{\xi}(\bm\theta) \mathrm d\bm\theta\r)\bm b\le 
 \lim_{n\to\infty} \int_{\bm\Theta} \lambda_{n1}^{\xi}(\bm\theta)\mathrm d\bm\theta\le C.\nn
\end{align}
The latter condition is the usual assumption made in the vector static factor model literature to characterize an idiosyncratic component (see, e.g., \citealp{FGLR09}). Notice, however, that (v) in Definition \ref{Def. q_DFS} in general does not imply that the  common covariance matrix $\bm\Gamma_n^\chi(\bm 0)=\text{\upshape E}(\bm \chi_{n\vsbf}\bm \chi^\dag_{n \vsbf})$ has  eigenvalues diverging as $n\to\infty$, for the effect of common factors might be just lagged and not contemporaneous, in which case only the products $\bm\Gamma_n^\chi(\bm \kappa)\bm\Gamma_n^\chi(\bm \kappa)$ for $\bm \kappa\ne \bm 0$ will display diverging eigenvalues. This case has been studied by \citet{lam2012factor} in the vector case.
}
\end{rem}


There are essentially two ways to obtain a $q$-GSTFM. On the one hand,  one may assume that the rf spatio-temporal dynamics can be modeled as in (\ref{Eq. x-decomp})-(\ref{Eq. x-decomp2}), mimicking the approach in \cite{FHLR00}. On the other hand, one may find a set of very mild assumptions such that a spatio-temporal rf can be represented as in (\ref{Eq. x-decomp})-(\ref{Eq. x-decomp2}), extending to the rf setting the results of \cite{FL01}. In what follows, we consider the latter approach, which is more general and powerful than the former one: indeed, (\ref{Eq. x-decomp})-(\ref{Eq. x-decomp2}) is a representation which holds under Assumptions \ref{Ass.A}-\ref{Ass.C} and it is not a model imposed by the statistician. 
Our main result of this section is the following 

\begin{theorem}  Under Assumptions \ref{Ass.A}-\ref{Ass.C}, the rf
$\boldsymbol{x}$ follows a $q$-GSTFM if and only if
\begin{inparaenum}
\item[(i)] $\inf\{M:\mathcal L [\bm\theta: \lim_{n\to\infty}\lambda^{x}_{nq+1}(\bm\theta)>M]=0\}<\infty$;
\item[(ii)] $\lim_{n\to\infty}\lambda_{nq}^x(\thbf)=\infty$, $\mathcal L$-a.e. in $\Thbf$.
\end{inparaenum}
\label{Th. q-DFS}
\end{theorem}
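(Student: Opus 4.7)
\textbf{Proof Plan for Theorem \ref{Th. q-DFS}.} The argument proceeds in two directions. I would first establish necessity by directly exploiting the decomposition $\bm x_{n\vsbf}=\bm\chi_{n\vsbf}+\bm\xi_{n\vsbf}$ together with the Weyl-type interlacing inequality for the eigenvalues of sums of Hermitian matrices, and then establish sufficiency by constructing the factor representation via the spatio-temporal dynamic principal components introduced in Section \ref{Sec: basic notation}.

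\emph{Necessity.} Assume $\bm x$ follows a $q$-GSTFM. From part (iii) of Definition \ref{Def. q_DFS} together with the usual polarization argument applied to the isomorphism $\mathcal J$, the common and idiosyncratic components are mutually orthogonal at all leads and lags, so the spectral density matrix decomposes as $\bm\Sigma_n^x(\bm\theta)=\bm\Sigma_n^\chi(\bm\theta)+\bm\Sigma_n^\xi(\bm\theta)$ for every $n$ and every $\bm\theta\in\bm\Theta$. Because $\bm\chi_{n\vsbf}$ is obtained by loading the $q$-dimensional orthonormal white noise $\bm u$ through the filters $\underline{\bm b}_\ell(L)$, we have $\bm\Sigma_n^\chi(\bm\theta)=\bm B_n(\bm\theta)\bm B_n^\dag(\bm\theta)$ with $\bm B_n$ of dimension $n\times q$; hence $\lambda^\chi_{n,q+1}(\bm\theta)=\cdots=\lambda^\chi_{n,n}(\bm\theta)=0$. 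Applying Weyl's inequality $\lambda_{i+j-1}(A+B)\le \lambda_i(A)+\lambda_j(B)$ with $i=q+1,j=1$ gives $\lambda^x_{n,q+1}(\bm\theta)\le \lambda^\chi_{n,q+1}(\bm\theta)+\lambda^\xi_{n,1}(\bm\theta)=\lambda^\xi_{n,1}(\bm\theta)$, so (iv) of Definition \ref{Def. q_DFS} yields (i). With $i=q,j=n$ we obtain $\lambda^x_{n,q}(\bm\theta)\ge \lambda^\chi_{n,q}(\bm\theta)+\lambda^\xi_{n,n}(\bm\theta)\ge \lambda^\chi_{n,q}(\bm\theta)$, and (v) then yields (ii). Monotonicity in $n$ of dynamic eigenvalues (Remark \ref{rem:dyneval}) justifies passing to $\lim_{n\to\infty}$ in both bounds.

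\emph{Sufficiency.} Assume (i)--(ii). For each $n$ and each $\bm\theta$, let $\bm K_n(\bm\theta)$ be the $q\times n$ matrix whose rows are the dynamic spatio-temporal eigenvectors $\bm p^x_{n1}(\bm\theta),\dots,\bm p^x_{nq}(\bm\theta)$ from Definition \ref{Def. eigenvectors}, and let $\bm\Lambda_n(\bm\theta)=\mathrm{diag}(\lambda^x_{n1}(\bm\theta),\dots,\lambda^x_{nq}(\bm\theta))$, which is strictly positive by Assumption \ref{Ass.C}. Define the candidate $n$-level common component and idiosyncratic component as
\begin{equation}
\bm\chi_{n\vsbf}^{(n)}=\underline{\bm K_n^\dag \bm K_n}(L)\bm x_{n\vsbf},\qquad \bm\xi_{n\vsbf}^{(n)}=\bm x_{n\vsbf}-\bm\chi_{n\vsbf}^{(n)},\nn
\end{equation}
and the candidate factor as $\bm u_{n\vsbf}=\underline{\bm\Lambda_n^{-1/2}\bm K_n}(L)\bm x_{n\vsbf}$, using the convolution calculus \eqref{fFF}--\eqref{pachistrani}. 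By orthonormality of the eigenvectors the spectral density of $\bm u_n$ equals $\bm I_q$ on $\bm\Theta$, so $\bm u_n$ is a $q$-dimensional orthonormal white noise rf in the sense of Definition \ref{RFWN}. The idiosyncratic spectral eigenvalues satisfy $\lambda^{\xi^{(n)}}_{n,j}(\bm\theta)=\lambda^x_{n,q+j}(\bm\theta)$ for $j=1,\dots,n-q$, which by (i) and monotonicity in $n$ are essentially bounded as $n\to\infty$; likewise $\lambda^{\chi^{(n)}}_{n,j}(\bm\theta)=\lambda^x_{n,j}(\bm\theta)$ for $j\le q$, so by (ii) the $q$-th common eigenvalue diverges $\mathcal L$-a.e. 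Thus parts (iv)--(v) of Definition \ref{Def. q_DFS} will follow once we take the $n\to\infty$ limit.

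\emph{Passage to the limit and consistency of the construction.} The delicate step is to show that, as $n$ grows, the constructions $\bm\chi_{n}^{(n)}$, $\bm\xi_{n}^{(n)}$ and $\bm u_n$ are consistent and converge to an infinite-dimensional triple $(\bm\chi,\bm\xi,\bm u)$ satisfying (i)--(v) of Definition \ref{Def. q_DFS}; equivalently, that the projections stabilize on a $q$-dimensional subspace of $\bm{\mathcal X}$. I would do this by identifying $\mathrm{span}(\bm u)=\mathcal G(\bm x)$ via the canonical decomposition of Definition \ref{cane}: show that each $u_{j\vsbf}$ is the mean-square limit of $\underline{\bm a_n^{(j)}}(L)\bm x_{n\vsbf}$ for the STDAS $\bm a_n^{(j)}(\bm\theta)=\lambda^x_{n,j}(\bm\theta)^{-1/2}\bm p^x_{nj}(\bm\theta)e^{i\langle \vsbf,\bm\theta\rangle}$, whose norm tends to zero by divergence of $\lambda^x_{n,j}$ in (ii). Then $\chi_{\ell\vsbf}:=\mathrm{proj}(x_{\ell\vsbf}\mid\mathcal G(\bm x))$ takes the form \eqref{Eq. x-decomp2} with square-summable filter coefficients (Remark \ref{ss1}), because the loadings $\bm b_\ell(\bm\theta)$ are the inner products of row $\ell$ of $\bm\Sigma_n^x(\bm\theta)\bm K_n^\dag(\bm\theta)\bm\Lambda_n^{-1/2}(\bm\theta)$ with the canonical basis, which remain bounded in $L_2(\bm\Theta)$ uniformly in $n$ thanks to (i). Orthogonality between $\xi_{\ell\vsbf}$ and $u_{j\vsbf'}$ for $\vsbf\ne \vsbf'$ follows from the projection characterization. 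Finally, invoking Proposition \ref{Theorem 1} in Appendix \ref{app:suffFL}, essential boundedness of the largest spatio-temporal dynamic eigenvalue of $\bm\xi$ is both necessary and sufficient for $\bm\xi$ to be idiosyncratic in the sense of Definition \ref{Def. Idiosy}, closing (iv); (v) follows from (ii) applied to $\bm\chi$ since $\bm\Sigma_n^\chi=\bm\Sigma_n^x-\bm\Sigma_n^\xi$ inherits the diverging eigenvalues by the Weyl bound used in the necessity direction.

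\emph{Main obstacle.} The most delicate technical point is verifying the stability and mean-square convergence of the dynamic eigen-projectors as $n\to\infty$, since the eigenvectors $\bm p^x_{nj}$ are defined pointwise in $\bm\theta$ and are only unique up to a unit-modulus phase. This requires a measurable selection of eigenvectors on $\bm\Theta$ (handled as in \citet{FL01}, Lemmas 3--4, extended to three-dimensional frequency domain $\bm\Theta$ via Remark \ref{rem:dyneval}) together with the monotone convergence theorem to interchange $\lim_{n\to\infty}$ with the Lebesgue integrals appearing in the Hilbert-space norms of Section \ref{Sec: basic notation}.
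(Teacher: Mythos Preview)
Your necessity argument is correct and matches the paper's essentially line for line: decompose $\bm\Sigma_n^x=\bm\Sigma_n^\chi+\bm\Sigma_n^\xi$, use rank $q$ of $\bm\Sigma_n^\chi$, and apply Weyl's inequality in both directions.

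In the sufficiency direction you have the right architecture---build the normalized dynamic principal components $\bm u_n=\underline{\bm\Lambda_n^{-1/2}\bm K_n}(L)\bm x_{n\vsbf}$, identify their limit with a basis of $\mathcal G(\bm x)$, and read off the canonical decomposition---but there is a genuine gap at the convergence step. You assert that $u_{j\vsbf}$ is the mean-square limit of $\underline{\bm a_n^{(j)}}(L)\bm x_{n\vsbf}$ and propose to handle the obstacle by ``measurable selection of eigenvectors'' plus monotone convergence. That addresses the wrong problem. Measurable selection fixes the phase of $\bm p_{nj}^x(\bm\theta)$ in $\bm\theta$ for each \emph{fixed} $n$; it does nothing to align the $q$-dimensional eigenspaces of $\bm\Sigma_n^x(\bm\theta)$ and $\bm\Sigma_m^x(\bm\theta)$ across different $n,m$. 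Since each $\bm\psi^n$ is an orthonormal white noise, $\Vert\psi^n_{j\vsbf}-\psi^m_{j\vsbf}\Vert^2$ can equal $2$ for all $n,m$ if the eigenspaces rotate, so the sequence is not Cauchy without further work. Monotone convergence of the eigenvalues is irrelevant here.

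The paper closes this gap by an explicit alignment construction: for $m<n$ it projects $\bm C\bm\psi^m$ onto $\mathrm{span}(\bm\psi^n)$, bounds the residual spectral eigenvalue by $\lambda^x_{n,q+1}/\lambda^x_{m,q}$ (Lemma \ref{MyLemma7}), and then manufactures a unitary $\bm F(\bm\theta)=\bm H\bm\Delta^{-1/2}\bm H^\dag\bm D$ so that $\bm C\bm\psi^m-\bm F\bm\psi^n$ has spectral eigenvalue at most $2-2\sqrt{\Delta_q}\le 2W/\alpha_m$ (Lemma \ref{Mylemma8}). Iterating this with a geometric tolerance produces a genuine Cauchy sequence $\bm v^{(k)}$ whose limit lies in $\mathcal G(\bm x)$ (Lemma \ref{MyLemma10}, Proposition \ref{MyLemma11}). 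Only after this is $\overline{\mathrm{span}}(\bm z)=\mathcal G(\bm x)$ proved (Proposition \ref{MySection4.5}), and the idiosyncrasy of $\bm\delta$ then follows by comparing $\lambda_{m1}^{\delta^n}$ with $\lambda_{n,q+1}^x$ along a subsequence (Proposition \ref{deltaidio}). Your outline would need to reproduce this alignment machinery---or an equivalent argument showing convergence of the eigenprojections $\bm K_n^\dag\bm K_n$ rather than the eigenvectors themselves---to be complete.
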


Theorem \ref{Th. q-DFS}  characterizes the class of rf which admit the $q$-GSTFM in Definition \ref{Def. q_DFS}.  First, notice that the same comments of Remark \ref{rem:alt_idio} apply also to the functions in parts (i) and (ii). It follows that the presence of an eigen-gap in the dynamic spatio-temporal eigenvalues of the infinite dimensional rf $\bm x$  is a necessary and sufficient condition for the $q$-GSTFM to hold. To this end no assumption is needed other than Assumptions \ref{Ass.A}-\ref{Ass.C}, which are very mild. Notice also that the case  $q=0$ is possible, in which case $\bm x$ has no common factor and it is purely idiosyncratic. In practice, if for an observed rf $\bm x_n$, we see evidence of an eigen-gap in the eigenvalues of its spectral density (as e.g. in Figure \ref{Sim_Eigen_Modela_GSTFR}), then  $\bm x_n$ admits the $q$-GSTFM.



The proof of the theorem is given in Appendix \ref{dimostraloseriesci} and it is rather technical and lengthy. Here, we present only the key aspects of the whole derivation. 
The necessary condition part (``only if'') is is easy to prove (see Appendix \ref{app:necFL}). Indeed, by Weyl's inequality (see Appendix \ref{App:Weyl}), it is straightforward to see that if (iv) and (v) in Definition \ref{Def. q_DFS} hold then (i) and (ii) in Theorem \ref{Th. q-DFS} hold.
The sufficient condition part (``if'') is more difficult to prove and it based on a series of intermediate results.  
In a nutshell, in the proof we proceed by first constructing a $q$-dimensional orthonormal white noise vector rf, $\boldsymbol{z}$, say (see Proposition \ref{MyLemma11}).
Then, we show $\spn(\boldsymbol{z})=\mathcal{G}(\boldsymbol{x})$ (see Proposition \ref{MySection4.5}). 
 It follows that the canonical projection
$x_{\ell \vsbf} = \text{proj}(x_{\ell \vsbf}\vert \mathcal{G}(\boldsymbol{x})) + \delta_{\ell \vsbf}$, is such that $\delta_{\ell\vsbf}$ is idiosyncratic (Propositions \ref{Theorem 1} and \ref{deltaidio}), hence
$\text{proj}(x_{\ell \vsbf}\vert \mathcal{G}(\boldsymbol{x}))$, being orthogonal to $\delta_{\ell \vsbf}$, is common. The proof is completed by means of the arguments in the following remark on the identifiability of the white noise.

\begin{rem}\label{Rem_Indet}
\upshape{It must be pointed out that, in general, neither the $q$-dimensional orthonormal white noise rf $\boldsymbol {u}$ nor the filters
$\underline{\boldsymbol b}_{\ell}(L)$ in (\ref{Eq. x-decomp2}) are identified. Indeed, if (\ref{Eq. x-decomp}) and (\ref{Eq. x-decomp2}) hold, then infinitely many other equivalent representations of $\chi_{\ell \vsbf}$ are obtained by setting $\chi_{\ell \vsbf} = \underline{\boldsymbol m}_{\ell}(L) \boldsymbol{z}_{\ell \vsbf}$ for a $q$-dimensional rf $\bm z$ such that $\boldsymbol{z}_{\ell \vsbf} = \underline{\boldsymbol D}(L) \boldsymbol{u}_{\ell \vsbf}$, $\boldsymbol {m}_\ell(\bm\theta) = \boldsymbol {b}_\ell(\bm\theta)  {\boldsymbol D}^\dag (\bm\theta) $, with $\bm D(\thbf)$ which is $q\times q$ and such that 
$\Vert \bm D\Vert^2=\frac 1{8\pi^3}\int_{\bm\Theta} \bm D(\bm\theta)\bm D^\dag(\bm\theta)\mathrm d \bm\theta <\infty$ and $ {\boldsymbol D}^\dag(\bm\theta)  {\boldsymbol D}(\bm\theta) = \mbf {I}_q$ for all $\bm\theta\in\bm\Theta$.  It follows that $\bm z$ is also a $q$-dimensional orthonormal white noise rf. In fact, if ${\boldsymbol D}^\dag(\bm\theta)$ were not orthogonal, as assumed, but just invertible, we could still find equivalent representations of $\chi_{\ell \vsbf}$ where, however, $\bm z$ would no more be a white noise rf, but it is a rf autocorrelated in both the spatial and time dimension. 
 }
\end{rem}

Uniqueness of the $q$-GSTFM follows:

\begin{Corollary}\label{Mythm3}
%
If $\bm x$ follows a $q$-GSTFM as in Definition (\ref{Def. q_DFS}), then 
$\text{\rm $\spn$}(\boldsymbol{\chi}) = \text{\rm $\spn$}(\boldsymbol{u}) = \mathcal{G}(\boldsymbol{x})$ and
$\chi_{\ell \vsbf} = \text{\rm proj}(x_{\ell \vsbf}|\mathcal{G}(\boldsymbol{x})).$ Moreover,
the number of factors $q$, the common component $\bm \chi$, and the idiosyncratic component $\bm \xi$, are uniquely identified.
%
%
\end{Corollary}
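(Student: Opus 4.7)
The plan is to build on the machinery developed for the proof of Theorem~\ref{Th. q-DFS} and exploit the orthogonality requirement baked into Definition~\ref{Def. q_DFS}(iii). First I would establish the chain of set equalities $\spn(\bm\chi)=\spn(\bm u)=\mathcal G(\bm x)$ via three inclusions. The inclusion $\spn(\bm\chi)\subseteq\spn(\bm u)$ is immediate from \eqref{Eq. x-decomp2}, since each $\chi_{\ell\vsbf}$ is, by part~(ii) of Definition~\ref{Def. q_DFS}, a mean-square convergent linear combination of $\{u_{j,\vsbf-\kbf}\}$ and hence lies in the $L_2$-closure $\spn(\bm u)$.

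For $\spn(\bm u)\subseteq\mathcal G(\bm x)$ I would invoke Proposition~\ref{MyLemma11} from the sufficient direction of Theorem~\ref{Th. q-DFS}, which produces a $q$-dimensional orthonormal white noise rf $\bm z$ with $\spn(\bm z)=\mathcal G(\bm x)$ by constructing a STDAS out of the dynamic eigenvectors of $\Sgmbf(\thbf)$ associated with the $q$ diverging eigenvalues. Since $\bm x$ admits a $q$-GSTFM with factors $\bm u$, the indeterminacy in Remark~\ref{Rem_Indet} implies that $\bm z$ and $\bm u$ differ by a $q\times q$ unitary filter $\underline{\bm D}(L)$ on $\bm\Theta$; applying $\underline{\bm D}^{\dag}(L)$ to $\bm z$ recovers each $u_{j\vsbf}$ as the mean-square limit of a STDAS applied to $\bm x_n$, so $u_{j\vsbf}\in\mathcal G(\bm x)$, and closedness of $\mathcal G(\bm x)$ gives $\spn(\bm u)\subseteq\mathcal G(\bm x)$.

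The reverse inclusion $\mathcal G(\bm x)\subseteq\spn(\bm\chi)$ is the conceptual heart. For any STDAS $\{\bm a_n\}$, linearity of the canonical isomorphism yields $\underline{\bm a}_n(L)\bm x_{n\vsbf}=\underline{\bm a}_n(L)\bm\chi_{n\vsbf}+\underline{\bm a}_n(L)\bm\xi_{n\vsbf}$. Because $\bm\xi$ is idiosyncratic by Definition~\ref{Def. q_DFS}(iv) combined with Proposition~\ref{Theorem 1}, the second term vanishes in mean square, so every aggregate of $\bm x$ equals an aggregate of $\bm\chi$, whence $\mathcal G(\bm x)\subseteq\mathcal G(\bm\chi)\subseteq\spn(\bm\chi)$. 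Chaining the three inclusions closes the set identity. The projection formula $\chi_{\ell\vsbf}=\text{proj}(x_{\ell\vsbf}\vert\mathcal G(\bm x))$ then follows by uniqueness of orthogonal decomposition in the Hilbert space $\bm{\mathcal X}$: $\chi_{\ell\vsbf}\in\spn(\bm u)=\mathcal G(\bm x)$, while $\xi_{\ell\vsbf}\perp u_{j\vsbf'}$ for every $j,\vsbf'$ by Definition~\ref{Def. q_DFS}(iii), and continuity of the inner product extends this orthogonality from the generators $\{u_{j\vsbf'}\}$ to all of $\mathcal G(\bm x)$.

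Uniqueness of $\bm\chi$ and $\bm\xi$ is then immediate, since orthogonal projection onto the closed subspace $\mathcal G(\bm x)$ is unique and $\mathcal G(\bm x)$ depends only on $\bm x$. For the uniqueness of $q$ I would appeal to Theorem~\ref{Th. q-DFS}: the eigen-gap characterization via $\lim_{n\to\infty}\lambda^x_{nq}(\bm\theta)=\infty$ and essential boundedness of $\lim_{n\to\infty}\lambda^x_{n,q+1}$ depends only on the spectral density of $\bm x$, pinning down $q$ intrinsically. The main obstacle I foresee is the step that identifies the intrinsic white noise $\bm z$ of Theorem~\ref{Th. q-DFS} with the postulated $\bm u$ up to a unitary filter; once this matching is handled via the freedom described in Remark~\ref{Rem_Indet}, the remainder of the argument is a straightforward Hilbert-space manipulation.
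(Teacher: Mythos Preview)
Your inclusions $\spn(\bm\chi)\subseteq\spn(\bm u)$ and $\mathcal G(\bm x)\subseteq\spn(\bm\chi)$, the projection identity, and the uniqueness arguments are all fine and essentially match the paper. The gap is in your step for $\spn(\bm u)\subseteq\mathcal G(\bm x)$. You try to obtain it by invoking Remark~\ref{Rem_Indet} to conclude that the white noise $\bm z$ of Proposition~\ref{MyLemma11} and the postulated $\bm u$ differ by a unitary filter. But Remark~\ref{Rem_Indet} runs only in one direction: given \emph{one} representation $\chi_{\ell\vsbf}=\underline{\bm b}_\ell(L)\bm u_\vsbf$, it manufactures others by applying a unitary $\underline{\bm D}(L)$ to $\bm u$. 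It does not assert that any two $q$-dimensional white noises arising in the problem are so related. At this stage you do not yet know that $\bm z$ and $\bm u$ represent the \emph{same} common component (indeed $\gamma_{\ell\vsbf}=\chi_{\ell\vsbf}$ is precisely what you are proving), so the appeal to Remark~\ref{Rem_Indet} is circular. You correctly flag this step as the ``main obstacle'', but the proposed resolution does not close it.

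The paper sidesteps this entirely by reversing the order. From your own steps $\mathcal G(\bm x)\subseteq\spn(\bm\chi)\subseteq\spn(\bm u)$ and $\spn(\bm z)=\mathcal G(\bm x)$ one gets the \emph{inclusion} $\spn(\bm z)\subseteq\spn(\bm u)$ for free. Then a dimension argument finishes: both $\bm z$ and $\bm u$ are $q$-dimensional orthonormal white noise rf, hence have spectral density $\mathbf I_q$, so the closed shift-invariant subspace $\spn(\bm z)$ cannot be strictly contained in $\spn(\bm u)$ (projecting $\bm u$ onto $\spn(\bm z)$ would yield a $q$-dimensional process with spectral density $\mathbf I_q$, forcing the residual to vanish). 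This gives $\spn(\bm z)=\spn(\bm u)$, and \emph{then} the unitary relationship follows as a consequence rather than an input. With this reordering your proof goes through.
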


Notice that this result implies: (i) $\spn(\boldsymbol{z})=\spn(\boldsymbol{u})$ for any $q$-dimensional white noise rf $\bm z$ obtained from $\bm u$ as in Remark \ref{Rem_Indet}, and (ii) $\chi_{\ell\vsbf},\xi_{\ell\vsbf}\in\bm{\mathcal X}$ for any $\ell\in\mathbb N$ and $\vsbf\in\mathbb Z^3$. Moreover, no representation with a smaller or larger number of factors fulfilling Definition \ref{Def. q_DFS} is possible. In other words the $q$-GSTFM is identified. 
It has to be stressed though that, since the definition of common and idiosyncratic components are only asymptotic ones, i.e., holding in the limit $n\to\infty$ (see Defintion \ref{Def. Idiosy}), identification is achieved only asymptotically.  Indeed, as shown later, if $n$ is fixed no consistency result can be derived when we estimate the model. This is again an instance of the blessing of dimensionality and it emphasizes that factor analysis is effective in high-dimensions.

\section{Recovering the common component - Population results}\label{sec:DPCA}

In this section we prove that among all possible $q$ dimensional aggregates which we can project $\bm x_{n}$ on, the first $q$-dynamic spatio-temporal principal components of $\Sgmbf(\thbf)$ are the optimal ones in the sense that they are those with largest variance, and in Theorem \ref{Mythm5} we prove that by projecting $\bm x_{n}$ onto such aggregates we can recover the common component $\bm\chi_n$ in the limit $n\to\infty$.

The canonical decomposition in Definition \ref{cane} 	is optimal in the sense that, by definition of linear projection, it minimizes the variance of the residual idiosyncratic term. However, to achieve such decomposition in practice we need to define a basis for the space of aggregates $\mathcal G(\bm x)$ to project an the elements of $\bm x$ onto. Therefore, given a $q$-GSTFM in Definition \ref{Def. q_DFS} all we need to do is to find a $q$-dimensional rf common factors, which, because of Definition \ref{Def. Common} belong to $\mathcal G(\bm x)$, thus have finite and strictly positive variance. Moreover, we shall require this $q$-dimensional rf of factors to be an orthonormal white noise rf.


The definition of common factors holds asymptotically, but in practice we deal with a given fixed $n$, then, for such given $n$ and any $j=1,\ldots, q$, we should look 
for those weights $\{\bm \alpha_{nj\kbf}, \kbf\in\mathbb Z^3\}$, such that $\underline{\bm\alpha}_{nj}(L) \bm x_{n\vsbf}=\sum_{\bm\kappa}\bm \alpha_{nj\kbf} \bm x_{n,\vsbf-\kbf}$, has maximum variance. 
In view of the canonical isomorphism in \eqref{fbar}, we shall then consider the equivalent maximization problem in the frequency domain, i.e., for any $\thbf\in\Thbf$ we shall solve:
\beq
\max_{\bm\alpha_{n j}}\, \bm\alpha_{nj}^{}(\thbf) \bm\Sigma_n^x(\thbf) \bm\alpha_{nj}^\dag (\thbf) \;\text{  s.t. }\; \Vert \bm\alpha_{n j}\Vert =1,\quad  \bm\alpha_{n j}^{}\bm\alpha_{n k}^\dag = 0, 
\quad j,k=1,\ldots, q, \;\; j\ne k.\label{furka}
\eeq
Notice that the objective function is the variance of the discrete Fourier transform of  $\underline{\bm\alpha}_{nj}(L) \bm x_{n\vsbf}$. For any given $\thbf\in\Thbf$, the solution of \eqref{furka} is clearly given by the eigenvector $\pbf^{x}_{nj}(\thbf)$ of the spectral density matrix $\Sgmbf(\thbf)$ corresponding to the $j$-th largest eigenvalue $\lambda_{nj}^x(\thbf)$, see Definition \ref{Def. eigenvectors}. For any $j=1,\ldots, q$, to this solution corresponds a scalar filtered rf $\{\underline{\pbf}_{nj}^x(L)\boldsymbol{x}_{n\vsbf}, \vsbf \in \mathbb{Z}^3\}$ which has spectral density $\lambda^x_{nj}(\thbf)$ and variance $\int_{\bm\Theta}\lambda^x_{nj}(\thbf)\mathrm d\bm\theta$. So the first, $j=1$, dynamic spatio-temporal principal component has largest variance as expected. Moreover, these rf are orthogonal contemporaneously and at any spatio-temporal shift, indeed, for $j \neq j^\prime$,  we have $\E[(\underline{\pbf}_{nj}^x(L)\boldsymbol{x}_{n\vsbf})({\underline{\pbf}_{nj'}^x(L)\boldsymbol{x}_{n\vsbf'}})^\dag]=0$ for all $\vsbf,\vsbf'\in\mathbb Z^3$. Notice that if $\E(\bm x_{n\vsbf})=\bm\mu_{n} \ne \bm 0$ then the filtered process should be defined as $\underline{\pbf}_{nj}^x(L)(\boldsymbol{x}_{n\vsbf}-\bm \mu_n)$, hence, they always have zero-mean.

However, the $q$ filtered processes defined by solving \eqref{furka} cannot be directly used as a basis for $\mathcal G(\bm x)$ for two reasons. First, they are not white noise rf. Second, and most importantly, as $n\to\infty$, their variance is not finite, indeed, under a $q$-GSTFM, we know that $\lim_{n\to\infty}\lambda^x_{nj}(\thbf)=\infty$ for all $j=1,\ldots, q$. Therefore, we need to rescale and whiten those rf. This is accomplished by means of the following (recall the notation in \eqref{pachistrani})

\begin{definition}[Normalized dynamic spatio-temporal principal components]\label{def:DYNEVEC}
For any $n \in \mathbb{N}$ and $\ell\le n$, the filtered rf processes
$$
\psi^n_{\ell \vsbf}=[\underline{\lambda}_{n\ell}^x(L)]^{-1/2}\star \underline{\pbf}_{n\ell}^x(L)\boldsymbol{x}_{n\vsbf}= \l\{\sum_{\kbf }   \l[\frac{1}{8\pi^3} \int_{\Thbf} e^{i\langle\kbf,\thbf\rangle}
[{\lambda}_{n\ell}^x(\thbf)]^{-1/2}
\pbf_{n\ell}^x(\thbf) \mathrm d\thbf
\r]
 L^{\kbf} \r\}\boldsymbol{x}_{n\vsbf}
$$
form a set of normalized dynamic spatio-temporal principal components associated with $\boldsymbol{x}_{n\vsbf}$.
\end{definition}

Notice that this definition makes sense since Assumption \ref{Ass.C} implies that $[{\lambda}_{n\ell}^x(\thbf)]^{-1}$ is finite for any  $n\in\mathbb N$ and $\ell\le n$. Now, for any $n\in\mathbb N$, define the rf ${\bfpsi}^n =\{{\bfpsi}^n_{\vsbf} = \left( \psi^n_{1 \vsbf}\cdots\psi^n_{q \vsbf} \right)^\top,\, \vsbf\in\mathbb Z^3\}$. Then, from Definition \ref{def:DYNEVEC} we have:
\beq\label{fagioli}
{\bfpsi}^n_{\vsbf} = [\underline{\bm\Lambda}_n(L)]^{-1/2}\star\underline{\boldsymbol{P}}_n(L) \bm x_{n\vsbf},
\eeq
where the linear spatio-temporal filters $\underline{\boldsymbol{\Lambda}}_n(L)$ and $\underline{\boldsymbol{P}}_n(L)$  are, respectively, obtained from 
the $q\times q$ diagonal matrix ${\bm\Lambda}_n(\thbf)$ having as entries the dynamic spatio-temporal eigenvalues $\lambda_{nj}^x(\thbf)$, for $j=1,\ldots, q$, and 
the $q\times n$ matrix  ${\boldsymbol{P}}_n(\thbf)=(\pbf_{n1}^{x\top}(\thbf)\cdots \pbf_{nq}^{x\top}(\thbf))^\top$
having as rows the $q$ corresponding dynamic spatio-temporal eigenvectors. Now, let ${\bm \Phi}_n(\thbf)$ be the $n-q\times n-q$ diagonal matrix having as entries the dynamic spatio-temporal eigenvalues $\lambda_{nj}^x(\thbf)$, for $j=q+1,\ldots, n$, and let ${\boldsymbol{Q}}_n(\thbf)$ be the $n-q\times n$ matrix having as rows the $n-q$ corresponding dynamic spatio-temporal eigenvectors. Then, for all $\thbf\in\Thbf$, $\bm\Sigma_n^x(\thbf) = {\boldsymbol{P}}^\dag_n(\thbf){\bm\Lambda}_n(\thbf){\boldsymbol{P}}_n(\thbf)+ {\boldsymbol{Q}}^\dag_n(\thbf){\bm \Phi}_n(\thbf){\boldsymbol{Q}}_n(\thbf)$. Therefore, since $\mbf I_n= {\boldsymbol{P}}^\dag_n(\thbf){\boldsymbol{P}}_n(\thbf)+ {\boldsymbol{Q}}^\dag_n(\thbf) {\boldsymbol{Q}}_n(\thbf)$, from \eqref{fagioli} we immediately see that $\bfpsi^n$ has spectral density $\mbf I_q$, hence  it is a $q$-dimensional orthonormal white noise rf as required. By letting $n\to\infty$, we obtain from ${\bfpsi}^n$ the basis for $\mathcal G(\bm x)$ we are looking for. This is formalized by means of the following

\begin{theorem}\label{Mythm5}
For any $n \in \mathbb{N}$ 
and $\ell\le n$, denote by $\underline{\boldsymbol{\pi}}_{n\ell}(L)$ the $\ell$-th $q$-dimensional row of $\underline{\boldsymbol{P}}_n^\dag(L)$. 
Suppose that (i) and (ii) of Theorem \ref{Th. q-DFS} and Assumptions \ref{Ass.A}-\ref{Ass.C} hold. 
Then, for all $\vsbf\in\mathbb Z^3$,
 $\lim_{n\to\infty}\underline{\boldsymbol{\pi}}_{n\ell}(L)\star \underline{\boldsymbol{\Lambda}}_n^{1/2}(L) {\bfpsi}^n_{\vsbf} = \lim_{n\to\infty}\underline{\boldsymbol{\pi}}_{n\ell}(L)\star \underline{\boldsymbol{P}}_{n}(L) \boldsymbol{x}_{n\vsbf}=\chi_{\ell\vsbf}
$ in mean-square.
\end{theorem}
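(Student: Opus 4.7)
The plan is to exploit Corollary \ref{Mythm3}, which identifies $\chi_{\ell\vsbf}=\text{proj}(x_{\ell\vsbf}\vert\mathcal G(\boldsymbol x))$, and then to show that $Y_{n\ell\vsbf}:=\underline{\boldsymbol\pi}_{n\ell}(L)\star\underline{\boldsymbol P}_n(L)\boldsymbol x_{n\vsbf}$ is precisely the orthogonal projection of $x_{\ell\vsbf}$ onto $\spn(\bfpsi^n)$, and that this span asymptotically fills $\mathcal G(\boldsymbol x)$.

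The first equality in the statement is immediate upon inserting the definition \eqref{fagioli} of $\bfpsi^n$, since the filters $\underline{\boldsymbol\Lambda}_n^{1/2}(L)$ and $[\underline{\boldsymbol\Lambda}_n(L)]^{-1/2}$ compose to the identity, yielding $Y_{n\ell\vsbf}$. For the projection identity, I would work in the frequency domain via the canonical isomorphism $\mathcal J$. Using the adjoint of the eigenvalue equation, $\boldsymbol\Sigma_n^x(\boldsymbol\theta)\boldsymbol P_n^\dag(\boldsymbol\theta)=\boldsymbol P_n^\dag(\boldsymbol\theta)\boldsymbol\Lambda_n(\boldsymbol\theta)$, the cross-spectrum between $x_{\ell\vsbf}$ and $\psi_{j\vsbf}^n$ simplifies to $[\lambda_{nj}^x(\boldsymbol\theta)]^{1/2}\,\overline{p_{nj\ell}^x(\boldsymbol\theta)}$. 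Since $\bfpsi^n$ is an orthonormal white noise, the orthogonal projection of $x_{\ell\vsbf}$ onto $\spn(\bfpsi^n)$ equals $\sum_{j=1}^q\sum_{\boldsymbol\kappa}\E[x_{\ell\vsbf}\psi^{n\dag}_{j,\vsbf-\boldsymbol\kappa}]\,\psi^n_{j,\vsbf-\boldsymbol\kappa}$, and a direct computation via \eqref{fFF} shows that this coincides with $Y_{n\ell\vsbf}$.

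Next I would verify that the subspaces $\spn(\bfpsi^n)$ converge to $\mathcal G(\boldsymbol x)$. The filter generating $\psi_{j\vsbf}^n$ from $\boldsymbol x_{n\vsbf}$ has frequency response $\boldsymbol a_n^{(j)}(\boldsymbol\theta)=[\lambda_{nj}^x(\boldsymbol\theta)]^{-1/2}\boldsymbol p_{nj}^x(\boldsymbol\theta)$, with norm $\|\boldsymbol a_n^{(j)}\|^2=\frac{1}{8\pi^3}\int_\Thbf[\lambda_{nj}^x(\boldsymbol\theta)]^{-1}\mathrm d\boldsymbol\theta$; this tends to zero because $\lambda_{nj}^x\ge\lambda_{nq}^x\to\infty$ $\mathcal L$-a.e.\ by part (ii) of Theorem \ref{Th. q-DFS}, while strict positivity of $\lambda_{nj}^x$ (Assumption \ref{Ass.C}) and monotone convergence justify passing the limit inside the integral. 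Thus $\{\boldsymbol a_n^{(j)}\}_n$ is a STDAS in the sense of Definition \ref{def:STDAS}, so each $\psi_{j\vsbf}^n$ is, along a subsequence, an aggregate in $\mathcal G(\boldsymbol x)$. These $q$ limits form a $q$-dimensional orthonormal white noise inside $\mathcal G(\boldsymbol x)$, which has dimension exactly $q$ by Corollary \ref{Mythm3}, so they constitute a basis.

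The conclusion then follows from continuity of orthogonal projection under convergence of subspaces: $Y_{n\ell\vsbf}\to\text{proj}(x_{\ell\vsbf}\vert\mathcal G(\boldsymbol x))=\chi_{\ell\vsbf}$ in mean-square. The main obstacle lies in the asymptotic spanning step: the $\boldsymbol p_{nj}^x(\boldsymbol\theta)$ are only determined up to phase (and to a unitary rotation on any coincident eigenspaces), so the tuple $\bfpsi^n$ need not converge in $L_2(\mathcal P,\mathbb C)^q$ but only as a sequence of $q$-dimensional subspaces. This will be handled by a rotation argument in the spirit of Remark \ref{Rem_Indet}---absorbing at each $\boldsymbol\theta$ a unitary $\boldsymbol D_n(\boldsymbol\theta)$ into the basis---followed by a Cauchy-in-$L_2(\mathcal P,\mathbb C)$ argument to pin down the limiting white noise.
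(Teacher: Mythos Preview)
Your proposal has a circularity problem. You invoke Corollary \ref{Mythm3} to identify $\chi_{\ell\vsbf}=\text{proj}(x_{\ell\vsbf}\vert\mathcal G(\boldsymbol x))$, but in the paper's logical structure Corollary \ref{Mythm3} rests on the sufficient direction of Theorem \ref{Th. q-DFS}, which in turn relies on Proposition \ref{deltaidio}, and the proof of that proposition explicitly uses Theorem \ref{Mythm5} (the paper flags this: ``its proof only requires (i) and (ii) in Theorem \ref{Th. q-DFS} to hold\ldots so there is no feedback loop''). Your route reintroduces precisely that loop. The fix is to work directly with the canonical decomposition (Definition \ref{cane}) and Proposition \ref{MySection4.5}, which are available at this point, and to identify the limit with $\chi_{\ell\vsbf}$ only through uniqueness of the canonical decomposition, as the paper does.

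Beyond circularity, the substantive difficulty is the convergence step, and you underrate it. Showing that each row of $[\bm\Lambda_n]^{-1/2}\bm P_n$ is a STDAS tells you only that \emph{if} $\psi_{j\vsbf}^n$ converges then the limit lies in $\mathcal G(\bm x)$; it does not give convergence. You correctly observe that only the $q$-dimensional span of $\bfpsi^n$ can be expected to stabilise (phase and degenerate-eigenspace indeterminacy), and you propose to absorb a unitary $\bm D_n(\bm\theta)$ and then run a Cauchy argument. That is feasible, but it is the heart of the matter, not a detail: constructing the right $\bm D_n$ is the content of Lemmas \ref{Mylemma8}--\ref{MyLemma10}, a whitened Procrustes rotation whose residual is controlled by $\lambda_{n,q+1}^x/\lambda_{mq}^x$ via Lemma \ref{MyLemma7}. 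For Theorem \ref{Mythm5} itself the paper avoids producing a limiting white noise at all: it abstracts the situation into a \emph{Cauchy sequence of spaces} (Definition \ref{defCauchy}), proves that projections of a fixed element onto such a sequence form a Cauchy sequence in $L_2$ (Lemma \ref{Mylemma12}), and verifies that $\{\bfpsi^n\}$ generates one (Lemma \ref{Mylemma13}). This is both cleaner and logically prior to the existence of any limiting basis.

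A minor correction: to pass the limit inside $\int_\Thbf[\lambda_{nj}^x(\bm\theta)]^{-1}\,\mathrm d\bm\theta$ you cite monotone convergence, but the integrand is \emph{decreasing} in $n$. The paper uses dominated convergence after reducing without loss of generality to $\lambda_{nq}^x(\bm\theta)>1$, so that $[\lambda_{nj}^x]^{-1}\le 1$; see the argument borrowed from \citet[Section 4.2]{FL01}.
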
 

This result is the basis for our estimation approach. It implies that if, for a given $n\in\mathbb N$, we knew the spectral density matrix of $\bm x_n$, then, for any $\ell\le n$ and $\vsbf\in\mathbb Z^3$, an estimator of the common component would be: 
\beq\label{chatbot}
\chi_{\ell\vsbf}^{(n)}= \underline{\boldsymbol{\pi}}_{n\ell}(L)\star \underline{\boldsymbol{P}}_{n}(L) \boldsymbol{x}_{n\vsbf} =\underline {\bm K}_{n\ell}^x(L)\boldsymbol{x}_{n\vsbf},\; \mbox{ say.}
\eeq 
This is a consistent estimator since as $n\to\infty$ it converges in mean-square to the unobservable common component $\chi_\ell$. Notice that since we are dealing with projections the rescaling by means of the eigenvalues introduced in Definition \ref{def:DYNEVEC} is actually not needed in practice, as we just need the eigenvectors.



\begin{rem}\label{rem:freddo}
\upshape{
For any $n\in\mathbb N$, let $\bm P_n^\chi(\thbf)$ be the $q\times n$ matrix having as rows the spatio-temporal dynamic eigenvectors of the spectral density matrix of the common component $\bm\Sigma_n^\chi(\thbf)$. Let also $\underline{\bm P}_n^\chi(L)$ the associated linear spatio-temporal filter and for any  $\ell\le n$, denote by $\underline{\boldsymbol{\pi}}^\chi_{n\ell}(L)$ the $\ell$-th $q$-dimensional row of $\underline{\boldsymbol{P}}_n^{\chi\dag}(L)$.  Then, since $\text{rk}(\bm\Sigma_n^\chi(\thbf))=q$ for all $n\in\mathbb N$ and $\mathcal L$-a.e.~in $\Thbf$, we immediately see that, for any $\ell\le n$ and $\vsbf\in\mathbb Z^3$, we can always write:
\beq\label{chatbot2}
\chi_{\ell\vsbf} =  
\underline{\bm \pi}_{n\ell}^{\chi}(L)\star\underline{\bm P}_n^\chi(L)\bm\chi_{n\vsbf}=
\underline{\bm \pi}_{n\ell}^{\chi}(L)\star\underline{\bm P}_n^\chi(L)\bm x_{n\vsbf} = \underline {\bm K}_{n\ell}^\chi(L)\boldsymbol{x}_{n\vsbf},\; \mbox{ say}, 
\eeq
because $\text{Cov}(\bm\chi_{n\vsbf},\bm\xi_{n\vsbf^\prime})=\mbf 0$ for all $\vsbf,\vsbf^\prime\in\mathbb Z^3$. This, together with \eqref{chatbot},
 implies that, as $n\to\infty$, the coefficients of  $\underline {\bm K}_{n\ell}^x(L)$ converge in mean-square to the coefficients of $\underline {\bm K}_{n\ell}^\chi(L)$.
}
\end{rem}

\begin{rem}\upshape{
In general, the dynamic spatio-temporal eigenvectors are complex vectors. However,  for any $n\in\mathbb N$, we know that 
$\mbf I_n= {\boldsymbol{P}}^\top_n(\thbf)\bar {\boldsymbol{P}}_n(\thbf)+ {\boldsymbol{Q}}^\top_n(\thbf) \bar {\boldsymbol{Q}}_n(\thbf)$, and that the spectral density matrix is Hermitian, i.e., $\bar{\bm\Sigma}_n^x(\thbf)={\bm\Sigma}_n^{x\top}(\thbf)={\bm\Sigma}_n^x(-\thbf)$ and $\bm\Lambda_n(\thbf)$ is a real matrix. Therefore, we can always impose $\bm p_{n\ell}^x(-\thbf)=\bar{\bm p}_{n\ell}^x(\thbf)$ for all $\ell\le n$. This implies that $\int_{\Thbf} e^{i\langle\kbf,\thbf\rangle}
[{\lambda}_{n\ell}^x(\thbf)]^{-1/2}
\pbf_{n\ell}^x(\thbf) \mathrm d\thbf$ is always a real number, and, thus, the normalized dynamic spatio-temporal principal components are real rf, see also \citet{hallin2018optimal}.
}
\end{rem}

\section{Recovering the common component - Estimation} \label{Sec: estim}

\subsection{Estimation in practice} \label{Sec: estim_practice}

The population results derived in Section \ref{Sec: repr} show that the spatio-temporal common component $\chi_{\ell}$ can be recovered as $n\to\infty$ from a sequence of projections, see Theorem \ref{Mythm5}. The filters needed to define this projection are given in Definition \ref{def:DYNEVEC} and depend on the dynamic spatio-temporal eigenvalues and eigenvectors of the unknown spectral density matrix. 

Let us assume now to observe a finite $n$-dimensional realization $\bm x_n$ of the infinite dimensional rf $\boldsymbol{x}$ over $S_1\times S_2$ points on a 2-dimensional lattice and over $T$ time periods. 
 In order to proceed we need to fix the origin of the lattice, because of homostationarity this can be chosen arbitrarily in any location of $\mathbb Z^2$. Here we adopt the convention that the point $(s_1\ s_2)=(1\ 1)$ corresponds to the South-West corner of the given lattice. Then, index $s_1$ grows by moving East while the index $s_2$ grows by moving North. With this definition of the spatial coordinates, our observations are collected into the $n\times S_1S_2T$-dimensional matrix: $\{x_{\ell\vsbf}=x_{\ell(s_1\ s_2\ t)},\ \ell=1,\ldots, n,\ s_1=1,\ldots, S_1,\ s_2=1,\ldots, S_2,\ t=1,\ldots, T\}$. 
 
If the spatio-temporal dynamic eigenvalues of $\bm x_n$ satisfy Theorem \ref{Th. q-DFS}, then, according to the $q$-GSTFM, for all $\ell=1,\ldots, n$, $s_1=1,\ldots, S_1$, $s_2=1,\ldots, S_2$, and $t=1,\ldots, T$ we can write $x_{\ell(s_1\ s_2\ t)}=\chi_{\ell(s_1\ s_2\ t)}+\xi_{\ell(s_1\ s_2\ t)}$, where $\chi_{\ell}$ is the common component and $\xi_\ell$ is idiosyncratic. For any given $n$, we denote as $\bm\chi_n$ and $\bm\xi_n$ the $n$-dimensional rf of the common and idiosyncratic components.

Throughout this section we assume that the number of factors, $q$, driving the common component is known (see Section \ref{Sec.Selectq}) and we now describe our estimation strategy.
Let  $\vsbf_1 = (s_{11} \ s_{12} \ t_1)^\top$ and $\vsbf_2 = (s_{21} \ s_{22} \ t_2)^\top$, then an estimator of 
$\Sgmbf(\thbf)$ is
\begin{equation}\label{hatSimgax}
\widehat{\boldsymbol \Sigma}_n^x(\thbf)  = \frac{1}{S_1 S_2 T} \sum_{\vsbf_1,\vsbf_2 = (1\ 1\ 1)^\top}^{ (S_1 \ S_2 \ T)^\top} \!\!\!\! \! \! \! \! \!  \x_{n \vsbf_1} \x_{n\vsbf_2}^\top K_1\left(\frac{s_{11}-s_{21}}{B_{S_1}} \right) K_2\left(\frac{s_{12}-s_{22}}{B_{S_2}} \right) K_3\left(\frac{t_{1}-t_{2}}{B_{T}} \right) e^{-i \left\langle \vsbf_1 - \vsbf_2,  \thbf\right\rangle},
\end{equation}
with $K_1(\cdot), K_2(\cdot)$, and $K_3(\cdot)$ being kernel functions and $B_{S_1}, B_{S_2}$, and $B_{T}$ being bandwidths, whose properties are discussed later.

In agreement with the population results of Theorem \ref{Mythm5}, the common component is estimated by projecting $\bm x_n$ onto the space spanned by linear filters generated by the $q$ leading spatio-temporal dynamic eigenvectors.
For all $\thbf\in\Thbf$   let us denote by $\widehat{\bm P}_n(\thbf)$ the $q\times n$ matrix having as rows the spatio-temporal dynamic eigenvectors of  $\widehat{\boldsymbol \Sigma}_n^x(\thbf)$ and, for any $\ell=1,\ldots, n$, let $\widehat{\bm\pi}_{n\ell}(\thbf)$ the $\ell$-th $q$-dimensional row of $\widehat{\bm P}^\dag_n(\thbf)$, and, in agreement with \eqref{chatbot} define $\widehat{\K}^x_{n\ell}(\thbf) = {\widehat{\bm\pi}}_{n\ell}(\thbf) \widehat{\bm P}_{n}(\thbf)$, generating the linear filter   $\widehat{\underline{\K}}^{x}_{n\ell}(L)$.

%



Now, since $\widehat{\underline{\K}}^{x}_{n\ell}(L)$ is in general infinite and two-sided, but $\bm x_{n\vsbf}$ is not available for $\vsbf < (1 \ 1 \ 1)$ and $\vsbf > (S_1 \ S_2 \ T)$, we consider instead a truncated linear filter, whose definition depends on the space-time location $\vsbf=(s_1\ s_2 \ t)^\top$ in correspondence of which the filter is applied to $\bm x_n$. Namely, we consider
%
%
%
%
%
\begin{equation}\label{def.underlineK}
\begin{aligned}
 \widehat{\underline{\K}}^{x, \vsbf}_{n\ell}(L) = & \frac{1}{8\pi^3} 
\sum_{\kappa_1 = \underline{\kappa}_1(s_1)}^{\overline{\kappa}_1(s_1)} 
\sum_{\kappa_2 = \underline{\kappa}_2(s_2)}^{\overline{\kappa}_2(s_2)} 
\sum_{\kappa_3 = \underline{\kappa}_3(t)}^{\overline{\kappa}_1(t)} 
 \left( \int_{\Thbf} \widehat{\K}^x_{n\ell}(\thbf) e^{i \left\langle (\kappa_1\ \kappa_2\ \kappa_3)^\top, \thbf \right\rangle} {\rm d} \thbf \right)  L_1^{\kappa_1}L_2^{\kappa_2}L_3^{\kappa_3},
\end{aligned} 
\end{equation}
where, for some integers $M_{S_1}<S_1, M_{S_2}<S_2$, and $M_T<T$, we defined 
\begin{align}
&\underline{\kappa}_1(s_1)=\max\{s_1-S_1, -M_{S_1}\},&&\overline{\kappa}_1(s_1)= \min\{s_1 -1, M_{S_1}\},\nn\\
&\underline{\kappa}_2(s_2)=\max\{s_2-S_2, -M_{S_2}\},  &&\overline{\kappa}_2(s_2)= \min\{s_2 -1, M_{S_2}\},\label{troncamento}\\
&\underline{\kappa}_3(t)= \max\{t-T, -M_T\}, &&\overline{\kappa}_3(t)= \min\{t -1, M_T\}.\nn
\end{align}
%
%
%
%
For any given $\ell=1,\ldots, n$ and any $\vsbf=(s_1\ s_2 \ t)^\top$ such that $s_1=,1\ldots, S_1$, $s_2=,1\ldots, S_2$, and $t=1,\ldots, T$, the common component is then estimated as
\beq
\widehat{\chi}_{\ell\vsbf}^{(n)} =  \widehat{\underline{\K}}^{x, \vsbf}_{n\ell}(L)\bm x_{n\vsbf}. \label{finiamo!}
\eeq
%
%

\begin{rem}\label{rem:freq}
\upshape{In practice all estimated quantities in the frequency domain, as $\widehat{\boldsymbol \Sigma}_n^x(\thbf)$, $\widehat{\bm P}_n(\thbf)$, and $\widehat{\K}^x_{n\ell}(\thbf)$, should be computed only for a finite number of frequencies, defined as $\thbf_{\hbf}=(\theta_{1,h_1}\ \theta_{2,h_2}\ \theta_{3,h_3})^\top$, with $\theta_{1,h_1}= \pi h_1/ B_{S_1}$, $\theta_{2,h_2}= \pi h_2/ B_{S_2}$, and $\theta_{3,h_3}= \pi h_3/ B_{T}$, for integers $h_1=-B_{S_1},\ldots, B_{S_1}$, $h_2=-B_{S_2},\ldots, B_{S_2}$, and $h_3=-B_{T},\ldots, B_{T}$. For simplicity, in this and the following  sections we implicitly assume the identities $\int_{\Thbf}\mathrm d\thbf\equiv 
\sum_{|h_1|\le B_{S_1}}\sum_{|h_2|\le B_{S_2}}\sum_{|h_3|\le B_{T}}$, ${8\pi^3}\equiv{(2B_{S_1}+1)(2B_{S_2}+1)(2B_{T}+1)}$, and  
$\sup_{\thbf\in\Thbf}\equiv \max_{|h_1|\le B_{S_1}}\max_{|h_2|\le B_{S_2}}\max_{|h_3|\le B_{T}}$.
}
\end{rem}

%

\subsection{Assumptions}
For estimation we need to add few more assumptions. First, the GSTFM has two-sided filters as defined in \eqref{Eq. x-decomp2}, however, it is desirable to have one-sided filters in the time dimension. This can be obtained by imposing the following

\begin{ass}\label{Ass.MA}
\begin{inparaenum}
For any $\vsbf \in\mathbb Z^3$   and $\ell\in\mathbb N$:
\item[(i)]  $\chi_{\ell \vsbf} =  \underline{\boldsymbol c}_{\ell}(L) \boldsymbol{v}_{\vsbf}= \sum_{\kappa_1,
\kappa_2 \in \mathbb{Z}} \sum_{\kappa_3=0}^{\infty}$ $\sum_{j=1}^q  {\mathrm c}_{\ell j,\kbf} {v}_{j,\vsbf-\kbf}$, where 
$\{\bm v_{\vsbf} =(v_{1\vsbf}\cdots v_{q\vsbf})^\top,\vsbf \in\mathbb Z^3\}$ is an i.i.d.~$q$-dimensional zero-mean orthonormal rf;
\item[(ii)] 
$\xi_{\ell\vsbf} =\underline{\bm \beta}_\ell(L)\bm\varepsilon_{\vsbf}= \sum_{\kappa_1,\kappa_2}  \sum_{\kappa_3 =0}^\infty$ $ \sum_{j =1}^\infty \beta_{\ell j, \k} \varepsilon_{j, \vsbf-\k}$, where 
$\{{\boldsymbol \varepsilon}_{\vsbf} = (\varepsilon_{1\vsbf} \ \varepsilon_{2\vsbf} \ \cdots)^\top, \vsbf \in \mathbb{Z}^3\}$  is an i.i.d.~infinite dimensional zero-mean orthonormal rf;
\item[(iii)] For any $\vsbf^\prime\in\mathbb Z^3$, any $j=1,\ldots, q$, and any $i\in\mathbb N$,
$\text{\upshape Cov}(v_{j\vsbf},\varepsilon_{i\vsbf^\prime})=0$.
\end{inparaenum}
\end{ass}

The existence of one-sided time representations in parts (i) and (ii) is a mild one. 
 For the idiosyncratic component, our requirement is for the Wold representation to exist also for an infinite dimensional process. For the common component, which is singular, the existence of the assumed one-sided representation has been investigated by \citet{FHLZ15}  in the pure time series case (see also Remark \ref{rem:1side} below). Notice also the \citet{HL13} derived an analogous of our Theorem \ref{Th. q-DFS}, where only one-sided filters are used. Such approach, however, does not ensure the existence of a $q$-dimensional white noise rf driving the common component, and its existence is instead assumed.
For the common component the two-sided representation in space is implied by the $q$-GSTRF in Definition \ref{Def. q_DFS}, and for the idiosyncratic component we make an analogous assumption but based on an infinite dimensional white noise rf.  


By means of parts (i) and (ii) we also strengthen the conditions on the rf $\bm v$ and $\bm\varepsilon$ which are now independent along the spatio-temporal dimensions. Note that the independence assumption could be relaxed. For example we could just assume $\bm v$ and $\bm\varepsilon_n$ to be martingale differences in the time dimension so to allow for conditional heteroskedasticity in time (see, e.g., \citealp{barigozzi2023fnets}). 

Part (iii) implies orthgonality of common and idiosyncratic components at all leads and lags consistently with the GSTFM in Definition \ref{Def. q_DFS}.

\begin{rem}\label{rem:1side}
\upshape{
If for any fixed $n\in \mathbb N$ the $n$-dimensional vector of common components has a spectral density matrix $\bm\Sigma_n^\chi(\thbf)$ which is a rational function of $\theta_3$, then, from \citet[Ch.1, Section 10]{Roz67} 
 it follows that, for all $\ell\le n$ and $\vsbf\in\mathbb Z^3$,
\begin{align}
\chi_{\ell\vsbf} &= \sum_{j=1}^q \frac{\underline {a}_{\ell j}(L_1,L_2,L_3)}{ \underline d_{\ell j} (L_3)} v_{j\vsbf}=
\sum_{j=1}^q\sum_{\kappa_1,\kappa_2\in\mathbb Z} \sum_{\kappa_3=0}^{p_1}  \mathrm a_{\ell j,(\kappa_1 \ \kappa_2 \ \kappa_3)}  L_1^{\kappa_1}L_2^{\kappa_2}L_3^{\kappa_3}
   \left[  \sum_{h_3=0}^{p_2} \mathrm d_{\ell j, h_3} L_3^{h_3}\right]^{-1}v_{j\vsbf},\nn
\end{align}
 for some finite positive integers $p_1$ and $p_2$, which, without loss of generality we can assume to be independent of $\ell$. Moreover, $d_{\ell j}(z)\ne 0$ for all $z\in\mathbb C$ such that $\vert z\vert \le 1$, and $a_{\ell j}(z_1,z_2,z_3)\ne 0$ for all $z_3\in\mathbb C$ such that $\vert z_3\vert< 1$.  By defining $f_{\ell j}(\theta_3)=[d_{\ell j}(\theta_3)]^{-1}=\sum_{\kappa_3=0}^\infty \mathrm f_{\ell j,\kappa_3} e^{-i\langle \kappa_3,\theta_3\rangle}$, it follows that 
\begin{align}
\chi_{\ell\vsbf} &=  \sum_{j=1}^q \sum_{\kappa_1,\kappa_2\in\mathbb Z} \sum_{\kappa_3=0}^{\infty} {\sum_{m_3=0}^{p_1}  \mathrm a_{\ell j,(\kappa_1 \ \kappa_2 \ m_3)} \mathrm f_{\ell j, \kappa_3-m_3}}
L_1^{\kappa_1}L_2^{\kappa_2}L_3^{\kappa_3}v_{j\vsbf}
,\nn
\end{align}
which, by setting $\mathrm c_{\ell j, (\kappa_1 \ \kappa_2\ 
\kappa_3)}=\sum_{m_3=0}^{p_1}  \mathrm a_{\ell j,(\kappa_1 \ \kappa_2 \ m_3)} \mathrm f_{\ell j, \kappa_3-m_3}$, coincides with Assumption \ref{Ass.MA}(i). Thus, for all $n\in \mathbb N$ and all $\vsbf\in\mathbb Z$,
$\bm v_{\vsbf} \in \overline{\text{\upshape span}}
 (\boldsymbol{\chi}_{n \vsbf-\kbf}, \kbf=(\kappa_1\ \kappa_2\ \kappa_3)^\top, \kappa_1, \kappa_2 \in\mathbb Z, \kappa_3\ge 0)$, i.e., $\bm v$ is fundamental for $\bm\chi_n$.
 The generalization of this reasoning to the infinite dimensional process $\bm\chi$ is considered in \citet[Lemma 1 and 2]{FHLZ15} in the case of pure time series, where it is shown that, under rationality of the spectral density, then fundamentalness of $\bm v$ is always true for any $n>q$ generically, i.e., for any value of the coefficients $c_{\ell j,\kbf}$ such that Assumption \ref{Ass.MA}(i) holds with the exception of a zero-measure set (see also \citealp{AD2008properties}). 
 }\end{rem}


The coefficients of the representations in Assumption \ref{Ass.MA} are characterized by 

\begin{ass}\label{Ass.coef1} 
For all $\ell\in\mathbb N$, $j=1,\ldots, q$, and $\bm \kappa=(\kappa_1\ \kappa_2 \ \kappa_3)^\top\in \mathbb Z^2\times \mathbb N_0$: 
\begin{inparaenum}
\item [(i)] $|c_{\ell j, \bm\kappa}| \leq A^\chi_{\ell j} {\rho_1^{\chi|\kappa_1|}} {\rho_2^{\chi|\kappa_2|}} {\rho_3^{\chi\kappa_3}}$, for some finite $\rho_1^\chi, \rho_2^\chi, \rho_3^\chi \in (0, 1)$ independent of $\ell$, $j$, and $\bm\kappa$, and some finite $A^\chi_{\ell j} >0$ independent of $\bm\kappa$ and such that 
$\sum_{j=1}^q A_{\ell j}^\chi \leq A^\chi$, for some finite $A^\chi>0$ independent of $\ell$;
\item [(ii)] 
$|\beta_{\ell j, \bm\kappa}| \leq A^\xi_{\ell j} {\rho_1^{\xi|\kappa_1|}} {\rho_2^{\xi|\kappa_2|}} {\rho_3^{\xi\kappa_3}}$, for some finite $\rho_1^\xi, \rho_2^\xi, \rho_3^\xi \in (0, 1)$ independent of $\ell$, $j$, and $\bm\kappa$, and some finite $A^\xi_{\ell j} >0$ independent of $\bm\kappa$ and such that 
$\sum_{j=1}^\infty A^\xi_{\ell j} \leq A^\xi$ and  $\sum_{\ell=1}^\infty A^\xi_{\ell j} \leq A^\xi$, for some finite $A^\xi>0$ independent of $\ell$ and~$j$. 
\end{inparaenum}
\end{ass}

%

This assumption implies square-summability of the coefficients of the filters, which for the common component is a sufficient condition for (ii) in Definition \ref{Def. q_DFS} to hold, see Remark \ref{ss1}. This assumption has two other important implications. First, part (ii) implies that the largest spatio-temporal dynamic eigenvalue of $\bm\xi_n$ satisfies (see Proposition \ref{lukagu} in Appendix \ref{APPE})
\beq
\sup_{\thbf \in \Thbf}\lim_{n\to\infty} \lambda_{n1}^\xi(\bth) \le C,
\eeq
for some finite $C>0$. Hence, according to (i) in Theorem \ref{Th. q-DFS}, $\bm\xi_n$ is effectively an idiosyncratic component.  Second, in part (i) we do not require summability of the coefficients along the rows, so that the spatio-temporal dynamic eigenvalues of $\bm\chi_n$ can be diverging with $n$. Divergence of those eigenvalues is made formal by means of the following assumption which strengthens (ii) in Theorem \ref{Th. q-DFS}:

\begin{ass}\label{Ass.lamChi}
For all $j=1,\ldots, q-1$ there exist continuous functions 
$\bth \mapsto \widetilde{\omega}_j(\bth)$ and $\bth \mapsto \undertilde{\omega}_j(\bth)$
such that for all $\thbf \in\Thbf$
$$
0<
\undertilde{\omega}_{j+1}(\bth)\leq
\lim_{n\to\infty}\frac{\lambda_{n,j+1}^{\chi}(\bth)}n\leq \widetilde{\omega}_{j+1}(\bth)<\undertilde{\omega}_j(\bth) \leq \lim_{n\to\infty}\frac{ \lambda_{nj}^{\chi}(\bth)}n \leq \widetilde{\omega}_j(\bth)<\infty. 
$$
\end{ass}
The requirements of distinct and linearly diverging eigenvalues are standard in the factor model literature. While the former requirement is merely technical, the latter implies that here we are dealing only with  factors which are pervasive for the whole cross-section, which in turn implies that the ordering of the cross-sectional units is irrelevant for estimation. 
Both requirements could, in principle be relaxed. For example, the case of local, or group specific dynamic factors, could be considered along the lines of what done by \citet{hallin2011dynamic} in the purely time series case. We do not make any distributional assumption but we require only the following moment conditions
\begin{ass}\label{Ass.moments}
For all $j=1,\ldots, q$ and $\ell \in \mathbb{N}$,
$\max\l\{{\rm E}\left(\vert v_{h\vsbf}\vert^p\right), {\rm E}\left(\vert \varepsilon_{j\vsbf}\vert^p\right)\r\} \leq \bar{A}$,
for some  $p>4$ and $\bar{A} >0$ independent of $j$ and $\ell$.
\end{ass}

Two technical assumptions are also required.
First, we characterize the kernel functions and bandwidths needed to estimate the spectral density matrix and the truncation levels in \eqref{troncamento}  by the following

\begin{ass}\label{AssSpec2}
\begin{inparaenum}
\item[(i)]
For any $l=1,2,3$, the kernel functions $K_l:[-1,1]\to \mathbb R^+$ are symmetric and bounded, and such that
\begin{inparaenum}
\item[(a)] $K_l(0) = 1$;
\item[(b)] for some $\vartheta_l >0$, $|K_l(u) -1| = O(|u|^{\vartheta_l})$ as $u \rightarrow 0$;
\item[(c)] $\int_{\mathbb{R}} K_l^2(u){\rm d} u < \infty$; 
\item[(d)] $\sum_{h_1 \in \mathbb{Z}} \sup_{|h_1 - h_2| \leq 1} |K_l(h_1 u) - K_l (h_2 u)| = O(1)$ as $u  \rightarrow 0$.
\end{inparaenum}
\item[(ii)] 
The bandwidths are such that
$c_1 S_1^{b_1} < B_{S_1} < c_2 S_1^{b_2}$, $c_1^* S_2^{b_1^*} < B_{S_2} < c_2^* S_2^{b_2^*}$, and $c_1^{**} T^{b_1^{**}} < B_{T} < c_2^{**} T^{b_2^{**}}$, for some
$c_1, c_2, c_1^*, c_2^*, c_1^{**}, c_2^{**} >0$ and $0<b_1<b_2<1$, $0<b_1^*<b_2^*<1$, $0<b_1^{**}<b_2^{**}<1$. 
\item [(iii)] 
$d_1 S_1^{p_1} < M_{S_1}< d_2S_1^{p_2}$, 
$d_1^* S_2^{p_1^*} < M_{S_2}< d_2^*S_2^{p_2^*}$, and
$d_1^{**} T^{p_1^{**}} < M_{T}< d_2^{**}T^{p_2^{**}}$, for some 
$d_1, d_2, d_1^*, d_2^*, d_1^{**}, d_2^{**} >0$ and $0<p_1<p_2<1$, $0<p_1^*<p_2^*<1$, $0<p_1^{**}<p_2^{**}<1$. 
\end{inparaenum}
\end{ass}

Part (i) and (ii) are standard. Part (iii) controls the truncation of the linear filter defined in \eqref{def.underlineK} and \eqref{troncamento}.

Second, we assume that the effect of the linear spatio-temporal filters ${\underline{\K}}^{\chi}_{n\ell}(L)$, as defined in \eqref{chatbot2}, decreases geometrically. 
 
\begin{ass}\label{Ass.ExpK}
%
For any $\ell=1,\ldots, n$,
let ${\underline{\K}}^{\chi}_{n\ell}(L)=\sum_{(\kappa_1\ \kappa_2 \ \kappa_3)^\top\in \mathbb Z^3}
{{\K}}^{\chi}_{n\ell,(\kappa_1\ \kappa_2\ \kappa_3)}$ $L_1^{\kappa_1}L_2^{\kappa_2}L_3^{\kappa_3}$,
then, 
$\Vert {{\K}}^{\chi}_{n\ell,(\kappa_1\ \kappa_2\ \kappa_3)} \Vert  \leq C_0 (1+\varepsilon_1)^{-|\kappa_1|}  (1+\varepsilon_2)^{-|\kappa_2|} (1+\varepsilon_3)^{-|\kappa_3|} \Vert {{\K}}^{\chi}_{n\ell,(0\ 0\ 0)}\Vert,
$
for some finite $C_0, \epsilon_1,\epsilon_2,\epsilon_3>0$ independent of $\ell$.
\end{ass}




\subsection{Asymptotic results}

To study the asymptotic properties of the estimated spectral density matrix, we generalize to the case of spatio-temporal rf the approaches by  \cite{wu2018asymptotic} and \citet{zhang2021convergence} for time series and by  \cite{DPW17}  for purely spatial models, which in turn are all are based on the notion of functional dependence originally proposed by \citet{wu2005nonlinear} in a univariate time series context. 
The resulting estimation theory is available in Appendix \ref{Spec.Y} and represents a novel contribution to the literature on the inference for spatio-temporal rf.  

Letting $\widehat{\sigma}_{ij}^x(\thbf)$ be the $(i,j)$-th entry of the estimator $\wh{\bm\Sigma}_n^x(\thbf)$, defined in \eqref{hatSimgax}, we prove the following


\begin{theorem}\label{Thmsigmax}
Let Assumptions \ref{Ass.A}, \ref{Ass.B},  \ref{Ass.MA},  \ref{Ass.coef1}, \ref{Ass.moments}, and~\ref{AssSpec2} hold. Then, there exists a finite $C>0$ independent of $n, S_1,S_2$ and $T$, such that 
\begin{align} 
 \max_{1\leq i, j \leq n} \sup_{\thbf \in \Thbf} {\rm E} \left| \widehat{\sigma}_{ij}^x(\thbf)  - \sigma_{ij}^x(\thbf) \right|^2  =  
C \max\left\lbrace \frac{(\log B_{S_1} \log B_{S_2}  \log B_{T})^2 B_{S_1} B_{S_2} B_{T}}{S_1 S_2 T},  \frac 1{B_{S_1}^{2\vartheta_1}},\frac 1{B_{S_2}^{2\vartheta_2}},\frac 1{B_T^{2\vartheta_3}} \right\rbrace,\nn
\end{align}
where $\vartheta_1$, $\vartheta_2$, and $\vartheta_3$ are defined in Assumption \ref{AssSpec2}.
\end{theorem}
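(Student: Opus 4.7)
The plan is to bound the MSE through the standard bias--variance decomposition
\[
{\rm E}|\widehat\sigma_{ij}^x(\thbf)-\sigma_{ij}^x(\thbf)|^2 \;=\; \bigl|{\rm E}\widehat\sigma_{ij}^x(\thbf)-\sigma_{ij}^x(\thbf)\bigr|^2 + {\rm Var}\bigl(\widehat\sigma_{ij}^x(\thbf)\bigr),
\]
and to produce, for each piece, a bound with constant depending only on the moment bound $\bar A$ of Assumption \ref{Ass.moments}, the kernel constants of Assumption \ref{AssSpec2}, and the geometric-decay rates of Assumption \ref{Ass.coef1}---but not on $n$, $(i,j)$, or $\thbf$. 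Since the right-hand side of the theorem is independent of $(i,j,\thbf)$, once such pointwise bounds are in place the $\max_{i,j}\sup_\thbf$ can be taken freely.

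For the bias, I would compute ${\rm E}\widehat\sigma_{ij}^x(\thbf)$ explicitly, obtaining a Bartlett-weighted truncated Fourier series $\sum_{|h_k|<S_k}\prod_{k=1}^3(1-|h_k|/S_k)K_k(h_k/B_{S_k})\gamma_{ij}^x(\hbf)e^{-i\langle\hbf,\thbf\rangle}$ (with $S_3=T$, $B_{S_3}=B_T$), and subtract $\sigma_{ij}^x(\thbf)=\sum_\hbf\gamma_{ij}^x(\hbf)e^{-i\langle\hbf,\thbf\rangle}$. Three error sources appear: the edge factors $|h_k|/S_k$, the kernel smoothing $K_k(h_k/B_{S_k})-1$ for $|h_k|\le B_{S_k}$, and the truncation at $|h_k|>B_{S_k}$. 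Under Assumptions \ref{Ass.MA} and \ref{Ass.coef1} the autocovariances $\gamma_{ij}^x(\hbf)$ decay geometrically in each coordinate with constants uniform in $(i,j)$, so the edge and truncation errors are exponentially small in $B_{S_k}$ and dominated by the smoothing error. Splitting the smoothing sum at $|h_k|\sim B_{S_k}$ and using $|K_k(u)-1|=O(|u|^{\vartheta_k})$ (Assumption \ref{AssSpec2}(i)(b)) bounds the total bias by $O(B_{S_1}^{-\vartheta_1}+B_{S_2}^{-\vartheta_2}+B_T^{-\vartheta_3})$; squaring yields the three $B_{S_k}^{-2\vartheta_k}$ terms in the $\max$.

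The variance is the main work. Writing it as the fourfold sum
\[
{\rm Var}\widehat\sigma_{ij}^x(\thbf)=\frac{1}{(S_1S_2T)^2}\!\!\sum_{\vsbf_1,\vsbf_2,\vsbf_3,\vsbf_4}\!\! W(\cdot)\,e^{i\langle\vsbf_3-\vsbf_4-\vsbf_1+\vsbf_2,\thbf\rangle}\,{\rm Cov}(x_{i\vsbf_1}x_{j\vsbf_2},x_{i\vsbf_3}x_{j\vsbf_4}),
\]
with $W$ collecting the kernel products, I would split ${\rm Cov}(x_{i\vsbf_1}x_{j\vsbf_2},x_{i\vsbf_3}x_{j\vsbf_4})$ by the cumulant identity into two second-order bilinear terms plus a fourth-order joint cumulant. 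The bilinear terms, after the change of variables $(\vsbf_1-\vsbf_3,\vsbf_2-\vsbf_4,\vsbf_1-\vsbf_2)$ and the geometric decay of $\gamma_{\cdot\cdot}^x$, reduce to a standard product-kernel convolution estimate yielding the leading rate $B_{S_1}B_{S_2}B_T/(S_1S_2T)$ uniformly in $(i,j,\thbf)$. For the fourth-cumulant term I would invoke the spatio-temporal functional dependence framework developed in Appendix \ref{Spec.Y}: under the MA representations of Assumption \ref{Ass.MA}, the geometric coefficient decay of Assumption \ref{Ass.coef1}, and the $p>4$ moment bound of Assumption \ref{Ass.moments}, the $L_p$ dependence measures of $x_{i\vsbf}$ are absolutely summable uniformly in $i$, which controls $\sum_{\hbf_1,\hbf_2,\hbf_3}|{\rm cum}_4(x_{i\vsbf},x_{j\vsbf+\hbf_1},x_{i\vsbf+\hbf_2},x_{j\vsbf+\hbf_3})|$ by a constant. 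Combining this with tensorized maximal/Fej\'er-type inequalities across the three spatio-temporal dimensions produces the extra $(\log B_{S_1}\log B_{S_2}\log B_T)^2$ factor in the statement.

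The hard part is precisely the last step: the uniform-in-$(i,j,\thbf)$ control of the cumulant contribution. Two features make it delicate. First, the idiosyncratic innovation $\bm\varepsilon$ is infinite-dimensional (Assumption \ref{Ass.MA}(ii)), so the chain rule for the functional dependence measure introduces an additional summation over the innovation index; the double summability of the MA coefficients imposed in Assumption \ref{Ass.coef1}(ii) (bounds both along rows and along columns of $\beta_{\ell j,\kbf}$) is exactly what makes this sum finite, uniformly in $(\ell,j)$. Second, the lattice $\mathbb Z^3$ carries no total order, so the Wu/Zhang--Wu time-series arguments cannot be applied verbatim; the tensor-product kernel structure of Assumption \ref{AssSpec2}(i) is used to decouple the three coordinates, yielding one logarithmic factor per dimension which is squared when passing from $L_2$ maxima to the MSE via Cauchy--Schwarz. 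These steps, together with the spatio-temporal functional-dependence definitions and the maximal inequalities, are the content of the auxiliary Appendix \ref{Spec.Y}, on which the proof of Theorem \ref{Thmsigmax} is built.
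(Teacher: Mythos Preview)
Your bias analysis is correct and matches the paper exactly: the paper splits $|{\rm E}\widehat\sigma_{ij}^x(\thbf)-\sigma_{ij}^x(\thbf)|$ into the kernel-smoothing piece $\mathcal C^{[1]}$, the edge piece $\mathcal C^{[2]}$, and the truncation piece $\mathcal C^{[3]}$, and bounds each using the uniform geometric decay of $\Gamma_{ij,\vsbf}^x$ (proved as a separate Proposition from Assumptions \ref{Ass.MA}--\ref{Ass.coef1}).

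For the variance, however, your route and the paper's diverge. You propose the classical Brillinger cumulant decomposition; the paper instead uses an $m$-dependent approximation $\widetilde Y_{i\vsbf}={\rm E}[Y_{i\vsbf}\mid\mathcal F_{\m,\vsbf}]$ built from the functional dependence measure, followed by martingale-difference blocking and Burkholder's inequality (this is the content of Appendix~\ref{Spec.Y}, specifically Proposition~\ref{ThmSpecDen1} and Lemma~\ref{lemYYtilde}). There are no cumulants and no maximal or Fej\'er inequalities in that appendix. Your cumulant approach is legitimate here---under the linear representations of Assumption~\ref{Ass.MA} and the double summability of $\beta_{\ell j,\kbf}$ in Assumption~\ref{Ass.coef1}(ii), the sum $\sum_{\hbf_1,\hbf_2,\hbf_3}|{\rm cum}_4(x_{i\cdot},x_{j\cdot},x_{i\cdot},x_{j\cdot})|$ is indeed finite uniformly in $(i,j)$---but it would yield the sharper variance bound $O(B_{S_1}B_{S_2}B_T/(S_1S_2T))$ \emph{without} any logarithms, so your attribution of the $(\log B_{S_1}\log B_{S_2}\log B_T)^2$ factor to the cumulant term via maximal inequalities is incorrect. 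In the paper those logarithms arise from the $m$-dependent machinery: the bound obtained is $O\bigl((m_1m_2m_3)^2 B_{S_1}B_{S_2}B_T/(S_1S_2T)\bigr)+O(\varphi_{\m+\1,p})$, and choosing $m_k\propto\log B_{S_k}$ makes the geometric approximation error $\varphi_{\m+\1,p}$ negligible while leaving the $(m_1m_2m_3)^2$ prefactor as the logs. So your proof sketch would establish the theorem (indeed a stronger version), but your explanation of the log factors and your description of what Appendix~\ref{Spec.Y} contains are both off.
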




Our results are nonstandard in the literature on geostatistics: we do not need to choose between in-fill or long-span asymptotic regime and we simply require that both $S_1$ and $S_2$ diverge, so $S\to\infty$. With this regard, we emphasize that our estimator of the spectral density matrix entries as in (\ref{hatSimgax}) bears some similarities with the tapered estimator of the Fourier transform of the covariance matrix of a spatial rf on a lattice proposed by \cite{DK87}. Differently from their method, in our approach  we replace data tapers with kernels. This yields a two-fold advantage: first, it allows to control for the estimation bias of $\sigma_{ij}^x(\thbf)$, taking care of the boundary effects; second, it offers the possibility of using the mentioned flexible asymptotic regime. We refer to \cite{ElMDW13} for a related discussion; see also \cite{DPW17} for similar comments.  

\begin{rem}\label{rem:band}
\upshape{
The rate in Theorem \ref{Thmsigmax} depends on the kernel smoothness $\vartheta_l$, $l=1,2,3$ and the bandwidths $B_{S_1}$, $B_{S_2}$, and $B_T$ (see Assumption \ref{AssSpec2}). Typically the same kernel is used in all dimensions, so we can assume $\vartheta_l= \vartheta_o$ for all $l=1,2,3$.
Consider the case in which $S_1\asymp S_2\asymp T$, then,  up to logarithmic terms, the optimal spatial bandwidths are such that 
$B_{S_1} \asymp S_1^{3/(2\vartheta_o+3)}$, $B_{S_2} \asymp S_2^{3/(2\vartheta_o+3)}$, and $B_{T} \asymp T^{3/(2\vartheta_o+3)}$. 
This implies that the optimal rate of consistency for our estimator of the spectral density matrix is $S_1^{3\vartheta_o/(2\vartheta_o+3)}= S_2^{3\vartheta_o/(2\vartheta_o+3)}= T^{3\vartheta_o/(2\vartheta_o+3)}$.  In our applications we used the Epanechnikov kernel for which $\vartheta_o=2$, hence, the rate of consistency is $S_1^{6/7}=S_2^{6/7}=T^{6/7}$.
In a pure time series model the consistency rate is $T^{\vartheta_o/(2\vartheta_o+1)}$  \citep{barigozzi2022algebraic}, which for a Epanechnikov kernel implies a rate $T^{2/5}$, much slower than what achieved using also the spatial information.
}
\end{rem}
We then prove consistency of the common component estimator $\wh{\chi}_{\ell\vsbf}^{(n)}$ defined in \eqref{finiamo!}

\begin{theorem}\label{Thm.hatChin} 
Let Assumptions \ref{Ass.A}, \ref{Ass.B},  \ref{Ass.MA},  \ref{Ass.coef1}, \ref{Ass.lamChi},  \ref{Ass.moments}, \ref{AssSpec2}, and \ref{Ass.ExpK} hold. 
Define
\[
\alpha_{n,S_1,S_2,T}= \max\l\{\frac1{\sqrt n},  (\log B_{S_1} \log B_{S_2}  \log B_{T})\sqrt{\frac{ B_{S_1} B_{S_2} B_{T}}{S_1 S_2 T}},  \frac 1{B_{S_1}^{\vartheta_1}},\frac 1{B_{S_2}^{\vartheta_2}},\frac 1{B_T^{\vartheta_3}} \r\},
\]
where $\vartheta_1, \vartheta_2$, and $\vartheta_3$ are defined in Assumption \ref{AssSpec2}. Then, 
 there exists finite ${C}, C^*,\widetilde{C}  >0$ independent of $n,S_1,S_2$, and $T$, such that, 
\begin{enumerate}
\item[(i)] 
for any $\vsbf=(s_1\ s_2 \ t)^\top$ with $s_1=1,\ldots, S_1$, $s_2=1,\ldots, S_2$, and $t=1,\ldots, T$, and  for all $\varepsilon>0$,
\begin{align*}
\max_{1\leq \ell \leq n}  {\rm P}\left[ \left\vert \widehat{\chi}^{(n)}_{\ell\vsbf} - \chi_{\ell\vsbf} \right\vert \geq \varepsilon \right]   &\leq \frac{C}{\varepsilon}  \alpha_{n,S_1,S_2,T}
 M_{S_1} M_{S_2} M_T\nn\\
&+ \frac{C^*}{\varepsilon} (1+\varepsilon_1)^{-\kappa^*_1(s_1)}  (1+\varepsilon_2)^{-\kappa^*_2(s_2)} (1+\varepsilon_3)^{-\kappa^*_3(t)},
\end{align*}
with $\kappa^*_1(s_1) = \min\{\vert \underline{\kappa}_1(s_1) -1 \vert, \overline{\kappa}_1(s_1)+1\}$, $\kappa^*_2(s_2) = \min\{\vert \underline{\kappa}_2(s_2) -1 \vert, \overline{\kappa}_2(s_2)+1\}$ and $\kappa^*_3(t) = \min\{\vert \underline{\kappa}_3(t) -1 \vert, \overline{\kappa}_3(t)+1\}$, and where
$M_{S_1}$, $M_{S_2}$, and $M_T$ are defined in Assumption \ref{AssSpec2}, 
$\varepsilon_1,\varepsilon_2$, and $\varepsilon_3$ are defined in Assumption \ref{Ass.ExpK}, and 
$\underline{\kappa}_1(s_1), \overline{\kappa}_1(s_1),  \underline{\kappa}_2(s_2), \overline{\kappa}_2(s_2), \underline{\kappa}_3(t)$, and $\overline{\kappa}_3(t)$ are defined in \eqref{troncamento}.

\item[(ii)]
for any $\vsbf=(s_1\ s_2 \ t)^\top$ with $s_1=M_{S_1},\ldots, S_1-M_{S_1}$ and $s_2=M_{S_2},\ldots,S_2-M_{S_2}$ and $t=M_{T},\ldots, T-M_{T}$, and for all $\varepsilon>0$,
%
\begin{align*}
\max_{1\leq \ell \leq n}  {\rm P}\left[ \left\vert \widehat{\chi}^{(n)}_{\ell\vsbf} - \chi_{\ell\vsbf} \right\vert \geq \varepsilon \right]   &\leq\frac{\widetilde{C}}{ \varepsilon}  
 \alpha_{n,S_1,S_2,T}
 M_{S_1} M_{S_2} M_T.
\end{align*}
\end{enumerate}
\end{theorem}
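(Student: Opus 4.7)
\emph{Proof proposal.} Exploiting the identity $\chi_{\ell\vsbf} = \underline{\K}^{\chi}_{n\ell}(L)\bm x_{n\vsbf}$ from Remark~\ref{rem:freddo}, and letting $\underline{\K}^{\chi,\vsbf}_{n\ell}(L)$ denote the truncation of $\underline{\K}^{\chi}_{n\ell}(L)$ constructed analogously to~\eqref{def.underlineK}, I would split the estimation error into three pieces,
\begin{align*}
\widehat{\chi}^{(n)}_{\ell\vsbf} - \chi_{\ell\vsbf} = \underbrace{[\widehat{\underline{\K}}^{x,\vsbf}_{n\ell}(L) - \underline{\K}^{x,\vsbf}_{n\ell}(L)]\bm x_{n\vsbf}}_{T_1} + \underbrace{[\underline{\K}^{x,\vsbf}_{n\ell}(L) - \underline{\K}^{\chi,\vsbf}_{n\ell}(L)]\bm x_{n\vsbf}}_{T_2} + \underbrace{[\underline{\K}^{\chi,\vsbf}_{n\ell}(L) - \underline{\K}^{\chi}_{n\ell}(L)]\bm x_{n\vsbf}}_{T_3},
\end{align*}
capturing, respectively, frequency-domain sampling error, idiosyncratic contamination of the true filter, and truncation of the true filter. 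The truncation tail $T_3$ is dispatched by Assumption~\ref{Ass.ExpK}: the geometric decay of $\Vert\K^\chi_{n\ell,\bm\kappa}\Vert$ gives $\mathrm{E}|T_3|^2 \le C \prod_{j=1}^3(1+\varepsilon_j)^{-2\kappa^*_j}$, and Markov's inequality produces the second summand in part~(i). In part~(ii) we have $\kappa^*_1 = M_{S_1}+1$, $\kappa^*_2=M_{S_2}+1$, $\kappa^*_3=M_T+1$, and Assumption~\ref{AssSpec2}(iii) makes this term decay faster than any polynomial in $S_1,S_2,T$, so it is absorbed into $\alpha_{n,S_1,S_2,T}\,M_{S_1}M_{S_2}M_T$.

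For the population-level gap $T_2$ I would exploit $\bm\Sigma_n^x(\bm\theta) = \bm\Sigma_n^\chi(\bm\theta) + \bm\Sigma_n^\xi(\bm\theta)$, the linear-in-$n$ growth of the top $q$ dynamic eigenvalues of $\bm\Sigma_n^\chi(\bm\theta)$ uniformly in $\bm\theta$ (Assumption~\ref{Ass.lamChi}), and the uniform bound $\sup_{\bm\theta}\Vert\bm\Sigma_n^\xi(\bm\theta)\Vert = O(1)$ (Assumption~\ref{Ass.coef1}(ii) combined with Proposition~\ref{lukagu}). A Davis--Kahan sin-theta bound applied pointwise in $\bm\theta$ yields $\Vert\bm P_n(\bm\theta)-\bm P_n^\chi(\bm\theta)\Vert = O(n^{-1})$ uniformly, and inverse Fourier transformation transfers the same rate to each filter coefficient. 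Because the leading eigenvalue of $\text{\upshape Var}(\bm x_{n\vsbf})$ is of order $n$ due to the common factors, a direct variance calculation then gives $\mathrm{E}|T_2|^2 = O(n^{-1})$, which is the $n^{-1/2}$ contribution to $\alpha_{n,S_1,S_2,T}$.

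The sampling term $T_1$ is the main technical ingredient, to be proved by chaining four ingredients: (a) Theorem~\ref{Thmsigmax} for the entrywise mean-square error of $\widehat{\bm\Sigma}_n^x(\bm\theta)$; (b) a Davis--Kahan bound, uniform in $\bm\theta$ and leveraging the eigen-gap from Assumption~\ref{Ass.lamChi}, to transfer this rate to $\widehat{\bm P}_n(\bm\theta)$; (c) the inverse Fourier transform over the truncation cube $[-M_{S_1},M_{S_1}]\times[-M_{S_2},M_{S_2}]\times[-M_T,M_T]$ to obtain a lag-wise bound for $\widehat{\K}^{x,\vsbf}_{n\ell,\bm\kappa}-\K^{x,\vsbf}_{n\ell,\bm\kappa}$; and (d) a Cauchy--Schwarz argument over the $(2M_{S_1}+1)(2M_{S_2}+1)(2M_T+1)$ retained lags combined with $\sup_{\ell,\vsbf}\mathrm{E}|x_{\ell\vsbf}|^2 < \infty$ and Markov's inequality, producing the first summand of the bound with the $M_{S_1}M_{S_2}M_T$ multiplier arising from this lag sum. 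The principal obstacles are making the Davis--Kahan step in (b) uniform in $\bm\theta$ — which is precisely why Assumption~\ref{Ass.lamChi} postulates a strict lower bound on the eigen-gap at every frequency — and ensuring that cross-sectional aggregation over the $n$ components in step (d) does not introduce a spurious factor of $\sqrt{n}$; the latter hinges on the rank-$q$ projection structure of $\K^x_{n\ell}(\bm\theta)$, so that $n$-fold aggregation is absorbed into the $O(1)$ operator norms of the leading eigenvectors rather than producing growth.
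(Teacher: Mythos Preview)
Your strategy is sound and close to the paper's, but the paper takes a slightly more streamlined route. Instead of your three-term split $T_1+T_2+T_3$, the paper uses only two: a truncation tail (your $T_3$) and a single term $\sum_{\vsbf'\in\mathcal D_\vsbf}\e_{n\ell}^\top\bigl(\widehat{\underline{\K}}^x_{n\vsbf'}-\underline{\K}^\chi_{n\vsbf'}\bigr)\bm x_{n,\vsbf-\vsbf'}$ that merges your $T_1$ and $T_2$. The advantage is that Davis--Kahan is applied once, directly between $\widehat{\bm\Sigma}_n^x$ and $\bm\Sigma_n^\chi$; since the latter has exact rank $q$, the eigen-gap is $\lambda_{nq}^\chi-0$, which avoids having to control $\lambda_{n,q+1}^x$ separately. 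This is packaged as a single lemma bounding $\sup_{\bm\theta}\mathrm E\bigl\Vert n^{1/2}\e_{n\ell}^\top[\widehat{\K}^x_n(\bm\theta)-\K^\chi_n(\bm\theta)]\bigr\Vert^2\le C\max(n^{-1},\alpha_{\bar\vsbf})$.

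The other place where the paper is more explicit than your sketch is the handling of the cross-sectional aggregation in step~(d). Rather than relying on the rank-$q$ projection structure abstractly, the paper writes $\bm x_{n,\vsbf-\vsbf'}=\bm\chi_{n,\vsbf-\vsbf'}+\bm\xi_{n,\vsbf-\vsbf'}$, expands $\bm\chi_{n,\vsbf-\vsbf'}=\sum_{\kbf}\bm C_{n\kbf}\bm v_{\vsbf-\vsbf'-\kbf}$ via Assumption~\ref{Ass.MA}, and applies Cauchy--Schwarz at the level of first moments:
\[
\mathrm E\bigl|a^{(1)}_{n\ell\vsbf}\bigr|\le\sum_{\vsbf'\in\mathcal D_\vsbf}\Bigl\{\mathrm E\bigl\Vert n^{1/2}\e_{n\ell}^\top(\widehat{\underline{\K}}^x_{n\vsbf'}-\underline{\K}^\chi_{n\vsbf'})\bigr\Vert^2\Bigr\}^{1/2}\Bigl\{n^{-1}\mathrm E\bigl\Vert\textstyle\sum_{\kbf}\bm C_{n\kbf}\bm v_{\vsbf-\vsbf'-\kbf}\bigr\Vert^2\Bigr\}^{1/2}.
\]
The $n^{1/2}$ on the filter row and the $n^{-1}$ on the data norm are what make the aggregation harmless; the second factor is $O(1)$ because $n^{-1}\sum_\ell\sum_j|c_{\ell j,\kbf}|^2$ is summable in $\kbf$ by Assumption~\ref{Ass.coef1}. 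Your statement ``combined with $\sup_{\ell,\vsbf}\mathrm E|x_{\ell\vsbf}|^2<\infty$'' is not the right quantity here---it is the normalized vector norm $n^{-1}\mathrm E\Vert\bm x_{n\vsbf}\Vert^2$ (or its $\chi$ and $\xi$ pieces separately) that needs to be bounded, and the MA representations are what deliver this. The $M_{S_1}M_{S_2}M_T$ factor then arises from the outer sum over $\vsbf'\in\mathcal D_\vsbf$, exactly as you anticipated.
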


Theorem~\ref{Thm.hatChin} proves that for consistency of $\widehat{\chi}^{(n)}_{\ell\vsbf}$, the number of lags $M_{S_1}$, $M_{S_2}$, and $M_T$ used in \eqref{def.underlineK} should not be too large, while $n$, $S_1$, $S_2$ and $T$ should all diverge to infinity. As it is clear from Theorem \ref{Mythm5}, we need a large $n$ to disentangle the common and the idiosyncratic components, while from Theorem \ref{Thmsigmax} we see that we need large $S_1$, $S_2$, and $T$  to consistently estimate the spectral density matrix of the observed rf. We remark  that part (i)  yields a rate of convergence also when the spatial locations and the time are close to the boundaries: this aspects has been neglected in the literature on factors models.

\begin{rem}
\upshape{The consistency rate depends on the truncation level we choose when applying the two-sided filter in \eqref{def.underlineK}. When considering the same setting as in Remark \ref{rem:band} so that the consistency rate for the estimate spectral density is $T^{6/7}$, and assuming $M_{S_1}=M_{S_2}=M_T=M$, we need $M= o(T^{2/7})$. 
}
\end{rem}

\section{Determining the number of factors}\label{Sec.Selectq}

An essential aspect for the implementation of the GSTFM is the correct identification of the number of factors $q$. Theorem~\ref{Th. q-DFS} provides a rough guideline for this:  intuitively, one should choose  the value of $q$ such that the $q$-th dynamic eigenvalue should be ``sufficiently large" while the $q+1$-th one should not be ``small". To provide a more precise selection procedure, we define an information criterion (IC) that enables us to estimate $q$ consistently. To this end, 
we propose the use of a criterion which is based on the eigenvalues, $\widehat{\lambda}^x_{nj}(\thbf)$, $j=1,\ldots, n$, of  $\widehat{\boldsymbol \Sigma}_n^x(\thbf)$. 

Letting $p(n, S_1,S_2,T)$ denote a penalty depending on both $n$ and on $S_1,S_2$, and $T$, we consider the information criterion
\begin{equation}
\widehat{\rm IC}^{(n)}(k) =\log\left[ \frac{1}{n} \sum_{j=k+1}^n \frac 1{8\pi ^3}\int_{\thbf\in\Thbf}
\widehat{\lambda}^x_{nj}(\thbf)\mathrm d\thbf\right] + k\, p(n, S_1,S_2,T),\nn
\end{equation}
 and we define the estimator of the number of factors
\beq
\widehat{q}^{(n)} = \arg\!\!\!\!\!\!\min_{0\leq k \leq q_{\max}} \widehat{\rm IC}^{(n)}(k),\label{eq.hatICtext}
\eeq
for some a priori chosen maximum number of factors $q_{\max}$.
We assume the following standard divergence rate of the penalty 
\begin{ass}\label{Ass.pnvs}
As $n , S_1, S_2, T \rightarrow \infty$,
$p(n, S_1, S_2, T) \rightarrow 0$ and 
$$
\min\left\lbrace n, \frac 1{\log B_{S_1} \log B_{S_2} \log B_{T} }\sqrt{\frac{S_1 S_2 T}{B_{S_1} B_{S_2} B_T }}, B_{S_1}^{\vartheta_1}, B_{S_2}^{\vartheta_2}, B_T^{\vartheta_3}\right\rbrace p(n, S_1, S_2, T) \rightarrow \infty.
$$
\end{ass}

Finally, we establish consistency of $\widehat{q}^{(n)}$
\begin{theorem}\label{Prop.qselect_Sample}
Let Assumptions \ref{Ass.A}, \ref{Ass.B},  \ref{Ass.MA},  \ref{Ass.coef1}, \ref{Ass.lamChi}, \ref{Ass.moments}, \ref{AssSpec2}, and  \ref{Ass.pnvs} hold. 
Then, as $n, S_1, S_2, T \rightarrow \infty$, ${\rm P}(\widehat{q}^{(n)} = q) \to 1$.
\end{theorem}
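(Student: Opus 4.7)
The plan is the standard two-sided consistency argument for information criteria in factor models, dividing $\{\widehat q^{(n)} \neq q\}$ into under-selection ($k < q$) and over-selection ($q < k \leq q_{\max}$). Set $\bar\lambda_{nj}^x = (8\pi^3)^{-1}\int_{\bm\Theta} \lambda_{nj}^x(\bm\theta)\,\mathrm d\bm\theta$, $V_n(k) = n^{-1}\sum_{j=k+1}^n \bar\lambda_{nj}^x$, and $\widehat V_n(k)$ its sample analogue obtained from $\widehat\lambda_{nj}^x$. With $p_n := p(n,S_1,S_2,T)$, the argument is driven by
\[
\widehat{\rm IC}^{(n)}(k) - \widehat{\rm IC}^{(n)}(q) = \log\!\left(\widehat V_n(k)/\widehat V_n(q)\right) + (k-q)\,p_n
\]
and by its population analogue obtained by replacing $\widehat V_n$ with $V_n$.

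First I would catalogue the population eigenvalue behavior. Weyl's inequality applied to $\bm\Sigma_n^x = \bm\Sigma_n^\chi + \bm\Sigma_n^\xi$, combined with Assumption \ref{Ass.lamChi} and the essential boundedness of $\lim_n \lambda_{n1}^\xi$ (which follows from Assumption \ref{Ass.coef1}(ii) via Proposition \ref{lukagu}), yields uniformly in $\bm\theta\in\bm\Theta$: $c_j \leq \lambda_{nj}^x(\bm\theta)/n \leq C_j$ for $j \leq q$ (for $n$ large enough), and $\lambda_{nj}^x(\bm\theta) \leq C$ for $j > q$ (for all $n$, since $\lambda_{nj}^x$ is non-decreasing in $n$). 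Consequently $V_n(q)$ is bounded away from $0$ and $\infty$; $V_n(k) - V_n(q)$ converges to a strictly positive constant for $k < q$; and $0 \leq V_n(q) - V_n(k) \leq C(k-q)/n$ for $q < k \leq q_{\max}$. Plugging into the population IC, for $k < q$ the log term is eventually bounded below by a positive constant while the penalty gap is $o(1)$; for $q < k \leq q_{\max}$ the log term is $O(1/n)$ while the penalty gap $(k-q)p_n \geq p_n$ dominates because Assumption \ref{Ass.pnvs} forces $np_n\to\infty$. Thus the population IC strictly prefers $q$, with gap of order $\min\{1, np_n\}$.

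The passage to the sample IC reduces to establishing $\max_{0 \leq k \leq q_{\max}}|\widehat V_n(k) - V_n(k)| = o_P(p_n)$. By Weyl, eigenvalue errors are dominated by $\sup_{\bm\theta} \|\widehat{\bm\Sigma}_n^x(\bm\theta) - \bm\Sigma_n^x(\bm\theta)\|_{\mathrm{op}}$; applying Theorem \ref{Thmsigmax} entry-wise with Markov's inequality, together with Assumption \ref{Ass.pnvs} which is calibrated so that $p_n$ dominates both $1/n$ and the spectral density estimation rate of Theorem \ref{Thmsigmax}, yields the required stochastic bound. Continuity of $\log$ near $V_n(q)$ (bounded away from zero) then lifts the consistency of $\widehat V_n$ to the log-ratios entering $\widehat{\rm IC}^{(n)}$, so the strict minimum at $q$ carries over with probability tending to one, i.e. $P(\widehat q^{(n)} = q) \to 1$.

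The main obstacle is obtaining a bound on $|\widehat V_n(k) - V_n(k)|$ sharp enough for the over-selection case $k \geq q$: there one averages $n-k$ idiosyncratic eigenvalues of order $O(1)$, whereas a naive combination of Weyl with the entry-wise rate of Theorem \ref{Thmsigmax} picks up a spurious dimension factor $n$. I expect to circumvent this by exploiting the trace identity $\sum_{j=1}^n \bar{\widehat\lambda}_{nj}^x = (8\pi^3)^{-1}\sum_i \int \widehat\sigma_{ii}^x(\bm\theta)\,\mathrm d\bm\theta$, which reduces the average of eigenvalues to the average of $n$ diagonal spectral density entries controlled directly by Theorem \ref{Thmsigmax}, together with a Davis--Kahan-type projection bound to peel off the $q$ leading eigenvalues using the spectral gap of order $n$ guaranteed by Assumption \ref{Ass.lamChi}. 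With this bound in hand, the remaining steps are routine.
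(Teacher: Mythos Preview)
Your overall two-sided strategy (under- vs.\ over-selection, population benchmark first, then stochastic perturbation) is correct and matches the paper. The difference lies in how the sample-to-population passage is organized, and this is precisely where you manufacture the ``main obstacle'' that you then struggle to resolve.

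You reduce everything to the uniform bound $\max_{0\le k\le q_{\max}}|\widehat V_n(k)-V_n(k)|=o_P(p_n)$. Since $\widehat V_n(k)$ averages $n-k$ eigenvalues, this reduction forces you to control essentially all $n$ sample eigenvalues, and you correctly note that a naive Weyl bound then costs a factor of $n$. Your proposed repair via the trace identity plus a Davis--Kahan peel-off of the leading block would work, but it is unnecessary machinery. The paper never bounds $\widehat V_n(k)$ and $V_n(k)$ separately; it works directly with the \emph{telescoping difference}
\[
\widehat V_n(q)-\widehat V_n(k)\;=\;\frac1n\sum_{j=q+1}^{k}\frac1{8\pi^3}\int_{\bm\Theta}\widehat\lambda_{nj}^x(\bm\theta)\,\mathrm d\bm\theta,\qquad q<k\le q_{\max},
\]
which involves at most $q_{\max}-q$ eigenvalues, a fixed finite number. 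Each of these satisfies $n^{-2}\mathrm E|\widehat\lambda_{nj}^x(\bm\theta)-\lambda_{nj}^x(\bm\theta)|^2\le C\,\alpha_{n,S_1,S_2,T}$ by Weyl's inequality and Theorem~\ref{Thmsigmax} (this is Lemma~\ref{lem.lamChi2}(i)), while the population piece $n^{-1}\lambda_{nj}^x(\bm\theta)\le C/n$ for $j>q$ by Proposition~\ref{lukagu}. Hence $\widehat V_n(q)-\widehat V_n(k)=O_P\bigl(\max\{n^{-1},\alpha_{n,S_1,S_2,T}^{1/2}\}\bigr)=o_P(p_n)$ by Assumption~\ref{Ass.pnvs}, and the over-selection case follows immediately. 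The under-selection case is even simpler, since there the log-ratio is bounded below by a positive constant at the population level and only $o_P(1)$-consistency of the (finitely many) leading eigenvalues is needed.

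In short: your plan is sound, but the obstacle you highlight disappears once you compare $\widehat V_n(k)$ to $\widehat V_n(q)$ rather than each to its own population target. Neither the trace identity nor Davis--Kahan is required.
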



\section{Monte Carlo experiments} \label{Sec: sim}

Before delving into numerical studies,  we summarize the estimation procedure in the following \\
\begin{algorithm}[H] \label{Algm1}
\SetAlgoLined
\KwIn{data $\{x_{\ell\vsbf},\ \ell =1,\ldots, n, \vsbf=(s_1\ s_2\ t)^\top,  s_1=1,\ldots, S_1,  s_2=1,\ldots, S_2, t=1,\ldots, T\}$;  
estimated number of factors $\widehat{q}^{(n)}$ (see Algorithm \ref{Algm2} in Appendix \ref{sec:HLABC}); \linebreak
kernel functions $K_1(\cdot)$, $K_2(\cdot)$, and $K_3(\cdot)$;\linebreak 
bandwidths integers $B_{S_1}, B_{S_2}$, and $B_T$; \linebreak 
truncation integers $M_{S_1}$, $M_{S_2}$, and $M_T$.}
\KwOut{$\{\widehat{\chi}_{\ell\vsbf}\n$, $\ell=1,\ldots,n$, $\vsbf=(s_1\ s_2\ t)^\top,  s_1=1,\ldots, S_1,  s_2=1,\ldots, S_2, t=1,\ldots, T\}$.}

Compute $\widehat{\boldsymbol \Sigma}_n^x(\thbf_{\hbf})$ as in \eqref{hatSimgax}, with $\thbf_{\hbf}$ as in Remark \ref{rem:freq}.

Compute the $\widehat q^{(n)}$ eigenvectors $\widehat{\p}_{nj}^x(\thbf_{\hbf}), j = 1, \ldots, \widehat{q}$, of $\widehat{\boldsymbol \Sigma}_n^x(\thbf_{\hbf})$,  with $\thbf_{\hbf}$ as in Remark \ref{rem:freq}.

Compute $\widehat{\K}^x_{n\ell}(\thbf_{\hbf})$ and $\widehat{\underline{\K}}^{x, \vsbf}_{n\ell}(L)$ as in \eqref{def.underlineK}, with $\thbf_{\hbf}$ as in Remark \ref{rem:freq}.

Compute $\widehat{\chi}^{(n)}_{\ell\vsbf} = \widehat{\underline{\K}}^{x, \vsbf}_{n\ell}(L) \x_{n\vsbf}$ as in \eqref{finiamo!}.

\caption{Algorithm for estimating the common component.}
\end{algorithm}

We  illustrate how Algorithm \ref{Algm1} works and we provide evidence of our key theoretical results. In Section~\ref{Sec: MotEx} we already showed the presence of the eigen-gap in finite-samples as predicted by our results in Section \ref{Sec: repr}, further evidence is available in Appendix \ref{App.sim}; in Section~\ref{Sec.Sim.hatChi}, we study the performance of the estimator of the common component proposed in Section~\ref{Sec: estim}, and we provide a comparison of our GSTFM with the extant GDFM; in Section~\ref{Sec.Sim.Selectq}, we explain how to select the number of factors following Section \ref{Sec.Selectq}.

In the whole section we simulate data using  
$x_{\ell \vsbf}= \chi_{\ell\vsbf} + \xi_{\ell\vsbf}$, for $ \ell = 1,\ldots, n$, $\vsbf=(s_1\ s_2\ t)^\top$ with $s_1=1,\ldots, S_1$, $s_2=1,\ldots, S_2$, and $t=1,\ldots, T$. The case of cross- and serially correlated idiosyncratic components is studied in Appendix \ref{App.sim}.
The idiosyncratic component $\xi_{\ell\vsbf}$  is i.i.d. from a standard normal distribution and the common component $\chi_{\ell\vsbf}$ is generated according to two different mechanisms. 
%
\begin{itemize}
\item[Model (a)] is an infinite convolution over the lattice:
\begin{equation}\label{modelAR}
\begin{aligned}
\chi_{\ell \vsbf} = \sum_{\kbf} \sum_{j=1}^q a_{\ell j} b_{\ell j}^{\vert \kappa_1 \vert + \vert \kappa_2 \vert + \vert \kappa_3  \vert} L^{\kbf} {u}_{j,\vsbf}.
\end{aligned}
\end{equation}
\item[Model (b)] is a finite convolution over the lattice:
\begin{equation}\label{modelMA}
\chi_{\ell \vsbf} = \sum_{\kbf = (-1 \ -1 \ 0)^\top}^{(1 \ 1 \ 1)^\top} \sum_{j=1}^q a_{\ell j} 0.5^{\vert \kappa_1 \vert + \vert \kappa_2 \vert + \vert \kappa_3 \vert}  L^{\kbf} {u}_{j,\vsbf}.
\end{equation}
\end{itemize}
We generate 
 $a_{\ell j}$ and $u_{j,\vsbf}$, $j = 1, \ldots, q,$ from i.i.d. standard normal distributions and $b_{\ell j}$ from i.i.d. uniform distributions on $[0.5, 0.8]$.  The Monte Carlo (MC) experiments are repeated $N = 100$ times.

\subsection{The common component}
\label{Sec.Sim.hatChi}
Section \ref{Sec: estim} contains the asymptotics of the proposed estimation methods. A practically relevant question is related to the finite-sample behaviour of the proposed estimators. To investigate this aspect,  we set $q= 2$ and we study numerically how the mean square error (MSE)
$$E_1 = \frac{1}{n S_1 S_2 T} \sum_{\ell = 1}^n \sum_{s_1 = 1}^{S_1}   \sum_{s_2 = 1}^{S_2} \sum_{t = 1}^{T}    (\widehat{\chi}^{(n)}_{\ell\vsbf}- \chi_{\ell\vsbf})^2$$
and the standardised MSE
$$E_2 = \frac{\sum_{\ell = 1}^n \sum_{s_1 = 1}^{S_1}   \sum_{s_2 = 1}^{S_2} \sum_{t = 1}^{T}    (\widehat{\chi}^{(n)}_{\ell\vsbf}- \chi_{\ell\vsbf})^2}{\sum_{\ell = 1}^n \sum_{s_1 = 1}^{S_1}   \sum_{s_2 = 1}^{S_2} \sum_{t = 1}^{T}    \chi_{\ell\vsbf}^2}
$$
change with $n$ and with the spatio-temporal dimensions $S_1, S_2$ and $T$. 

In the top panel of Table~\ref{Tab.GDRF_diffn}, we display the averaged  (over all MC runs) $E_1$  and $E_2$ for $n = 20, 40, 60, 80$ and $(S_1, S_2, T) = (20, 20, 20)$. The table clearly shows that the estimation errors decrease as $n$ increases: this illustrates the blessing of dimensionality for the estimation of the common component.  Interestingly, we remark that already with $n=20$, $E_1$ and $E_2$ have values that are very similar to the ones obtained for larger sample sizes (e.g. $n=60$).

In the bottom panel of Table~\ref{Tab.GDRF_diffn} we report the averaged (over all MC runs) values of $E_1$ and $E_2$ for $n = 40$ and $(S_1, S_2, T) = (10, 10, 10)d,$  with $d = 1, 2, 3, 4$.  In line with the theoretical results, the errors decrease as the spatio-temporal dimensions  increase.



%

\begin{table}
 \caption{$E_1$ and $E_2$ of the GSTFM, $q=2$.}\label{Tab.GDRF_diffn}
\centering
\begin{tabular}{ c c c c c}  \hline \hline
&\multicolumn{4}{c}{$n$}\\
\cmidrule(lr){2-5}
$(S_1,S_2,T)=(20,20,20)$  & $20$ & $40$ & $60$ & $80$ \\ \hline
  Model (a) in~\eqref{modelAR} &    &   &  &    \\
 $E_1$ &  0.389  & 0.346  & 0.339 & 0.331    \\
 $E_2 $ &  0.066 & 0.060  & 0.059  & 0.058   \\ \hline 
Model (b) in~\eqref{modelMA} &    &   &  &    \\
 $E_1$ &  0.251  &  0.193 & 0.175 & 0.164    \\
 $E_2 $ &  0.047 & 0.036   & 0.031  & 0.030   \\ \hline \hline
%
%
%
&\multicolumn{4}{c}{$(S_1,S_2,T)$}\\
\cmidrule(lr){2-5}
$n=40$  & $ (10, 10, 10)$ & $ (20, 20, 20)$ & $ (30, 30, 30)$ & $ (40, 40, 40)$ \\ \hline
  Model~\eqref{modelAR} &    &   &  &    \\
 $E_1$ &    0.372 & 0.289  & 0.302  &  0.301  \\
 $E_2 $ &    0.077 & 0.051   &  0.050 & 0.046  \\ \hline 
Model~\eqref{modelMA} &    &   &  &    \\
 $E_1$ &    0.196 & 0.115  & 0.146 & 0.118  \\
 $E_2 $ &  0.036 & 0.021  &  0.027 & 0.021   \\ \hline \hline
\end{tabular}
\end{table}

To elaborate on the motivating example of Section \ref{Sec: MotEx}, we compare the performance of the GSTFM and the GDFM in terms of estimation accuracy of the common components. We set $n = 30$, $q = 2$ and $(S_1, S_2, T) = (10, 10, 20)$  or  $(S_1, S_2, T) =(20, 10, 20)$.   In Table~\ref{Tab.GDFM_GDRF}, we report the average  (over all MC runs) values of $E_1$ and $E_2$, for the GSTFM and GDFM. The advantage of our approach is evident:  the GSTFM produces smaller estimation errors of the common components than the GDFM. We emphasize  that $E_1$ of the GDFM displays a sharp rise  as $S_1$ increases from $10$ to $20$. This aspect illustrates that there is no blessing of dimensionality for the GDFM if the spatial dependencies are ignored: adding more time series does not yield any accuracy improvement and the results of \cite{FHLR00} do not apply. Indeed,
when $S_1$  increases,  stacking the new observations in a vector, as  in Section \ref{Sec: MotEx}, implies that we are dealing with a larger number of  spatially dependent variables: the GDFM ignores these spatial dependencies and, as a result, it becomes less reliable in the estimation of the common component, entailing larger values of $E_1$---incidentally, this point is not detectable looking at $E_2$ because of its standardisation based on the variance of the true common component.

\begin{table}
 \caption{$E_1$ and $E_2$ of the GSTFM and GDFM, $n=30$, $S_2=10$, $T=20$, and $q=2$.}\label{Tab.GDFM_GDRF}
 \label{Tab:comp}
\centering
\begin{tabular}{ c |c |c | c |c}  \hline \hline
  & \multicolumn{2}{c |}{Model (a) in~\eqref{modelAR}} &   \multicolumn{2}{c}{Model (b) in~\eqref{modelMA}}  \\  \hline
  & GSTFM & GDFM & GSTFM & GDFM \\ \hline
  $S_1 = 10$ &   &  & &  \\
 $E_1$ &   0.606 &  1.632 & 0.566 & 2.773  \\
 $E_2 $ &  0.325 &  0.807 & 0.149 & 0.709  \\ \hline
 $S_1 = 20$ &   &  & &  \\
 $E_1$ &  0.817 & 4.202  & 0.470 & 4.160  \\
 $E_2 $ &  0.150 & 0.747 & 0.085 & 0.731 \\ \hline \hline
\end{tabular}
\end{table}

\subsection{Selection of the number of factors }\label{Sec.Sim.Selectq}

We investigate the finite sample performance of the estimator of $q$ defined in Section~\ref{Sec.Selectq}. However, looking at \eqref{eq.hatICtext}, we remark that, if the estimator $\widehat{q}^{(n)}$ is consistent, then the  estimator $\widehat{q}_{c}^{(n)}$ obtained via the penalty $cp(n, S_1,S_2,T)$, $c> 0$, is consistent as well. Hence, in practice, one needs to choose also 
$c$ to estimate $q$ consistently. The detailed procedure for the automatic selection of the number of factors is summarized in Algorithm~\ref{algorithm.Estq} in Appendix \ref{sec:HLABC}.
To evaluate the estimation accuracy of Algorithm~\ref{algorithm.Estq}, 
we set  $n = 100$, $(S_1, S_2, T) = (25, 25, 25)$, and
$q = 0, 1, 2, 3$ and we run $200$ MC replications. Table~\ref{Tab.selectq_AR_MA} shows the under- and over-identification proportions for $\widehat{q}_{\widehat{c}}^{(n)}$. The results illustrate good finite-sample performance of the selection procedure of $q$: for Model (b) in \eqref{modelMA}, the algorithm identifies $q$ correctly for all replications and for all values of $q$; for Model (a) in \eqref{modelAR}, the over-identification rate is not zero for $q=1, 2, 3$ but it is nevertheless very small. 
  
\begin{table}
\caption{Under- and over-identification rates for $\widehat{q}_{\widehat{c}}^{(n)}$, with $q = 0, 1, 2, 3$.}\label{Tab.selectq_AR_MA}
\centering
\begin{tabular}{ c c c c c }  \hline \hline
  & $q = 0$ & $q = 1$ & $q = 2$ & $q = 3$  \\  \hline
  Model (a) in~\eqref{modelAR} &  &  &   &      \\
 Under-identification & 0   & 0 & 0  &   0   \\
 Over-identification &  0 &  0.10 &    0.08 &0.04   \\ \hline 
Model (b) in~\eqref{modelMA} &   & &   &      \\
 Under-identification &  0 &  0 & 0 & 0   \\
 Over-identification &  0& 0  & 0   &  0   \\ \hline \hline
\end{tabular}
\end{table}
 
\section{Conclusions and further developments} \label{Sec:conc}
{We develop the theory and provide the complete inference toolkit (estimation of the common component and selection of the number of factors)  for the factor analysis of high-dimensional
spatio-temporal rf defined on a  lattice. Our model accounts for all spatio-temporal common correlations among all components of the rf. We give statistical guarantees of the proposed estimation methods. 
Our asymptotic theory extends the one available in \cite{FHLR00}, whose rates of convergence, which are unavailable in the literature on factor models for time series, can be derived as a special case of our rates in Section \ref{Sec: estim}. Monte Carlo studies illustrate the applicability and the good performance 
of our GSTFM under many different settings, commonly encountered in data analysis.} 

We foresee some 
extensions of our results. For instance, one may define estimators of the common component which involve one-sided filters in time, thus allowing for forecasting. We conjecture that this  is possible along the lines of \citep{forni2005generalized,FHLZ17}. Nevertheless, such extensions cannot be directly obtained within the setting of this paper: they require further assumptions and more involved estimation steps, whose statistical guarantees need to be derived. Therefore, we leave them for further research.

\begin{funding}
%
The third author was supported by grants  WK2040000055 and YD2040002016.
\end{funding}



\bibliographystyle{imsart-nameyear} 
\bibliography{spacefactor}       

\clearpage
\begin{supplement}
\appendix
\setcounter{equation}{0}
\renewcommand{\theequation}{\thesection.\arabic{equation}}

\small
\section{Preparatory results on infinite dimensional random fields} \label{App1_Def}

\subsection{Lag operator in space-time}

\begin{lemma} \label{Lemma1} 
Under Assumption \ref{Ass.A}, for any $j=1,2,3$, the operator $L_j$ in \eqref{Eq: operators} can be straightforwardly and uniquely extended to $L_j: \bm{\mathcal X}\to \bm{\mathcal X}$ which is well defined, preserves the inner product and is onto. Thus, $L_j$ is a unitary operator. 
\end{lemma}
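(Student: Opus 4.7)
The plan is to build the extension in three stages: first define $L_j$ on finite linear combinations of the generators $\{x_{\ell\vsbf}\}$, then extend by isometric continuity to $\bm{\mathcal X}_n$, and finally extend to the inductive limit $\bm{\mathcal X}=\overline{\cup_n \bm{\mathcal X}_n}$. The fact that will drive each extension step is homostationarity, which makes $L_j$ an inner-product-preserving map on a dense subset.

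First, for a finite linear combination $\zeta = \sum_{\ell=1}^{n}\sum_{\kbf\in F} \alpha_{\ell\kbf}\, x_{\ell,\vsbf+\kbf}$ (with $F\subset\mathbb Z^3$ finite), set $L_j\zeta=\sum_{\ell=1}^{n}\sum_{\kbf\in F}\alpha_{\ell\kbf}\, L_j x_{\ell,\vsbf+\kbf}$, where $L_jx_{\ell,\vsbf+\kbf}$ is given by \eqref{Eq: operators}. Using Assumption \ref{Ass.A}, for any two such finite combinations $\zeta,\eta$ one checks that
\[
\langle L_j\zeta, L_j\eta\rangle = \sum_{\ell,\ell',\kbf,\kbf'} \alpha_{\ell\kbf}\bar\beta_{\ell'\kbf'}\,\bigl[\bm\Gamma^x_n(\kbf-\kbf')\bigr]_{\ell\ell'} = \langle \zeta,\eta\rangle,
\]
because the covariance depends only on the lag $(\vsbf+\kbf-e_j)-(\vsbf+\kbf'-e_j)=\kbf-\kbf'$ (here $e_j$ is the $j$-th unit vector). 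This identity is the crux: it shows $L_j$ is isometric on the dense subspace of finite linear combinations, hence well-defined there, and it immediately makes $L_j$ uniquely extendable to the closure.

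Second, extend $L_j$ to $\bm{\mathcal X}_n$ by continuity. For any $\zeta\in\bm{\mathcal X}_n$, pick a Cauchy sequence $(\zeta^{(m)})$ of finite linear combinations with $\zeta^{(m)}\to\zeta$ in $L_2$. By the isometry proved above, $(L_j\zeta^{(m)})$ is Cauchy in $L_2$ and hence converges to some limit, which we define as $L_j\zeta$. The isometry identity passes to the limit, so $L_j$ is isometric on $\bm{\mathcal X}_n$, and the limit is independent of the approximating sequence (any two sequences can be interleaved into a single Cauchy sequence). Because $\bm{\mathcal X}_n\subset\bm{\mathcal X}_{n+1}$ and the definition of $L_j$ on generators does not depend on $n$, the extensions are consistent across $n$. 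Hence $L_j$ is unambiguously defined on $\cup_n \bm{\mathcal X}_n$, and a third application of the same isometric-extension argument promotes $L_j$ to an isometric operator on $\bm{\mathcal X}=\overline{\cup_n\bm{\mathcal X}_n}$.

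Finally, for surjectivity, one applies the identical construction to the reverse shift (sending $x_{\ell,(s_1\ s_2\ t)}$ to $x_{\ell,(s_1+1\ s_2\ t)}$ when $j=1$, and analogously for $j=2,3$). This yields an isometric operator $L_j^{-1}$ on $\bm{\mathcal X}$, and by checking the composition on the dense subset of finite linear combinations (where $L_j$ and $L_j^{-1}$ obviously invert each other) we conclude $L_j L_j^{-1}=L_j^{-1}L_j=\mathrm{Id}$ on all of $\bm{\mathcal X}$. Hence $L_j$ is onto and thus a unitary operator. The main obstacle, more conceptual than technical, is the first step: verifying that the isometry computation on finite linear combinations really only uses the shift-invariance of $\bm\Gamma^x_n$ (Assumption \ref{Ass.A}) and not any finite-dimensional structure; once this is in place, the rest is a standard Banach--Steinhaus/isometric-extension routine.
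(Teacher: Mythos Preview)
Your proposal is correct and follows essentially the same approach as the paper: both establish the inner-product identity on finite linear combinations via the shift-invariance of $\bm\Gamma^x_n$ from Assumption \ref{Ass.A}, then extend by isometric continuity to $\bm{\mathcal X}$, and obtain surjectivity from the inverse shift. The paper separates well-definedness and inner-product preservation into two computations (and handles one $\ell$ at a time before invoking linearity), whereas you fold well-definedness into the isometry step and treat general finite combinations directly; these are cosmetic differences only.
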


\begin{proof}
Let $\{\alpha_{\vsbf}\}$ and $\{ \beta_{\vsbf}\}$ be sequences of complex numbers. First, we show that the operator $L_1$ is well-defined. To this end, for any $\ell\in\mathbb N$, we notice that for each $\vsbf=(s_1\ s_2 \ t)^\top \in \mathbb{Z}^3$, the linearity of the operator yields, for a finite linear combination,
\begin{eqnarray*}
L_1 \left[\sum_{\vsbf} \alpha_{\vsbf} x_{\ell \vsbf} \right] = \sum_{\vsbf} \alpha_{\vsbf} x_{\ell (s_1-1 \ s_2 \ t)}. 
\end{eqnarray*}
Let us recall that $L_2(\mathcal P,\mathbb C)$ is a Hilbert space with inner product $\langle x_{i\vsbf},x_{j\vsbf'}\rangle=\Cov(x_{i\vsbf},x_{j\vsbf'})$ and $\Vert{x_{i\vsbf}}\Vert^2=\langle x_{i\vsbf},x_{i\vsbf}\rangle$.
Now, we remark that if 
$$
\sum_{\vsbf} \alpha_{\vsbf} x_{\ell \vsbf} = \sum_{\vsbf} \beta_{\vsbf} x_{\ell \vsbf},
$$
then we have
\begin{align}
&\left \Vert L_1 \left[\sum_{\vsbf} \alpha_{\vsbf} x_{\ell \vsbf} \right] - L_1 \left[\sum_{\vsbf} \beta_{\vsbf} x_{\ell \vsbf} \right] \right \Vert^2 =  \left \Vert \sum_{\vsbf} \alpha_{\vsbf} x_{\ell (s_1-1 \ s_2 \ t)} - \sum_{\vsbf} \beta_{\vsbf} x_{\ell (s_1-1 \ s_2 \ t)} \right \Vert^2  \nonumber \\
&=\sum_{\vsbf} \sum_{\vsbf'} (\alpha_{\vsbf} - \beta_{\vsbf}) {(\alpha_{\vsbf'}^\dag - \beta_{\vsbf'}^\dag)} \left \langle x_{\ell (s_1-1 \ s_2 \ t)}, x_{\ell (s'_1-1 \ s'_2 \ t')}  \right \rangle \nonumber \\
&= \sum_{\vsbf} \sum_{\vsbf'} (\alpha_{\vsbf} - \beta_{\vsbf}) {(\alpha_{\vsbf'}^\dag - \beta_{\vsbf'}^\dag)} \left \langle x_{\ell \vsbf}, x_{\ell \vsbf'}  \right \rangle \nn\\
&  =  \left \Vert \sum_{\vsbf} \alpha_{\vsbf} x_{\ell \vsbf} -  \sum_{\vsbf'} \beta_{\vsbf} x_{\ell \vsbf} \right \Vert^2 =  0, \nn 
 \end{align}
where, in the third line, we made use of the homostationarity as in Assumption \ref{Ass.A}. Thus, the operator is well defined.
Now, we show that $L_1$ preserves the inner product. For any $\ell\in\mathbb N$, consider
\begin{align}
&\left \langle L_1 \left[\sum_{\vsbf} \alpha_{\vsbf} x_{\ell \vsbf} \right], L_1 \left[\sum_{\vsbf} \beta_{\vsbf} x_{\ell\vsbf} \right] \right \rangle =  \left \langle \sum_{\vsbf} \alpha_{\vsbf} x_{\ell (s_1-1 \ s_2 \ t)}, \sum_{\vsbf} \beta_{\vsbf} x_{\ell (s_1-1 \ s_2 \ t)}  \right \rangle \nn\\
&=  \sum_{\vsbf}   \sum_{\vsbf'} \alpha_{\vsbf} \beta_{\vsbf'}^\dag \left \langle  x_{\ell (s_1-1 \ s_2 \ t)}, x_{\ell (s'_1-1 \ s'_2 \ t')}  \right \rangle 
=  \sum_{\vsbf}   \sum_{\vsbf'} \alpha_{\vsbf} \beta_{\vsbf'}^\dag \left \langle  x_{\ell \vsbf}, x_{\ell \vsbf'}  \right \rangle \nonumber \\
&= \left \langle  \sum_{\vsbf} \alpha_{\vsbf}  x_{\ell \vsbf} , \sum_{\vsbf'}  {\beta_{\vsbf'}}   x_{\ell \vsbf'}  \right \rangle, \nn
\end{align}
where in the fourth equality we made use of homostationarity as in Assumption \ref{Ass.A}. As a consequence $L_1$ is bounded. It is straightforward to show that $L_1$ is onto, therefore it is unitary, {which implies that its inverse $L^{-1}_1$ coincides with the adjoint operator, say  $L_1^\dag$, and $L_1^\dag L_1 = I$, where $I$ is the identity operator.} 

The above properties hold for $L_1$ applied just to one $x_{\ell \vsbf}$, but, since $L_1$ is linear the same hold when applying $L_1$ to elements of $\cup_{n=1}^{\infty} \bm{\mathcal X}_n$. We can then extend $L_1$ to $\bm{\mathcal X}$ as follows. Let $\zeta \in \bm{\mathcal X}$ then there must exist a sequence $\{\zeta_n\}_{n\in\mathbb N}$ with $\zeta_n\in\bm{\mathcal X}_n$
such that
$
\lim_{n\to\infty}\big \Vert\zeta - \zeta_n\big\Vert=0.
$ 
Therefore, $\{\zeta_n\}_{n\in\mathbb N}$ is a Cauchy sequence and since $L_1$ preserves the norm we have that $\{L_1 \zeta_n\}_{n\in\mathbb N}$ is also a Cauchy sequence and must converge to an element $\eta\in\bm{\mathcal X}$, i.e. $
\lim_{n\to\infty}\big \Vert\eta - L_1\zeta_n\big\Vert=0,
$
and it must be that $\eta=L_1\zeta$ in order for $L_1$ to be norm preserving. Using the linearity and boundedness of the operator and continuity of the inner product, it is then easy to show that the operator $L_1$ extended in this way to $\bm{\mathcal X}$ is still linear, well defined and it preserves inner product. We {still} denote by $L_1$ the extended operator. Finally, note that $L_1$ is unitary on $\cup_{n=1}^{\infty} \bm{\mathcal X}_n$, we have { $\zeta_n=L_1^\dag L_1 \zeta_n$} and therefore for any $\zeta \in \bm{\mathcal X}$ there exists an $\eta \in \bm{\mathcal X}$ such that $L_1\eta=\zeta$, which shows that $L_1$  {extended to $\bm{\mathcal X}$} is onto.

%

Moving along the same lines of this proof, one can verify that the same reasoning applies also for the operators $L_2$  and $L_3$. That concludes the proof. 
\end{proof}

\subsection{Canonical isomorphism}\label{app:cane}
Consider generic infinite dimensional row vectors of functions $\boldsymbol{f}=(f_1\ f_2 \  \cdots f_\ell \cdots)$ such that $f_\ell:\Thbf\to\mathbb C$ is measurable for all $\ell\in\mathbb N$ and with $n$ dimensional row sub-vectors $\boldsymbol f_n=(f_1\cdots f_n)$. We define the complex linear spaces
\begin{enumerate} 
\item [(i)] $L^{\infty}_2(\Thbf,\mathbb{C},\Sigmabf^x)$ of all $\bm f$ such that  
$\Vert\bm f\Vert_{\Sigmabf^x}=\lim_{n\to\infty}\Vert \bm f_n\Vert_{\Sigmabf^x}
<\infty$, where $\Vert \boldsymbol{f} \Vert_{\Sigmabf^{x}} = \sqrt{\langle \boldsymbol{f}, \boldsymbol{f}\rangle_{\Sigmabf^{x}}}$ with 
the inner product is given by
$ \langle \boldsymbol{f},\boldsymbol{g}\rangle_{\Sigmabf^{x}} = 
\int_{\Thbf} \boldsymbol f(\thbf)\Sigmabf^{x} (\thbf) {\boldsymbol g}^\dag(\thbf) \mathrm d\thbf /{8\pi^3}$. 
\item [(ii)] $L^{\infty}_2(\Thbf,\mathbb{C})\equiv L^{\infty}_2(\Thbf,\mathbb{C}, \mathbf{I})$
where $\mbf I$ is the infinite dimensional identity matrix (namely, the matrix having $\mathbf I_{n}$ as the $n\times n$ top-left sub-matrix). On $L^{\infty}_2(\Thbf,\mathbb{C})$, the inner product and the norm are indicated by $\langle \boldsymbol{f},\boldsymbol{g} \rangle$ and $\Vert \boldsymbol{f} \Vert$, respectively. 

\item [(iii)] $L^n_2(\Thbf,\mathbb{C}, \Sgmbf)$ and $L^n_2(\Thbf,\mathbb{C})$ as $L^{\infty}_2(\Thbf,\mathbb{C}, \Sigmabf^x) $ and $L^{\infty}_2(\Thbf,\mathbb{C})$, but with the $n$-dimensional vectors $\bm f_n$ instead of $\bm f$.

\item [(iv)] $L_{\infty}^n(\Thbf,\mathbb{C})$ of all $\bm f_n$ such that $g=\Vert{\boldsymbol f_n}\Vert$ is essentially bounded, i.e., $\text{\upshape ess} \sup(g) < \infty $, where $\text{\upshape ess} \sup(g)=\inf\{M:\mathcal L[y:g(y)>M]=0\}$.
\end{enumerate}

Notice that $L^{\infty}_2(\Thbf,\mathbb{C},\Sigmabf^x)$, $L^{n}_2(\Thbf,\mathbb{C},\Sigmabf^x)$, $L^{\infty}_2(\Thbf,\mathbb{C})$, and $L^{n}_2(\Thbf,\mathbb{C})$ are Hilbert spaces.

\begin{lemma} \label{Lemma2} 
Under Assumptions \ref{Ass.A} and  \ref{Ass.B}, the map $\mathcal{J}$ in \eqref{JJ} can be straightforwardly and uniquely extended to a map $\mathcal J:L^{\infty}_2(\Thbf,\mathbb{C},\Sigmabf^x) \to \bm{\mathcal X}$ which is well defined, preserves the inner product and is one-to-one. Thus, $\mathcal J$ is an isomorphism. 
Moreover,
let $L_2^{n*}(\Thbf, \mathbb C,\Sigmabf^x)=\{\bm f\in L_2^\infty(\Thbf, \mathbb C,\Sigmabf^x), f_j=0, j> n\}$. For any $\vsbf\in\mathbb Z^3$ and any $n \in\mathbb N$, and for a given integer $\ell \le n$, 
define the mapping $\mathcal{J}^*: \cup_{n=1}^\infty L_2^{n*}(\Thbf, \mathbb C,\Sigmabf^x) \to \cup_{n=1}^\infty \bm{\mathcal X}_n$ 
as:
\begin{equation}
\mathcal{J}^*  \left[ (\delta_{\ell1} \ \cdots \delta_{\ell k} \cdots \delta_{\ell n} ) e^{i \langle \vsbf , \cdot\rangle}\right] = x_{\ell \vsbf},\nn
\end{equation}
The map $\mathcal{J}^*$ can be extended in a unique way to $\mathcal J:L^{\infty}_2(\Thbf,\mathbb{C},\Sigmabf^x) \to \bm{\mathcal X}$ which is an isomorphism.  
\end{lemma}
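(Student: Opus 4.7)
The plan is to follow the standard BLT-style construction of the canonical isomorphism from spectral theory (cf.\ \citealp{BD06}, Section 4.8), adapted to our three-dimensional lattice and to the countably infinite cross-sectional index. First I would extend $\mathcal J^*$ by complex-linearity from the Kronecker-exponential generators in \eqref{JJ} to all finite combinations $\bm f = \sum_{i=1}^m \alpha_i \mbf e_{\ell_i} e^{i\langle\vsbf_i,\cdot\rangle}\in\cup_n L_2^{n*}(\Thbf,\mathbb C,\Sigmabf^x)$, by setting $\mathcal J^*[\bm f]=\sum_i\alpha_i x_{\ell_i\vsbf_i}\in\cup_n\bm{\mathcal X}_n$, where $\mbf e_\ell$ denotes the $\ell$-th standard row unit vector.

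The crux of the argument is the isometry of $\mathcal J^*$ on this trigonometric subspace. Using Assumption \ref{Ass.B} and the inverse Fourier relation $\bm\Gamma_n^x(\hbf)=(8\pi^3)^{-1}\int_\Thbf \bm\Sigma_n^x(\thbf)e^{i\langle\hbf,\thbf\rangle}\mathrm d\thbf$, expanding the bilinear form yields
\begin{equation*}
\langle\bm f,\bm g\rangle_{\Sigmabf^x}=\frac{1}{8\pi^3}\int_\Thbf \bm f(\thbf)\Sigmabf^x(\thbf)\bm g^\dag(\thbf)\,\mathrm d\thbf=\sum_{i,j}\alpha_i\beta_j^\dag\,(\bm\Gamma^x(\vsbf_i-\vsbf_j))_{\ell_i\ell_j},
\end{equation*}
while by Assumption \ref{Ass.A}(ii) the image-side inner product $\langle\mathcal J^*\bm f,\mathcal J^*\bm g\rangle=\sum_{i,j}\alpha_i\beta_j^\dag\,\E[x_{\ell_i\vsbf_i}x_{\ell_j\vsbf_j}^\dag]$ equals the same quantity (after accounting for the Fourier-sign convention fixed by \eqref{JJ}). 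Two corollaries follow immediately: (a) any two trigonometric expansions of the same element in the domain have zero-norm difference and therefore map to the same point, giving well-definedness; and (b) $\mathcal J^*$ is injective on this dense subspace.

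I would then prove density of $\cup_n L_2^{n*}(\Thbf,\mathbb C,\Sigmabf^x)$ in $L_2^\infty(\Thbf,\mathbb C,\Sigmabf^x)$. For fixed $n$, completeness of $\{e^{i\langle\vsbf,\cdot\rangle}:\vsbf\in\mathbb Z^3\}$ in $L_2(\Thbf,\mathbb C)$ gives density of the trigonometric subspace in the unweighted $L_2^n(\Thbf,\mathbb C,\mbf I_n)$; since on any spectral set $A_M=\{\thbf:\lambda_{n1}^x(\thbf)\le M\}$ one has $\Vert\bm f_n\mathbf 1_{A_M}\Vert_{\Sgmbf}\le\sqrt M\,\Vert\bm f_n\Vert$, a cut-off plus dominated-convergence tail estimate transfers this density to the weighted space $L_2^n(\Thbf,\mathbb C,\Sgmbf)$. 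Density at the infinite-dimensional level is then immediate from the definition $\Vert\bm f\Vert_{\Sigmabf^x}^2=\lim_n\Vert\bm f_n\Vert_{\Sgmbf}^2$, which is monotone in $n$. The bounded linear transformation theorem extends $\mathcal J^*$ uniquely to a linear isometry $\mathcal J:L_2^\infty(\Thbf,\mathbb C,\Sigmabf^x)\to\bm{\mathcal X}$; surjectivity is automatic since the image of $\mathcal J$ is closed (isometric image of a Hilbert space) and contains the generators $\{x_{\ell\vsbf}:\ell\in\mathbb N,\vsbf\in\mathbb Z^3\}$, hence equals $\bm{\mathcal X}$.

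The main obstacle is the density step for the weighted infinite-dimensional norm. Under a $q$-GSTFM, Theorem \ref{Th. q-DFS}(ii) forces $\lambda^x_{nj}(\thbf)\to\infty$ as $n\to\infty$ for $j\le q$, so $\Sigmabf^x$ behaves like an unbounded operator and the weight cannot be tamed by any single uniform bound. The resolution I would adopt is a diagonal argument exploiting the monotone-limit structure of $\Vert\cdot\Vert_{\Sigmabf^x}$: given $\bm f\in L_2^\infty(\Thbf,\mathbb C,\Sigmabf^x)$ and $\varepsilon>0$, first choose $n$ so that $\Vert\bm f-\bm f_n\Vert_{\Sigmabf^x}<\varepsilon/2$, and only then approximate $\bm f_n$ within $\varepsilon/2$ by a trigonometric polynomial in $L_2^n(\Thbf,\mathbb C,\Sgmbf)$ as described above. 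As a sanity check, applying the extended $\mathcal J$ to $\mbf e_\ell e^{i\langle\vsbf,\cdot\rangle}$ reproduces $x_{\ell\vsbf}$, recovering the original definition \eqref{JJ} as a special case and confirming uniqueness of the extension.
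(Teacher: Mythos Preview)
Your approach is essentially the same as the paper's: both proofs verify the isometry of $\mathcal J^*$ on finite trigonometric combinations via the inverse Fourier relation between $\bm\Sigma_n^x$ and $\bm\Gamma_n^x$ (the paper's display \eqref{Eq. innerprod_iso} is exactly your computation of $\langle\bm f,\bm g\rangle_{\Sigmabf^x}=\langle\mathcal J^*\bm f,\mathcal J^*\bm g\rangle$), and then extend by density. The only difference is packaging: the paper simply asserts that these trigonometric combinations form a dense linear manifold of $L_2^\infty(\Thbf,\mathbb C,\Sigmabf^x)$ and invokes Lemma~4.1 of \citet[p.~14]{Roz67} for the extension to the closures, whereas you spell out the density step and appeal to the bounded linear transformation theorem.

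One correction: your parenthetical claim that $\Vert\bm f_n\Vert_{\Sgmbf}^2$ is ``monotone in $n$'' is false in general when $\Sigmabf^x$ has off-diagonal blocks. For instance, with $\Sigmabf_2^x=\left(\begin{smallmatrix}1&0.9\\0.9&1\end{smallmatrix}\right)$ and $\bm f_2=(1,-1)$ the squared norm drops from $1$ to $0.2$ when passing from $n=1$ to $n=2$. This does not wreck your diagonal argument: the density you actually need is that of the truncations $\bm f^{[n]}$ in the weighted norm, and this follows from the existence of the limit defining $\Vert\bm f\Vert_{\Sigmabf^x}$ together with your fixed-$n$ cutoff step, not from monotonicity. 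The paper does not address this subtlety either, delegating it to Rozanov.
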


\begin{proof}
We show that $\mathcal{J}$ is an isometric
isomorphism since it is  onto,  it preserves the inner product and it is one-to-one. 

Let us set $\ell \in \mathbb{N}$,  
$\vsbf \in\mathbb{Z}^3$ and $\vsbf'  \in\mathbb{Z}^3$. 
From (\ref{JJ}), we have $
\mathcal{J}  \left[(\delta_{\ell1} \  \cdots \ \delta_{\ell n} )  e^{i\langle\vsbf,\cdot\rangle} \right] = x_{\ell \vsbf}$, by linearity and for a sequence of complex numbers $\{\alpha_{\vsbf}\}$, we have  
$$
\mathcal{J}  \left[ \sum_{\vsbf} \alpha_{\vsbf}  (\delta_{\ell1} \  \cdots \ \delta_{\ell n} ) e^{i\langle\vsbf,\cdot\rangle} \right] = \sum_{\vsbf} \alpha_{\vsbf} x_{\ell \vsbf},
$$
which is an onto linear mapping  between two spaces: the collection of finite combinations of $\{(\delta_{\ell1} \  \cdots \ \delta_{\ell n} )  e^{i\langle\vsbf,\cdot\rangle}: \vsbf \in \mathbb{Z}^3 \}$ and all finite linear combinations of $\{x_{\ell \vsbf}: \vsbf \in \mathbb{Z}^3 \}$.  We remark that these two spaces are dense manifolds of $L^{\infty}_2(\Thbf,\mathbb{C},
\Sigmabf^x)$ and $\boldsymbol{\mathcal X}$, respectively.

To show that $\mathcal{J}$ preserves the inner product, let us consider the complex sequences $\{\alpha_{\vsbf} \}$
and $\{\beta_{\vsbf'} \}$ and define the $n$ dimensional vector $\boldsymbol{A}_{n\vsbf}=(0 \  \ \cdots \ \alpha_{\vsbf} \  \cdots \ 0)^\top$, which has all entries equal to zero, but the $\ell$-th entry which is equal to  $\alpha_{\vsbf}$. Similarly, we define the $n$ dimensional vector $\boldsymbol{B}_{n\vsbf'}=(0 \  \ \cdots \ \beta_{\vsbf'} \  \cdots \ 0)^\top$, which has all entries equal to zero, but the $\ell$-th entry which is equal to  $\beta_{\vsbf'}$. Then, we have
\begin{eqnarray}
& & \left \langle \mathcal{J} \left[\sum_{\vsbf} \alpha_{\vsbf} (\delta_{\ell1} \  \cdots \ \delta_{\ell n} ) e^{i\langle\vsbf,\cdot\rangle} \right], \mathcal{J} \left[\sum_{\vsbf} \beta_{\vsbf} (\delta_{\ell1} \  \cdots \ \delta_{\ell n} ) e^{i\langle\vsbf,\cdot\rangle} \right] \right \rangle 
\nonumber \\
&=&  \left \langle \sum_{\vsbf} \alpha_{\vsbf} {x}_{\ell \vsbf}, \sum_{\vsbf} \beta_{\vsbf} {x}_{\ell \vsbf}  \right \rangle 
=  \sum_{\vsbf}   \sum_{\vsbf'} \alpha_{\vsbf} {\beta^{\dag}_{\vsbf'}} \left \langle  {x}_{\ell \vsbf}, {x}_{\ell \vsbf'}  \right \rangle  \nonumber \\
&=&  \sum_{\vsbf}   \sum_{\vsbf'} \boldsymbol{A}_{n\vsbf} \bm\Gamma^{x}_n (\vsbf-\vsbf') {\boldsymbol{B}_{n \vsbf'}^{\dag}}  \nonumber \\
&=&   \frac{1}{8\pi^3} \sum_{\vsbf}   \sum_{\vsbf'} \boldsymbol{A}_{n\vsbf} \left[ \int_{\Thbf}  
e^{i\langle\vsbf-\vsbf',\thbf\rangle}\Sgmbf(\thbf)   d\thbf \right] {\boldsymbol{B}_{n \vsbf'}^{\dag}} \nonumber \\
&=& \left \langle  \sum_{\vsbf} \alpha_{\vsbf}   e^{i \langle\vsbf , \thbf\rangle} , \sum_{\vsbf}  {\beta_{\vsbf}}   e^{i \langle\vsbf , \thbf\rangle}  \right \rangle_{\Sgmbf(\thbf)}, 
\label{Eq. innerprod_iso} 
\end{eqnarray}
which implies that $\mathcal{J}$ is a one-to-one mapping and it preserves the inner product. 
The above properties hold for $\mathcal{J}$ applied just to  $(\delta_{\ell1} \  \cdots \ \delta_{\ell n} ) e^{i \langle\vsbf , \cdot\rangle}$, which yields $x_{\ell \vsbf}$. Nevertheless,  $\mathcal{J}$ is linear thus the same properties hold when applying it to get all the elements of $\bm{\mathcal X}$. 

We have shown that $\mathcal{J}$ is an isometric isomorphism between two dense linear manifolds. Lemma 4.1 in \citet[p.14]{Roz67} implies that one can always extend the isomorphism to the closed linear manifolds generated by 
these manifolds. With a slight abuse of notation, we call $\mathcal{J}$ the extended isomorphism such that $\mathcal{J}: L^{\infty}_2(\Thbf,\mathbb{C},
\Sigmabf^x) \to \boldsymbol{\mathcal X}$. 
That concludes the proof. 
\end{proof}

\begin{rem}\label{rem:Jinv}
\upshape{
A consequence of Lemma \ref{Lemma2} is that 
also the inverse mapping of $\mathcal{J}$, let us call it  $\mathcal{J}^{-1}: \bm{\mathcal X} \to L^{\infty}_2(\Thbf,\mathbb{C},\Sigmabf^x)$, is an  isomorphism, such that for $\vsbf\in\mathbb Z^3$, any $n\in\mathbb N$, and any $\ell\le n$,
\beq
\mathcal{J}^{-1}\left[x_{\ell\vsbf}\right]= (\delta_{\ell1} \ \cdots \delta_{\ell k} \cdots \delta_{\ell n})e^{i\langle \vsbf,\cdot\rangle}.
\eeq
Moreover, Lemma \ref{Lemma2}  implies that for any $n \in \mathbb{N}$, the process $\boldsymbol{x}_n$ is harmonizable, namely 
for any $\vsbf\in\mathbb Z^3$, we can write
\begin{equation}
\boldsymbol{x}_{n \vsbf} =\int_{\Thbf} e^{i\langle \vsbf,\thbf\rangle} \boldsymbol{\mathcal{M}}_{n}(\thbf) \mathrm d\thbf,\qquad \text{w.p. 1}, \label{Eq. SpecRapp}
\end{equation}
{where $\boldsymbol{\mathcal{M}}_{n}$ is a complex random measure on (the Borel $\sigma$-field of) $\Thbf$, which is such that  
$\boldsymbol{\mathcal{M}}_{n}(\boldsymbol{\Delta_1} \cup \boldsymbol{\Delta_2})=\boldsymbol{\mathcal{M}}_{n}(\boldsymbol{\Delta_1})+\boldsymbol{\mathcal{M}}_{n}(\boldsymbol{\Delta_2})$, for disjoint Borel sets $\boldsymbol{\Delta_1}$ and $\boldsymbol{\Delta_2}$. In addition, for any Borel set $\boldsymbol{\Delta}$, we have that $\E(\boldsymbol{\mathcal{M}}_{n}(\boldsymbol{\Delta}))=\boldsymbol 0_n$, $\E(\vert\boldsymbol{\mathcal{M}}_{n}(\boldsymbol{\Delta})\vert^2)=\int_{\boldsymbol\Delta}\Sgmbf(d\thbf)/8\pi^3 $ and $\E(\boldsymbol{\mathcal{M}}_{n}(\boldsymbol{\Delta_1})\boldsymbol{\mathcal{M}}^{\dag}_{n}(\boldsymbol{\Delta_2}))=\boldsymbol 0_{n\times n}$, for all disjoint  Borel sets 
$\boldsymbol{\Delta_1}$ and $\boldsymbol{\Delta_2}$; see, e.g., \citet{S12}, p.21 or \citet{CW15} Ch.6, section 6.1.6, for book-length discussions.} The result in (\ref{Eq. SpecRapp}) is analogous to the standard spectral representation of time series, where $\boldsymbol{\mathcal{M}}_{n}$ is the measure related to an orthogonal increment process; see e.g. \citet{BD06}, Ch.4. 
}
\end{rem}

\begin{rem}\label{rem:spfilt}
\upshape{
To understand the importance of linear filter, notice that any scalar rf of the form $y_{\vsbf}=\boldsymbol{\underline{b}}(L)\boldsymbol{x}_{\vsbf}$ is such that $y\in\bm{\mathcal X}$ and therefore is co-homostationary with $x_\ell$ for any $\ell\in\mathbb N$. 
Moreover, $y$ has a scalar spectral density $\sigma^y(\thbf)$. In particular, recalling the definition of inner product in the Hilbert space $L_2(\mathcal P,\mathbb C)$, we have
by Lemma \ref{Lemma2} and \eqref{fFF}, for any $\vsbf, \hbf\in\mathbb Z^3$ 
\begin{align}
\E(y_{\vsbf} y_{\vsbf-\hbf}^\dag)&=\langle y_{\vsbf}, y_{\vsbf-\hbf}\rangle =\langle \mathcal J^{-1}[y_{\vsbf}], \mathcal J^{-1}[y_{\vsbf-\hbf}]\rangle_{\Sigmabf^x} \nn\\
&=\frac 1{8\pi^3}\int_{\Thbf} 
e^{i\langle \vsbf,\thbf\rangle}
\l[
\l(
\sum_{\kbf}  \mbf b_{\kbf}^{} 
e^{-i\langle \kbf,\thbf\rangle}
\r) 
\Sigmabf^x(\thbf)
  \l(
  \sum_{\kbf }  \mbf b_{\kbf}^{} 
  e^{-i\langle \kbf,\thbf\rangle}
  \r)^\dag\r]
  e^{-i\langle \vsbf-\hbf,\thbf\rangle}
  \mathrm d\thbf \nn\\
  &=\frac 1{8\pi^3}\int_{\Thbf}  
\l[\l(\sum_{\kbf }  \mbf b_{\kbf}^{} e^{-i\langle \kbf,\thbf\rangle}\r) 
\Sigmabf^x(\thbf)
  \l(\sum_{\kbf }  \mbf b_{\kbf}^{} e^{-i\langle \kbf,\thbf\rangle}\r)^\dag\r]
  e^{i\langle \hbf,\thbf\rangle}\mathrm d\thbf \nn\\
  &=\frac 1{8\pi^3}\int_{\Thbf} \sigma^y(\thbf)
  e^{i\langle \hbf,\thbf\rangle}\mathrm d\thbf, \nn
\end{align}
which shows that, for any $\thbf\in\Thbf$, the spectral density of $y$ is
\[
\sigma^y(\thbf) =  
\l(\sum_{\kbf }  \mbf b_{\kbf}^{} e^{-i\langle \kbf,\thbf\rangle}\r) 
\Sigmabf^x(\thbf)
  \l(\sum_{\kbf }  \mbf b_{\kbf}^{} e^{-i\langle \kbf,\thbf\rangle}\r)^\dag=
\l(\sum_{\kbf }  \mbf b_{\kbf}^{} e^{-i\langle \kbf,\thbf\rangle}\r) 
\Sigmabf^x(\thbf)
  \l(\sum_{\kbf }  \mbf b_{\kbf}^{} e^{i\langle \kbf,\thbf\rangle}\r).
\]
The same reasoning can be easily generalized to the case of $m$-dimensional linear filters, for any $m\in\mathbb N$, defining filtered rf as $\boldsymbol y_{m\vsbf}=\underline{\boldsymbol B}(L)\boldsymbol x_{\vsbf} = \sum_{\kbf}  \boldsymbol B_{\kbf} L^{\kbf} \boldsymbol x_{\vsbf}$, where $\boldsymbol B_{\kbf}$ is a complex matrix with $m$ rows and infinite columns and such that $\boldsymbol{y}_{m\vsbf}\in\bm{\mathcal X}$ for any $\vsbf\in\mathbb Z^3$. For $j=1,\ldots, m$, each component $ y_{j\vsbf}=\underline{\boldsymbol b}_j(L)\boldsymbol x_{\vsbf}$ satisfies  the above properties. 
}
\end{rem}

\begin{rem}
\upshape{If $\boldsymbol x_n$ is a white noise rf  for any $n\in\mathbb N$, then it has a constant spectral density matrix.}
\end{rem}

\section{Proof of results of Section 4 }\label{dimostraloseriesci} 

\subsection{Weyl's inequality} \label{App:Weyl}

For the sake of completeness, we recall some properties of the eigenvalues of Hermitian nonnegative definite matrices, that go under the name of Weyl's inequality.
\begin{itemize}
\item[(a)] Let $\mathbf{D}$ and $\mathbf{E}$ be $m \times m$ Hermitian nonnegative definite and $\bm{F}=\bm{D}+\bm{E}$. Then
$$
\nu_s (\bm F) \leq \nu_s (\bm D) +\nu_1 (\bm E), \quad \nu_s (\bm F) \leq \nu_1 (\bm D) +\nu_s (\bm E), \quad \nu_s (\bm F) \geq \nu_s (\bm D), \quad \nu_s (\bm F) \geq \nu_s (\bm E)
$$
for any $s=1, \ldots, m$.
\item[(b)] Let $\mathbf{D}$ be as in (a) and let $\mathbf{G}$ be the top-left $(m-1) \times(m-1)$ submatrix of $\bm {D}$. Then $\nu_s (\bm D) \geq \nu_s (\bm G)$ for $s=1, \ldots, m-1$.
\end{itemize}
%


\subsection{Proof of Theorem \ref{Th. q-DFS} - sufficient condition}\label{app:suffFL}

The proof of the sufficient condition is based on a series of intermediate results. In the sequel, for ease of notation, when we write matrix products $\boldsymbol{A}\boldsymbol{B}$ but the number of columns of $\boldsymbol{A}$ is smaller than the number of rows of $\boldsymbol{B}$ we mean that $\boldsymbol{A}$ has been augmented with columns of zeros to match the number of rows of $\boldsymbol{B}$. For example, for $m < n$, $\underline{\boldsymbol{P}}_m(L)\boldsymbol{x}_{n\vsbf}$ means nothing but $\underline{\boldsymbol{P}}_m(L)\boldsymbol{x}_{m\vsbf}$.

\subsubsection{Construction of $q$-dimensional orthonormal white noise rf}
We start proving that there exists a converging sequence of $q$-dimensional orthonormal white noise rf which belongs to $\bm{\mathcal X}$.
For a given integer $q\leq n$ and any $\thbf\in\bm\Theta$, let $\boldsymbol{P}_n(\thbf)=(\pbf_{n1}^{x\top}(\thbf)\cdots\pbf_{nq}^{x\top}(\thbf))^\top$ be the $q\times n$ matrix having as rows the normalized eigenvectors of $\Sigmabf_n^x(\bm\theta)$ corresponding to the $q$ largest eigenvectors. Similarly let $\boldsymbol{Q}_n(\thbf)=(\pbf_{n q+1}^{x\top}(\thbf)\cdots\pbf_{nn}^{x\top}(\thbf))^\top$ which is $(n-q)\times n$. 
Define also $\boldsymbol{\Lambda}_n(\thbf)$ as the $q \times q$ diagonal matrix containing on the diagonal the largest $q$ eigenvalues $\lambda^{x}_{nj}(\thbf)$, $j=1,\ldots,q$ and denote by $\boldsymbol{\Phi}_n(\thbf)$ the $(n-q) \times (n-q)$ diagonal matrix containing on the diagonal the remaining eigenvalues $\lambda^{x}_{nj}(\thbf)$, $j=q+1,\ldots,n$. The spectral decomposition yields 
\begin{eqnarray}
\Sgmbf (\thbf)&=& {\boldsymbol{P}}^\dag_n(\thbf) \boldsymbol{\Lambda}_n(\thbf) \boldsymbol{P}_n(\thbf) + {\boldsymbol{Q}}^\dag_n(\thbf) \boldsymbol{\Phi}_n(\thbf)  \boldsymbol{Q}_n(\thbf), \nonumber \\
\mbf {I}_n &=& {\boldsymbol{P}}^\dag_n(\thbf) {\boldsymbol{P}_n}(\thbf) + {\boldsymbol{Q}}^\dag_n(\thbf) {\boldsymbol{Q}}_n(\thbf) \label{Eq. IPQ},\\
\mbf I_q &=& {\boldsymbol{P}_n}(\thbf){\boldsymbol{P}^\dag_n}(\thbf).\nonumber
\end{eqnarray}

Now, let $ \underline{\boldsymbol{P}}_n(L)$
and
$ \underline{\boldsymbol{\Lambda}}^{-1/2}_n (L)$
be the spatio-temporal linear filters built as in \eqref{fFF} from $\boldsymbol{P}_n(\thbf) $ and $ \boldsymbol{\Lambda}^{-1/2}_n(\thbf)$, respectively. Notice that since $\Sgmbf(\bm\theta)$ is positive definite by construction, then  $\boldsymbol{\Lambda}_n(\thbf)$ is bounded away from zero  for all $n\in\mathbb N$ and all $\bm\theta\in\bm\Theta$ and also bounded for any fixed $n$. So, recall the notation in \eqref{pachistrani}, 
\begin{equation}
{\bfpsi}^n_{\vsbf} = \left( \psi^n_{1 \vsbf}\cdots\psi^n_{q \vsbf} \right)^\top = \underline{\boldsymbol{\Lambda}}^{-1/2}_n (L) \star \underline{\boldsymbol{P}}_n(L)\boldsymbol{x}_{n\vsbf} \label{B10}
\end{equation}
is well defined and it is a $q$-dimensional orthonormal white noise rf since by \eqref{Eq. IPQ} its spectral density is (see Remark \ref{rem:spfilt}): $\bm\Sigma^{\bfpsi^n}(\thbf)=\mbf I_q$.

For $\boldsymbol{M}\subset \Thbf$, let $\boldsymbol{K}_M\subset L^{q \times q}_\infty(\Thbf,\mathbb{C})$ whose elements $\boldsymbol{C}$ are $q\times q$ complex matrices with  elements being  functions defined on $\bm\Theta$ and such that: (a) $\boldsymbol{C}(\thbf) = \boldsymbol{0}$ for $\thbf \notin \boldsymbol{M}$, (b) $\boldsymbol{C}(\thbf) {\boldsymbol{C}}^\dag(\thbf)= \mbf {I}_q$ for $\thbf \in \boldsymbol{M}$. 
For any $m,n\in\mathbb N$, let $\boldsymbol{C} \in \boldsymbol{K}_M$, so that $\underline{\boldsymbol{C}} (L){\bfpsi}^m_{\vsbf} \in \boldsymbol{\mathcal X}$ and it is still a $q$-dimensional orthonormal white noise rf.  Let us consider the orthogonal projection of this new white noise on the space $\spn(\{\psi^n_{j \vsbf}, j = 1, \ldots, q, \vsbf \in \mathbb{Z}^3\})$.
To this end, note that (\ref{Eq. IPQ}) yields $\boldsymbol{x}_{n\vsbf} = \underline{{\boldsymbol{P}}}^\dag_n (L)\star \underline{{\boldsymbol{P}}}_n(L) \boldsymbol{x}_{n\vsbf} + \underline{{\boldsymbol{Q}}}^\dag_n(L)\star \underline{{\boldsymbol{Q}}}_n(L) \boldsymbol{x}_{n\vsbf}
$, thus
\begin{equation}
\boldsymbol{x}_{n\vsbf} 
= \underline{{\boldsymbol{P}}}^\dag_n (L)\star \underline{\boldsymbol{\Lambda}}^{1/2}_n (L) {\bfpsi}^n_{\vsbf} +  \underline{{\boldsymbol{Q}}}^\dag_n(L)\star \underline{{\boldsymbol{Q}}}_n(L) \boldsymbol{x}_{n\vsbf}, \label{Eq.xPQ} 
\end{equation}
where  ${{\boldsymbol{P}}}^\dag_n {\boldsymbol{\Lambda}}^{1/2}_n \in L^{n \times q}_2(\Thbf,\mathbb{C})$  due to integrability of the dynamic spatio-temporal eigenvalues (see Remark \ref{rem:dyneval}).
Moreover, since $\boldsymbol{Q}_n(\thbf) \Sgmbf (\thbf) {{\boldsymbol{P}}}^\dag_n (\thbf) = \boldsymbol{\Phi}_n (\thbf) \boldsymbol{Q}_n(\thbf) {{\boldsymbol{P}}}^\dag_n (\thbf) = \boldsymbol{0}$ for all $\thbf\in\bm\Theta$, the two terms on the right-hand side of (\ref{Eq.xPQ}) are orthogonal at any lead and lag element by element. Therefore, the first term is the projection of $\boldsymbol{x}_{n\vsbf}$ on $\spn(\{\psi^n_{j \vsbf}, j = 1, \ldots, q, \vsbf \in \mathbb{Z}^3\})$ and the second term is the residual. By applying $\underline{\boldsymbol{C}} (L)\star \underline{\boldsymbol{\Lambda}}^{-1/2}_m (L)\star \underline{\boldsymbol{P}}_m(L)$ to both sides of (\ref{Eq.xPQ}) and using that $\underline{\boldsymbol{\Lambda}}^{-1/2}_m (L) \star\underline{\boldsymbol{P}}_m(L) \boldsymbol{x}_{n \vsbf} = \underline{\boldsymbol{\Lambda}}^{-1/2}_m (L) \star\underline{\boldsymbol{P}}_m(L) \boldsymbol{x}_{m \vsbf} = {\bfpsi}^m_{\vsbf}$, we obtain the projection as follows
\begin{equation}
\underline{\boldsymbol{C}} (L)  {\bfpsi}^m_{\vsbf} = \underline{\boldsymbol{D}} (L)  {\bfpsi}^n_{\vsbf} + \underline{\boldsymbol{R}} (L)  \boldsymbol{x}_{n\vsbf}, \label{Eq.Cpsi}
\end{equation}
where
\begin{equation}\label{defDR}
{\boldsymbol{D}}(\thbf) = {\boldsymbol{C}}(\thbf) {\boldsymbol{\Lambda}}^{-1/2}_m(\thbf) {\boldsymbol{P}}_m(\thbf) {\boldsymbol{P}}^\dag_n(\thbf) {\boldsymbol{\Lambda}}^{1/2}_n(\thbf), \quad {\boldsymbol{R}}(\thbf) = {\boldsymbol{C}}(\thbf)  \boldsymbol{\Lambda}_m^{-1/2}(\thbf) \boldsymbol{P}_m(\thbf) {\boldsymbol{Q}}^\dag_n(\thbf) \boldsymbol{Q}_n(\thbf).
\end{equation}
Notice that by taking the spectral density matrices of both sides of (\ref{Eq.Cpsi}) yields
\begin{equation}
\mbf {I}_q = \boldsymbol{D}(\thbf) {\boldsymbol{D}}^\dag (\thbf)  + \boldsymbol{R}(\thbf) \boldsymbol{\Sigma}_n^x (\thbf) {\boldsymbol{R}}^\dag (\thbf).\label{Eq.IDR}
\end{equation}
Now, let us denote by $\mu(\thbf)$ the largest eigenvalue of the spectral density matrix of $\underline{\boldsymbol{R}} (L)  \boldsymbol{x}_{n\sbf t}$. Then, we have 

\begin{lemma}\label{MyLemma7}
Suppose that (i) and (ii) of Theorem \ref{Th. q-DFS}, and Assumptions \ref{Ass.A} and \ref{Ass.B} hold. Then, for $m <n$ and $\boldsymbol{C} \in \boldsymbol{K}_M$,
$\mu(\thbf) \leq \lambda_{n,q+1}^x(\thbf)/\lambda_{mq}^x(\thbf).$
\end{lemma}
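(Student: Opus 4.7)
The plan is to bound the spectral density matrix of $\underline{\boldsymbol R}(L)\boldsymbol x_{n\vsbf}$, namely $\boldsymbol R(\thbf)\boldsymbol\Sigma_n^x(\thbf)\boldsymbol R^\dag(\thbf)$, in the Loewner order by a scalar multiple of the identity, and then read off the conclusion. The key ingredients are (a) the definition of $\boldsymbol R$ from \eqref{defDR}, (b) the spectral decomposition $\boldsymbol\Sigma_n^x=\boldsymbol P_n^\dag\boldsymbol\Lambda_n\boldsymbol P_n+\boldsymbol Q_n^\dag\boldsymbol\Phi_n\boldsymbol Q_n$ together with the orthogonality relations $\boldsymbol Q_n\boldsymbol P_n^\dag=\boldsymbol 0$ and $\boldsymbol Q_n\boldsymbol Q_n^\dag=\mbf I_{n-q}$, (c) the resolution of identity $\mbf I_n=\boldsymbol P_n^\dag\boldsymbol P_n+\boldsymbol Q_n^\dag\boldsymbol Q_n$ from \eqref{Eq. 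IPQ}, and (d) the constraint $\boldsymbol C(\thbf)\boldsymbol C^\dag(\thbf)=\mbf I_q$ coming from $\boldsymbol C\in\boldsymbol K_M$.

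First I would substitute the two displays and use $\boldsymbol Q_n\boldsymbol P_n^\dag=\boldsymbol 0$ and $\boldsymbol Q_n\boldsymbol Q_n^\dag=\mbf I_{n-q}$ to collapse the sandwich
\[
\boldsymbol Q_n^\dag\boldsymbol Q_n\;\boldsymbol\Sigma_n^x\;\boldsymbol Q_n^\dag\boldsymbol Q_n\;=\;\boldsymbol Q_n^\dag\boldsymbol\Phi_n\boldsymbol Q_n,
\]
so that
\[
\boldsymbol R\boldsymbol\Sigma_n^x\boldsymbol R^\dag \;=\; \boldsymbol C\boldsymbol\Lambda_m^{-1/2}\boldsymbol P_m\;\boldsymbol Q_n^\dag\boldsymbol\Phi_n\boldsymbol Q_n\;\boldsymbol P_m^\dag\boldsymbol\Lambda_m^{-1/2}\boldsymbol C^\dag.
\]
Next, I would bound $\boldsymbol\Phi_n(\thbf)\preceq\lambda_{n,q+1}^x(\thbf)\,\mbf I_{n-q}$, which is immediate because $\boldsymbol\Phi_n$ is diagonal with entries $\lambda^x_{nj}(\thbf)$ for $j=q+1,\dots,n$ in decreasing order. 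This converts the bound into
\[
\boldsymbol R\boldsymbol\Sigma_n^x\boldsymbol R^\dag \;\preceq\; \lambda_{n,q+1}^x(\thbf)\,\boldsymbol C\boldsymbol\Lambda_m^{-1/2}\boldsymbol P_m\boldsymbol Q_n^\dag\boldsymbol Q_n\boldsymbol P_m^\dag\boldsymbol\Lambda_m^{-1/2}\boldsymbol C^\dag.
\]
Then I would invoke \eqref{Eq. IPQ} in the form $\boldsymbol Q_n^\dag\boldsymbol Q_n=\mbf I_n-\boldsymbol P_n^\dag\boldsymbol P_n$ to obtain
\[
\boldsymbol P_m\boldsymbol Q_n^\dag\boldsymbol Q_n\boldsymbol P_m^\dag \;=\; \mbf I_q-\boldsymbol P_m\boldsymbol P_n^\dag\boldsymbol P_n\boldsymbol P_m^\dag \;\preceq\;\mbf I_q,
\]
using $\boldsymbol P_m\boldsymbol P_m^\dag=\mbf I_q$ and positive semi-definiteness of $\boldsymbol P_m\boldsymbol P_n^\dag\boldsymbol P_n\boldsymbol P_m^\dag$.

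Combining these two Loewner inequalities and then using $\boldsymbol\Lambda_m^{-1}(\thbf)\preceq[\lambda_{mq}^x(\thbf)]^{-1}\mbf I_q$ together with $\boldsymbol C\boldsymbol C^\dag=\mbf I_q$ yields
\[
\boldsymbol R(\thbf)\boldsymbol\Sigma_n^x(\thbf)\boldsymbol R^\dag(\thbf)\;\preceq\;\frac{\lambda_{n,q+1}^x(\thbf)}{\lambda_{mq}^x(\thbf)}\,\mbf I_q,
\]
from which the bound on $\mu(\thbf)$ follows. I do not anticipate any serious obstacle: everything is algebraic manipulation of the spectral decomposition combined with the constraint on $\boldsymbol C$. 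The only small bookkeeping point is the convention, recalled at the start of Appendix~\ref{app:suffFL}, that $\boldsymbol P_m$ is implicitly padded with zero columns when multiplied against $n$-dimensional objects; once this is kept in mind, the chain of inequalities goes through cleanly and pointwise in $\thbf\in\bm\Theta$, which is exactly what the statement requires.
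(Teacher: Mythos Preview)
Your proof is correct and follows essentially the same approach as the paper's. Both arguments reduce $\boldsymbol R\boldsymbol\Sigma_n^x\boldsymbol R^\dag$ to $\boldsymbol C\boldsymbol\Lambda_m^{-1/2}\boldsymbol P_m\,\boldsymbol Q_n^\dag\boldsymbol\Phi_n\boldsymbol Q_n\,\boldsymbol P_m^\dag\boldsymbol\Lambda_m^{-1/2}\boldsymbol C^\dag$ and then exploit $\boldsymbol Q_n^\dag\boldsymbol\Phi_n\boldsymbol Q_n\preceq\lambda_{n,q+1}^x\mbf I_n$ together with $\boldsymbol P_m\boldsymbol P_m^\dag=\mbf I_q$ and $\boldsymbol C\boldsymbol C^\dag=\mbf I_q$; the only cosmetic difference is that the paper stops at the Loewner bound $\boldsymbol R\boldsymbol\Sigma_n^x\boldsymbol R^\dag\preceq\lambda_{n,q+1}^x\,\boldsymbol C\boldsymbol\Lambda_m^{-1}\boldsymbol C^\dag$ and reads off the largest eigenvalue via Weyl's inequality, whereas you push one more Loewner step to reach a multiple of $\mbf I_q$ directly.
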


\begin{proof} 
Due to (\ref{Eq. IPQ}) both $\mbf {I}_n - {\boldsymbol{Q}}^\dag_n(\bm\theta) {\boldsymbol{Q}}_n(\bm\theta)$ and $\lambda_{n,q+1}^x(\bm\theta) {\boldsymbol{Q}}^\dag_n(\bm\theta) {\boldsymbol{Q}}_n(\bm\theta) - {\boldsymbol{Q}}^\dag_n(\bm\theta) \boldsymbol{\Phi}_n(\bm\theta) {\boldsymbol{Q}}_n(\bm\theta)$ are non-negative definite. Therefore, $\lambda_{n,q+1}^x(\bm\theta) \mbf {I}_n -  {\boldsymbol{Q}}^\dag_n(\bm\theta) \boldsymbol{\Phi}_n(\bm\theta) {\boldsymbol{Q}}_n(\bm\theta)$ is non-negative definite too, which implies that 
$$\boldsymbol{C}(\bm\theta) {\boldsymbol{\Lambda}}^{-1/2}_m(\bm\theta) \boldsymbol{P}_m(\bm\theta) (\lambda_{n,q+1}^x(\bm\theta) \mbf{I}_n -  {\boldsymbol{Q}}^\dag_n(\bm\theta) \boldsymbol{\Phi}_n (\bm\theta){\boldsymbol{Q}}_n(\bm\theta)) {\boldsymbol{P}}^\dag_m(\bm\theta) {\boldsymbol{\Lambda}}^{-1/2}_m(\bm\theta)
{\boldsymbol{C}}^\dag(\bm\theta)$$ is also non-negative definite. Moreover,
\begin{align}
\boldsymbol{C}(\bm\theta) {\boldsymbol{\Lambda}}^{-1/2}_m(\bm\theta) \boldsymbol{P}_m(\bm\theta) (\lambda_{n,q+1}^x(\bm\theta) \mbf{I}_n& -  {\boldsymbol{Q}}^\dag_n(\bm\theta) \boldsymbol{\Phi}_n (\bm\theta){\boldsymbol{Q}}_n(\bm\theta)) {\boldsymbol{P}}^\dag_m(\bm\theta) {\boldsymbol{\Lambda}}^{-1/2}_m(\bm\theta)
{\boldsymbol{C}}^\dag(\bm\theta)\nn\\
& = \lambda_{n,q+1}^x(\bm\theta) \boldsymbol{C}(\bm\theta) {\boldsymbol{\Lambda}}^{-1}_m(\bm\theta) {\boldsymbol{C}}^\dag(\bm\theta) - \boldsymbol{R}(\bm\theta) \boldsymbol{\Sigma}_n^x(\bm\theta) {\boldsymbol{R}}^\dag(\bm\theta),
\end{align}
where we obtain the second term making use of $\boldsymbol{\Phi}_n(\bm\theta) = \boldsymbol{Q}_n(\bm\theta) \boldsymbol{\Sigma}_n^x(\bm\theta) {\boldsymbol{Q}}^\dag_n(\bm\theta)$. 
Letting $\bm A(\thbf)=\lambda_{n,q+1}^x(\bm\theta)\boldsymbol{C}(\bm\theta) {\boldsymbol{\Lambda}}^{-1}_m(\bm\theta) {\boldsymbol{C}}^\dag(\bm\theta)$ and $\bm B(\thbf) = \boldsymbol{R}(\bm\theta) \boldsymbol{\Sigma}_n^x(\bm\theta) {\boldsymbol{R}}^\dag(\bm\theta)$ we have that
\[
 \nu_{\max}\l(\bm A(\thbf)\r) - \nu_{\max}(\bm B(\thbf))\ge \nu_{\min}\l(\bm A(\thbf)\r) - \nu_{\max}(\bm B(\thbf))\ge \nu_{\min}\l(\bm A(\thbf)-\bm B(\thbf)\r)\ge 0,
\]
where the left inequality follows from Weyl's inequality; see Appendix \ref{App:Weyl}. Then,
\[
0\le  \nu_{\max}\l(\bm A(\thbf)\r) - \nu_{\max}(\bm B(\thbf)) = \lambda_{n,q+1}^x(\bm\theta) \nu_{\max}\l(\boldsymbol{C}(\bm\theta) {\boldsymbol{\Lambda}}^{-1}_m(\bm\theta) {\boldsymbol{C}}^\dag(\bm\theta)\r)-\mu(\thbf).
\]
Using  $ \nu_{\max}\l(\boldsymbol{C}(\bm\theta) {\boldsymbol{\Lambda}}^{-1}_m(\bm\theta) {\boldsymbol{C}}^\dag(\bm\theta)\r)= 1/\nu_{\min}({\boldsymbol{\Lambda}}_m(\bm\theta))=1/\lambda_{mq}^x(\thbf)$ concludes the proof.
\end{proof}


With the projection as in (\ref{Eq.Cpsi}), we are now ready to construct our converging sequence. Under (i) and (ii) of Theorem~\ref{Th. q-DFS}, there exists a set $\boldsymbol{\Pi} \subseteq \Thbf$ and a real number $W$ such that $\Thbf \setminus \boldsymbol{\Pi}$ has null measure and (1) $\lambda_{n q +1}^x (\thbf) \leq W$ for any $n \in \mathbb{N}$ and any $\thbf \in \boldsymbol{\Pi}$; (2) $\lim_{n\to\infty}\lambda_{nq}^x (\thbf) = \infty$ for any $\thbf \in \boldsymbol{\Pi}$. Let $\boldsymbol{M}$ be a positive measure subset of $\boldsymbol{\Pi}$ and $\{\alpha_n, n \in \mathbb{N} \}$ a real positive nondecreasing sequence such that $\lim_{n\to\infty} \alpha_n = \infty$ and $\lambda_{n q}^x (\thbf) \geq \alpha_n$ for $\thbf \in \boldsymbol{M}$. 
Then, for $\thbf \in \boldsymbol{M}$ according to Lemma \ref{MyLemma7}, $\mu(\thbf)  \leq \lambda_{n,q+1}^x(\thbf)/\lambda_{mq}^x(\thbf) \leq W/\alpha_m$. Denote by $\Delta_j(\thbf), j = 1,\ldots, q$ the eigenvalues of $\boldsymbol{D}(\thbf) {\boldsymbol{D}}^\dag (\thbf)$ in descending order. By \eqref{Eq.IDR} and Weyl's inequality, we have
\begin{equation}\label{inEq.Delta0}
1 \geq \Delta_q(\thbf) \geq 1 - W/\alpha_m
\end{equation}
for any $\thbf \in \boldsymbol{M}$. Hence, if $m^*$ is such that $W/\alpha_{m^*} < 1$, we have 
\begin{equation}\label{inEq.Delta}
\Delta_q(\thbf) \geq 1 - W/\alpha_{m^*} > 0
\end{equation}
 for any $\thbf \in \boldsymbol{M}$ and $m \geq m^*$.

Assuming $m \geq m^*$, we denote by $\boldsymbol{\Delta}(\thbf)$ the $q \times q$ diagonal matrix with $\Delta_j(\thbf)$, $j=1,\ldots, q$, on the diagonal. Let $\boldsymbol{H}(\thbf)$ be a  matrix that is measurable in $\boldsymbol{M}$ and satisfies that for any $\thbf \in  \boldsymbol{M}$, $\boldsymbol{H}(\thbf) {\boldsymbol{H}}^\dag (\thbf) = \boldsymbol{I}_q$ and $\boldsymbol{H}(\thbf) \boldsymbol{\Delta}(\thbf) {\boldsymbol{H}}^\dag (\thbf) = \boldsymbol{D}(\thbf)  {\boldsymbol{D}}^\dag (\thbf)$. Notice that due to (\ref{inEq.Delta}), $\Delta_j^{-1/2}(\thbf)$ is bounded for all $\thbf\in\boldsymbol{M}$ and all $j = 1, \ldots, q$. Therefore,
\begin{equation}\label{defF}
\boldsymbol{F}(\thbf) =
\begin{cases}
\boldsymbol{H}(\thbf) \boldsymbol{\Delta}^{-1/2}(\thbf) {\boldsymbol{H}}^\dag (\thbf)  \boldsymbol{D}(\thbf) & \text{if $\thbf \in \boldsymbol{M}$} \\
\boldsymbol{0} & \text{if $\thbf \notin \boldsymbol{M}$}
\end{cases}
\end{equation}
is well defined and, clearly, it belongs to $\boldsymbol{K}_M$. We then have the following result:

\begin{lemma}\label{Mylemma8}
Suppose that (i) and (ii) of Theorem \ref{Th. q-DFS}  hold. Then given $\tau$ such that $0<  \tau < 2$, there exists an integer $m_\tau$ such that: (1) $W/\alpha_{m_\tau} < 1$; (2) for $n>m>m_\tau$, the largest eigenvalue of the spectral density matrix of $\underline{\boldsymbol{C}} (L)  {\bfpsi}^m_{\vsbf} - \underline{\boldsymbol{F}} (L)  {\bfpsi}^n_{\vsbf}$ is less than $\tau$ for any $\thbf \in \boldsymbol{\Pi}$.
\end{lemma}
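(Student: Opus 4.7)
The plan is to turn the difference $\underline{\boldsymbol{C}}(L)\bfpsi^m_{\vsbf} - \underline{\boldsymbol{F}}(L)\bfpsi^n_{\vsbf}$ into a sum of two mutually orthogonal filtered rf whose spectral densities can be bounded explicitly in $m$ alone. Using the identity \eqref{Eq.Cpsi}, I would first rewrite
\[
\underline{\boldsymbol{C}}(L)\bfpsi^m_{\vsbf} - \underline{\boldsymbol{F}}(L)\bfpsi^n_{\vsbf} = \underline{(\boldsymbol{D}-\boldsymbol{F})}(L)\bfpsi^n_{\vsbf} + \underline{\boldsymbol{R}}(L)\boldsymbol{x}_{n\vsbf},
\]
and then check that the two summands are orthogonal at every lead and lag; this is immediate because their cross-spectrum, by Remark \ref{rem:spfilt}, contains the factor $\boldsymbol{P}_n(\thbf)\Sigmabf_n^x(\thbf)\boldsymbol{Q}_n^\dag(\thbf) = \boldsymbol{\Lambda}_n(\thbf)\boldsymbol{P}_n(\thbf)\boldsymbol{Q}_n^\dag(\thbf) = \boldsymbol{0}$, which vanishes by \eqref{Eq. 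IPQ}. Consequently the spectral density of the difference equals $(\boldsymbol{D}(\thbf)-\boldsymbol{F}(\thbf))(\boldsymbol{D}(\thbf)-\boldsymbol{F}(\thbf))^\dag + \boldsymbol{R}(\thbf)\Sigmabf_n^x(\thbf)\boldsymbol{R}^\dag(\thbf)$.

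The key algebraic step, which I expect to be the cleanest part, exploits the definition \eqref{defF}: on $\boldsymbol{M}$ one has $\boldsymbol{F} = \boldsymbol{H}\boldsymbol{\Delta}^{-1/2}\boldsymbol{H}^\dag\boldsymbol{D}$ with $\boldsymbol{H}^\dag\boldsymbol{H}=\boldsymbol{H}\boldsymbol{H}^\dag = \boldsymbol{I}_q$ and $\boldsymbol{D}\boldsymbol{D}^\dag = \boldsymbol{H}\boldsymbol{\Delta}\boldsymbol{H}^\dag$. A short manipulation then yields
\[
(\boldsymbol{D}-\boldsymbol{F})(\boldsymbol{D}-\boldsymbol{F})^\dag = (\boldsymbol{I}_q - \boldsymbol{H}\boldsymbol{\Delta}^{-1/2}\boldsymbol{H}^\dag)\,\boldsymbol{H}\boldsymbol{\Delta}\boldsymbol{H}^\dag\,(\boldsymbol{I}_q - \boldsymbol{H}\boldsymbol{\Delta}^{-1/2}\boldsymbol{H}^\dag) = \boldsymbol{H}(\boldsymbol{\Delta}^{1/2} - \boldsymbol{I}_q)^2 \boldsymbol{H}^\dag,
\]
whose eigenvalues are the numbers $(1-\Delta_j^{1/2}(\thbf))^2$, $j=1,\ldots,q$. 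Invoking \eqref{inEq.Delta0} together with the monotonicity of $t \mapsto (1-\sqrt{t})^2$ on $[0,1]$, each such eigenvalue is bounded by $(1-\sqrt{1-W/\alpha_m})^2$ uniformly in $\thbf \in \boldsymbol{M}$, while Lemma~\ref{MyLemma7} already gives $\mu(\thbf) \leq W/\alpha_m$ for the second summand. Weyl's inequality (Appendix~\ref{App:Weyl}) then bounds the largest eigenvalue of the sum by $(1 - \sqrt{1 - W/\alpha_m})^2 + W/\alpha_m$ on $\boldsymbol{M}$. Off $\boldsymbol{M}$ but inside $\boldsymbol{\Pi}$, $\boldsymbol{C}(\thbf) = \boldsymbol{0}$ forces $\boldsymbol{D}(\thbf)$, $\boldsymbol{R}(\thbf)$, and $\boldsymbol{F}(\thbf)$ to all vanish, so the spectral density of the difference is identically zero there.

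To close the argument I would note that $\alpha_m \to \infty$ implies $(1 - \sqrt{1-W/\alpha_m})^2 + W/\alpha_m \to 0$ as $m\to\infty$, and choose $m_\tau$ large enough so that this quantity is strictly less than $\tau$, while simultaneously $W/\alpha_{m_\tau}<1$; the latter condition is precisely what makes $\boldsymbol{\Delta}^{-1/2}$, and hence $\boldsymbol{F}$, well defined via \eqref{inEq.Delta}. I expect the main obstacle to be the careful verification of the orthogonality of the two summands above: it hinges on recognising that the $\boldsymbol{Q}_n^\dag \boldsymbol{Q}_n$ factor appearing in $\boldsymbol{R}$ projects onto the orthogonal complement of the span of the components of $\bfpsi^n$, which is exactly the algebraic content extracted from \eqref{Eq. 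IPQ}--\eqref{Eq.xPQ}. The rest is an algebraic identity and one application of Weyl's inequality.
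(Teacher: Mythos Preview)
Your proposal is correct and follows essentially the same route as the paper: decompose via \eqref{Eq.Cpsi}, verify orthogonality of the two summands, and bound the resulting spectral density in terms of $W/\alpha_m$. The only difference is cosmetic: instead of bounding the two pieces separately and invoking Weyl, the paper uses \eqref{Eq.IDR} together with $\boldsymbol{F}(\thbf)\boldsymbol{F}^\dag(\thbf)=\mbf{I}_q$ to collapse the whole spectral density to the closed form $\boldsymbol{S}(\thbf)=2\boldsymbol{H}(\thbf)(\mbf{I}_q-\boldsymbol{\Delta}^{1/2}(\thbf))\boldsymbol{H}^\dag(\thbf)$, read off its largest eigenvalue as $2(1-\sqrt{\Delta_q(\thbf)})$, and then bound this by $2W/\alpha_m$ via $\sqrt{\Delta_q}\ge\Delta_q$ and \eqref{inEq.Delta0}. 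Your bound $(1-\sqrt{1-W/\alpha_m})^2+W/\alpha_m$ in fact simplifies to $2-2\sqrt{1-W/\alpha_m}$, which is the same quantity the paper passes through, so the two arguments are equivalent up to the order in which the algebra is carried out.
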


\begin{proof}
Denote by $\boldsymbol{S}(\thbf)$ the spectral density matrix of $\underline{\boldsymbol{C}} (L)  {\bfpsi}^m_{\vsbf} - \underline{\boldsymbol{F}} (L)  {\bfpsi}^n_{\vsbf}$. Note that due to (\ref{Eq.Cpsi}), 
$$\underline{\boldsymbol{C}} (L)  {\bfpsi}^m_{\vsbf } - \underline{\boldsymbol{F}} (L)  {\bfpsi}^n_{\vsbf } = \underline{\boldsymbol{R}} (L)  \boldsymbol{x}_{n\vsbf } + (\underline{\boldsymbol{D}} (L)  - \underline{\boldsymbol{F}} (L)) {\bfpsi}^n_{\vsbf },$$
where two terms on the right-hand side are orthogonal at any lead and lag. Denoting by $\boldsymbol{S}_1(\thbf)$ and $\boldsymbol{S}_2(\thbf)$ the spectral density matrices of these terms respectively, then, for any $\thbf \in \boldsymbol{M}$, we have 
\begin{align*}
\boldsymbol{S}(\thbf) &= \boldsymbol{S}_1 (\thbf) + \boldsymbol{S}_2(\thbf) \\
&= \boldsymbol{R}(\thbf) \boldsymbol{\Sigma}_n^x (\thbf) {\boldsymbol{R}}^\dag (\thbf) + 
\boldsymbol{D}(\thbf){\boldsymbol{D}}^\dag (\thbf)  + \boldsymbol{F}(\thbf) {\boldsymbol{F}}^\dag (\thbf) - \boldsymbol{D}(\thbf) {\boldsymbol{F}}^\dag (\thbf) - \boldsymbol{F}(\thbf) {\boldsymbol{D}}^\dag (\thbf) \\
& = 2 \mbf{I}_q - \boldsymbol{D}(\thbf) {\boldsymbol{F}}^\dag (\thbf) - \boldsymbol{F}(\thbf) {\boldsymbol{D}}^\dag (\thbf) \\
& = 2 \mbf{I}_q - 2 \boldsymbol{H}(\thbf) \boldsymbol{\Delta}^{1/2}(\thbf) {\boldsymbol{H}}^\dag(\thbf) \\
& =  2 \boldsymbol{H}(\thbf) (\mbf{I}_q - \boldsymbol{\Delta}^{1/2}(\thbf)) {\boldsymbol{H}}^\dag (\thbf),
\end{align*}
where the third equality is due to (\ref{Eq.IDR}). Therefore, the largest eigenvalue of $\boldsymbol{S}(\thbf)$ is $2 - 2\sqrt{\Delta_q(\thbf)}$, which is
{(recall \eqref{inEq.Delta0})} less than or equal to $2 - 2 \Delta_q(\thbf) \leq 2 W/\alpha_m$. Since $\tau<2$,  the result follows from taking a positive integer $m_\tau$ such that 
\begin{equation}\label{mtau}
2 W/\alpha_{m_\tau} < \tau.
\end{equation}
This completes the proof.
\end{proof}

Denote by $\mathcal{S}(x_{i\vsbf }, x_{j\vsbf }; \thbf)$ the cross-spectrum between $x_{i\vsbf}, x_{j\vsbf}$ for $i, j \in \mathbb{N}$. The following intermediate lemmas will be used for proving further results.

\begin{lemma}\label{Mylemma9}
Suppose that Assumptions \ref{Ass.A} and \ref{Ass.B} hold. Consider the scalar sequences $\{A_{n\vsbf}, n\in\mathbb N\}$ and $\{B_{n\vsbf}, n\in\mathbb N\}$ such that 
$\lim_{n\to\infty} A_{n\vsbf } = A_{\vsbf }$ and $\lim_{n\to\infty} B_{n\vsbf } = B_{\vsbf }$, for any $\vsbf \in\mathbb Z^3$, with $A_{\vsbf},B_{\vsbf}\in \bm{\mathcal X}$. Suppose also that the rf $A_n,B_n, A,$ and $B$ are co-homostationary with $x_\ell$, $\ell\in\mathbb N$. 
Then, there exists a sequence $\{s_i, i\in\mathbb N,\, s_i<s_{i+1}\}$, such that $\{A_{s_i\vsbf}, i\in\mathbb N\}$ and $\{B_{s_i\vsbf}, i\in\mathbb N\}$ satisfy
$\underset{i\to\infty}{\lim} \ \mathcal{S}(A_{s_i \vsbf },  B_{s_i \vsbf }; \thbf) =  \mathcal{S}(A_{\vsbf },  B_{\vsbf }; \thbf)$, $\mathcal L$-a.e. in $\Thbf$.

\end{lemma}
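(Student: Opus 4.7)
\textbf{Proof plan for Lemma \ref{Mylemma9}.} The plan is to first establish an $L^1(\boldsymbol{\Theta})$-type convergence of the cross-spectra, and then invoke the standard measure-theoretic fact that every $L^1$-convergent sequence admits a subsequence converging $\mathcal L$-almost everywhere. Since each of $A_n$, $B_n$, $A$, $B$ is co-homostationary with the $x_\ell$'s and belongs to $\bm{\mathcal X}$, all objects of interest are in $L_2(\mathcal P,\mathbb C)$ and the cross-spectra $\mathcal S(\cdot,\cdot;\thbf)$ are well-defined Fourier transforms of the corresponding cross-autocovariance functions, exactly as in Section \ref{Subsect: spectral}.

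The first step is a telescoping decomposition: write
\[
\mathcal S(A_{n\vsbf},B_{n\vsbf};\thbf)-\mathcal S(A_{\vsbf},B_{\vsbf};\thbf)
= \mathcal S(A_{n\vsbf}-A_{\vsbf},B_{n\vsbf};\thbf)+\mathcal S(A_{\vsbf},B_{n\vsbf}-B_{\vsbf};\thbf),
\]
which is valid because the cross-spectrum is bilinear in its arguments for co-homostationary processes. The key tool for controlling each term is the pointwise Cauchy--Schwarz inequality for cross-spectra, $|\mathcal S(X,Y;\thbf)|^2\le \mathcal S(X,X;\thbf)\,\mathcal S(Y,Y;\thbf)$, which holds $\mathcal L$-a.e.\ in $\Thbf$ because the $2\times 2$ joint spectral density of $(X,Y)$ is Hermitian non-negative definite (Assumption \ref{Ass.B}).

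The second step is to integrate. Recall that for any $Z\in\bm{\mathcal X}$ co-homostationary with $\boldsymbol{x}$, one has $\frac{1}{8\pi^3}\int_{\Thbf}\mathcal S(Z,Z;\thbf)\,\mathrm d\thbf=\Vert Z\Vert^2$. Applying Cauchy--Schwarz pointwise in $\thbf$ and then Cauchy--Schwarz in $L^2(\Thbf,\mathcal L)$ gives
\[
\int_{\Thbf}\bigl|\mathcal S(A_{n\vsbf}-A_{\vsbf},B_{n\vsbf};\thbf)\bigr|\,\mathrm d\thbf
\le 8\pi^3\,\Vert A_{n\vsbf}-A_{\vsbf}\Vert\,\Vert B_{n\vsbf}\Vert,
\]
and an analogous bound holds for the second term on the right-hand side of the decomposition. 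Since $A_{n\vsbf}\to A_{\vsbf}$ and $B_{n\vsbf}\to B_{\vsbf}$ in $L_2$, the norms $\Vert B_{n\vsbf}\Vert$ and $\Vert A_{\vsbf}\Vert$ are bounded while $\Vert A_{n\vsbf}-A_{\vsbf}\Vert, \Vert B_{n\vsbf}-B_{\vsbf}\Vert\to 0$. Hence $\mathcal S(A_{n\vsbf},B_{n\vsbf};\cdot)\to\mathcal S(A_{\vsbf},B_{\vsbf};\cdot)$ in $L^1(\Thbf,\mathcal L)$.

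The third and final step is standard: from $L^1$ convergence on the finite-measure space $(\Thbf,\mathcal L)$ one can extract a subsequence $\{s_i\}$ along which the convergence holds pointwise $\mathcal L$-a.e., which is exactly the conclusion. The only delicate point, and the one I would check carefully in a full write-up, is the bilinearity of $\mathcal S$ applied to the differences $A_{n\vsbf}-A_{\vsbf}$ and $B_{n\vsbf}-B_{\vsbf}$: this requires joint homostationarity of each pair $(A_n,B_n)$, $(A_n-A,B_n)$, $(A,B_n-B)$ with appropriate cross-covariance structure, but this follows because the rf $A_n,B_n,A,B$ all lie in $\bm{\mathcal X}$ and are co-homostationary with $\boldsymbol{x}$ by hypothesis. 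Once that is in place, the argument above is essentially a three-line estimate plus the Riesz--Fischer subsequence principle.
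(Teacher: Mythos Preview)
Your proposal is correct and follows essentially the same route as the paper: establish $L^1(\Thbf)$ convergence of the cross-spectra and then extract an a.e.-convergent subsequence (the paper cites Royden, p.~145, for this last step). The paper's own write-up is terser---it simply invokes ``continuity of the inner product'' to jump directly to $L^1$ convergence of $\mathcal S(A_{n\vsbf},B_{n\vsbf};\cdot)$---whereas your telescoping decomposition together with the pointwise and $L^2$ Cauchy--Schwarz inequalities makes that step explicit and rigorous.
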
 
\begin{proof}
First, we have $\langle A_{n\vsbf}, B_{n\vsbf} \rangle = (8 \pi^3)^{-1} \int_{\Thbf} \mathcal{S}(A_{n\vsbf},  B_{n\vsbf}; \thbf){\mathrm d} \thbf$. Then, due to continuity of the inner product and convergence of $A_{n\vsbf}$ and $B_{n\vsbf}$, we have
$$
\lim_{n\to\infty}\frac{1}{8 \pi^3} \int_{\Thbf} |\mathcal{S}(A_{n\vsbf},  B_{n\vsbf}; \thbf)  -  \mathcal{S}(A_{\vsbf},  B_{\vsbf}; \thbf)| {\mathrm d} \thbf = 0.
$$
The desired result follows from 
 \citet[p.145]{royden1988real}.
\end{proof}



\begin{lemma}\label{MyLemma10}
Suppose that (i) and (ii) of Theorem \ref{Th. q-DFS} and  Assumptions \ref{Ass.A}-\ref{Ass.C} hold. Then there exists a $q$-dimensional orthonormal rf $\boldsymbol{v}$ such that
\begin{compactenum}
\item[(a)]  $v_{j \vsbf}\in\mathcal G(\bm x)$ for all $j = 1, \ldots, q$;
\item[(b)]  the spectral density matrix of $\boldsymbol{v}$ is $\mbf {I}_q$ $\mathcal L$-a.e. in $\bm M$, whilst it is ${\mbf 0}$ for $\thbf  \notin \boldsymbol{M}$. 
\end{compactenum}
\end{lemma}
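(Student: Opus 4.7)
The plan is to construct $\bm v$ as the mean-square limit of a carefully chosen sequence of $q$-dimensional rf obtained by iteratively applying Lemma~\ref{Mylemma8}. I would start by fixing an initial matrix $\bm C_0\in\bm K_M$, for instance $\bm C_0(\bm\theta)=\mbf I_q$ for $\bm\theta\in\bm M$ and $\bm C_0(\bm\theta)=\mbf 0$ otherwise, which clearly lies in $\bm K_M$. A key observation that makes the iteration go through is that~\eqref{mtau} in the proof of Lemma~\ref{Mylemma8} shows the threshold $m_\tau$ depends only on $\tau$ and on the bound $W$ from condition~(i) of Theorem~\ref{Th. q-DFS}: it is uniform over the choice of $\bm C\in\bm K_M$.

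Fix a tolerance sequence $\tau_k=4^{-k}$, so that $\sum_k\sqrt{\tau_k}<\infty$. With $\bm G_0:=\bm C_0$ and $m_0>m_{\tau_0}$, define inductively pairs $(m_k,\bm G_k)$ as follows: given $(m_k,\bm G_k)$ with $\bm G_k\in\bm K_M$, choose $m_{k+1}>\max\{m_k,m_{\tau_k}\}$ and apply Lemma~\ref{Mylemma8} with $\bm C=\bm G_k$, $m=m_k$, $n=m_{k+1}$, letting $\bm G_{k+1}\in\bm K_M$ be the resulting matrix $\bm F$ defined in~\eqref{defF}. Setting $\bm\eta^k_{\vsbf}:=\underline{\bm G}_k(L)\bfpsi^{m_k}_{\vsbf}$, the lemma guarantees that the largest eigenvalue of the spectral density of $\bm\eta^k-\bm\eta^{k+1}$ is less than $\tau_k$ on $\bm\Pi$, and since both $\bm G_k$ and $\bm G_{k+1}$ belong to $\bm K_M$ this spectral density vanishes outside $\bm M$. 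Hence the variance of each scalar component of $\bm\eta^k_{\vsbf}-\bm\eta^{k+1}_{\vsbf}$ is at most $\tau_k\mathcal L(\bm M)/(8\pi^3)$, and summability of $\sqrt{\tau_k}$ implies that $\{\bm\eta^k_{\vsbf}\}_k$ is Cauchy in $L_2(\mathcal P,\mathbb C)^q$ for every $\vsbf\in\mathbb Z^3$. I would denote its limit by $\bm v_{\vsbf}$; this defines a homostationary $q$-dimensional rf with entries in $\bm{\mathcal X}$.

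For property~(a), note that $\eta^k_{j\vsbf}=\underline{\bm a}^{(j)}_k(L)\bm x_{m_k\vsbf}$ with filter $\bm a^{(j)}_k=\bm e_j^\top\bm G_k\bm\Lambda_{m_k}^{-1/2}\bm P_{m_k}$. Using $\bm G_k\bm G_k^\dag=\mbf I_q$ on $\bm M$ (and $\mbf 0$ outside), together with the fact that every diagonal entry of $\bm\Lambda_{m_k}^{-1}(\bm\theta)$ is at most $\alpha_{m_k}^{-1}$ for $\bm\theta\in\bm M$, a direct computation yields $\Vert\bm a^{(j)}_k\Vert^2\le\mathcal L(\bm M)/(8\pi^3\alpha_{m_k})\to 0$. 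Padding $\bm a^{(j)}_k$ with zeros and setting $\bm a^{(j)}_n$ equal to this padded filter for $n\in[m_k,m_{k+1})$ produces a sequence with $\Vert\bm a^{(j)}_n\Vert\to 0$ and $\underline{\bm a}^{(j)}_n(L)\bm x_{n\vsbf}=\eta^k_{j\vsbf}\to v_{j\vsbf}$ in mean square, so this is a STDAS in the sense of Definition~\ref{def:STDAS} and $v_{j\vsbf}\in\mathcal G(\bm x)$. For property~(b), every $\bm\eta^k$ has spectral density $\bm G_k\bm G_k^\dag$, which equals $\mbf I_q$ on $\bm M$ and $\mbf 0$ elsewhere \emph{independently of $k$}; continuity of covariances then shows that the autocovariance of $\bm v$ at every spatio-temporal lag $\hbf$ coincides with that of the $\bm\eta^k$, and the uniqueness of the Fourier transform gives the claimed spectral density of $\bm v$.

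The main obstacle is the coherence of the iterative construction: one must verify that $\bm G_{k+1}$ produced by Lemma~\ref{Mylemma8} again lies in $\bm K_M$ (true by its explicit definition in~\eqref{defF}), that the threshold $m_{\tau_k}$ is independent of the current iterate $\bm G_k$ (the uniformity of $m_\tau$ in $\bm C$ noted at the outset), and that the per-step errors aggregate to a summable series in the $L_2(\mathcal P,\mathbb C)$-norm so that the approximating sequence is genuinely Cauchy. Once these pieces are in place, properties~(a) and~(b) follow by a short bookkeeping argument using the divergence of the spatio-temporal dynamic eigenvalues on $\bm M$ for~(a) and the constancy in $k$ of the spectral density of the $\bm\eta^k$ for~(b).
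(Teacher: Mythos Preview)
Your proposal is correct and follows essentially the same iterative construction as the paper: both build a Cauchy sequence by repeatedly applying Lemma~\ref{Mylemma8} with tolerances $\tau_k=4^{-k}$, take the mean-square limit, and verify~(a) by bounding the filter norms via $\bm G_k\bm\Lambda_{m_k}^{-1}\bm G_k^\dag$ on~$\bm M$. The only noteworthy difference is in~(b): the paper invokes Lemma~\ref{Mylemma9} to pass to an a.e.\ convergent subsequence of cross-spectra, whereas you exploit the fact that the spectral density of every $\bm\eta^k$ is the \emph{same} matrix $\mbf I_q\mathbb 1_{\bm M}$, so the autocovariances are constant in $k$ and the limit inherits them directly---a cleaner shortcut in this particular case.
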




\begin{proof}
The proof goes through by repeatedly applying Lemma \ref{Mylemma8} to construct a Cauchy sequence whose limit satisfies the desired properties. 

Let $\boldsymbol{F}_1(\thbf)$ be an element of $\boldsymbol{K}_M$. Let also $\tau = 1/2^2$, $n_1 = m_\tau$, where $m_\tau$ satisfies (\ref{mtau}), $\boldsymbol{G}_1(\thbf) = \boldsymbol{F}_1(\thbf) \boldsymbol{\Lambda}_{n_1}^{-1/2}(\thbf) \boldsymbol{P}_{n_1}(\thbf)$, and  $\boldsymbol{v}_{\vsbf}^{(1)} = \underline{\boldsymbol{G}}_1(L) \boldsymbol{x}_{n\vsbf}$. One can easily check that the spectral density matrix of $\boldsymbol{v}^{(1)}$ equals $\mbf{I}_q$   for $\thbf  \in \boldsymbol{M}$, ${\mbf 0}_q$ for $\thbf  \notin \boldsymbol{M}$.

In the similar way, set $\tau = 1/2^4$ and $n_2 = m_\tau$, where $m_\tau$ satisfies (\ref{mtau}) and $m_\tau \geq n_1$. Set $\boldsymbol{D}(\thbf)$ in (\ref{defDR}) by replacing $\boldsymbol{C}(\thbf)$, $n$, $m$ with $\boldsymbol{F}_1(\thbf)$, $n_2$, $n_1$, respectively, and set $\boldsymbol{F}_2(\thbf)$ as in (\ref{defF}). Then set $\boldsymbol{G}_2(\thbf) = \boldsymbol{F}_2(\thbf) \boldsymbol{\Lambda}_{n_2}^{-1/2}(\thbf) \boldsymbol{P}_{n_2}(\thbf)$ and $\boldsymbol{v}_{\vsbf}^{(2)} = \underline{\boldsymbol{G}}_2(L) \boldsymbol{x}_{n\vsbf}$. The spectral density matrix of $\boldsymbol{v}^{(2)}$ 
equals $\mbf {I}_q$ for $\thbf  \in \boldsymbol{M}$, ${\mbf 0}_q$ for $\thbf  \notin \boldsymbol{M}$. Denote by $\mathcal{A}_1(\thbf)$ the largest eigenvalue of the spectral matrix of $\boldsymbol{v}^{(1)} - \boldsymbol{v}^{(2)}$. According to the definition of $n_1$ and Lemma \ref{Mylemma8}, $\mathcal{A}_1(\thbf) < 1/2^2$ for any $\thbf \in \boldsymbol{\Pi}$, which entails $\Vert v_{j \vsbf}^{(1)} - v_{j \vsbf}^{(2)}\Vert < 1/2$ for all $j = 1, \ldots, q$ and any $\vsbf\in\mathbb Z^3$.

By recursion, set $\tau = 1/2^{2k}$ and $n_k = m_\tau$, where $m_\tau$ satisfies (\ref{mtau}) and $m_\tau \geq n_{k-1}$. Set $\boldsymbol{D}(\thbf)$ in (\ref{defDR}) by replacing $\boldsymbol{C}(\thbf)$, $n$, $m$ with $\boldsymbol{F}_{k-1}(\thbf)$, $n_k$, $n_{k-1}$, respectively, and set $\boldsymbol{F}_k(\thbf)$ as in (\ref{defF}). Then set $\boldsymbol{G}_k(\thbf) = \boldsymbol{F}_k(\thbf) \boldsymbol{\Lambda}_{n_k}^{-1/2}(\thbf) \boldsymbol{P}_{n_k}(\thbf)$ and $\boldsymbol{v}_{\vsbf}^{(k)} = \underline{\boldsymbol{G}}_k(L) \boldsymbol{x}_{n\vsbf}$. The spectral density matrix of $\boldsymbol{v}^{(k)}$ equals $\mbf{I}_q$ for $\thbf  \in \boldsymbol{M}$, ${\mbf 0}_q$ for $\thbf  \notin \boldsymbol{M}$. Denote by $\mathcal{A}_{k-1}(\thbf)$ the largest eigenvalue of the spectral matrix of $\boldsymbol{v}^{(k-1)} - \boldsymbol{v}^{(k)}$. According to the definition of $n_{k-1}$ and Lemma \ref{Mylemma8}, $\mathcal{A}_{k-1}(\thbf) < 1/2^{2(k-1)}$ for any $\thbf \in \boldsymbol{\Pi}$, which entails $\Vert v_{j \vsbf}^{(k-1)} - v_{j \vsbf}^{(k)}\Vert < 1/2^{k-1}$ for all $j = 1, \ldots, q$ and any $\vsbf\in\mathbb Z^3$.

Hence, for all $j = 1, \ldots, q$ and any $\vsbf\in\mathbb Z^3$, we have
$$
\Vert v_{j \vsbf}^{(k)} - v_{j \vsbf}^{(k+h)}\Vert \leq\Vert v_{j \vsbf}^{(k)} - v_{j \vsbf}^{(k+1)}\Vert + \cdots + \Vert v_{j \vsbf}^{(k+h-1)} - v_{j \vsbf}^{(k+h)}\Vert <\sum_{j=k}^{k+h-1}\frac 1{2^j} <\frac 1{2^{k-1}},
$$
which implies that for all $j=1,\ldots, q$, $\{{v}_{j\vsbf}^{(k)}, k \in \mathbb{N}\}$ is a Cauchy sequence. Denote by $\boldsymbol{v}_{\vsbf}=\lim_{k\to\infty} \boldsymbol{v}_{\vsbf}^{(k)}$. Then (b) follows from Lemma \ref{Mylemma9} and the fact that the spectral density matrix of $\boldsymbol{v}^{(k)}$  equals $\mbf {I}_q$ for $\thbf \in \boldsymbol{M}$, ${\mbf 0}_q$ for $\thbf  \notin \boldsymbol{M}$. 

Now it remains to prove (a), for which it suffices to show that each row of $\{\boldsymbol{G}_k, k \in \mathbb{N}\}$ is a STDAS (see Definition \ref{def:STDAS}).  Notice that $\boldsymbol{G}_k(\thbf) \boldsymbol{G}^\dag_k(\thbf) = \boldsymbol{F}_k(\thbf)  \boldsymbol{\Lambda}_{n_k}^{-1}(\thbf)  \boldsymbol{F}^\dag_k(\thbf) $, whose  diagonal entries are equal or less than $1/\lambda_{n_kq}^x(\thbf)$  since $\boldsymbol{F}_k(\thbf)  \in \boldsymbol{K}_M$ because $1/\lambda_{n_kq}^x(\thbf)$ converges to zero $\mathcal L\text{-a.e.}$ in $\Thbf$ by (ii) of Theorem \ref{Th. q-DFS}. Moreover, without loss of generality we can always restrict Assumption \ref{Ass.C} to assume $\lambda_{n_kq}^x(\thbf)>1$ for all $\thbf\in\bm\Theta$ (see the arguments in \citealp[Section 4.2]{FL01}). Then, $1/\lambda_{n_kq}^x(\thbf)<1$ and by Lebesgue's dominated convergence theorem, its integral over $\Thbf$ converges to zero. This concludes the proof of (a).
\end{proof}

%

Now, we apply the results in Lemma \ref{MyLemma10} to define a $q$-dimensional white noise rf over all $\bm\Theta$.

\begin{proposition}\label{MyLemma11}
Suppose that (i) and (ii) of Theorem \ref{Th. q-DFS} and Assumptions \ref{Ass.A}-\ref{Ass.C}  hold. There exists a $q$-dimensional orthonormal white noise rf $\boldsymbol{z}$
such that, for all $j=1,\ldots,q$, $z_{j \vsbf }\in\mathcal G(\bm x)$. 
\end{proposition}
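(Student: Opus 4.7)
The plan is to construct $\boldsymbol z$ by partitioning the set $\boldsymbol\Pi\subseteq\boldsymbol\Theta$ (on which $\lim_n\lambda^x_{nq}$ diverges) into countably many disjoint positive-measure pieces, applying Lemma~\ref{MyLemma10} on each piece, and then summing the resulting partial $q$-dimensional white noises into a global one. Since Lemma~\ref{MyLemma10} only produces a rf whose spectral density is $\mbf I_q$ on a chosen subset $\boldsymbol M$ and $\mbf 0$ outside, the key is to glue these pieces in a way that keeps the components inside $\mathcal G(\boldsymbol x)$ and yields spectral density $\mbf I_q$ $\mathcal L$-a.e.~on $\boldsymbol\Theta$.

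First, I would fix a countable partition $\{\boldsymbol M_k\}_{k\in\mathbb N}$ of $\boldsymbol\Pi$ into pairwise disjoint measurable subsets of strictly positive measure, with $\bigcup_k\boldsymbol M_k=\boldsymbol\Pi$ up to a $\mathcal L$-null set. Such a partition exists via a standard maximality argument: any maximal disjoint family of positive-measure subsets of $\boldsymbol\Pi$ is at most countable (because $\mathcal L(\boldsymbol\Pi)\le 8\pi^3$), and its union must exhaust $\boldsymbol\Pi$ modulo a null set, since otherwise Lemma~\ref{MyLemma10} would furnish an additional positive-measure subset contradicting maximality. For every $k$, I would then apply Lemma~\ref{MyLemma10} with $\boldsymbol M=\boldsymbol M_k$ to obtain a $q$-dimensional rf $\boldsymbol v^{(k)}$ with components in $\mathcal G(\boldsymbol x)$ and spectral density $\mbf I_q$ on $\boldsymbol M_k$, $\mbf 0$ elsewhere.

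Next, I would establish pairwise orthogonality of $\boldsymbol v^{(k)}$ and $\boldsymbol v^{(k')}$ for $k\neq k'$ at all leads and lags. The joint spectral density matrix of $(\boldsymbol v^{(k)\top},\boldsymbol v^{(k')\top})^\top$ is Hermitian positive semidefinite, so any Hermitian diagonal block equal to $\mbf 0$ forces the adjacent off-diagonal (cross-spectrum) block to vanish as well. Since for $\boldsymbol\theta\notin\boldsymbol M_k$ the diagonal block from $\boldsymbol v^{(k)}$ is $\mbf 0$, and analogously for $\boldsymbol v^{(k')}$ outside $\boldsymbol M_{k'}$, the cross-spectrum is zero on $(\boldsymbol\Theta\setminus\boldsymbol M_k)\cup(\boldsymbol\Theta\setminus\boldsymbol M_{k'})=\boldsymbol\Theta$ (because $\boldsymbol M_k\cap\boldsymbol M_{k'}=\emptyset$). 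Hence $\mathrm{Cov}(v^{(k)}_{j\vsbf},v^{(k')}_{j'\vsbf'})=0$ for all $j,j',\vsbf,\vsbf'$ and $k\neq k'$.

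With orthogonality in hand, for each $j\le q$ and each $\vsbf\in\mathbb Z^3$ the partial sums $S^N_{j\vsbf}=\sum_{k=1}^N v^{(k)}_{j\vsbf}$ satisfy
\[
\big\lVert S^N_{j\vsbf}-S^M_{j\vsbf}\big\rVert^2=\sum_{k=M+1}^N\lVert v^{(k)}_{j\vsbf}\rVert^2=\sum_{k=M+1}^N\frac{\mathcal L(\boldsymbol M_k)}{8\pi^3}\longrightarrow 0
\]
as $M,N\to\infty$, because $\sum_k\mathcal L(\boldsymbol M_k)=\mathcal L(\boldsymbol\Pi)<\infty$. Define $z_{j\vsbf}$ as the $L_2$-limit of $S^N_{j\vsbf}$. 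Closedness of $\mathcal G(\boldsymbol x)$ (a closed subspace of $\bm{\mathcal X}$) yields $z_{j\vsbf}\in\mathcal G(\boldsymbol x)$. Finally, using Lemma~\ref{Mylemma9} to pass $L_2$-convergence to $\mathcal L$-a.e.~convergence of the spectral densities along a subsequence, and using that the spectral density of $\boldsymbol S^N$ is $\sum_{k=1}^N\mbf I_q\mathbbm 1_{\boldsymbol M_k}$ by pairwise orthogonality, I conclude that the spectral density of $\boldsymbol z$ equals $\mbf I_q$ $\mathcal L$-a.e.~on $\boldsymbol\Pi$, hence on $\boldsymbol\Theta$, so $\boldsymbol z$ is the required $q$-dimensional orthonormal white noise rf. The main obstacle I foresee is making the ``disjoint spectral supports imply zero cross-spectrum'' step rigorous at the matrix level; this requires invoking positive semidefiniteness of the joint spectral density rather than a naive componentwise Cauchy–Schwarz argument, and must be combined with the countable gluing so that neither the orthogonality nor the $L_2$-convergence breaks.
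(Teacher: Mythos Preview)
Your overall architecture---partition $\boldsymbol\Pi$, invoke Lemma~\ref{MyLemma10} on each piece, then sum---is exactly the paper's strategy, and your treatment of orthogonality (via positive semidefiniteness of the joint spectral density) and of $L_2$-convergence of the partial sums is correct and in fact more explicit than what the paper writes. However, there is a genuine gap in the partition step.

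Lemma~\ref{MyLemma10} is not applicable to an \emph{arbitrary} positive-measure subset $\boldsymbol M\subseteq\boldsymbol\Pi$. Its hypothesis, stated in the text immediately preceding it, requires the existence of a nondecreasing sequence $\alpha_n\to\infty$ with $\lambda^x_{nq}(\boldsymbol\theta)\ge\alpha_n$ \emph{uniformly} for all $\boldsymbol\theta\in\boldsymbol M$; this is what drives the Cauchy-sequence construction via Lemma~\ref{Mylemma8} (through the bound $2W/\alpha_m<\tau$ in \eqref{mtau}). Condition (ii) of Theorem~\ref{Th. q-DFS} only guarantees \emph{pointwise} divergence of $\lambda^x_{nq}$ on $\boldsymbol\Pi$, so on a generic positive-measure piece $\boldsymbol M_k$ produced by your maximality argument no such uniform $\alpha_n$ need exist, and Lemma~\ref{MyLemma10} cannot be invoked. (Incidentally, your parenthetical ``Lemma~\ref{MyLemma10} would furnish an additional positive-measure subset'' is a non sequitur: the lemma takes a subset as input, it does not produce one.)

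The paper handles this by building the partition $\{\boldsymbol N_b\}$ \emph{specifically} so that uniform divergence holds on each piece: each $\boldsymbol N_b$ is a nested intersection $\bigcap_a\boldsymbol M_a^{(b)}$ of level sets $\{\lambda^x_{\nu_a q}>a\}$, and on such a set one can take $\alpha_n=a$ for $\nu_a\le n<\nu_{a+1}$. An equivalent fix would be an Egorov-type exhaustion. Either way, the partition must be constructed with the uniform-$\alpha_n$ property in mind; without it, your appeal to Lemma~\ref{MyLemma10} fails and the proof does not go through.
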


\begin{proof}
The proof goes through by choosing a set $\boldsymbol{N} \in \Thbf$ with Lebesgue measure $\mathcal L(\boldsymbol{N}) = 8\pi^3 = \mathcal L (\Thbf)$ and by obtaining a sequence of $q$-dimensional vector rf, which satisfy (a) in Lemma~\ref{MyLemma10} and have spectral density matrix equal to $\mbf {I}_q$ for $\mathcal L\text{-a.e.}$ in a partition of $\boldsymbol{N}$, and is ${\mbf 0}$ otherwise.

Now, define $\boldsymbol{M}_0^{(1)} = \boldsymbol{\Pi}$. Then, by recursion, define $\nu_a$, $a \in \mathbb{N}$, as the smallest among the integer $m$ such that
$$\mathcal L(\{\thbf \in \boldsymbol{M}_{a-1}^{(1)}, \lambda_{mq}^x(\thbf) > a\}) > 4 \pi^3$$
and define
$\boldsymbol{M}_{a}^{(1)} = \{\thbf \in \boldsymbol{M}_{a-1}^{(1)},  \lambda_{\nu_a q}^x(\thbf) > a\}.$
Clearly, the Lebesgue measure of the set
$$\boldsymbol{N}_1 = \boldsymbol{M}_1^{(1)} \cap \boldsymbol{M}_2^{(1)} \cap \cdots \cap \boldsymbol{M}_a^{(1)} \cap \cdots$$
is not less than $4 \pi^3$. In the similar fashion, define $\boldsymbol{N}_2$ starting with $\boldsymbol{M}_0^{(2)} =  \boldsymbol{\Pi} \setminus \boldsymbol{N}_1$ instead of $\boldsymbol{\Pi}$, and using $\mathcal L ( \boldsymbol{\Pi} \setminus \boldsymbol{N}_1)/2$ instead of $4 \pi^3$. Also, for $b > 2$, define $\boldsymbol{N}_b$ starting with $\boldsymbol{M}_0^{(b)} =  \boldsymbol{\Pi} \setminus \boldsymbol{N}_1 \setminus \boldsymbol{N}_2 \setminus \cdots \setminus \boldsymbol{N}_{b-1}$, and using $\mathcal L (\boldsymbol{\Pi} \setminus \boldsymbol{N}_1 \setminus \boldsymbol{N}_2 \setminus \cdots \setminus \boldsymbol{N}_{b-1})/2$. Letting $\boldsymbol{N} = \boldsymbol{N}_1 \cup \boldsymbol{N}_2 \cup \cdots$, we have
$$\mathcal L (\boldsymbol{N}) = \mathcal L (\boldsymbol{N}_1) + \mathcal L (\boldsymbol{N}_2) + \cdots + \mathcal L (\boldsymbol{N}_b) + \cdots = 8\pi^3,$$
since by construction $\boldsymbol{N}_i \cap \boldsymbol{N}_j = \emptyset $, for $i\neq j$ and $i,j \in \mathbb{N}$.

Lemma~\ref{MyLemma10} can be applied to the subset $\boldsymbol{N}_b$, with the sequence $\alpha_n$ defined as $\alpha_n = a$, where $a$ is the only integer such that $\nu_a \leq n < \nu_{a+1}$. Hence, we obtain a $q$-dimensional vector rf $\{\boldsymbol{v}_{\vsbf}^b = (v_{1\vsbf}^b \ v_{2\vsbf}^b \ \cdots \  v_{q\vsbf}^b)^\top,\, \vsbf \in\mathbb Z^3\}$ such that (i) $v_{j\vsbf}^b\in\mathcal G(\bm x)$ for all $j=1,\ldots, q$; (ii) its spectral density matrix equals $\mbf{I}_q$ for $\mathcal L\text{-a.e.}$ in $\boldsymbol{N}_b$, and is ${\mbf 0}_q$ for $\thbf \notin \boldsymbol{N}_b$.

Finally, set $\boldsymbol{z}_{\vsbf} = \sum_{b=1}^\infty \boldsymbol{v}_{\vsbf}^b$. It is easy to see that $z_{j\vsbf}\in\mathcal G(\bm x)$ for all $j = 1,\ldots, q$ and the spectral density matrix of $\boldsymbol{z}$ equals $\mbf {I}_q$  $\mathcal L\text{-a.e.}$ in $\Thbf$. Therefore, $\boldsymbol{z}$ is a  $q$-dimensional orthonormal white noise rf. 
\end{proof}

Considering the $q$-dimensional orthonormal white noise rf $\boldsymbol{z}$ in Proposition \ref{MyLemma11}, we have the following

\begin{proposition} \label{MySection4.5}
Suppose that (i) and (ii) of Theorem \ref{Th. q-DFS} and Assumptions \ref{Ass.A}-\ref{Ass.C} hold. Then $\text{\rm $\spn$}(\boldsymbol{z}) =  \mathcal{G}(\boldsymbol{x})$.
\end{proposition}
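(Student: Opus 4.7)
The easy inclusion $\spn(\boldsymbol{z}) \subseteq \mathcal{G}(\boldsymbol{x})$ will follow immediately from Proposition~\ref{MyLemma11}: each component $z_{j\vsbf}$ belongs to $\mathcal{G}(\boldsymbol{x})$, and since aggregates are defined as mean-square limits of STDAS-filtered rf, $\mathcal{G}(\boldsymbol{x})$ is a closed linear subspace of $\boldsymbol{\mathcal X}$, hence it contains the closed linear span of $\{z_{j\vsbf}\}_{j,\vsbf}$.

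For the reverse inclusion, I plan to argue by contradiction. Suppose there exists $y\in\mathcal{G}(\boldsymbol{x})\setminus\spn(\boldsymbol{z})$ and set $\eta_{\vsbf} = y_{\vsbf} - \text{proj}(y_{\vsbf}\mid\spn(\boldsymbol{z}))$. Both $y$ and its projection lie in $\mathcal{G}(\boldsymbol{x})$ (the latter by the easy inclusion), and $\mathcal{G}(\boldsymbol{x})$ is linear, so $\eta\in\mathcal{G}(\boldsymbol{x})\cap\spn(\boldsymbol{z})^{\perp}$ with $\|\eta\|^2>0$. Orthogonality at all leads and lags translates into vanishing cross-spectra $\mathcal{S}(\eta,z_j;\bm\theta)=0$ a.e.\ for $j=1,\ldots,q$. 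Moreover, there is a measurable set $\bm M\subseteq\bm\Pi$ with $\mathcal L(\bm M)>0$ on which the spectral density $\sigma^{\eta}(\bm\theta)>0$.

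The core of the argument will be to exhibit, at each frequency $\bm\theta\in\bm M$, a $(q+1)$-dimensional subspace of $\mathbb C^n$ on which the quadratic form $\bm b\,\bm\Sigma_n^{x}(\bm\theta)\,\bm b^{\dag}$ grows without bound, thereby forcing $\lambda_{n,q+1}^{x}(\bm\theta)\to\infty$ on $\bm M$ and contradicting assumption~(i) of Theorem~\ref{Th. q-DFS}. Concretely, I would pick STDAS $\{\bm a_n^{(j)}\}$ for $z_j$, $j=1,\ldots,q$, and $\{\bm c_n\}$ for $\eta$, and stack them as the rows of a $(q+1)\times n$ matrix-valued function $\bm A_n(\bm\theta)$. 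By continuity of the inner product and a refinement of Lemma~\ref{Mylemma9} applied entry-wise, together with a diagonal subsequence extraction, $\bm A_n\,\bm\Sigma_n^{x}\,\bm A_n^{\dag}$ converges a.e.\ on $\bm\Theta$ to the block-diagonal limit $\operatorname{diag}(\mbf I_q,\sigma^{\eta})$, whose smallest eigenvalue is strictly positive on $\bm M$. The STDAS property $\|\bm a_n^{(j)}\|\to 0$ and $\|\bm c_n\|\to 0$ simultaneously yields $\bm M_n(\bm\theta):=\bm A_n\bm A_n^{\dag}\to\bm 0$ a.e.\ after a further subsequence. For $n$ large, invertibility of the limit of $\bm A_n\,\bm\Sigma_n^{x}\,\bm A_n^{\dag}$ on $\bm M$ forces $\bm A_n$ to have full row rank, so $\bm M_n$ is invertible there. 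Setting $\bm B_n=\bm M_n^{-1/2}\bm A_n$ produces $\bm B_n\bm B_n^{\dag}=\mbf I_{q+1}$, and
\[
\nu_{\min}\bigl(\bm B_n\bm\Sigma_n^{x}\bm B_n^{\dag}\bigr)
\;\ge\;\frac{\nu_{\min}(\bm A_n\bm\Sigma_n^{x}\bm A_n^{\dag})}{\nu_{\max}(\bm M_n)}\;\longrightarrow\;\infty
\]
a.e.\ on $\bm M$, so Courant--Fischer applied to the $(q+1)$-dimensional subspace spanned by the rows of $\bm B_n(\bm\theta)$ gives $\lambda_{n,q+1}^{x}(\bm\theta)\ge\nu_{\min}(\bm B_n\bm\Sigma_n^{x}\bm B_n^{\dag})\to\infty$ on $\bm M$, the desired contradiction.

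The main obstacle I anticipate is the delicate bookkeeping of the various subsequences: simultaneous almost-everywhere convergence of $\bm A_n\bm\Sigma_n^{x}\bm A_n^{\dag}$ and of $\bm M_n$, together with the eventual invertibility of $\bm M_n$ on $\bm M$, has to be secured through a careful Cantor-diagonal procedure, and one must verify that the orthonormalisation $\bm M_n^{-1/2}\bm A_n$ is measurable and well-defined a.e. A cleaner alternative, which I would fall back on if this bookkeeping becomes unwieldy, is to mimic the iterative construction of Proposition~\ref{MyLemma11} taking $(\bm z^{\top},\eta/\|\eta\|)^{\top}$ as the starting $(q+1)$-dimensional seed: this would directly yield a $(q+1)$-dimensional orthonormal white-noise rf in $\mathcal G(\boldsymbol x)$, and the exact analogue of Lemma~\ref{MyLemma10}(a) would then force $\lim_{n\to\infty}\lambda_{n,q+1}^{x}(\bm\theta)=\infty$ on a set of positive measure, again contradicting assumption~(i) of Theorem~\ref{Th. q-DFS}.
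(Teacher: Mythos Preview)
Your main argument is correct and constitutes a genuinely different route from the paper's. Both proofs start identically: take the residual $r=\eta$ of an aggregate after projection onto $\spn(\boldsymbol z)$, pick STDAS's $\{\bm a_{nj}\}$ for $z_1,\ldots,z_q$ and for $r$, and pass to subsequences so that the $(q+1)\times(q+1)$ ``compressed'' spectral density $\bm Z_n(\bm\theta)=\bm A_n\bm\Sigma_n^x\bm A_n^\dag$ converges a.e.\ to $\bm W=\operatorname{diag}(\mbf I_q,\sigma^r)$ and the Gram matrix $\bm A_n\bm A_n^\dag$ converges a.e.\ to zero. From this common platform the two arguments diverge. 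The paper splits each $\bm a_{nj}(\bm\theta)$ along the top-$q$ eigenspace of $\bm\Sigma_n^x(\bm\theta)$, writing $\bm a_{nj}=\bm f_{nj}\bm P_n+\bm g_{nj}$; the ``projected'' part contributes a rank-$\le q$ matrix (hence singular), while the ``orthogonal'' part is bounded by $\lambda^x_{n,q+1}|\bm g_{nj}|^2\to 0$ using assumption~(i). This forces $\det\bm W=\sigma^r=0$ directly. Your approach instead Gram--Schmidt normalizes $\bm A_n$ via $\bm B_n=\bm M_n^{-1/2}\bm A_n$ and reads off $\lambda_{n,q+1}^x\ge\nu_{\min}(\bm B_n\bm\Sigma_n^x\bm B_n^\dag)\ge\nu_{\min}(\bm Z_n)/\nu_{\max}(\bm M_n)\to\infty$ on a positive-measure set, contradicting~(i). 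Your route is arguably more transparent about where~(i) enters (as the contradiction target rather than buried in a bound), avoids the eigenvector decomposition entirely, and the monotonicity of $n\mapsto\lambda_{n,q+1}^x$ lets you promote subsequential divergence to full divergence with no extra work. The paper's route is a bit more self-contained in that it never invokes Courant--Fischer and yields $\sigma^r\equiv 0$ rather than a contradiction.

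One caveat: your proposed fallback does not work as stated. The iterative Cauchy-sequence construction behind Proposition~\ref{MyLemma11} (via Lemmas~\ref{MyLemma7}--\ref{MyLemma10}) hinges on the bound $\mu(\bm\theta)\le\lambda_{n,q+1}^x/\lambda_{mq}^x\to 0$, which requires the $q$-th eigenvalue to diverge. Running the same machine in dimension $q+1$ would need $\lambda_{m,q+1}^x\to\infty$, precisely the conclusion you are trying to extract---so the fallback is circular. This does not affect your primary argument, which stands on its own.
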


\begin{proof}
Consider a scalar rf $y_{\vsbf}\in\mathcal G(\bm x)$ and consider the projection
$$y_{\vsbf} = \text{proj}(y_{\vsbf}|\spn(\boldsymbol{z})) + r_{\vsbf}.$$
It suffices to prove that $r_{\vsbf}=0$. Let $\boldsymbol{W}(\thbf)$ denote the spectral density matrix of the $(q+1)$-dimensional rf $\{(\boldsymbol{z}_{\vsbf} \ r_{\vsbf})^\top,\, \vsbf \in\mathbb Z^3\}$. According to the proof of Proposition \ref{MyLemma11}, $\boldsymbol{W}(\thbf)$ is diagonal with $\mbf {I}_q$ in the $q\times q$ upper-left submatrix and $\text{det}(\boldsymbol{W}(\thbf)) = \mathcal{S}(r_{\vsbf}, r_{\vsbf}; \thbf).$ Since both $\boldsymbol{z}$ and $r$ belong to $\mathcal{G}(\boldsymbol{x})$, there exist STDASs $\{\boldsymbol{a}_{nj}, n \in \mathbb{N} \}$, for $j = 1, \ldots, q+1$ such that, 
\begin{align}
&\underset{n\to\infty}{\lim} \ \boldsymbol{a}_{nj}(L) \boldsymbol{x}_{n \vsbf} = z_{j\vsbf}, \quad j = 1,\ldots, q,\nn\\
&\underset{n\to\infty}{\lim} \ \boldsymbol{a}_{n,q+1}(L) \boldsymbol{x}_{n \vsbf} = r_{\vsbf}.\nn
\end{align} 
Now, for all $j = 1,\ldots, q+1$, by Definition \ref{def:STDAS} of STDAS, we must have $\lim_{n\to\infty}\int_{\Thbf} \boldsymbol{a}_{nj}(\thbf)\boldsymbol{a}_{nj}^\dag(\thbf) {\mathrm d}\thbf=0$, it follows that 
\beq\label{sciamenna}
\lim_{n\to\infty }\boldsymbol{a}_{nj}(\thbf)\boldsymbol{a}_{nj}^\dag(\thbf)=\lim_{n\to\infty} |\boldsymbol{a}_{nj}(\thbf)|^2= 0,\quad \mathcal L \; \text {a.e. in} \; \Thbf.
\eeq
Therefore (see  \citet[p.145]{royden1988real})
there exists a sub-sequence $\{s_k,\, k\in\mathbb N,s_k<s_{k+1}\}$, defining a corresponding sub-set of $s_k$ elements of $\boldsymbol{a}_{nj}(\thbf)$ collected into the $s_k$-dimensional row vector $\boldsymbol{a}_{s_kj} (\thbf)$, which is such that
\beq\label{sciamenna3}
\lim_{k \to\infty} |\boldsymbol{a}_{s_kj}(\thbf)|^2=0,\quad \mathcal L \; \text {a.e. in} \; \Thbf.
\eeq
Define also the rf $\boldsymbol{x}_{s_k\vsbf }$ obtained from $\boldsymbol{x}_{n\vsbf }$ by setting to zero all entries with the exception of the $s_k$ elements corresponding to the sub-sequence $\{s_k\}$, and 
let $\boldsymbol{Z}_n(\thbf)$ denote the spectral density matrix of $\{(\boldsymbol{a}_{n1}(L) \boldsymbol{x}_{s_k \vsbf}  \cdots  \boldsymbol{a}_{n,q+1}(L) \boldsymbol{x}_{s_k \vsbf})^\top,\,\vsbf \in \mathbb Z^3\}$.
 In view of Lemma \ref{Mylemma9}, there exists a sub-sequence of $\boldsymbol{Z}_n(\bm\theta)$ converging to $\boldsymbol{W}(\bm\theta)$, $\mathcal L$  a.e. in $\Thbf$. Therefore, without loss of generality, we can assume that  $\boldsymbol{Z}_n(\thbf)$ converges to $\boldsymbol{W}(\thbf)$, $\mathcal L$  a.e. in $\Thbf$.

Let $\boldsymbol{f}_{nj}(\thbf) = \boldsymbol{a}_{nj}(\thbf) \boldsymbol{P}_{s_k}^\dag(\thbf)$ and  $\boldsymbol{g}_{nj}(\thbf) = \boldsymbol{a}_{nj}(\thbf) - \boldsymbol{f}_{nj}(\thbf) \boldsymbol{P}_{s_k}(\thbf)$ for $j = 1, \ldots, q+1$ and $\thbf \in\Thbf$. Hence, $\boldsymbol{a}_{nj}(\thbf) = \boldsymbol{f}_{nj}(\thbf) \boldsymbol{P}_{s_k}(\thbf) + \boldsymbol{g}_{nj}(\thbf)$ and, for all $\thbf\in\Thbf$, 
\beq\label{herbert}
|\boldsymbol{a}_{nj}(\thbf)|^2 = |\boldsymbol{f}_{nj}(\thbf)|^2 + |\boldsymbol{g}_{nj}(\thbf)|^2.
\eeq
Indeed, by definition,
$$
\boldsymbol{f}_{nj}(\thbf)\boldsymbol{P}_{s_k}(\thbf)\boldsymbol{g}_{nj}^\dag(\thbf) = \boldsymbol{a}_{nj}(\thbf) \boldsymbol{P}_{s_k}^\dag(\thbf)\boldsymbol{P}_{s_k}(\thbf)\left(\boldsymbol{a}_{nj}(\thbf)-
 \boldsymbol{a}_{nj}(\thbf) \boldsymbol{P}_{s_k}^\dag(\thbf)\boldsymbol{P}_{s_k}(\thbf)
\right)^\dag=0.
$$

Now, \eqref{sciamenna} implies
\beq\label{sciamenna2}
\lim_{n\to\infty} |\boldsymbol{g}_{nj}(\thbf)|^2= 0\quad \mathcal L \; \text {a.e. in} \; \Thbf,
\eeq
and we have also that
\beq\label{sciamenna3}
\lim_{n\to\infty} |\boldsymbol{f}_{nj}(\thbf)|^2= \lim_{n\to\infty} \boldsymbol{a}_{nj}(\thbf) \boldsymbol{P}_{s_k}^\dag(\thbf)\boldsymbol{P}_{s_k}(\thbf)\boldsymbol{a}^\dag_{nj}(\thbf)
=\lim_{k \to\infty} |\boldsymbol{a}_{s_kj}(\thbf)|^2=0,\quad \mathcal L \; \text {a.e. in} \; \Thbf.
\eeq
It follows that the following orthogonal decomposition holds:
\beq\label{bollitomisto}
\underline{\boldsymbol{a}}_{nj}(L) \boldsymbol{x}_{s_k \vsbf} = \underline{\boldsymbol{f}}_{nj}(L)\star \underline{\boldsymbol{P}}_{s_k}(L)  \boldsymbol{x}_{s_k \vsbf} + \underline{\boldsymbol{g}}_{nj}(L) \boldsymbol{x}_{s_k \vsbf}, \quad j=1,\ldots, q+1.
\eeq
Denote by $\boldsymbol{Z}_n^1(\thbf)$ and $\boldsymbol{Z}_n^2(\thbf)$ the spectral density matrices of the rf
$$
\left\{\left(\underline{\boldsymbol{f}}_{n1}(L)\star \underline{\boldsymbol{P}}_{s_k}(L)  \boldsymbol{x}_{s_k \vsbf}  \cdots  \underline{\boldsymbol{f}}_{n,q+1}(L)\star \underline{\boldsymbol{P}}_{s_k}(L)  \boldsymbol{x}_{s_k \vsbf}\right)^\dag,\,\vsbf\in\mathbb Z^3\right\}
$$
and
$$
\left\{\left((\underline{\boldsymbol{g}}_{n1}(L) \boldsymbol{x}_{s_k \vsbf} \cdots  \underline{\boldsymbol{g}}_{n,q+1}(L) \boldsymbol{x}_{s_k \vsbf}\right)^\top,\,\vsbf\in\mathbb Z^3\r\},
$$
respectively. Because of \eqref{herbert} and \eqref{bollitomisto}, we then have 
$$
\boldsymbol{Z}_n(\thbf) =\boldsymbol{Z}_n^1(\thbf) + \boldsymbol{Z}_n^2(\thbf).
$$ 
Notice that $\boldsymbol{Z}_n^1(\thbf)$ is singular for all $\thbf\in\Thbf$, as $k\to\infty$, because  $\boldsymbol{P}_{s_k}(\thbf)$ is $q+1\times s_k$. Hence, 
\beq\label{maccio}
\lim_{n\to\infty} \text{det}(\boldsymbol{Z}_n^1(\thbf))=0, \quad \text{for all }\thbf \in\Thbf.
\eeq
Since $\boldsymbol{g}_{nj}(\thbf)$ is orthogonal to ${\boldsymbol{p}}_{s_k i}^x(\thbf)$ for $i = 1,\ldots, q$, we have
$$
\boldsymbol{Z}_n^2(\thbf))=\boldsymbol{g}_{nj}(\thbf) \boldsymbol{\Sigma}_{s_k}^x(\thbf)  \boldsymbol{g}_{nj}^\dag (\thbf) \leq \lambda_{s_k q+1}^x(\thbf) |\boldsymbol{g}_{nj}(\thbf)|^2
$$
\citep[Exercise 1, p.~287]{LT85}. Now, because of (i) in Theorem \ref{Th. q-DFS} and by \eqref{sciamenna2} we have that $\boldsymbol{Z}_n^2(\thbf)$ converges to zero $\mathcal L\text{-a.e.}$ in $\Thbf$ as $n\to\infty$. 
Therefore, by \eqref{maccio},
\beq
\lim_{n\to\infty}\text{det}(\boldsymbol{Z}_n(\thbf))=0,\quad \mathcal L \text{ a.e. in } \Thbf,
\eeq
which entails that $\text{det}(\boldsymbol{W}(\thbf))= \mathcal{S}(r_{\vsbf}, r_{\vsbf}; \thbf)=0$, $\mathcal L\text{-a.e.}$ in $\Thbf$ and, thus, $r_{\vsbf} = 0$.
\end{proof}

\subsubsection{Canonical decomposition into common and idiosyncratic component}

Consider the canonical decomposition 
\begin{align}\label{eq:app:CD}
x_{\ell \vsbf} &= \text{proj}(x_{\ell \vsbf}|\mathcal{G}(\boldsymbol{x})) + \delta_{\ell \vsbf}= \gamma_{\ell \vsbf}+ \delta_{\ell \vsbf}, \; \text { say.}
\end{align}
So far by means of Propositions \ref{MyLemma11} and \ref{MySection4.5}, we have shown that if (i) and (ii) of Theorem \ref{Th. q-DFS} hold, then, there exists a $q$-dimensional orthonormal white noise rf $\bm z$ such that 
\begin{align}\label{eq:app:CD2}
\gamma_{\ell \vsbf} &=  \underline{\boldsymbol{c}}_\ell(L)\boldsymbol{z}_{\vsbf},
\end{align}
with ${\boldsymbol{c}}_\ell \in L_2^q(\Thbf, \mathbb{C}).$

Now, consider a generic $n$-dimensional rf $\bm w_n$ satisfying Assumptions \ref{Ass.A}-\ref{Ass.B} with dynamic spatio-temporal eigenvalues $\lambda_{nj}^w(\thbf)$, $j=1,\ldots,n$, $\thbf\in\Thbf$.
Then, for any $\thbf\in\Thbf$ let $\lambda_{j}^w(\thbf)=\sup_{n\in\mathbb N} \lambda_{nj}^w(\thbf)$ and recall that, since $\lambda_{nj}^w(\thbf)$ is an increasing sequence in $n$ then $\sup_{n\in\mathbb N} \lambda_{nj}^w(\thbf)=\lim_{n\to\infty} \lambda_{nj}^w(\thbf)$. So $\lambda_{j}^w(\thbf)$ is the $j$-th largest dynamic spatio-temporal eigenvalue of the infinite dimensional spectral density matrix $\bm\Sigma^w(\thbf)$ of the infinite dimensional rf $\bm w$.
The proof of the sufficient condition is concluded by means of the next two results.

\begin{proposition} \label{Theorem 1}
Under Assumptions \ref{Ass.A}-\ref{Ass.B}, the following statements are equivalent:
\begin{compactenum}
\item[(a)] $ \boldsymbol{w}=\{(w_{1\vsbf}\ w_{2\vsbf}\cdots w_{\ell\vsbf}\cdots )^\top, \vsbf\in \mathbb Z^3\}$ is idiosyncratic;
\item[(b)] the function $\lambda^w_1:\Thbf\to \mathbb R^+$ is essentially bounded, i.e., $\text{\upshape ess} \sup(\lambda_{1}^w)<\infty$ where $\text{\upshape ess} \sup(\lambda_{1}^w)=\inf\{M:\mathcal L[\thbf:\lambda_{1}^w(\thbf)>M]=0\}$;
\item[(c)] Define $\boldsymbol\Upsilon: \boldsymbol{\Psi} \to L^{\infty}_2 (\Thbf,\mathbb{C},\bm\Sigma^w_n)$ as $\boldsymbol\Upsilon(\boldsymbol f) = \boldsymbol f$, the mapping $\boldsymbol\Upsilon$ is continuous.
\end{compactenum}
\end{proposition}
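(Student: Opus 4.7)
\medskip

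\noindent\textbf{Proof plan for Proposition \ref{Theorem 1}.} The strategy is to prove a three-way cycle: (a) $\Rightarrow$ (b) $\Rightarrow$ (c) $\Rightarrow$ (a). The key analytical tool throughout is the observation that, for any $\bm f_n \in L_2^n(\bm\Theta,\mathbb C)$, the filtered rf $\underline{\bm f}_n(L)\bm w_{n\vsbf}$ has variance $\frac{1}{8\pi^3}\int_{\bm\Theta} \bm f_n(\bm\theta) \bm\Sigma_n^w(\bm\theta)\bm f_n^\dag(\bm\theta)\,\mathrm d\bm\theta$, which can be two-sidedly compared with $\|\bm f_n\|^2$ through the largest eigenvalue $\lambda_{n1}^w(\bm\theta)$. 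I would also rely on the monotone convergence property of the sequence $\lambda_{n1}^w$ established in Remark \ref{rem:dyneval}.

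\medskip

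\noindent\textbf{(b) $\Rightarrow$ (a).} This is the easy direction. Suppose $\mathrm{ess\,sup}(\lambda_1^w) \le M < \infty$, and let $\{\bm a_n, n\in\mathbb N\}$ be any STDAS. Then for every $n$ and $\bm\theta\in\bm\Theta$ we have $\bm a_n(\bm\theta)\bm\Sigma_n^w(\bm\theta)\bm a_n^\dag(\bm\theta) \le \lambda_{n1}^w(\bm\theta)\,|\bm a_n(\bm\theta)|^2 \le M\,|\bm a_n(\bm\theta)|^2$ $\mathcal L$-a.e.\ by the variational characterization of eigenvalues. Integrating over $\bm\Theta$ and invoking Definition \ref{def:STDAS} yields $\mathrm{Var}(\underline{\bm a}_n(L)\bm w_{n\vsbf}) \le M\|\bm a_n\|^2 \to 0$, whence $\underline{\bm a}_n(L)\bm w_{n\vsbf}\to 0$ in mean-square, so $\bm w$ is idiosyncratic.

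\medskip

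\noindent\textbf{(a) $\Rightarrow$ (b).} I prove the contrapositive. Assume $\mathrm{ess\,sup}(\lambda_1^w) = \infty$, so that for every $M>0$ the set $E_M = \{\bm\theta:\lambda_1^w(\bm\theta)>M\}$ has strictly positive Lebesgue measure. Using monotone convergence of $\lambda_{n1}^w\uparrow\lambda_1^w$, for each $M$ I can find $n_M$ large and a measurable subset $E_M^\star\subseteq E_M$ of positive measure on which $\lambda_{n_M,1}^w(\bm\theta)>M$. The candidate STDAS is built from the first dynamic eigenvector: set $\bm a_n(\bm\theta) = c_n\,[\lambda_{n1}^w(\bm\theta)]^{-1/2}\bm p_{n1}^w(\bm\theta)\,\mathbf 1_{E_M^\star}(\bm\theta)$, with $c_n$ chosen so that $\|\bm a_n\|\to 0$ while the variance of $\underline{\bm a}_n(L)\bm w_{n\vsbf}$, which equals $c_n^2 \mathcal L(E_M^\star)/(8\pi^3)$, stays bounded away from zero. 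By diagonalizing over an increasing sequence $M_k\to\infty$ one obtains a STDAS whose aggregate does \emph{not} vanish in mean-square, contradicting (a). The main obstacle here is the careful choice of the scaling $c_n$ so that the vanishing of $\|\bm a_n\|$ does not drag the variance to zero; this is where the unboundedness of $\lambda_1^w$ is exploited, since $[\lambda_{n1}^w]^{-1/2}$ inside $\bm a_n$ becomes arbitrarily small on $E_M^\star$ and thus permits a $c_n$ growing slowly enough.

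\medskip

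\noindent\textbf{(b) $\Leftrightarrow$ (c).} This is an operator-theoretic reformulation. The mapping $\bm\Upsilon$ is the identity between the space $\bm\Psi$ (endowed with the $L_2$ norm with respect to the identity weight) and the space endowed with the $\bm\Sigma_n^w$-weighted norm. For any $\bm f_n$ we have the sandwich inequality
\[
\lambda_{nn}^w(\bm\theta)\,|\bm f_n(\bm\theta)|^2 \;\le\; \bm f_n(\bm\theta)\bm\Sigma_n^w(\bm\theta)\bm f_n^\dag(\bm\theta) \;\le\; \lambda_{n1}^w(\bm\theta)\,|\bm f_n(\bm\theta)|^2,
\]
integrating in $\bm\theta$ and passing to $n\to\infty$. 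Continuity of $\bm\Upsilon$ (boundedness of the identity operator) is equivalent to the existence of a finite $M$ such that $\|\bm f\|_{\bm\Sigma^w}^2 \le M\|\bm f\|^2$ for all admissible $\bm f$, which by the variational characterization is equivalent to $\mathrm{ess\,sup}(\lambda_1^w)<\infty$. The easy direction (b) $\Rightarrow$ (c) follows from the upper bound; the converse (c) $\Rightarrow$ (b) is proved, once more by contrapositive, using the same indicator-type test functions $\bm f_n = \mathbf 1_{E_M^\star}\bm p_{n1}^w$ that blow up the weighted norm relative to the unweighted one. This closes the cycle.
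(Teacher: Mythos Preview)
Your proof is correct, but the route you take for linking (a) and (b) differs from the paper's in a way worth noting.

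The paper links (a) and (c) \emph{directly}, observing via the canonical isomorphism that $\|\underline{\bm a}_n(L)\bm w_{n\vsbf}\| = \|\bm\Upsilon(\bm a_n)\|_{\bm\Sigma^w}$. Since a STDAS is by definition precisely a sequence with $\|\bm a_n\|\to 0$, idiosyncrasy of $\bm w$ is \emph{literally} the statement that $\bm\Upsilon$ is continuous at zero, which for a linear map between normed spaces is equivalent to continuity everywhere. The paper then computes the operator norm $\|\bm\Upsilon\|^2 = \text{ess\,sup}(\lambda_1^w)$ (via a result in Conway's functional analysis text), giving (b)$\Leftrightarrow$(c) in one stroke. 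In particular, the paper never constructs an explicit STDAS; the contrapositive for (a)$\Rightarrow$(b) is absorbed into the abstract equivalence ``bounded $\Leftrightarrow$ continuous''.

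Your (a)$\Rightarrow$(b) instead manufactures a witnessing STDAS from the normalized first dynamic eigenvector restricted to a set where the eigenvalue is large. This is more hands-on, and the diagonalization over $M_k\to\infty$ together with the normalization of $c_k$ does require care (the choice $c_k^2 = 8\pi^3/\mathcal L(E_{M_k}^\star)$ keeps the variance at $1$ while $\|\bm a_n\|^2 \le 1/M_k\to 0$, and one must pad with zeros between the $n_k$'s). The payoff is a concrete aggregate with non-vanishing variance, which is informative in its own right. The paper's route is shorter and more conceptual; yours is self-contained and constructive. Your (b)$\Leftrightarrow$(c) argument is essentially the same as the paper's. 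One small inconsistency: you announce a cycle (a)$\Rightarrow$(b)$\Rightarrow$(c)$\Rightarrow$(a) but actually prove the two equivalences (a)$\Leftrightarrow$(b) and (b)$\Leftrightarrow$(c); this is logically equivalent but not the cycle you described.
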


\begin{proof}
We first show that (a) and (c) are equivalent, $(a) \Leftrightarrow (c)$. To this end, notice that
\begin{equation}
\Vert \underline{\boldsymbol {a}}_{n}(L)\boldsymbol{w}_{\sbf t} \Vert = \Vert \boldsymbol{a}_{n}  \Vert_{\Sigmabf^w} = \Vert \boldsymbol{\Upsilon}({\boldsymbol a}_n) \Vert_{\Sigmabf^w}, \label{Eq. chain}
\end{equation}
where the first equality follows from the application of the isometric isomorphism $\mathcal{J}^{-1}$ (see Remarks \ref{rem:Jinv} and \ref{rem:spfilt}), while the second from the definition of the mapping $\boldsymbol{\Upsilon}$. Then, by definition of idiosyncratic process $\lim_{n\to\infty} \Vert \underline{\boldsymbol {a}}_{n}(L)\boldsymbol{w}_{\sbf t} \Vert =0$, thus by (\ref{Eq. chain}) it follows that $\lim_{n\to\infty} \Vert \boldsymbol{\Upsilon}({\boldsymbol a}_n) \Vert_{\Sigmabf^w}=0$, which implies that the linear mapping $\boldsymbol{\Upsilon}$ is continuous
at zero. From \citet[Proposition 1.1, p.26]{C85}, it follows that $\boldsymbol{\Upsilon}$ is continuous everywhere. This proves $(a) \Leftrightarrow (c)$.

To prove that (b) and (c) are equivalent, $(b) \Leftrightarrow (c)$, first notice that continuity and boundedness are equivalent for linear maps between normed vector spaces \citep[Theorem 1, p. 257]{royden1988real}. Then, consider the definition of operator norm:
$$
\Vert \boldsymbol{\Upsilon} \Vert = \sup_{\boldsymbol f \in \boldsymbol \Psi, \Vert \boldsymbol f \Vert =1 } \left \Vert \boldsymbol{\Upsilon}(\boldsymbol f) \right \Vert_{\Sigmabf^w}
$$
and notice that boundedness of $\boldsymbol{\Upsilon}$ means that $\Vert \boldsymbol{\Upsilon} \Vert \leq c < \infty$, for $c \in \mathbb{R}^+$. We now show that $\Vert \boldsymbol{\Upsilon} \Vert=\sqrt{\text{ess} \sup (\lambda_{1}^w)}$.
This would imply that $(b) \Leftrightarrow (c)$.

Let us define,
$\boldsymbol f^{[n]}$ as the infinite dimensional vector with $f_j^{[n]}=f_j$ for $j\le n$ and $f_j^{[n]}=0$ for $j>n$, $\boldsymbol f^{\{n\}}$ as the $n$-dimensional sub-vector made of the first $n$ entries of $\boldsymbol f^{[n]}$,
and
$$
\psi_n = \sup_{\boldsymbol f \in \boldsymbol \Psi, \Vert \boldsymbol f \Vert =1 } \left \Vert \boldsymbol{\Upsilon}(\boldsymbol f^{[n]}) 
\right \Vert_{\Sigmabf^w},
$$
so, for $\boldsymbol \Psi_n = L^{\infty}_2(\Thbf,\mathbb{C}) \cap  L^{\infty}_2(\Thbf,\mathbb{C}, \lambda^w_{1n}) $, we have 
\begin{eqnarray*}
\psi^2_n &=& \sup_{\boldsymbol f \in \boldsymbol \Psi, \Vert \boldsymbol f \Vert =1 } \frac{1}{8\pi^3} \int_{\Thbf} \boldsymbol{f}^{\{n\}}(\thbf)  \bm\Sigma^w_n(\thbf) 
{\boldsymbol{f}}^{\{n\}\dag}(\thbf)  \mathrm d\thbf \\
& = & \sup_{h \in \boldsymbol \Psi_n, \Vert \boldsymbol h \Vert =1 } \frac{1}{8\pi^3} \int_{\Thbf} \Vert h(\thbf) \Vert^2 \lambda^w_{1n}(\thbf) \mathrm d\thbf,
\end{eqnarray*}
where the last equality follows form \citet[Theorem 4, p.285]{LT85}. Moreover, \citet[Theorem 1.5, p.28]{C85} implies that $\psi_n^2 = \text{ess} \sup \lambda_{1n}^w(\thbf)$. Finally we notice that $\Vert \boldsymbol\Upsilon \Vert^2 = \lim_{n\to\infty} \psi^2_n = \lim_{n\to\infty}  {\text{ess} \sup \lambda^w_{1n}(\thbf)}$
$=  \text{ess} \sup \lim_{n\to\infty} \lambda^w_{1n}(\thbf) = \text{ess} \sup \lambda^w_{1}(\thbf)$.  
\end{proof}


\begin{proposition}\label{deltaidio}
Under Assumptions \ref{Ass.A}-\ref{Ass.B}, 
$\boldsymbol{\delta}=\{(\delta_{1\vsbf}\ \delta_{2\vsbf}\cdots \delta_{\ell\vsbf}\cdots )^\top, \vsbf\in \mathbb Z^3\}$ is idiosyncratic, where $\delta_{\ell\vsbf}$, $\ell\in\mathbb N$, are defined in \eqref{eq:app:CD}.
\end{proposition}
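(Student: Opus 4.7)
The plan is to use Proposition \ref{Theorem 1} to reduce the problem to showing that $\lim_{n\to\infty}\lambda_{n,1}^\delta$ is essentially bounded on $\bm\Theta$; I would argue this by contradiction. As a preliminary, since $\mathcal G(\bm x)$ is shift-invariant (any STDAS can be re-centred, so every shift of an aggregate is still an aggregate), the projection residual satisfies $\delta_{\ell\vsbf}\perp z_{j\vsbf'}$ for every $\ell,j,\vsbf,\vsbf'$, where $\mathcal G(\bm x)=\spn(\bm z)$ by Proposition \ref{MySection4.5}. This yields $\bm\Sigma_n^x(\thbf)=\bm\Sigma_n^\gamma(\thbf)+\bm\Sigma_n^\delta(\thbf)$ and, using \eqref{eq:app:CD2}, $\bm\Sigma_n^\gamma(\thbf)=\bm B_n(\thbf)\bm B_n^\dag(\thbf)$ has rank at most $q$, where $\bm B_n$ collects the filters $\bm c_\ell$ for $\ell\le n$.

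Suppose for contradiction that $\lim_{n\to\infty}\lambda_{n,1}^\delta$ is not essentially bounded: then there is a positive-measure set $\bm M\subseteq\bm\Theta$ on which $\lim_{n\to\infty}\lambda_{n,1}^\delta(\thbf)=\infty$. I would then repeat verbatim the iterative argument of Proposition \ref{MyLemma11} applied to $\bm\delta$ in place of $\bm x$ and with $1$ in place of $q$ (the construction only needs one divergent top dynamic eigenvalue on a positive-measure set), producing a scalar orthonormal white noise $w$ with $w_\vsbf\in\mathcal G(\bm\delta)$, that is, $w_\vsbf=\lim_{n\to\infty}\underline{\bm a}_n(L)\bm\delta_{n\vsbf}$ in mean square for some STDAS $\{\bm a_n\}$ and $\mathrm{Var}(w_\vsbf)=1$. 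Because each $\underline{\bm a}_n(L)\bm\delta_{n\vsbf}$ is a linear combination of $\delta$'s and is therefore orthogonal to every element of $\mathcal G(\bm x)=\spn(\bm z)$, passing to the mean-square limit gives $w\perp\mathcal G(\bm x)$.

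The main step, and the main obstacle, is to establish the complementary inclusion $w\in\mathcal G(\bm x)$, which then produces the contradiction $w\in\mathcal G(\bm x)\cap\mathcal G(\bm x)^\perp=\{0\}$ against $\mathrm{Var}(w_\vsbf)=1$. The starting point is the decomposition
\[
\underline{\bm a}_n(L)\bm\delta_{n\vsbf}=\underline{\bm a}_n(L)\bm x_{n\vsbf}-\underline{\bm a}_n(L)\bm\gamma_{n\vsbf},
\]
whose first summand is an aggregate of $\bm x$ via the STDAS $\{\bm a_n\}$, so its mean-square limit (when it exists) lies in $\mathcal G(\bm x)$, and whose second summand lies in $\spn(\bm z)=\mathcal G(\bm x)$ since each $\gamma_{\ell,\vsbf'}\in\spn(\bm z)$. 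It therefore suffices to verify that both sequences on the right converge in mean square along a common subsequence. Pythagoras yields
\[
\|\underline{\bm a}_n(L)\bm x_{n\vsbf}\|^2=\|\underline{\bm a}_n(L)\bm\gamma_{n\vsbf}\|^2+\|\underline{\bm a}_n(L)\bm\delta_{n\vsbf}\|^2,
\]
and the last norm converges to $1$. The delicate point is to rule out blow-up of $\|\underline{\bm a}_n(L)\bm\gamma_{n\vsbf}\|$, for which I would exploit the specific form $\bm a_n(\thbf)\propto[\lambda_{n,1}^\delta(\thbf)]^{-1/2}\bm p_{n,1}^\delta(\thbf)$ produced by the construction, together with the orthogonality $\mathrm{Cov}(\bm\delta,\bm z)=\mathbf 0$: the latter forces the leading eigendirection of $\bm\Sigma_n^\delta$ to have projection onto $\mathrm{col}(\bm B_n)$ controlled by $\lambda_{n,1}^\delta$, which keeps $\bm a_n\bm B_n$ bounded in $L_2^q(\bm\Theta,\mathbb C)$ and hence $\|\underline{\bm a}_n(L)\bm\gamma_{n\vsbf}\|$ bounded. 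Combining this with Lemma \ref{Mylemma9} to extract a subsequence along which the relevant cross-spectra converge pointwise $\mathcal L$-a.e.~secures the joint mean-square convergence; then $w$ is a difference of two elements of $\mathcal G(\bm x)$ and therefore lies in $\mathcal G(\bm x)$, closing the contradiction.
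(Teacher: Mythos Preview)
Your contradiction strategy is a natural idea, but the argument for $w\in\mathcal G(\bm x)$ has a genuine gap at two points. First, the boundedness of $\|\underline{\bm a}_n(L)\bm\gamma_{n\vsbf}\|$ is not justified: the orthogonality $\mathrm{Cov}(\bm\delta,\bm z)=\mathbf 0$ is a statement about the random variables, and it places no constraint whatsoever on the angle between the leading eigenvector $\bm p_{n,1}^\delta(\thbf)$ of $\bm\Sigma_n^\delta(\thbf)$ and the column space of $\bm B_n(\thbf)$. The only general bound available is $\bm a_n\bm\Sigma_n^\gamma\bm a_n^\dag\le(\lambda_{n,1}^x-\lambda_{n,1}^\delta)/\lambda_{n,1}^\delta$, which can diverge since both eigenvalues diverge under your hypothesis and condition (ii) of Theorem~\ref{Th. q-DFS}. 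Second, even granting boundedness, bounded sequences in a Hilbert space need not converge in norm, only weakly along subsequences; Lemma~\ref{Mylemma9} \emph{assumes} mean-square convergence rather than supplying it, so you cannot use it to upgrade boundedness to convergence. Without norm convergence of $\underline{\bm a}_n(L)\bm x_{n\vsbf}$ (or of $\underline{\bm a}_n(L)\bm\gamma_{n\vsbf}$) you have no way to place $w$ in $\mathcal G(\bm x)$, and the contradiction collapses.

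The paper avoids this difficulty altogether by working in the opposite direction: rather than aggregating $\bm\delta$, it approximates each $\delta_{\ell\vsbf}$ by the explicit residual $\delta_{\ell\vsbf}^n=x_{\ell\vsbf}-\underline{\bm\pi}_{n\ell}(L)\star\underline{\bm P}_n(L)\bm x_{n\vsbf}$ from the first $q$ dynamic principal components of $\bm x_n$. By construction the spectral density of $\bm\delta_n^n$ equals $\bm Q_n^\dag\bm\Phi_n\bm Q_n$, so that $\lambda_{n,1}^{\delta^n}(\thbf)=\lambda_{n,q+1}^x(\thbf)$ exactly. Theorem~\ref{Mythm5} gives $\delta_{\ell\vsbf}^n\to\delta_{\ell\vsbf}$ in mean square, Lemma~\ref{Mylemma9} then yields $\bm\Sigma_m^{\delta^n}(\thbf)\to\bm\Sigma_m^\delta(\thbf)$ $\mathcal L$-a.e.~along a subsequence, and combining Weyl's inequality $\lambda_{m,1}^{\delta^n}\le\lambda_{n,1}^{\delta^n}=\lambda_{n,q+1}^x$ with these limits delivers $\lambda_1^\delta(\thbf)\le\lambda_{q+1}^x(\thbf)$ directly, after which Proposition~\ref{Theorem 1} and condition (i) of Theorem~\ref{Th. q-DFS} finish the proof. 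The key idea you are missing is this intermediate object $\bm\delta^n$, whose eigenvalues are tied \emph{exactly} to those of $\bm x$; your route tries to compare eigenstructures of $\bm\Sigma_n^\delta$ and $\bm\Sigma_n^\gamma$ at finite $n$, where no such exact relation holds.
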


\begin{proof}
Start by considering again \eqref{Eq.xPQ} and let, $\underline {\bm\pi}_{n\ell}(L)$ and $\underline {\bm q}_{n\ell}(L)$ be the $\ell$-th $q$-dimensional and $(n-q)$-dimensional rows of $ \underline{{\boldsymbol{P}}}^\dag_n (L)$ and $ \underline{{\boldsymbol{Q}}}^\dag_n (L)$, respectively, for any given $\ell\le n$. Then,
\beq\label{eq:app:gd}
x_{\ell\vsbf}= \underline {\bm\pi}_{n\ell}(L) \star\underline{{\boldsymbol{P}}} (L)\bm x_{n\vsbf}+ \underline {\bm q}_{n\ell}(L)\star\underline{{\boldsymbol{Q}}} (L)\bm x_{n\vsbf}=
\underline {\bm\pi}_{n\ell}(L)\star\underline{\bm\Lambda}_n^{1/2}(L) \bm \psi_{\vsbf}^n + \underline {\bm q}_{n\ell}(L)\star\underline{{\boldsymbol{Q}}} (L)\bm x_{n\vsbf}= \gamma_{\ell\vsbf}^n+\delta_{\ell\vsbf}^n.
\eeq
For any given $m\in\mathbb N$, let $\bm\Sigma_m^\delta(\bm\theta)$ be the spectral density matrix of $\bm\delta_m=\{(\delta_{1\vsbf}\cdots \delta_{m\vsbf} )^\top, \vsbf\in \mathbb Z^3\}$ and for $n>m$ 
let $\bm\Sigma_m^{\delta^n}(\bm\theta)$ be the spectral density matrix of $\bm\delta_m^n=\{(\delta_{1\vsbf}^n\cdots \delta_{m\vsbf}^n )^\top, \vsbf\in \mathbb Z^3\}$, where $\delta_{\ell\vsbf}^n = x_{\ell\vsbf}-\gamma_{\ell\vsbf}^n$ (see \eqref{eq:app:gd}). Then, from Theorem \ref{Mythm5}, we have that 
$$
\lim_{n\to\infty}\gamma_{\ell\vsbf}^n=  \gamma_{\ell\vsbf}, 
$$
in mean-square, with $\gamma_{\ell\vsbf}$, $\ell\in\mathbb N$, defined in \eqref{eq:app:CD}. Thus,
$$
\lim_{n\to\infty}\delta_{\ell\vsbf}^n=\delta_{\ell\vsbf}
$$
in mean-square for any $\ell\le m$. Notice that, although Theorem \ref{Mythm5} is proved in the next section, its proof only requires (i) and (ii) in Theorem \ref{Th. q-DFS} to hold as in this proof, so there is no feedback loop between the two theorems.

By Lemma \ref{Mylemma9}, a sub-sequence of $\bm\Sigma_m^{\delta^n}(\bm\theta)$ converges to $\bm\Sigma_m^\delta(\bm\theta)$, $\mathcal L\text{-a.e. in } \bm\Theta$. 
\[
\lim_{n\to\infty} \left\Vert\bm\Sigma_m^{\delta^n}(\bm\theta)-\bm\Sigma_m^{\delta}(\bm\theta) \right\Vert_F = 0, \quad \mathcal L\text{-a.e. in } \bm\Theta.
\]
where $\Vert\cdot\Vert_F$ denotes the Frobenius norm.
Then, by definition of the spectral norm, denoted as $\Vert\cdot\Vert$,
\beq\label{my27}
\lim_{n\to\infty} \left\vert\lambda_{m1}^{\delta^n}(\bm\theta)- \lambda_{m1}^{\delta}(\bm\theta) \right\vert=\lim_{n\to\infty} \left\Vert\bm\Sigma_m^{\delta^n}(\bm\theta)-\bm\Sigma_m^{\delta}(\bm\theta) \right\Vert\le \lim_{n\to\infty} \left\Vert\bm\Sigma_m^{\delta^n}(\bm\theta)-\bm\Sigma_m^{\delta}(\bm\theta) \right\Vert_F=0.
\eeq

Moreover, since $\bm\Sigma_m^{\delta^n}(\bm\theta)$ is the upper-left $m\times m$ submatrix of $\bm\Sigma_n^{\delta^n}(\bm\theta)$, we have by Weyl's inequality and definition of $\delta_n^n$ in \eqref{eq:app:gd},
\[
\lambda_{m1}^{\delta^n}(\bm\theta)\le \lambda_{n1}^{\delta^n}(\bm\theta)= \lambda_{n,q+1}^{x}(\bm\theta),
\]
for any $m\le n$, any $n\in\mathbb N$, and any $\bm\theta\in\bm\Theta$. Therefore, letting $\lambda_{q+1}^{x}(\bm\theta)= \lim_{n\to\infty} \lambda_{n,q+1}^{x}(\bm\theta)$,
from \eqref{my27}
\[
\lambda_{m1}^{\delta}(\bm\theta) \le \lambda_{q+1}^{x}(\bm\theta), \quad \mathcal L\text{-a.e. in } \bm\Theta,
\]
and since this is true for any $m\in\mathbb N$, then, letting $\lambda_{1}^{\delta}(\bm\theta) =\lim_{m\to\infty} \lambda_{m1}^{\delta}(\bm\theta)$,
\beq\label{DUE}
\lambda_{1}^{\delta}(\bm\theta) \le \lambda_{q+1}^{x}(\bm\theta), \quad \mathcal L\text{-a.e. in } \bm\Theta. 
\eeq
So by (i) in Theorem \ref{Th. q-DFS}, $\lambda_{1}^{\delta}$ is essentially bounded, and by Proposition \ref{Theorem 1}, $\bm\delta$ is idiosyncratic.
\end{proof}

To conclude, by Weyl's inequality and Propositions \ref{Theorem 1} and \ref{deltaidio}, we have 
$$
\lambda_{nq}^{\gamma}(\thbf) \geq \lambda_{nq}^{x}(\thbf) - \lambda_{n1}^{\delta}(\thbf)\ge  \lambda_{nq}^{x}(\thbf),
$$ 
for any given $n\in\mathbb N$ and $\bm\theta\in\bm\Theta$. Therefore, given (ii) of Theorem \ref{Th. q-DFS}, 
\beq\label{UNO}
 \lambda_{q}^{\gamma}(\thbf) =\lim_{n\to\infty} \lambda_{nq}^{\gamma}(\thbf) =\infty, \quad \mathcal L\text{-a.e. in } \bm\Theta.
\eeq

By \eqref{DUE} and \eqref{UNO}, we showed that if (i) and (ii) in Theorem \ref{Th. q-DFS} hold then (iv) and (v) in Definition \ref{Def. q_DFS} hold and we can write decomposition \eqref{Eq. x-decomp} with 
idiosyncratic component $\xi_{\ell\vsbf} = \delta_{\ell\vsbf}$ and common component $\chi_{\ell\vsbf}= \gamma_{\ell\vsbf}=\underline{\bm c}_\ell(L) \bm z_{\vsbf}$ with 
$\bm c_\ell\in L_2^q(\bm\Theta,\mathbb C)$. Because of Remark \ref{Rem_Indet} we can always find a transformation such that we can also write $\chi_{\ell\vsbf}=\underline{\bm b}_\ell(L) \bm u_{\vsbf}$ as in \eqref{Eq. x-decomp2} this proves part (i) and (ii) in Definition \ref{Def. q_DFS}. Finally, part (iii) in Definition \ref{Def. q_DFS} follows from orthogonality of $\gamma_{\ell\vsbf}$ and $\delta_{\ell\vsbf}$ in the canonical decomposition \eqref{eq:app:CD}. This completes the proof of the sufficient condition.


\subsection{Proof of Theorem \ref{Th. q-DFS} - necessary condition}\label{app:necFL}

From (iii) in Definition \ref{Def. q_DFS}
\[
\Sgmbf(\bm\theta)=\bm\Sigma_n^\chi(\bm\theta)+\bm\Sigma_n^\xi(\bm\theta).
\]
Then, for all $\bm\theta\in\bm\Theta$ and any $n\in\mathbb N$, by Weyl's inequality and since, by definition, $\bm\Sigma_n^\xi(\bm\theta)$ is positive semi-definite for all $n\in\mathbb N$,
\[
\lambda_{nq}^x(\bm\theta) \ge \lambda_{nq}^\chi(\bm\theta)+\lambda_{nn}^\xi(\bm\theta)\ge \lambda_{nq}^\chi(\bm\theta).
\]
By taking the limit for $n\to\infty$ and because of (v) in Definition \ref{Def. q_DFS}, we prove (ii) in Theorem \ref{Th. q-DFS}. 
Again by Weyl's inequality, for all $\bm\theta\in\bm\Theta$ and any $n\in\mathbb N$, we also have
\[
\lambda_{n,q+1}^x(\bm\theta)\le \lambda_{n,q+1}^\chi(\bm\theta)+\lambda_{n1}^\xi(\bm\theta)=\lambda_{n1}^\xi(\bm\theta)
\]
By taking the limit for $n\to\infty$ and because of (iv) in Definition \ref{Def. q_DFS}, we prove (i) in Theorem \ref{Th. q-DFS}. This completes the proof of the necessary condition.

\subsection{Proof of Corollary \ref{Mythm3}}\label{app:spanu}

Supposing that $\boldsymbol{x}$ is a $q$-GSTFM with representation (\ref{Eq. x-decomp})-(\ref{Eq. x-decomp2}), as we have shown,  $\boldsymbol{x}$ also has the canonical representation
$$x_{\ell \vsbf} = \gamma_{\ell \vsbf} + \delta_{\ell \vsbf},$$ 
where $\gamma_{\ell \vsbf} = \text{proj}(x_{\ell \vsbf}|\mathcal{G}(\boldsymbol{x})) = \underline{\boldsymbol{c}}_\ell(L)\boldsymbol{z}_{\vsbf}$ and $\boldsymbol{z}_{\vsbf}$ is a $q$-dimensional orthonormal white noise rf and $\spn(\boldsymbol{z}) = \mathcal{G}(\boldsymbol{x})$. Since $\boldsymbol{\xi}$ is idiosyncratic, we have $\mathcal{G}(\boldsymbol{x}) \subseteq \spn(\boldsymbol{\chi})$, which, by noting that $\spn(\boldsymbol{\chi}) \subseteq \spn(\boldsymbol{u})$, entails $\spn(\boldsymbol{z}) \subseteq \spn(\boldsymbol{u})$. On the other hand, since both $\boldsymbol{z}$ and $\boldsymbol{u}$ are $q$-dimensional white noise rf, we have $\spn(\boldsymbol{z}) = \spn(\boldsymbol{u})$. Therefore,
$\mathcal{G}(\boldsymbol{x}) = \spn(\boldsymbol{\chi}) = \spn(\boldsymbol{u}).$
This implies that $\chi_{\ell \vsbf} \in \mathcal{G}(\boldsymbol{x})$ and $\xi_{\ell \vsbf} \perp \mathcal{G}(\boldsymbol{x})$, so that $\chi_{\ell \vsbf}=\gamma_{\ell\vsbf} = \text{\rm proj}(x_{\ell \vsbf}|\mathcal{G}(\boldsymbol{x}))$ and $\xi_{\ell \vsbf} = \delta_{\ell \vsbf}$. Uniqueness follows from uniqueness of the canonical representation.
This completes the proof.


\section{Proof of results of Section 5}\label{appC}
\subsection{Proof of Theorem \ref{Mythm5}}\label{app:DPCA}

 The proof requires the following definition and preliminary lemmas.

\begin{definition} [Cauchy sequence of spaces]\label{defCauchy}
For any $ n\in \mathbb{N}$, let $\boldsymbol{v}_n = \{\boldsymbol{v}_{n\vsbf}, \vsbf \in \mathbb{Z}^3\}$ be a $q$-dimensional orthonormal white noise rf such that $\bm v_{n\vsbf}\in\bm{\mathcal X}$ and is co-homostationary with $x_\ell$, $\ell\in\mathbb N$, so that $\boldsymbol{v}_n$ and $\boldsymbol{v}_m$ are co-homostationary for any $n$ and $m$. 
Denote by $\boldsymbol{A}^{mn}(\thbf)$ the $q\times q$ matrix whose $(h, k)$ entry is the cross spectrum $\mathcal{S}(v_{mh\vsbf}, v_{nk\vsbf}; \thbf)$, for $\thbf \in\Thbf$. The orthogonal projection, element by element, of $\boldsymbol{v}_{m\vsbf}$ on the process $\boldsymbol{v}_n$ is $\underline{\boldsymbol{A}}^{mn}(L) \boldsymbol{v}_{n\vsbf}$.
Consider the orthogonal decomposition
\begin{equation}\label{EqCauchy}
\boldsymbol{v}_{m\vsbf} = \underline{\boldsymbol{A}}^{mn}(L) \boldsymbol{v}_{n\vsbf} + \boldsymbol{\rho}_{\vsbf}^{mn},
\end{equation}
and let $\boldsymbol{\varrho}^{mn}(\thbf)$ denote the spectral density matrix of $\boldsymbol{\rho}_{\vsbf}^{mn}$. The sequence $\{\boldsymbol{v}_n, n \in \mathbb{N}\}$ generates a Cauchy sequence of spaces if, for a given $\epsilon >0$ and $\mathcal L$-a.e.~in $\Thbf$, there exists an integer $m_\epsilon (\thbf)$ such that for $n, m > m_\epsilon (\thbf)$, $\text{\rm trace}(\boldsymbol{\varrho}^{mn}(\thbf)) < \epsilon$.

\end{definition}

\begin{lemma}\label{Mylemma12}
Assume that $\{\boldsymbol{v}_{n}, n \in \mathbb{N}\}$ fulfills Definition \ref{defCauchy} and $y = \{y_{\vsbf }, \vsbf \in \mathbb{Z}^3\}$ is such that $y_{\vsbf}\in \bm{\mathcal X}$ and is co-homostationary with  $x_\ell$, $\ell\in\mathbb N$. Let $Y_{n\vsbf }$ be the orthogonal projection of $y_{\vsbf }$ on the process $\boldsymbol{v}_{n}$, i.e., $Y_{n\vsbf } = \text{\rm proj}(y_{\vsbf}|\text{\rm $\spn$}(\boldsymbol{v}_{n}))$. Then,  $Y_{n\vsbf }$ converges in $\bm{\mathcal X}$ in mean-square, as $n\to\infty$. 
\end{lemma}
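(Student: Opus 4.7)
The plan is to show $\{Y_{n\vsbf}\}_{n\in\mathbb{N}}$ is a Cauchy sequence in the Hilbert space $\bm{\mathcal X}$, which by completeness yields convergence in mean-square. Since $\boldsymbol v_n$ is a $q$-dimensional orthonormal white noise with spectral density $\mbf I_q$ and $y_\vsbf\in\bm{\mathcal X}$ is co-homostationary with $\boldsymbol v_n$, the projection admits the representation $Y_{n\vsbf}=\underline{\boldsymbol B}^{yn}(L)\boldsymbol v_{n\vsbf}$, where $\boldsymbol B^{yn}(\thbf)$ is the $1\times q$ row whose $k$-th entry is the cross-spectrum $\mathcal S(y_\vsbf,v_{nk\vsbf};\thbf)$; in particular $\|Y_{n\vsbf}\|^2=(8\pi^3)^{-1}\int_{\Thbf}\|\boldsymbol B^{yn}(\thbf)\|^2\mathrm d\thbf\le\|y_\vsbf\|^2<\infty$.

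The key step I would establish is that $e_{nm}:=\|Y_{n\vsbf}-\text{\rm proj}(Y_{n\vsbf}\vert\spn(\boldsymbol v_m))\|\to 0$ as $n,m\to\infty$. Substituting \eqref{EqCauchy}, i.e.\ $\boldsymbol v_{n\vsbf}=\underline{\boldsymbol A}^{nm}(L)\boldsymbol v_{m\vsbf}+\boldsymbol\rho^{nm}_\vsbf$, into $Y_{n\vsbf}$ and noting that $\underline{\boldsymbol A}^{nm}(L)\boldsymbol v_{m\vsbf}\in\spn(\boldsymbol v_m)$ while $\boldsymbol\rho^{nm}_\vsbf\perp\spn(\boldsymbol v_m)$, the residual equals $\underline{\boldsymbol B}^{yn}(L)\boldsymbol\rho^{nm}_\vsbf$, whence
\[
e_{nm}^2=\frac{1}{8\pi^3}\int_{\Thbf}\boldsymbol B^{yn}(\thbf)\boldsymbol\varrho^{nm}(\thbf)\boldsymbol B^{yn\dag}(\thbf)\mathrm d\thbf\le\frac{1}{8\pi^3}\int_{\Thbf}\|\boldsymbol B^{yn}(\thbf)\|^2\,\nu_1(\boldsymbol\varrho^{nm}(\thbf))\mathrm d\thbf.
\]
Positive semi-definiteness of the joint spectral density of $(y_\vsbf,\boldsymbol v_{n\vsbf})$ supplies the Cauchy--Schwarz bound $\|\boldsymbol B^{yn}(\thbf)\|^2\le q\,\mathcal S(y_\vsbf,y_\vsbf;\thbf)$, while $\nu_1(\boldsymbol\varrho^{nm}(\thbf))\le\text{\rm trace}(\boldsymbol\varrho^{nm}(\thbf))\le q$ since $\boldsymbol\varrho^{nm}(\thbf)\le\mbf I_q$. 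The integrand is therefore dominated by the integrable function $q^2\,\mathcal S(y_\vsbf,y_\vsbf;\thbf)$, and Definition~\ref{defCauchy} provides $\text{\rm trace}(\boldsymbol\varrho^{nm}(\thbf))\to 0$ $\mathcal L$-a.e.\ in $\Thbf$, so dominated convergence delivers $e_{nm}\to 0$.

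To close the argument, since $y_\vsbf-Y_{m\vsbf}\perp\spn(\boldsymbol v_m)$, splitting $Y_{n\vsbf}$ as its projection on $\spn(\boldsymbol v_m)$ plus the orthogonal residual and applying Cauchy--Schwarz yields $|\langle Y_{n\vsbf},y_\vsbf-Y_{m\vsbf}\rangle|\le e_{nm}\|y_\vsbf\|\to 0$. Combining the identity $\langle Y_{n\vsbf},Y_{m\vsbf}\rangle=\|Y_{n\vsbf}\|^2-\langle Y_{n\vsbf},y_\vsbf-Y_{m\vsbf}\rangle$ with its analogue obtained by swapping $n$ and $m$ then gives
\[
\|Y_{n\vsbf}-Y_{m\vsbf}\|^2=\text{\rm Re}\langle Y_{n\vsbf},y_\vsbf-Y_{m\vsbf}\rangle+\text{\rm Re}\langle Y_{m\vsbf},y_\vsbf-Y_{n\vsbf}\rangle\to 0,
\]
so $\{Y_{n\vsbf}\}$ is Cauchy in $\bm{\mathcal X}$ and hence converges in mean-square.

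The main obstacle is that Definition~\ref{defCauchy} only guarantees pointwise a.e.\ (not uniform) decay of $\text{\rm trace}(\boldsymbol\varrho^{nm}(\thbf))$, so moving from this pointwise control to $L^2$-convergence of $Y_{n\vsbf}$ hinges on exhibiting an integrable $\thbf$-dependent dominating function for $\|\boldsymbol B^{yn}(\thbf)\|^2\,\nu_1(\boldsymbol\varrho^{nm}(\thbf))$ that is uniform in $n,m$; the Cauchy--Schwarz bound on cross-spectra, combined with $\|y_\vsbf\|<\infty$, is precisely what furnishes the domination needed for the dominated-convergence step.
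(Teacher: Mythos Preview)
Your proposal is correct and follows essentially the same strategy as the paper: both arguments use the decomposition \eqref{EqCauchy} to pass from $\boldsymbol v_n$ to $\boldsymbol v_m$ plus the residual $\boldsymbol\rho^{nm}$, bound the relevant cross-spectra by the spectral density of $y$ via Cauchy--Schwarz, and invoke dominated convergence to turn the a.e.\ vanishing of $\text{trace}(\boldsymbol\varrho^{nm})$ into $L^2$-convergence. The only organizational difference is in how the argument is closed: the paper computes the spectral density of $Y_{n\vsbf}-Y_{m\vsbf}$ directly by writing $Y_{n\vsbf}-Y_{m\vsbf}=r_{m\vsbf}-r_{n\vsbf}$ and expressing this spectral density as a sum of two cross-spectra, each shown to vanish a.e., whereas you first control the auxiliary quantity $e_{nm}=\|Y_{n\vsbf}-\text{proj}(Y_{n\vsbf}\vert\spn(\boldsymbol v_m))\|$ and then use the Hilbert-space identity $\|Y_{n\vsbf}-Y_{m\vsbf}\|^2=\text{Re}\langle Y_{n\vsbf},y_\vsbf-Y_{m\vsbf}\rangle+\text{Re}\langle Y_{m\vsbf},y_\vsbf-Y_{n\vsbf}\rangle$ (valid because $\langle Y_{n\vsbf},y_\vsbf\rangle=\|Y_{n\vsbf}\|^2$). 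Both routes rely on exactly the same domination $\|\boldsymbol B^{yn}(\thbf)\|^2\le q\,\mathcal S(y_\vsbf,y_\vsbf;\thbf)$, so there is no substantive difference.
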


\begin{proof}
Considering projections
$$y_{\vsbf} = Y_{n\vsbf} + r_{n\vsbf} = \underline{\boldsymbol{b}}_n(L) \boldsymbol{v}_{n\vsbf} + r_{n\vsbf},$$
$$y_{\vsbf} = Y_{m\vsbf} + r_{m\vsbf} = \underline{\boldsymbol{b}}_m(L) \boldsymbol{v}_{m\vsbf} + r_{m\vsbf},$$
where  ${\boldsymbol{b}}_n, {\boldsymbol{b}}_m \in L_2^q(\Thbf, \mathbb{C})$, this yields
$$\underline{\boldsymbol{b}}_n(L) \boldsymbol{v}_{n\vsbf}  - \underline{\boldsymbol{b}}_m(L) \boldsymbol{v}_{m\vsbf} = r_{m\vsbf} - r_{n\vsbf}.$$
Now we show the spectral density of the rf on the left-hand side converges to zero $\mathcal L$-a.e. in $\Thbf$. Note that the spectral density of the rf on  left-hand side is the cross spectrum between the left- and right- hand sides, which, due to the definition of  $r_{n\vsbf}$ and $r_{m\vsbf}$, is the sum of two cross spectra: $\mathcal{S}(r_{n\vsbf}, \underline{\boldsymbol{b}}_m(L) \boldsymbol{v}_{m\vsbf}; \thbf) + \mathcal{S}(r_{m\vsbf}, \underline{\boldsymbol{b}}_n(L) \boldsymbol{v}_{n\vsbf}; \thbf).$ In view of (\ref{EqCauchy}), we have
$$\mathcal{S}(r_{n\vsbf}, \underline{\boldsymbol{b}}_m(L) \boldsymbol{v}_{m\vsbf}; \thbf) 
= \mathcal{S}(r_{n\vsbf}, \underline{\boldsymbol{b}}_m(L)\star \underline{\boldsymbol{A}}^{mn}(L) \boldsymbol{v}_{n\vsbf} + \underline{\boldsymbol{b}}_m(L)  \boldsymbol{\rho}_{\vsbf}^{mn}; \thbf) 
= \mathcal{S}(r_{n\vsbf},  \underline{\boldsymbol{b}}_m(L)  \boldsymbol{\rho}_{\vsbf}^{mn}; \thbf).$$
Note that both the spectral density of $r_{n\vsbf}$ and the squared entries of ${\boldsymbol{b}}_m$ are bounded in modulus by the spectral density of $y_{\vsbf}$. Hence,  the fact that $\{\boldsymbol{v}_{n}, n \in \mathbb{N}\}$ generates a Cauchy sequence of spaces implies that $\mathcal{S}(r_{n\vsbf},  \underline{\boldsymbol{b}}_m(L)  \boldsymbol{\rho}_{\vsbf}^{mn}; \thbf)$ converges to zero $\mathcal L$-a.e. in $\Thbf$ as $m, n \rightarrow \infty$. Similar argument holds also for $\mathcal{S}(r_{m\vsbf}, \underline{\boldsymbol{b}}_n(L) \boldsymbol{v}_{n\vsbf}; \thbf)$. Therefore, the spectral density of $Y_{n\vsbf} - Y_{m\vsbf}$ converges to zero $\mathcal L$-a.e. in $\Thbf$ as $m, n \rightarrow \infty$. Since the spectral densities of $Y_{n\vsbf}$ and $Y_{m\vsbf}$ are dominated by that of $y_{\vsbf}$, by Lebesgue's dominated convergence theorem, the integral of 
the spectral density of $Y_{n\vsbf} - Y_{m\vsbf}$ converges to zero as $m, n \rightarrow \infty$, which implies that $Y_{n\vsbf}$ is a Cauchy sequence and thus converges in $\bm{\mathcal X}$, as $n\to\infty$.
\end{proof}

\begin{lemma}\label{Mylemma13}
Suppose that (i) and (ii) of Theorem \ref{Th. q-DFS} and Assumptions \ref{Ass.A}-\ref{Ass.C} hold. Then, $\{{\bfpsi}^n, n\in \mathbb{N}\}$, as defined in \eqref{B10}, generates a Cauchy sequence of spaces.
\end{lemma}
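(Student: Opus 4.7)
The plan, assuming without loss of generality that $m \le n$, is to identify the orthogonal projection of $\bfpsi^m_\vsbf$ onto $\spn(\bfpsi^n)$ explicitly by leveraging the decomposition \eqref{Eq.xPQ}, and then to bound the trace of the residual's spectral density by $q\lambda^x_{n,q+1}(\thbf)/\lambda^x_{mq}(\thbf)$. Conditions (i) and (ii) of Theorem \ref{Th. q-DFS} will then force this quantity to vanish $\mathcal L$-a.e.~in $\Thbf$ as $m,n\to\infty$. The case $m > n$ will be reduced to $m \le n$ by the symmetry $\text{trace}(\bm I_q - \bm A \bm A^\dag) = \text{trace}(\bm I_q - \bm A^\dag \bm A)$: since the $q\times q$ cross-spectrum matrices in the two directions are conjugate transposes of one another, the trace of the projection residual's spectral density is symmetric in $(m,n)$.

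For the case $m \le n$, with $\bm P_m(\thbf)$ augmented to $q\times n$ by zero columns following the paper's standing convention, I would write $\bfpsi^m_\vsbf = \underline{\bm\Lambda}_m^{-1/2}(L)\star\underline{\bm P}_m(L)\bm x_{n\vsbf}$ and apply this filter to both sides of \eqref{Eq.xPQ}, producing $\bfpsi^m_\vsbf = \underline{\bm D}(L)\bfpsi^n_\vsbf + \underline{\bm R}(L)\bm x_{n\vsbf}$, where $\bm D(\thbf)$ and $\bm R(\thbf)$ are those of \eqref{defDR} specialized to $\bm C = \bm I_q$. Using $\bm Q_n(\thbf)\bm\Sigma_n^x(\thbf)\bm P_n^\dag(\thbf) = \bm 0$, the cross-spectrum between $\underline{\bm R}(L)\bm x_{n\vsbf}$ and $\bfpsi^n$ vanishes; hence this is already the orthogonal decomposition and $\bm\rho^{mn}_\vsbf = \underline{\bm R}(L)\bm x_{n\vsbf}$. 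A direct positive semi-definite estimate using $\bm\Phi_n(\thbf) \le \lambda^x_{n,q+1}(\thbf)\bm I_{n-q}$, $\bm Q_n^\dag\bm Q_n \le \bm I_n$, and $\bm P_m\bm P_m^\dag = \bm I_q$ then gives $\bm\varrho^{mn}(\thbf) \le \lambda^x_{n,q+1}(\thbf)\bm\Lambda_m^{-1}(\thbf)$, whence $\text{trace}(\bm\varrho^{mn}(\thbf)) \le q\lambda^x_{n,q+1}(\thbf)/\lambda^x_{mq}(\thbf)$; this coincides with the bound proved in Lemma \ref{MyLemma7} taken with $\bm C = \bm I_q$.

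To finish, by condition (i) of Theorem \ref{Th. q-DFS} the function $\lambda^x_{n,q+1}(\thbf)$ is essentially bounded uniformly in $n$, while by condition (ii) $\lambda^x_{mq}(\thbf)\to\infty$ for $\mathcal L$-a.e.~$\thbf\in\Thbf$, so for any $\epsilon > 0$ and $\mathcal L$-a.e.~$\thbf$ one can choose an integer $m_\epsilon(\thbf)$ such that $\text{trace}(\bm\varrho^{mn}(\thbf)) < \epsilon$ whenever $m,n > m_\epsilon(\thbf)$, which is precisely the Cauchy-sequence-of-spaces property of Definition \ref{defCauchy}. The main obstacle is largely bookkeeping: verifying that $\underline{\bm P}_m(L)$, once zero-padded to a $q\times n$ symbol, correctly returns $\bfpsi^m_\vsbf$ when applied to $\bm x_{n\vsbf}$, and that algebraic identities such as $\bm Q_n\bm P_n^\dag = \bm 0$ are transported faithfully through the canonical isomorphism of Lemma \ref{Lemma2} into the lag-operator formalism. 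Beyond these routine checks, no genuinely new ideas are required, since the substantive content---a residual spectral density controlled by a ratio of dynamic spatio-temporal eigenvalues governed by (i) and (ii)---is already present in the machinery of Appendix \ref{app:suffFL}.
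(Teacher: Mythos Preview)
Your proposal is correct and follows essentially the same route as the paper's proof: both specialize the projection identity \eqref{Eq.Cpsi} to $\bm C=\mbf I_q$, invoke the eigenvalue-ratio bound of Lemma \ref{MyLemma7} (or its direct PSD derivation, as you do) to control $\text{trace}(\bm\varrho^{mn}(\thbf))$ by $q\lambda^x_{n,q+1}(\thbf)/\lambda^x_{mq}(\thbf)$, and then handle the symmetry in $(m,n)$ via the trace identity $\text{trace}(\bm A\bm A^\dag)=\text{trace}(\bm A^\dag\bm A)$---the paper does this through the explicit cross-spectrum $\bm A^{mn}=\bm D$, while you phrase it abstractly, but the content is identical.
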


\begin{proof}
For $n >m$, in (\ref{Eq.Cpsi}), letting $\boldsymbol{C} = \boldsymbol{I}_q$ yields
\begin{equation}\label{Eq.psim}
{\bfpsi}^m_{\vsbf} = \underline{\boldsymbol{D}} (L)  {\bfpsi}^n_{\vsbf} + \boldsymbol{\rho}_{\vsbf}^{mn}.
\end{equation}
where 
$
{\boldsymbol{D}}(\thbf) =  {\boldsymbol{\Lambda}}^{-1/2}_m(\thbf) {\boldsymbol{P}}_m(\thbf) {\boldsymbol{P}}^\dag_n(\thbf) {\boldsymbol{\Lambda}}^{1/2}_n(\thbf). 
$
Let $\boldsymbol{\varrho}^{mn}(\thbf)$ denote the spectral density matrix of $\boldsymbol{\rho}_{\vsbf}^{mn}$. Lemma \ref{MyLemma7} and (ii) of Theorem \ref{Th. q-DFS} imply that $\text{trace}(\boldsymbol{\varrho}^{mn}(\thbf))$ converges to zero $\mathcal L$-a.e. in $\Thbf$. 

On the other hand,
\begin{equation}\label{Eq.psin}
{\bfpsi}^n_{\vsbf} = \underline{\boldsymbol{D}}^\dag (L)  {\bfpsi}^m_{\vsbf} + \boldsymbol{\rho}_{\vsbf}^{nm}.
\end{equation}
Using (\ref{Eq.psim}) and (\ref{Eq.psin}), we have
$$\boldsymbol{I}_q = {\boldsymbol{D}}(\thbf) {\boldsymbol{D}}^\dag(\thbf) + \boldsymbol{\varrho}^{mn}(\thbf)  
= {\boldsymbol{D}}^\dag(\thbf) {\boldsymbol{D}}(\thbf) + \boldsymbol{\varrho}^{nm}(\thbf)$$
$\mathcal L$-a.e. in $\Thbf$. Taking the trace on both sides and noting that the trace of ${\boldsymbol{D}}(\thbf) {\boldsymbol{D}}^\dag(\thbf)$ is equal to that of ${\boldsymbol{D}}^\dag(\thbf) {\boldsymbol{D}}(\thbf)$, we have $\text{trace}(\boldsymbol{\varrho}^{mn}(\thbf)) = \text{trace}(\boldsymbol{\varrho}^{nm}(\thbf))$ $\mathcal L$-a.e. in $\Thbf$. Finally, $\text{trace}(\boldsymbol{\varrho}^{mm}(\thbf)) = 0$. Therefore, $\text{trace}(\boldsymbol{\varrho}^{mn}(\thbf))$ converges to zero $\mathcal L$-a.e. in $\Thbf$ for any $m, n \rightarrow \infty$.
\end{proof}

Finally, let us consider again the orthogonal decomposition in (\ref{eq:app:gd}) for any $\ell\le n$, i.e.,
$$x_{\ell \vsbf} = \underline{{\boldsymbol{\pi}}}_{n\ell} (L) \star\underline{\boldsymbol{\Lambda}}^{1/2}_n (L) {\bfpsi}^n_{\vsbf} +  \underline{{\boldsymbol{q}}}_{n\ell}(L)\star \underline{{\boldsymbol{Q}}}_n(L) \boldsymbol{x}_{n\vsbf}= \gamma_{\ell\vsbf}^n+\delta_{\ell\vsbf}^n.
$$
Due to Lemma \ref{Mylemma12} and Lemma \ref{Mylemma13}, 
$$
\lim_{n\to\infty}\gamma_{\ell\vsbf}^n=  \gamma_{\ell\vsbf}
$$
in mean-square and $\gamma_{\ell\vsbf}\in \bm{\mathcal X}$. This entails that 
$$
\lim_{n\to\infty}\delta_{\ell\vsbf}^n =x_{\ell\vsbf} - \gamma_{\ell\vsbf}=\delta_{\ell\vsbf}
$$
in mean-square, with $\delta_{\ell\vsbf}\in\bm{\mathcal X}$.

Moreover, $\gamma_{\ell\vsbf}$ is an aggregate, i.e., it belongs to $\mathcal G(\bm x)$, since ${\boldsymbol{\pi}}_{n\ell}{\boldsymbol{P}}_{n}$ is a STDAS. To see this, note that the spectral density matrix of $\gamma_{\ell \vsbf}^n$, which is ${\boldsymbol{\pi}}_{n\ell}(\thbf){\boldsymbol{\Lambda}}_{n}(\thbf) {\boldsymbol{\pi}}^\dag_{n\ell}(\thbf)$, is not smaller than $\lambda_{nq}^x(\thbf) {\boldsymbol{\pi}}_{n\ell}(\thbf) {\boldsymbol{\pi}}^\dag_{n\ell}(\thbf)$ and it is bounded above by the spectral density of $x_{\ell\vsbf}$, call it $\sigma^2_{\ell}(\thbf)$. Therefore, we have ${\boldsymbol{\pi}}_{n\ell}(\thbf) {\boldsymbol{\pi}}^\dag_{n\ell}(\thbf) \leq \sigma^2_{\ell}(\thbf)/\lambda_{nq}^x(\thbf)$. 
Assuming again without loss of generality that $\lambda_{nq}^x(\thbf)>1$ for all $\thbf\in\bm\Theta$ (see the arguments in \citealp[Section 4.2]{FL01}), it follows that $\sigma_{\ell}^2(\thbf)/\lambda_{nq}^x(\thbf)$ is bounded above by $\sigma^2_{\ell}(\thbf)$ and it converges to zero $\mathcal L$-a.e. in $\Thbf$, because of (ii) in Theorem \ref{Th. q-DFS}. Hence, by Lebesgue's dominated convergence theorem, $\int_{\Thbf}\sigma_{\ell}(\thbf)/\lambda_{nq}^x(\thbf)\mathrm d\thbf$ converges to zero.

By construction, $\delta_{\ell\vsbf}^n$ is orthogonal to ${\bfpsi}^n_{\vsbf-\hbf}$ for any $\hbf \in \mathbb{Z}^3$. Since $\mathcal{G}(\boldsymbol{x}) = \spn(\boldsymbol{z})$ by Proposition \ref{MySection4.5} and the process $\boldsymbol{z}$ has been obtained by taking limits of linear combinations of the elements of $\bm\psi^n$, continuity of the inner product implies that $\delta_{\ell \vsbf} \perp \mathcal{G}(\boldsymbol{x})$.
 Then, by uniqueness of the canonical decomposition  we have that $\gamma_{\ell\vsbf}=\chi_{\ell\vsbf}$ for all $\vsbf\in\mathbb Z^3$. This completes the proof.

 \section{Some novel results on spectral density matrix estimation for  spatio-temporal random fields}\label{Spec.Y}
In this section we derive some novel results about spectral density estimation of spatio-temporal rf. Our theory builds on and extends the one
already available in \citet{DPW17}, which is available for spatial processes only, and the one of \citet{wu2018asymptotic} and \citet{wu2005nonlinear}, which study the time series case. 

Consider a generic $n$-dimensional rf:
$\Y_{n} =\{\bm Y_{n\vsbf}= (Y_{1\vsbf} \cdots Y_{n\vsbf})^\top,\vsbf\in\mathbb Z^3\}$ where, for any $\ell=1,\ldots, n$
\beq
Y_{\ell\vsbf} = F_\ell(\Epsbf_{\vsbf - \k}; \k \in \mathbb{Z}^2 \times \mathbb N_0), \label{Eq_Fl}
\eeq
for some  measurable function $F_\ell(\cdot)$ a such that $Y_{\ell\vsbf}$ exists and a, possibly infinite dimensional, zero-mean and i.i.d. rf $\{\Epsbf_{\vsbf} = (\epsilon_{1\vsbf} \ \epsilon_{2\vsbf} \cdots)^\top; \vsbf \in \mathbb{Z}^3\}$. 
Let $Y_{\ell\vsbf} \in L_p$, for some $p \geq 1$, and denote as $Y^*_{\ell\vsbf} = F_\ell(\Epsbf^*_{\vsbf - \k}; \k \in \mathbb{Z}^2 \times \mathbb N_0)$, with $
\Epsbf^*_{\vsbf}=  \Epsbf_{\vsbf}$ if $\vsbf \neq \bm 0$ and $\Epsbf^*_{\0}= \widetilde{\Epsbf}_{\0}$, where $\widetilde{\Epsbf}_{\vsbf_1}, \Epsbf_{\vsbf_2}$ are 
i.i.d. for $\vsbf_1, \vsbf_2 \in \mathbb{Z}^3$.  Recall that we can arbitrarily set the location of the origin $\boldsymbol 0$ because of the Assumption \ref{Ass.A} 
(homostationarity).  Then, we consider the following {\it functional dependence measure} 
\begin{equation}\label{Def.delta}
\delta_{\vsbf, p}^{[\ell]} = \left( {\rm E} \vert Y_{\ell\vsbf} - Y^*_{\ell\vsbf}\vert^p \right)^{1/p}.
\end{equation}
Let also
$\m = (m_1 \ m_2 \ m_3)^\top$ and define
$$
\varphi_{\m, p}^{[\ell]} = \sum_{|s_1| > m_1} \sum_{|s_2| > m_2} \sum_{t > m_3} \delta_{\vsbf, p}^{[\ell]}.
$$
The definition of $\varphi_{\m, p}^{[\ell]}$ is natural in our spatio-temporal rf setting: it measures two-sided dependence over space and one-sided dependence over time for $Y_{\ell\vsbf}$. Therefore, $\varphi_{\m, p}^{[\ell]}$ considers a form of dependence which acts in each direction of $\mathbb{Z}^3$: this is different from the approach of \cite[Definition 2.1 and see also their discussion on p.4315]{DPW17}, where the dependence can be only in one spatial dimension.



Given the sample $\{Y_{\ell\vsbf}=x_{\ell(s_1\ s_2\ t)},\ \ell=1,\ldots, n,\ s_1=1,\ldots, S_1,\ s_2=1,\ldots, S_2,\ t=1,\ldots, T\}$, we consider the estimator of the spectral spectral density matrix  $\widehat{\boldsymbol \Sigma}_n^y(\thbf)$ of $\bm Y_n$ with $(i,j)$-th, $i,j=1.\ldots, n$, generic entry 
\begin{equation}\label{hatsigma1}
\widehat{\sigma}_{ij}^y(\thbf) =\frac{1}{S_1 S_2 T} \sum_{\vsbf_1,\vsbf_2 = (1\ 1\ 1)^\top}^{ (S_1 \ S_2 \ T)^\top} \!\!\!\!   Y_{i\vsbf_1} Y_{j\vsbf_2} K_1\left(\frac{s_{11}-s_{21}}{B_{S_1}} \right) K_2\left(\frac{s_{12}-s_{22}}{B_{S_2}} \right) K_3\left(\frac{t_{1}-t_{2}}{B_{T}} \right) e^{-i \left\langle \vsbf_1 - \vsbf_2,  \thbf\right\rangle}
\end{equation}
with $\vsbf_1 = (s_{11} \ s_{12} \ t_1), \vsbf_2 = (s_{21} \ s_{22} \ t_2)$, $K_1(\cdot), K_2(\cdot), K_3(\cdot)$ being kernel functions and $B_{S_1}, B_{S_2}, B_{T}$ being bandwidths satisfying Assumption \ref{AssSpec2}(i) and (ii).

The following conditions are imposed 

\begin{cond}\label{condA}
For all $\ell \in \mathbb{N}$, ${\rm E} (Y_{\ell\0}) = 0$, $Y_{\ell\0} \in L_p$, for some $p \geq 4$.
\end{cond}

\begin{cond}\label{condB}
For all $\ell \in \mathbb{N}$, $\delta_{\vsbf, p}^{[\ell]} \leq A_1 \rho_1^{|s_1|} \rho_2^{|s_2|} \rho_3^{|t|}$, for some finite $\rho_1, \rho_2, \rho_3 \in (0, 1)$ and $A_1 >0$, independent of $\ell$.
\end{cond}

\begin{cond}\label{condC}
For all $i, j \in \mathbb{N}$, $\vert{\rm E} (Y_{i\vsbf} Y_{j\0})\vert \le A_2 \rho_4^{|s_1|} \rho_5^{|s_2|} \rho_6^{|t|}$, for some finite $\rho_4, \rho_5, \rho_6 \in (0, 1)$ and $A_2 \in (0, \infty)$, independent of $i$ and $j$.
\end{cond}

Then, we have the following result

\begin{proposition}\label{ThmSpecDen1}
Let Conditions \ref{condA} and \ref{condB}, and Assumption \ref{AssSpec2}  hold. Then, there exists a finite $C>0$ depending on $p$ such that
\begin{equation}
\max_{1\leq i, j \leq n} \sup_{\thbf \in \Thbf} \left[ {\rm E} \left| \widehat{\sigma}_{ij}^y(\thbf)  - {\rm E} (\widehat{\sigma}_{ij}^y(\thbf)) \right|^{p/2} \right]^{2/p} \leq C \left[ \frac{(\log B_{S_1}  \log B_{S_2}  \log B_{T} )^2 B_{S_1} B_{S_2} B_{T}}{S_1 S_2 T} \right]^{1/2}.\nn
\end{equation}
\end{proposition}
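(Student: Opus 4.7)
The approach adapts the physical/functional dependence machinery of \citet{wu2005nonlinear,wu2018asymptotic} for time series and the purely spatial arguments of \citet{DPW17} to the genuinely three-dimensional index set $\mathbb Z^3$. The main ingredients are: (i) a projective/martingale-difference decomposition of the centered estimator along a suitable linear ordering of $\mathbb Z^3$; (ii) a Burkholder moment inequality at exponent $p/2\ge 2$, which is legitimate by Condition \ref{condA}; and (iii) a telescoping via the functional dependence coefficients $\delta^{[\ell]}_{\vsbf,p}$ enabled by the geometric decay in Condition \ref{condB}. The first step is to exploit the compact support of the kernels (Assumption \ref{AssSpec2}(i)): changing variables $\hbf=\vsbf_1-\vsbf_2$ in \eqref{hatsigma1} gives
\[
\widehat\sigma_{ij}^y(\thbf)-\mathrm E\widehat\sigma_{ij}^y(\thbf)=\sum_{|h_1|\le B_{S_1}}\sum_{|h_2|\le B_{S_2}}\sum_{|h_3|\le B_T}\mathcal K(\hbf)\,e^{-i\langle\hbf,\thbf\rangle}\bigl[\widehat\gamma_{ij}(\hbf)-\mathrm E\widehat\gamma_{ij}(\hbf)\bigr],
\]
where $\mathcal K(\hbf):=K_1(h_1/B_{S_1})K_2(h_2/B_{S_2})K_3(h_3/B_T)$ is uniformly bounded and $\widehat\gamma_{ij}(\hbf):=(S_1S_2T)^{-1}\sum_{\vsbf}Y_{i,\vsbf+\hbf}Y_{j,\vsbf}$. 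The problem thus reduces to controlling $\|\widehat\gamma_{ij}(\hbf)-\mathrm E\widehat\gamma_{ij}(\hbf)\|_{p/2}$ uniformly in $\hbf$ lying in a box of volume $O(B_{S_1}B_{S_2}B_T)$.

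For each fixed $\hbf$, I will introduce a lexicographic ordering of $\mathbb Z^3$ (with the one-sided time coordinate innermost, so as to respect the one-sided causal structure already encoded in \eqref{Eq_Fl}) and the associated filtration $\mathcal F_\vsbf=\sigma(\Epsbf_{\vsbf'}:\vsbf'\preceq\vsbf)$ with projector $\mathcal P_\vsbf=\mathrm E[\cdot\mid\mathcal F_\vsbf]-\mathrm E[\cdot\mid\mathcal F_{\vsbf^-}]$. Burkholder's inequality applied to the martingale-difference expansion $\widehat\gamma_{ij}(\hbf)-\mathrm E\widehat\gamma_{ij}(\hbf)=\sum_\vsbf \mathcal P_\vsbf \widehat\gamma_{ij}(\hbf)$ yields
\[
\bigl\|\widehat\gamma_{ij}(\hbf)-\mathrm E\widehat\gamma_{ij}(\hbf)\bigr\|_{p/2}\le C_p\Bigl(\sum_{\vsbf}\|\mathcal P_\vsbf\widehat\gamma_{ij}(\hbf)\|_{p/2}^{2}\Bigr)^{1/2}.
\]
The standard coupling bound $\|\mathcal P_\vsbf Y_{\ell,\vsbf'}\|_p\le \delta^{[\ell]}_{\vsbf'-\vsbf,p}$ combined with Cauchy--Schwarz gives, for each summand $Y_{i,\vsbf_1+\hbf}Y_{j,\vsbf_1}$, a pointwise bound of the form $\|Y_{i,0}\|_p\,\delta^{[j]}_{\vsbf_1-\vsbf,p}+\|Y_{j,0}\|_p\,\delta^{[i]}_{\vsbf_1+\hbf-\vsbf,p}$. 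Conditions \ref{condA}--\ref{condB} then make the resulting sums over the (two-sided spatial, one-sided temporal) coordinates of $\vsbf$ absolutely and geometrically convergent, and after summing over the $O(S_1S_2T)$ indices $\vsbf_1$ the Burkholder bound delivers a rate $O((S_1S_2T)^{-1/2})$ that is uniform in $\hbf\in\mathrm{supp}(\mathcal K)$.

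Combining these two steps and reinstating the $O(B_{S_1}B_{S_2}B_T)$-term outer sum over $\hbf$, together with a Lipschitz-in-$\thbf$ discretization of $\Thbf$ on a grid of mesh $O(1/(B_{S_1}B_{S_2}B_T))$ to handle the supremum over $\thbf$, produces the asserted $(B_{S_1}B_{S_2}B_T/(S_1S_2T))^{1/2}$ rate; the factors $(\log B_{S_1}\log B_{S_2}\log B_T)^2$ arise from the cost of iterated maximal/Burkholder inequalities across the three dimensions (as in \citealp{wu2018asymptotic}, Lemma~2) combined with the union bound over the finite frequency grid.

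The main obstacle will be setting up the three-dimensional projective decomposition correctly: the time component of $\delta^{[\ell]}_{\vsbf,p}$ is naturally one-sided, but the two spatial components decay only in $|s_1|,|s_2|$, so the filtration must be chosen (lexicographic in the two spatial coordinates followed by time) so that the two-sided spatial summability of $\rho_1^{|s_1|}\rho_2^{|s_2|}$ still yields convergent series when combined with Burkholder at exponent $p/2$. In particular, the standard time-series arguments must be extended to account for the lack of a natural past/future dichotomy in the spatial directions, and the cross-terms $\delta^{[i]}_{\vsbf_1+\hbf-\vsbf,p}$ and $\delta^{[j]}_{\vsbf_1-\vsbf,p}$ must be summed simultaneously in a way that keeps the $\hbf$-dependence uniform. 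Once this three-dimensional decomposition is verified, the remainder reduces to bookkeeping along the lines of \citet{wu2018asymptotic} and \citet{DPW17}.
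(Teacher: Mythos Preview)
Your decomposition into lag-wise sample autocovariances followed by a triangle inequality over $\hbf$ does not deliver the stated rate. If each centered $\widehat\gamma_{ij}(\hbf)$ has $L_{p/2}$-norm of order $(S_1S_2T)^{-1/2}$ uniformly in $\hbf$, then summing over the $O(B_{S_1}B_{S_2}B_T)$ lags in the kernel support gives at best $B_{S_1}B_{S_2}B_T\,(S_1S_2T)^{-1/2}$, which is larger than the target $(B_{S_1}B_{S_2}B_T/(S_1S_2T))^{1/2}$ by a factor $(B_{S_1}B_{S_2}B_T)^{1/2}$. Centered autocovariances at different lags $\hbf$ are \emph{not} orthogonal, so there is no $\ell_2$-gain to be harvested from the outer sum; the cancellation must be built into the martingale structure from the start, not recovered afterwards.

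The paper avoids this loss by working with the full bilinear form, not lag by lag. It introduces an $\m$-dependent approximation $\widetilde Y_{i\vsbf}=\mathrm E[Y_{i\vsbf}\mid \mathcal F_{\m,\vsbf}]$ with $\m=(m_1,m_2,m_3)$, writes the approximating estimator $\widetilde\sigma_{ij}^y(\thbf)$ as a sum of products $\widetilde Z_{ij,\vsbf_2}=\widetilde Y_{j\vsbf_2}\sum_{\vsbf_1\le\vsbf_2-\hbf}a_{\bar\vsbf,\vsbf_1-\vsbf_2}\widetilde Y_{i\vsbf_1}$, and then exploits that for $|h_1|>m_1$, $|h_2|>m_2$, $h_3>m_3$ the variables $\{\widetilde Z_{ij,\vsbf_2+l\hbf}:l\in\mathbb N_0\}$ form a genuine martingale-difference sequence. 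Blocking the index set into $O(\mathcal R/(m_1m_2m_3))$ blocks of size $2m_1\times 2m_2\times m_3$ and applying Burkholder to the block sums yields $\|\widetilde\sigma_{ij}^y(\thbf)-\mathrm E\widetilde\sigma_{ij}^y(\thbf)\|_{p/2}\lesssim (m_1^2m_2^2m_3^2\,B_{S_1}B_{S_2}B_T/\mathcal R)^{1/2}$; the approximation error $\|\widehat\sigma_{ij}^y-\widetilde\sigma_{ij}^y\|_{p/2}$ is controlled by $\varphi^{[i]}_{\m+\1,p}\lesssim \rho_1^{m_1}\rho_2^{m_2}\rho_3^{m_3}$ via the Fourier representation of the kernels. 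Choosing $m_l\asymp \log B_{\cdot_l}$ balances the two pieces, and it is precisely this choice that produces the $(\log B_{S_1}\log B_{S_2}\log B_T)$ factor---not iterated maximal inequalities or a union bound over frequencies (all bounds in the paper are already uniform in $\thbf$, so no grid discretization is needed). Your proposal is missing this $\m$-dependent truncation-and-blocking device, which is the ingredient that converts the quadratic-form sum into a bona fide martingale and recovers the $\sqrt{\cdot}$ gain.
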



For $p = 4$, Proposition~\ref{ThmSpecDen1} bounds the variance of $\widehat{\sigma}_{ij}^y(\thbf)$. The following result bounds the mean square error 

%

\begin{proposition}\label{ThmSpecDen2}
Let Conditions \ref{condA}, \ref{condB}, and \ref{condC}, and Assumption \ref{AssSpec2} hold. Then, there exists a finite $C^*>0$ such that 
\begin{align}
&\quad \max_{1\leq i, j \leq n} \sup_{\thbf \in \Thbf} {\rm E} \left| \widehat{\sigma}_{ij}^y(\thbf)  - \sigma_{ij}^y(\thbf) \right|^2 \leq  C^* \max\left\lbrace \frac{(\log B_{S_1} \log B_{S_2}  \log B_{T} )^2 B_{S_1} B_{S_2} B_{T}}{S_1 S_2 T},  \frac 1{B_{S_1}^{2\vartheta_1}},  \frac 1{B_{S_2}^{2\vartheta_2}},  \frac 1{B_T^{2\vartheta_3}} \right\rbrace. \nn
\end{align}
\end{proposition}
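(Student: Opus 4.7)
My plan is to use the standard bias--variance decomposition for the complex-valued estimator, writing
\begin{equation*}
\text{\upshape E}\left\vert \widehat\sigma_{ij}^y(\thbf) - \sigma_{ij}^y(\thbf)\right\vert^2 = \left\vert \text{\upshape E}\widehat\sigma_{ij}^y(\thbf) - \sigma_{ij}^y(\thbf)\right\vert^2 + \text{\upshape Var}\!\left(\widehat\sigma_{ij}^y(\thbf)\right),
\end{equation*}
and bounding the two terms separately, both uniformly in $\thbf$ and in $i,j$. The variance is handled immediately by invoking Proposition~\ref{ThmSpecDen1} with $p=4$ (for which Conditions~\ref{condA} and \ref{condB} are in force), which contributes the first rate $(\log B_{S_1}\log B_{S_2}\log B_T)^2 B_{S_1}B_{S_2}B_T/(S_1S_2T)$ inside the maximum. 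All of the remaining work is on the bias.

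For the bias, I would use Assumption~\ref{Ass.A} to write $\text{\upshape E}(Y_{i\vsbf_1}Y_{j\vsbf_2})=[\bm\Gamma_n^y(\vsbf_1-\vsbf_2)]_{ij}=:\gamma_{ij}(\vsbf_1-\vsbf_2)$ and change variables to $\vsbf=\vsbf_1-\vsbf_2$ in \eqref{hatsigma1}. Counting pairs $(\vsbf_1,\vsbf_2)$ inside $\{1,\ldots,S_1\}\times\{1,\ldots,S_2\}\times\{1,\ldots,T\}$ with a prescribed difference gives a multiplicity $(S_1-|s_1|)(S_2-|s_2|)(T-|t|)$ whenever $|s_1|<S_1,|s_2|<S_2,|t|<T$, so that after dividing by $S_1S_2T$,
\begin{equation*}
\text{\upshape E}\widehat\sigma_{ij}^y(\thbf) - \sigma_{ij}^y(\thbf) = -\sum_{\vsbf\in\mathbb Z^3}\gamma_{ij}(\vsbf) e^{-i\langle\vsbf,\thbf\rangle}\Bigl[1 - w(\vsbf) K_1\!\bigl(s_1/B_{S_1}\bigr) K_2\!\bigl(s_2/B_{S_2}\bigr) K_3\!\bigl(t/B_T\bigr)\Bigr],
\end{equation*}
where $w(\vsbf)=(1-|s_1|/S_1)_+(1-|s_2|/S_2)_+(1-|t|/T)_+$ captures the boundary effect.

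The next step is to split this sum according to whether each component of $\vsbf$ lies in the corresponding kernel support. On the tail set where some $|s_l|$ exceeds $B_{S_l}$ (or $|t|$ exceeds $B_T$), the kernel factor vanishes and the full summand $|\gamma_{ij}(\vsbf)|$ must be absorbed; Condition~\ref{condC} makes this contribution exponentially small in the bandwidth and therefore negligible. On the complementary set I would use the telescoping identity
\begin{equation*}
1-wK_1K_2K_3 = (1-w) + w(1-K_1) + wK_1(1-K_2) + wK_1K_2(1-K_3),
\end{equation*}
and apply Assumption~\ref{AssSpec2}(i)(b) to get $|1-K_l(u)|\le C|u|^{\vartheta_l}$ locally. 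Combined with Condition~\ref{condC} via the elementary summability of $\sum_{s} |s|^{\vartheta_l}\rho^{|s|}$ for $\rho\in(0,1)$, this bounds the three kernel pieces by $C B_{S_1}^{-\vartheta_1}$, $C B_{S_2}^{-\vartheta_2}$, and $C B_T^{-\vartheta_3}$ respectively. The boundary piece $(1-w)$ produces summands of order $|s_l|/S_l$, which under Assumption~\ref{AssSpec2}(ii) (bandwidths of strictly smaller order than the sample dimensions) are dominated by the kernel biases. Squaring the resulting bound $O(B_{S_1}^{-\vartheta_1}+B_{S_2}^{-\vartheta_2}+B_T^{-\vartheta_3})$ and combining with the variance rate yields the claimed bound.

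I expect the main technical obstacle to be the clean execution of the telescoping decomposition uniformly in $\thbf\in\Thbf$, together with verifying that the boundary factor $1-w$ is indeed subdominant in the combined rate; the exponential tail estimates coming from Conditions~\ref{condB}--\ref{condC} and the variance reduction from Proposition~\ref{ThmSpecDen1} should otherwise be fairly mechanical.
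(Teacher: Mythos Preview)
Your proposal is correct and follows essentially the same strategy as the paper: a bias--variance split with the variance handled by Proposition~\ref{ThmSpecDen1} at $p=4$, and the bias controlled by computing $\text{\upshape E}\widehat\sigma_{ij}^y(\thbf)$ via the change of variables $\vsbf=\vsbf_1-\vsbf_2$, then decomposing $1-wK_1K_2K_3$ into a boundary piece, kernel-smoothing pieces bounded through Assumption~\ref{AssSpec2}(i)(b), and an exponentially negligible tail via Condition~\ref{condC}. The only cosmetic differences are that the paper uses the cruder inequality $\text{\upshape E}|\cdot|^2\le 2(\text{bias}^2+\text{var})$ instead of your exact decomposition, and organizes the telescoping as $(K_1K_2K_3-1)+(w-1)K_1K_2K_3$ with the tail cut at the sample dimensions $S_1,S_2,T$ rather than at the bandwidths; your version is arguably cleaner but the substance is the same.
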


This implies mean-square consistency of $\widehat{\sigma}_{ij}^y(\thbf)$, as $S_1,S_2,T\to\infty$.

\subsection{Proof of Proposition \ref{ThmSpecDen1}}\label{App.pfSigmaYX}

The following preliminary result is needed to prove Proposition \ref{ThmSpecDen1}.

\begin{lemma}\label{lemYYtilde}
For any $i=1,\ldots, n$, let $Y_{i\vsbf} \in L_p$ for some $p > 1$, ${\rm E}Y_{i\vsbf} = 0$, $\alpha_{\vsbf} \in \mathbb{C}$ for $\vsbf \in \mathbb{Z}^3$,  $A_{\bar{\vsbf}} = (\sum_{\vsbf = \1}^{\bar{\vsbf}} |\alpha_{\vsbf}|^{2})^{1/2}$ and $C_p = 18p^{3/2}(p-1)^{-1/2}$. Then, for  $\bar{\vsbf}=(S_1\ S_2\ T)^\top$, we have 
$$\left({\rm E} \left| \sum_{\vsbf = \1}^{\bar{\vsbf}} \alpha_{\vsbf} Y_{i\vsbf} \right|^p \right)^{1/p} \leq C_p A_{\bar{\vsbf}} \varphi_{\0, p}^{[i]}, \quad
\left({\rm E} \left| \sum_{\vsbf = \1}^{\bar{\vsbf}} \alpha_{\vsbf} \widetilde{Y}_{i\vsbf} \right|^p \right)^{1/p}  \leq C_p A_{\bar{\vsbf}} \varphi_{\0, p}^{[i]}$$
and
$$\left({\rm E} \left| \sum_{\vsbf = \1}^{\bar{\vsbf}} \alpha_{\vsbf} (Y_{i\vsbf} - \widetilde{Y}_{i\vsbf}) \right|^p \right)^{1/p} \leq C_p A_{\bar{\vsbf}} \varphi_{\m+\1, p}^{[i]}.$$
\end{lemma}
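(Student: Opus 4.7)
The plan is to adapt Wu's martingale-decomposition technique for one-dimensional time series (\citealp{wu2005nonlinear}) to the spatio-temporal lattice $\mathbb{Z}^3$. First, I would equip $\mathbb{Z}^3$ with the lexicographic order $\prec$ and introduce the filtration $\mathcal{F}_{\vsbf'} = \sigma(\Epsbf_{\vsbf''} : \vsbf'' \preceq \vsbf')$. Define the projection operator $\mathcal{P}_{\vsbf'} Z = \mathrm{E}[Z \mid \mathcal{F}_{\vsbf'}] - \mathrm{E}[Z \mid \mathcal{F}_{\vsbf'-}]$, where $\vsbf'-$ is the immediate predecessor of $\vsbf'$ in $\prec$. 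Since $Y_{i\vsbf} \in L_p$ has zero mean and is a measurable function of the i.i.d.\ innovations $\{\Epsbf_{\vsbf'-}\}$, the telescoping decomposition $Y_{i\vsbf} = \sum_{\vsbf' \preceq \vsbf} \mathcal{P}_{\vsbf'} Y_{i\vsbf}$ holds in $L_p$.

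Next, writing $\vsbf' = \vsbf - \k$ and interchanging the order of summation, I would decompose the weighted sum as
$$
\sum_{\vsbf = \1}^{\bar{\vsbf}} \alpha_{\vsbf} Y_{i\vsbf} \;=\; \sum_{\k} T_{\k}, \qquad T_{\k} := \sum_{\vsbf = \1}^{\bar{\vsbf}} \alpha_{\vsbf} \, \mathcal{P}_{\vsbf - \k} Y_{i\vsbf}.
$$
For each fixed $\k$, the array $\{\alpha_{\vsbf} \mathcal{P}_{\vsbf - \k} Y_{i\vsbf}\}_{\vsbf}$ (with $\vsbf$ ordered by $\prec$) is a martingale-difference sequence relative to $\{\mathcal{F}_{\vsbf-\k}\}$, because $\mathrm{E}[\mathcal{P}_{\vsbf-\k} Y_{i\vsbf} \mid \mathcal{F}_{\vsbf-\k-}] = 0$ and the shift $\vsbf \mapsto \vsbf-\k$ preserves $\prec$. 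Applying Burkholder's inequality with the Rio-type constant $C_p = 18 p^{3/2}(p-1)^{-1/2}$, and then using the key bound $\|\mathcal{P}_{\vsbf - \k} Y_{i\vsbf}\|_p \le \delta_{\k, p}^{[i]}$ (which is the standard coupling identity: by homostationarity and the definition of $\delta_{\k,p}^{[i]}$ in \eqref{Def.delta}, perturbing a single innovation at position $\vsbf-\k$ changes $Y_{i\vsbf}$ by at most $\delta_{\k,p}^{[i]}$ in $L_p$), gives
$$
\|T_{\k}\|_p \;\le\; C_p \Bigl(\sum_{\vsbf = \1}^{\bar{\vsbf}} |\alpha_{\vsbf}|^2 \, \|\mathcal{P}_{\vsbf-\k} Y_{i\vsbf}\|_p^2 \Bigr)^{1/2} \;\le\; C_p \, A_{\bar{\vsbf}} \, \delta_{\k,p}^{[i]}.
$$
Summing over $\k$ by Minkowski's inequality then yields
$$
\Bigl\|\sum_{\vsbf = \1}^{\bar{\vsbf}} \alpha_{\vsbf} Y_{i\vsbf}\Bigr\|_p \;\le\; \sum_{\k} \|T_{\k}\|_p \;\le\; C_p \, A_{\bar{\vsbf}} \sum_{\k} \delta_{\k,p}^{[i]} \;=\; C_p \, A_{\bar{\vsbf}} \, \varphi_{\0,p}^{[i]},
$$
which is the first claim. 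The second claim, for the truncated rf $\widetilde{Y}_{i\vsbf}$, is obtained by the identical argument, since $\widetilde{Y}_{i\vsbf}$ is also a measurable function of $\Epsbf$ whose functional-dependence measure is dominated by $\delta_{\k,p}^{[i]}$. For the third claim, apply the same decomposition to the residual $Y_{i\vsbf} - \widetilde{Y}_{i\vsbf}$; since $\widetilde{Y}_{i\vsbf}$ only uses innovations within an $\m$-neighbourhood of $\vsbf$, the projections $\mathcal{P}_{\vsbf-\k}(Y_{i\vsbf}-\widetilde{Y}_{i\vsbf})$ vanish whenever $\k$ lies inside that neighbourhood, so the summation over $\k$ is restricted to the tail, producing $\varphi_{\m+\1,p}^{[i]}$.

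The main obstacle I expect is to rigorously verify the martingale-difference structure on $\mathbb{Z}^3$: one has to be careful that the lexicographic ordering really does make $\{\alpha_{\vsbf}\mathcal{P}_{\vsbf-\k} Y_{i\vsbf}\}_{\vsbf}$ a martingale-difference array with respect to a properly indexed sub-filtration, and that the Burkholder constant $C_p$ does not deteriorate with the dimension of the index set. This should go through cleanly because $\prec$ is a total order, the innovations $\{\Epsbf_{\vsbf}\}$ are i.i.d.\ in $\vsbf$, and Burkholder's inequality is dimension-free in the index set. A secondary check is to justify the interchange of the $\k$ and $\vsbf$ summations in $L_p$; this will follow from the absolute summability of $\delta_{\k,p}^{[i]}$, which holds under Condition~\ref{condB} (geometric decay) but is not yet invoked in this lemma---the bound as stated remains informative only when $\varphi_{\0,p}^{[i]} < \infty$.
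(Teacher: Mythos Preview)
Your argument is correct and reaches the same bound with the same constant, but the decomposition is organized differently from the paper's. The paper linearizes $\mathbb Z^3$ via a bijection $\taubf:\mathbb Z\to\mathbb Z^3$ (non-decreasing in the time coordinate), writes the whole weighted sum as $\sum_{l} P_l\bigl(\sum_{\vsbf}\alpha_{\vsbf}Y_{i\vsbf}\bigr)$, applies Burkholder \emph{once} to the martingale-difference sequence indexed by the innovation position $l$, and then uses Minkowski plus a Cauchy--Schwarz splitting of $\bigl(\sum_{\vsbf}|\alpha_{\vsbf}|\,\delta_{\vsbf-\taubf(l),p}^{[i]}\bigr)^2$ to extract the factor $A_{\bar\vsbf}\,\varphi_{\0,p}^{[i]}$. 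You instead fix the \emph{lag} $\k$, note that $\{\alpha_{\vsbf}\mathcal P_{\vsbf-\k}Y_{i\vsbf}\}_{\vsbf}$ is itself a martingale-difference array (translation preserves the total order), apply Burkholder to each $T_{\k}$ separately, and then sum over $\k$ by Minkowski. Your route is arguably a bit more transparent because homostationarity makes $\|\mathcal P_{\vsbf-\k}Y_{i\vsbf}\|_p\le \delta_{\k,p}^{[i]}$ independent of $\vsbf$, so $A_{\bar\vsbf}$ factors out immediately and no Cauchy--Schwarz is needed; the paper's route has the advantage of invoking Burkholder only once. For the third inequality both arguments coincide: the projections with lag inside the $\m$-box vanish, leaving the tail sum $\varphi_{\m+\1,p}^{[i]}$. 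Your caveats about the filtration and the Burkholder constant are well placed and do go through: the order is total and translation-invariant, so the sub-filtration $\{\mathcal F_{\vsbf-\k}\}_{\vsbf}$ is genuinely nested, and the Rio--Wu--Shao constant $C_p=18p^{3/2}(p-1)^{-1/2}$ depends only on $p$, not on the index set.
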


\begin{proof}
Let $\taubf: \mathbb{Z} \rightarrow \mathbb{Z}^3$ be a bijection, which is non-decreasing in the third coordinate.
 Define for $l \in \mathbb{Z}$ and $(\taubf(l))_3 \leq t$, where $(\taubf(l))_3$ means the third coordinate of $\taubf(l)$,
 the projection
$$P_l(Y_{i\vsbf}) = {\rm E}(Y_{i\vsbf} | \mathcal{F}_l) - {\rm E}(Y_{i\vsbf} | \mathcal{F}_{l-1}),$$ where 
$\mathcal{F}_l = \sigma(\Epsbf_{\taubf(h)}; h \leq l).$ Define for $\k \in \mathbb{Z}^2\times \mathbb{N}_0$, the shift  $\mathcal{T}^{\k} \mathcal{F}_l = \sigma(\Epsbf_{\taubf(h) - \k}; h \leq l).$ By (\ref{Eq_Fl}), we then have
\begin{align*}
&\quad \left({\rm E} \left|  P_l(Y_{i\vsbf})  \right|^p \right)^{1/p} \\
 &= \left({\rm E} \left| {\rm E}(Y_{i\vsbf} | \mathcal{F}_l) - {\rm E}(Y_{i\vsbf} | \mathcal{F}_{l-1})  \right|^p \right)^{1/p} \\
&= \left({\rm E} \left| {\rm E}(Y_{i\0} | \mathcal{T}^{\vsbf} \mathcal{F}_l) - {\rm E}(Y_{i\0} | \mathcal{T}^{\vsbf} \mathcal{F}_{l-1}) \right|^p \right)^{1/p} \\
&= \left\lbrace {\rm E} \left|  {\rm E} [F_i(\Epsbf_{-\k}; \k \in \mathbb{Z}\times \mathbb{Z} \times \mathbb{N}_0) | \mathcal{T}^{\vsbf} \mathcal{F}_l] - 
{\rm E} [F_i(\Epsbf_{-\k}, \widetilde{\Epsbf}_{\taubf(l) - \vsbf}; \k \in \mathbb{Z}\times \mathbb{Z} \times \mathbb{N}_0 \setminus   \{\vsbf - \taubf(l)\}) | \mathcal{T}^{\vsbf} \mathcal{F}_l] \right|^p \right\rbrace^{1/p} \\
&\leq \left({\rm E} \left|  F_i(\Epsbf_{-\k}; \k \in \mathbb{Z}\times \mathbb{Z} \times \mathbb{N}_0) - 
F_i(\Epsbf_{-\k}, \widetilde{\Epsbf}_{\taubf(l) - \vsbf}; \k \in \mathbb{Z}\times \mathbb{Z} \times \mathbb{N}_0 \setminus   \{\vsbf - \taubf(l)\}) \right|^p \right)^{1/p} \\
&= \left({\rm E} \left|  F_i(\Epsbf_{\vsbf - \taubf(l) -\k}; \k \in \mathbb{Z}\times \mathbb{Z} \times \mathbb{N}_0) - 
F_i(\Epsbf_{\vsbf - \taubf(l) -\k}, \widetilde{\Epsbf}_{\0}; \k \in \mathbb{Z}\times \mathbb{Z} \times \mathbb{N}_0 \setminus   \{\vsbf - \taubf(l)\}) \right|^p \right)^{1/p} \\
&=  \left({\rm E} \left|  Y_{i \vsbf - \taubf(l)} - Y^*_{i \vsbf - \taubf(l)} \right|^p \right)^{1/p} \\
&= \delta_{\vsbf - \taubf(l), p}^{[i]}.
\end{align*}

Let $\mathbb{E}_t =\{l \in \mathbb{Z}: (\taubf(l))_3 \leq t\}$. Noting that $\sum_{\vsbf = \1}^{\bar{\vsbf}} \alpha_{\vsbf} Y_{i\vsbf} = \sum_{l \in \mathbb{E}_T} P_l(\sum_{\vsbf = \1}^{\bar{\vsbf}} \alpha_{\vsbf} Y_{i\vsbf})$, we have
$$\left({\rm E} \left| \sum_{\vsbf = \1}^{\bar{\vsbf}} \alpha_{\vsbf} Y_{i\vsbf}  \right|^p \right)^{1/p}
= \left({\rm E} \left| \sum_{l \in \mathbb{E}_T} P_l\left(\sum_{\vsbf = \1}^{\bar{\vsbf}} \alpha_{\vsbf} Y_{i\vsbf}\right) \right|^p \right)^{1/p},$$
where $\{P_l\left(\sum_{\vsbf = \1}^{\bar{\vsbf}} \alpha_{\vsbf} Y_{i\vsbf}\right); l \in \mathbb{E}_T \}$ forms a martingale difference sequence in time, that is, 
$${\rm E}\left[P_l\left(\sum_{\vsbf = \1}^{\bar{\vsbf}} \alpha_{\vsbf} Y_{i\vsbf}\right)\vert \mathcal{F}_{l-1}\right] = 0.$$ Hence, by Minkowski's, Cauchy-Schwarz's and Burkholder's inequalities (see \citealt[Lemma 1]{WS07}),  
\begin{align*}
\left({\rm E} \left| \sum_{\vsbf = \1}^{\bar{\vsbf}} \alpha_{\vsbf} Y_{i\vsbf} \right|^p \right)^{2/p} 
&\leq C_p^{2}  \sum_{l \in \mathbb{E}_T} \left({\rm E} \left|  P_l\left(\sum_{\vsbf = \1}^{\bar{\vsbf}} \alpha_{\vsbf} Y_{i\vsbf}\right) \right|^p \right)^{2/p} \\
&\leq C_p^{2} \sum_{l \in \mathbb{E}_T} \left(\sum_{\vsbf = \1}^{\bar{\vsbf}} \left|\alpha_{\vsbf}\right|  \delta_{\vsbf - \taubf(l), p}^{[i]} \right)^2 \\
&\leq C_p^{2} \sum_{l \in \mathbb{E}_T} \left(  \left(\sum_{\vsbf = \1}^{\bar{\vsbf}} \left|\alpha_{\vsbf}\right|^2  \delta_{\vsbf - \taubf(l), p}^{[i]} \right) \left(\sum_{\vsbf = \1}^{\bar{\vsbf}}   \delta_{\vsbf - \taubf(l), p}^{[i]} \right) \right) \\
&\leq C_p^{2} \sum_{l \in \mathbb{E}_T} \left( \left(\sum_{\vsbf = \1}^{\bar{\vsbf}} \left|\alpha_{\vsbf}\right|^2  \delta_{\vsbf - \taubf(l), p}^{[i]} \right)  \varphi_{\0, p}^{[i]} \right) \\
  &\leq C_p^{2} \sum_{\vsbf = \1}^{\bar{\vsbf}} \alpha_{\vsbf}^2
   \left(\varphi_{\0, p}^{[i]}\right)^2,
\end{align*}
which implies $\left({\rm E} \left| \sum_{\vsbf = \1}^{\bar{\vsbf}} \alpha_{\vsbf} Y_{i\vsbf} \right|^p \right)^{1/p} \leq C_p A_{\bar{\vsbf}} \varphi_{\0, p}^{[i]}.$

Repeating the similar arguments for $\widetilde{Y}_{i\vsbf}$ yields $\left({\rm E} \left| \sum_{\vsbf = \1}^{\bar{\vsbf}} \alpha_{\vsbf} \widetilde{Y}_{i\vsbf} \right|^p \right)^{1/p} \leq C_p A_{\bar{\vsbf}} \varphi_{\0, p}^{[i]}$. Noting that 
$$\sum_{\vsbf = \1}^{\bar{\vsbf}} \alpha_{\vsbf}(Y_{i\vsbf} - \widetilde{Y}_{i\vsbf})= \sum_{l \in \mathbb{F}_T} P_l\left(\sum_{\vsbf = \1}^{\bar{\vsbf}} \alpha_{\vsbf}(Y_{i\vsbf} - \widetilde{Y}_{i\vsbf})\right),$$ where $\mathbb{F}_T = \{l \in \mathbb{Z}: |\taubf(l))_1|> m_1, |\taubf(l))_2|> m_2, T \geq (\taubf(l))_3> m_3\}$,  the last inequality follows from repeating the similar arguments as above.
\end{proof}

In the sequel, $A_1, A_2, ...$ will denote generic positive constants.
Letting $$a_{\bar{\vsbf}, \rbf} = K_1 (r_1/B_{S_1}) K_2 (r_2/B_{S_2}) K_3 (r_3/B_{T}) e^{-i\left\langle \rbf, \thbf \right\rangle},$$ where $\rbf = (r_1 \ r_2 \ r_3)^\top \in \mathbb{Z}^3$, then (\ref{hatsigma1}) can be written as
\begin{equation*}
\widehat{\sigma}_{ij}^y(\thbf) = \frac{1}{\mathcal{R}} \sum_{\vsbf_1 = \1}^{\bar{\vsbf}} \sum_{\vsbf_2 = \1}^{\bar{\vsbf}}  Y_{i\vsbf_1} Y_{j\vsbf_2} a_{\bar{\vsbf}, \vsbf_1 - \vsbf_2}.
\end{equation*}
Assumption~\ref{AssSpec2} implies that $\sum_{\rbf \in \mathbb{Z}^3} |a_{\bar{\vsbf}, \rbf}|^2/(B_{S_1} B_{S_2} B_T) < \infty$. Let $\hbf = (h_1 \ h_2 \ h_3)^\top \in \mathbb{Z}^3$ and $\mathcal R=S_1S_2T$. Then, we consider an approximation of $\widehat{\sigma}_{ij}^y(\thbf)$ by
\begin{align}
\widetilde{\sigma}_{ij}^y(\thbf) &= \frac{1}{\mathcal{R}} \sum_{\vsbf_1 = \1}^{\bar{\vsbf}} \sum_{\vsbf_2 = \1}^{\bar{\vsbf}}  \widetilde{Y}_{i\vsbf_1} \widetilde{Y}_{j\vsbf_2} a_{\bar{\vsbf}, \vsbf_1 - \vsbf_2}  \nonumber \\
&= \frac{1}{\mathcal{R}} \sum_{\vsbf_1 = \1}^{\bar{\vsbf}}  \widetilde{Y}_{i\vsbf_1} \widetilde{Y}_{j\vsbf_1}
+ \frac{2}{\mathcal{R}}  \sum_{\vsbf_2 = \2}^{\bar{\vsbf}} \widetilde{Y}_{j\vsbf_2} \sum_{s_{11}= \max (1, s_{21} - h_1)}^{s_{21}-1} \sum_{s_{12}= \max (1, s_{22} - h_2)}^{s_{22}-1} \sum_{t_1 = \max (1, t_2 - h_3)}^{t_2-1} a_{\bar{\vsbf}, \vsbf_1 - \vsbf_2} \widetilde{Y}_{i\vsbf_1} \nonumber \\
&\quad +\frac{2}{\mathcal{R}} \sum_{\vsbf_2 = \h+\1}^{\bar{\vsbf}} \widetilde{Z}_{ij, \vsbf_2}, \label{sigmatilde}
\end{align}
where $\widetilde{Z}_{ij, \vsbf_2} = \widetilde{Y}_{j\vsbf_2} \sum_{\vsbf_1 = \1}^{\vsbf_2 - \h} a_{\bar{\vsbf}, \vsbf_1 - \vsbf_2} \widetilde{Y}_{i\vsbf_1}$.

Now let us consider the $\sigma$-field 
$$\mathcal{F}_{\m, \vsbf} = \sigma(\Epsbf_{\vsbf - \k}; |k_1| \leq m_1,  |k_2| \leq m_2,  0 \leq k_3 \leq m_3)
$$ 
with $\k = (k_1 \ k_2 \ k_3)^\top \in \mathbb{Z}^2 \times \mathbb{N}_0$ and $\m = (m_1 \ m_2 \ m_3)^\top \in \mathbb{N}_0^3$. 
The $\sigma$-field $\mathcal{F}_{\m, \vsbf} $ is based on a hyperrectangle in $\mathbb{Z}^3$, where for each coordinate defining the spatio-temporal point $\vsbf- \k$,  we consider its distance form $\vsbf$.  We refer to \cite{ElMV03} p. 328 for a discussion on other options which can be considered to define this $\sigma$-field. For the sake of completeness of our argument, here we emphasize that our choice for $\mathcal{F}_{\m, \vsbf}$ is the natural one to consider in our spatio-temporal setting, where we impose homostationarity (see Assumption \ref{Ass.A}) and we are interested in controlling the rf functional dependence in each direction of $\mathbb{Z}^3$. 

Note that $\{\widetilde{Z}_{ij, \vsbf_2+l \hbf}; l \in \mathbb{N}_0\}$ are martingale differences when $|h_1| > m_1, |h_2| > m_2, h_3 > m_3$, that is, ${\rm E}[\widetilde{Z}_{ij, \vsbf_2+l \hbf} \vert \mathcal{F}_{\m, \vsbf_2+(l-1) \hbf}] = 0$. Taking  $|h_1| = 2m_1, |h_2| = 2m_2, h_3 = 2m_3$ and using Lemma~\ref{lemYYtilde}, there exists $A_3>0$ such that
\begin{align*}
\frac{ \max_{1\leq i, j \leq n} \left({\rm E} \left|	\widetilde{Z}_{ij, \vsbf_2} \right|^p \right)^{1/p}}{(B_{S_1} B_{S_2} B_T )^{1/2}}
&=  \max_{1\leq i, j \leq n} \left({\rm E} \left| \widetilde{Y}_{j\vsbf_2} \right|^p \right)^{1/p} \frac{\left({\rm E} \left| \sum_{\vsbf_1 = \1}^{\vsbf_2 - \h} a_{\bar{\vsbf}, \vsbf_1 - \vsbf_2} \widetilde{Y}_{i\vsbf_1} \right|^p \right)^{1/p}}{(B_{S_1} B_{S_2} B_T )^{1/2}}
 \\
&\leq \max_{1\leq i, j \leq n} \left({\rm E} \left| \widetilde{Y}_{j\0} \right|^p \right)^{1/p} C_p \varphi_{\0, p}^{[i]} \left(\frac{\sum_{\vsbf_1 = \1}^{\vsbf_2 - \h} |a_{\bar{\vsbf}, \vsbf_1 - \vsbf_2}|^2}{B_{S_1} B_{S_2} B_T} \right)^{1/2} \leq A_3\\
\end{align*}
due to Conditions \ref{condA} and \ref{condB}.
For the last term in (\ref{sigmatilde}), we consider splitting the interval $[h_1 + 1, S_1] \times [h_2 + 1, S_2] \times [h_3 + 1, T]$ into consecutive blocks, each of which has same size $2m_1 \times 2m_2 \times m_3$ (here, without loss of generality, we assume $(S_1-h_1-1)/(2m_1), (S_2-h_2-1)/(2m_2), (T-h_3-1)/m_3 \in \mathbb{N}$). There are $C^* =(S_1-h_1-1)(S_2-h_2-1)(T-h_3-1)/(4m_1 m_2 m_3) = O(\mathcal{R}/(4m_1 m_2 m_3))$ number of such blocks. In this manner, for $\vsbf_2$ in the non-consecutive blocks, $\{\widetilde{Z}_{ij, \vsbf_2}; \h+\1 \leq \vsbf_2 \leq \bar{\vsbf}\}$ forms a martingale difference sequence. Therefore, we have
\begin{align}
&\quad \frac{\max_{1\leq i, j \leq n} \left({\rm E} \left| \frac{2}{\mathcal{R}} \sum_{\vsbf_2 = \h+\1}^{\bar{\vsbf}} \widetilde{Z}_{ij, \vsbf_2}  \right|^p \right)^{1/p}}{\left( m_1 m_2 m_3 B_{S_1} B_{S_2} B_T /\mathcal{R} \right)^{1/2}} \nonumber \\
&\leq \max_{1\leq i, j \leq n} \frac{4}{4m_1 m_2 m_3} \sum_{a_1 = 1}^{2m_1} \sum_{a_2 = 1}^{2m_2} \sum_{a_3 = 1}^{m_3} \frac{\left({\rm E} \left| \left(\frac{\mathcal{R}}{4m_1 m_2 m_3}\right)^{-1/2} \sum_{l = 1}^{C^*} \widetilde{Z}_{ij, \vsbf_2+l\hbf}  \right|^p \right)^{1/p}}{(B_{S_1} B_{S_2} B_T )^{1/2}}  \leq A_4. \label{Ztilde}
\end{align}

Now, Let $\widetilde{\Gamma}_{ij, \vsbf_1} = {\rm E} (\widetilde{Y}_{i\0} \widetilde{Y}_{j\vsbf_1})$ and consider the first two terms in (\ref{sigmatilde}) together in view of the fact that for any $\vsbf_1 \in \mathbb{Z}^3$, 
$$\underset{\mathcal{R}\rightarrow \infty}{\lim}\frac{\max_{1\leq i, j \leq n} \left({\rm E} \left|  \mathcal{R}^{-1} \sum_{\vsbf=\1}^{\bar{\vsbf}} \widetilde{Y}_{i\vsbf} \widetilde{Y}_{j\vsbf+\vsbf_1} - \widetilde{\Gamma}_{ij, \vsbf_1} \right|^{p/2} \right)^{2/p}}{(m_1 m_2 m_3/\mathcal{R})^{1/2}} < \infty$$ due to the central limit theorem for $m$-dependent process (see, e.g., \citealt[Theorem 2.8.1]{Lehmann98}) and Conditions \ref{condA} and \ref{condB}. Note that in the first two terms, we are combining all the terms of the form $\widetilde{Y}_{i\vsbf_1} \widetilde{Y}_{j\vsbf_1 + \k}$ where $|k_1| \leq h_1, |k_2| \leq h_2, 
0 \leq k_3 \leq h_3$. Denoting by $\widetilde{W}_{ij, \mathcal{R}}/\mathcal{R}$ the last term in (\ref{sigmatilde}) and considering that $K_1, K_2, K_3$ are bounded with support $[-1, 1]$ because of Assumption \ref{AssSpec2}, we then have
\begin{align*}
&\quad  \frac{\max_{1\leq i, j \leq n} \left({\rm E} \left| \widetilde{\sigma}_{ij}^y(\thbf) - \widetilde{W}_{ij, \mathcal{R}}/\mathcal{R} - {\rm E} [\widetilde{\sigma}_{ij}^y(\thbf) - \widetilde{W}_{ij, \mathcal{R}}/\mathcal{R}] \right|^{p/2} \right)^{2/p}}{\left( m_1^2 m_2^2 m_3^2 B_{S_1} B_{S_2} B_T/\mathcal{R} \right)^{1/2}} \\
&\leq \frac{1}{(m_1 m_2 m_3 B_{S_1} B_{S_2} B_T)^{1/2}} \sum_{k_1 = 1}^{2m_1} \sum_{k_2 = 1}^{2m_2} \sum_{k_3 = 1}^{2m_3} |a_{\bar{\vsbf}, \k}| A_5\\
&\leq \frac{A_6 \min(8 m_1 m_2 m_3, B_{S_1} B_{S_2} B_T)}{(m_1 m_2 m_3 B_{S_1} B_{S_2} B_T)^{1/2}} A_5 \leq A_7.
\end{align*}
Now, due to (\ref{Ztilde}) and the above inequality, there exists $A_8>0$ such that
\begin{equation*}
 \frac{\max_{1\leq i, j \leq n} \left({\rm E} \left| \widetilde{\sigma}_{ij}^y(\thbf)  - {\rm E} [\widetilde{\sigma}_{ij}^y(\thbf)] \right|^{p/2} \right)^{2/p}}{\left( m_1^2 m_2^2 m_3^2 B_{S_1} B_{S_2} B_T/\mathcal{R} \right)^{1/2}} \leq A_8.
\end{equation*}

Define $U_{\bar{\vsbf}}^{[i]}(\thbf) = \sum_{\vsbf=\1}^{\bar{\vsbf}} Y_{i\vsbf} e^{-i\left\langle \vsbf, \thbf\right\rangle}$ and $\widetilde{U}_{\bar{\vsbf}}^{[i]}(\thbf)  = \sum_{\vsbf=\1}^{\bar{\vsbf}} \widetilde{Y}_{i\vsbf} e^{-i\left\langle \vsbf, \thbf\right\rangle}$. Lemma~\ref{lemYYtilde} then implies that 
\begin{equation}\nn
\left({\rm E} \left| U_{\bar{\vsbf}}^{[i]}(\thbf)   - \widetilde{U}_{\bar{\vsbf}}^{[i]}(\thbf)  \right|^p \right)^{1/p} \leq C_p \mathcal{R}^{1/2}  \varphi_{\m+\1, p}^{[i]}
 \ \text{and} \ 
\left({\rm E} \left| U_{\bar{\vsbf}}^{[i]}(\thbf)  \right|^p \right)^{1/p}
 +  \left({\rm E} \left| \widetilde{U}_{\bar{\vsbf}}^{[i]}(\thbf)  \right|^p \right)^{1/p} \leq 2 C_p \mathcal{R}^{1/2}  \varphi_{\0, p}^{[i]}.
\end{equation}

Denote by $\widehat{K}_1, \widehat{K}_2, \widehat{K}_3$ the Fourier transforms of $K_1, K_2, K_3$, respectively, we then have 
$$ \widehat{\sigma}_{ij}^y(\thbf) = \frac{1}{\mathcal{R}} \int_{\mathbb{R}^3} \widehat{K}_1(u_1)  \widehat{K}_2(u_2) \widehat{K}_3(u_3)     U_{\bar{\vsbf}}^{[i]}(\thbf^*_{\u})   \bar{U}_{\bar{\vsbf}}^{[j]}(\thbf^*_{\u}) {\rm d}\u,$$ 
where $\thbf^*_{\u} = (B_{S_1}^{-1} u_1 \ B_{S_2}^{-1} u_2 \ B_{T}^{-1} u_3) + \thbf$, $\u = (u_1 \ u_2 \ u_3)$ and $\bar{U}_{\bar{\vsbf}}^{[j]}$ denotes the conjugate of  $U_{\bar{\vsbf}}^{[j]}$. Hence, we have
\begin{align*}
&\quad \left({\rm E} \left| \widehat{\sigma}_{ij}^y(\thbf) - \widetilde{\sigma}_{ij}^y(\thbf) \right|^{p/2} \right)^{2/p}  \\
&\leq  \frac{1}{\mathcal{R}} \int_{\mathbb{R}^3} |\widehat{K}_1(u_1)  \widehat{K}_2(u_2) \widehat{K}_3(u_3)| \left({\rm E} \left| \left(U_{\bar{\vsbf}}^{[i]}(\thbf^*_{\u}) - \widetilde{U}_{\bar{\vsbf}}^{[i]}(\thbf^*_{\u})\right)  \bar{U}_{\bar{\vsbf}}^{[j]}(\thbf^*_{\u}) \right|^{p/2} \right)^{2/p} {\rm d}\u \\
&\quad +  \frac{1}{\mathcal{R}} \int_{\mathbb{R}^3} |\widehat{K}_1(u_1)  \widehat{K}_2(u_2) \widehat{K}_3(u_3)| \left({\rm E} \left| \widetilde{U}_{\bar{\vsbf}}^{[i]}(\thbf^*_{\u}) \left(\bar{U}_{\bar{\vsbf}}^{[j]}(\thbf^*_{\u}) - \bar{\widetilde{U}}_{\bar{\vsbf}}^{[j]}(\thbf^*_{\u})\right) \right|^{p/2} \right)^{2/p} {\rm d}\u \\
&\leq  \int_{\mathbb{R}^3} |\widehat{K}_1(u_1)  \widehat{K}_2(u_2) \widehat{K}_3(u_3)| C_p^2  \left(\varphi_{\m+\1, p}^{[i]} \varphi_{\0, p}^{[j]} + \varphi_{\0, p}^{[i]} \varphi_{\m+\1, p}^{[j]} \right) {\rm d}\u, \\
&\leq A_9  \left(\varphi_{\m+\1, p}^{[i]} \varphi_{\0, p}^{[j]} + \varphi_{\0, p}^{[i]} \varphi_{\m+\1, p}^{[j]} \right).
\end{align*}


Now, combining all the results from above, we have 
\begin{align*}
&\quad \max_{1\leq i, j \leq n}  \sup_{\thbf \in \Thbf} \left({\rm E} \left| \widehat{\sigma}_{ij}^y(\thbf)  - {\rm E} \widehat{\sigma}_{ij}^y(\thbf) \right|^{p/2} \right)^{2/p} \\
&\leq \max_{1\leq i, j \leq n}  \sup_{\thbf \in \Thbf} \left[ \left({\rm E} \left| \widehat{\sigma}_{ij}^y(\thbf)  - \widetilde{\sigma}_{ij}^y(\thbf) \right|^{p/2} \right)^{2/p} 
+ |{\rm E}\left( \widehat{\sigma}_{ij}^y(\thbf)  - \widetilde{\sigma}_{ij}^y(\thbf) \right)| 
+ \left({\rm E} \left| \widetilde{\sigma}_{ij}^y(\thbf)  - {\rm E} \widetilde{\sigma}_{ij}^y(\thbf) \right|^{p/2} \right)^{2/p} \right] \\
&\leq A_{10} \max_{1\leq i \leq n} \varphi_{\m+\1, p}^{[i]} + A_8 \left( m_1^2 m_2^2 m_3^2 B_{S_1} B_{S_2} B_T/\mathcal{R} \right)^{1/2}. 
\end{align*}
 Under Conditions \ref{condA} and \ref{condB}, there exists $A_{11} \in (0, \infty)$ such that $\varphi_{\m+\1, p}^{[i]} < A_{11} \rho_1^{m_1} \rho_2^{m_2} \rho_3^{m_3}$. Take $m_1 = -\log_{\rho_1}(B_{S_1}^{a_1}), m_2 = -\log_{\rho_2}(B_{S_2}^{a_2}), m_3 = -\log_{\rho_3}(B_T^{a_3})$ such that $a_1, a_2, a_3>0$ and $a_1 > (1/b_1 -1)/2, a_2 > (1/b_1^* -1)/2, a_3 > (1/b_1^{**} -1)/2$. This yields
$$\varphi_{\m+\1, p}^{[i]} < A_{11} B_{S_1}^{-a_1} B_{S_2}^{-a_2} B_T^{-a_3} = o(B_{S_1}^{(1-1/b_1)/2} B_{S_2}^{(1-1/b_1^*)/2} B_T^{(1-1/b_1^{**})/2}) = o((B_{S_1} B_{S_2} B_T/\mathcal{R})^{1/2}),$$ and $m_1 m_2 m_3 = A_{12} \log(B_{S_1}) \log(B_{S_2})  \log(B_T)$ for a positive constant $A_{12}$.
Hence, there exists $C>0$ such that
\begin{align*}
& \frac{\max_{1\leq i, j \leq n}  \sup_{\thbf \in \Thbf} \left({\rm E} \left| \widehat{\sigma}_{ij}^y(\thbf)  - {\rm E} \widehat{\sigma}_{ij}^y(\thbf) \right|^{p/2} \right)^{2/p}}{\left(\log(B_{S_1}) \log(B_{S_2})  \log(B_T))^2 B_{S_1} B_{S_2} B_T/\mathcal{R} \right)^{1/2}} \\&\leq 
\frac{A_{10} \max_{1\leq i \leq n} \varphi_{\m+\1, p}^{[i]} + A_8 \left( m_1^2 m_2^2 m_3^2 B_{S_1} B_{S_2} B_T/\mathcal{R} \right)^{1/2}}{A_{12}^{-1} \left( m_1^2 m_2^2 m_3^2 B_{S_1} B_{S_2} B_T/\mathcal{R} \right)^{1/2}} <C.
\end{align*}
This completes the proof.


\subsection{Proof of Proposition \ref{ThmSpecDen2}}

Noting that
\begin{equation*}
{\rm E}|\widehat{\sigma}_{ij}^y(\thbf)  - \sigma_{ij}^y(\thbf)|^2 \leq
 2\left[ {\rm E}|\widehat{\sigma}_{ij}^y(\thbf)  - {\rm E} \widehat{\sigma}_{ij}^y(\thbf)|^2  
+ {\rm E}|{\rm E} \widehat{\sigma}_{ij}^y(\thbf) - \sigma_{ij}^y(\thbf) |^2 \right],
\end{equation*}
where, because of Proposition \ref{ThmSpecDen1} when $p=4$, the first term on the right-hand side satisfies
\begin{equation}\label{VarSpec}
\max_{1\leq i, j \leq n} \sup_{\thbf \in \Thbf} {\rm E}|\widehat{\sigma}_{ij}^y(\thbf)  - {\rm E} \widehat{\sigma}_{ij}^y(\thbf)|^2  \leq  C^2 (\log(B_{S_1}) \log(B_{S_2}) \log(B_{T}))^2 B_{S_1} B_{S_2} B_{T}/\mathcal{R},
\end{equation}
it only remains to deal with the second term.

Since 
\begin{align*}
&  {\rm E} \widehat{\sigma}_{ij}^y(\thbf) \\
&= \frac{1}{\mathcal{R}} \sum_{\vsbf_1 = \1}^{\bar{\vsbf}} \sum_{\vsbf_2 = \1}^{\bar{\vsbf}}  \Gamma_{ij, \vsbf_1 - \vsbf_2}^y  K_1\left(\frac{s_{11}-s_{21}}{B_{S_1}} \right) K_2\left(\frac{s_{12}-s_{22}}{B_{S_2}} \right) K_3\left(\frac{t_{1}-t_{2}}{B_{T}} \right) e^{-i \left\langle \vsbf_1 - \vsbf_2,  \thbf\right\rangle} \\
&=  \frac{1}{\mathcal{R}} \sum_{s_1 = 1-S_1}^{S_1-1} \sum_{s_2 = 1-S_2}^{S_2-1}  \sum_{t = 1-T}^{T-1}  (S_1 - |s_1|)  (S_2 - |s_2|)  (T - |t|) \Gamma_{ij, \vsbf} 
K_1\left(\frac{s_1}{B_{S_1}} \right) 
K_2\left(\frac{s_2}{B_{S_2}} \right) K_3\left(\frac{t}{B_{T}} \right) e^{-i \left\langle \vsbf,  \thbf\right\rangle}  \\ 
&= \sum_{s_1 = 1-S_1}^{S_1-1} \sum_{s_2 = 1-S_2}^{S_2-1}  \sum_{t = 1-T}^{T-1} \left(1-\frac{|s_1|}{S_1}\right) \left(1-\frac{|s_2|}{S_2}\right) \left(1-\frac{|t|}{T}\right) \Gamma_{ij, \vsbf} 
K_1\left(\frac{s_1}{B_{S_1}} \right) K_2\left(\frac{s_2}{B_{S_2}} \right) K_3\left(\frac{t}{B_{T}} \right) e^{-i \left\langle \vsbf,  \thbf\right\rangle},
\end{align*}
we have
\begin{align*}
& \left|  \left({\rm E} \widehat{\sigma}_{ij}^y(\thbf) - \sigma_{ij}^y(\thbf) \right) \right|  \\
&\leq  \left| \sum_{s_1 = 1-S_1}^{S_1-1} \sum_{s_2 = 1-S_2}^{S_2-1}  \sum_{t = 1-T}^{T-1} \left( K_1\left(\frac{s_1}{B_{S_1}} \right) K_2\left(\frac{s_2}{B_{S_2}} \right) K_3\left(\frac{t}{B_{T}} \right) - 1\right) \Gamma_{ij, \vsbf} e^{-i \left\langle \vsbf,  \thbf\right\rangle} \right| \\
& + \left| \sum_{s_1 = 1-S_1}^{S_1-1} \sum_{s_2 = 1-S_2}^{S_2-1}  \sum_{t = 1-T}^{T-1} \left[  \left(1-\frac{|s_1|}{S_1}\right) \left(1-\frac{|s_2|}{S_2}\right) \left(1-\frac{|t|}{T}\right) - 1\right] \Gamma_{ij, \vsbf} \right. \\
&\quad \times \left. K_1\left(\frac{s_1}{B_{S_1}} \right) K_2\left(\frac{s_2}{B_{S_2}} \right) K_3\left(\frac{t}{B_{T}} \right) e^{-i \left\langle \vsbf,  \thbf\right\rangle} \right| \\
& + \left| \sum_{|s_1| \geq S_1} \sum_{|s_2| \geq S_2} \sum_{|t| \geq T} \Gamma_{ij, \vsbf}  e^{-i \left\langle \vsbf,  \thbf\right\rangle} \right| \\
& =: \mathcal{C}_{ij, \bar{\vsbf}}^{[1]}(\thbf) +  \mathcal{C}_{ij, \bar{\vsbf}}^{[2]}(\thbf) + \mathcal{C}_{ij, \bar{\vsbf}}^{[3]}(\thbf).
\end{align*}
Under Assumptions~\ref{AssSpec2} and Condition \ref{condC}, there exists some $A_3, A_4 >0$ such that, for all $i, j$ and $\thbf$,
\begin{align*}
\mathcal{C}_{ij, \bar{\vsbf}}^{[1]}(\thbf)
&\leq A_3 \sum_{s_1 = -\infty}^{\infty} \sum_{s_2 = -\infty}^{\infty} \sum_{t = -\infty}^{\infty} \rho_4^{|s_1|} \rho_5^{|s_2|} \rho_6^{|t|} 
\l\{
\left(\frac{|s_1|}{B_{S_1}}\right)^{\vartheta_1} +\left(\frac{|s_2|}{B_{S_2}}\right)^{\vartheta_2} +\left(\frac{|t|}{B_{T}}\right)^{\vartheta_3}\r.\nn\\
&\l.
+\left(\frac{|s_1|}{B_{S_1}}\right)^{\vartheta_1} \left(\frac{|s_2|}{B_{S_2}}\right)^{\vartheta_2} +
\left(\frac{|s_1|}{B_{S_1}}\right)^{\vartheta_1} \left(\frac{|t|}{B_{T}}\right)^{\vartheta_3}+
\left(\frac{|s_2|}{B_{S_2}}\right)^{\vartheta_2} \left(\frac{|t|}{B_{T}}\right)^{\vartheta_3}+
\left(\frac{|s_1|}{B_{S_1}}\right)^{\vartheta_1}\left(\frac{|s_2|}{B_{S_2}}\right)^{\vartheta_2} \left(\frac{|t|}{B_{T}}\right)^{\vartheta_3}
\r\} \\
&\leq A_4 \l\{B_{S_1}^{-\vartheta_1} +B_{S_2}^{-\vartheta_2}+ B_{T}^{-\vartheta_3}\r\}. 
\end{align*}
Moreover, there exist $A_5, ..., A_{13} >0$ such that, for all $i, j$ and $\thbf$,
\begin{align*}
\mathcal{C}_{ij, \bar{\vsbf}}^{[2]}(\thbf)
&\leq A_5 \sum_{s_1 = -\infty}^{\infty} \sum_{s_2 = -\infty}^{\infty} \sum_{t = -\infty}^{\infty} \left( \frac{|s_1|}{S_1} + \frac{|s_2|}{S_2} + \frac{|t|}{T} +\frac{|s_1 s_2|}{S_1 S_2} +  \frac{|s_1 t|}{S_1 T} + \frac{|s_2 t|}{S_2 T} + \frac{|s_1 s_2 t|}{S_1 S_2 T}\right)  \rho_4^{|s_1|} \rho_5^{|s_2|} \rho_6^{|t|}  \\
&\leq  A_6 S_1^{-1} + A_7 S_2^{-1} + A_8 T^{-1} + A_9 (S_1 S_2)^{-1} + A_{10} (S_1 T)^{-1} + A_{11} (S_2 T)^{-1} + A_{12} (S_1 S_2 T)^{-1} \\
&\leq A_{13} \max(S_1^{-1}, S_2^{-1}, T^{-1})
\end{align*}
and
\begin{align*}
\mathcal{C}_{ij, \bar{\vsbf}}^{[3]}(\thbf) \leq A_2 \sum_{|s_1| \geq S_1} \sum_{|s_2| \geq S_2} \sum_{|t| \geq T} \rho_4^{|s_1|} \rho_5^{|s_2|} \rho_6^{|t|} = o(\max(S_1^{-1}, S_2^{-1}, T^{-1})).
\end{align*}
Therefore, 
$$\max_{1\leq i, j \leq n} \sup_{\thbf \in \Thbf} \left| {\rm E} \widehat{\sigma}_{ij}^y(\thbf) - \sigma_{ij}^y(\thbf)  \right|^2  = O\left\lbrace \max\left[ B_{S_1} B_{S_2} B_T/\mathcal{R}, B_{S_1}^{-2\vartheta_1}, B_{S_2}^{-2\vartheta_2}, B_T^{-2\vartheta_3}\right] \right\rbrace,$$
which, together with (\ref{VarSpec}), completes the proof.

\section{Proof of results of Section 6}\label{APPE} 

\subsection{Proof of Theorem~\ref{Thmsigmax}}

We first prove the following three propositions.
 
\begin{proposition}\label{lukagu}
Let Assumption \ref{Ass.A}, \ref{Ass.B}, \ref{Ass.MA}, and \ref{Ass.coef1} hold. Then, there exists a finite $C>0$ such that $\sup_{\thbf \in \Thbf}\sup_{n\in\mathbb N} \lambda_{n1}^\xi(\bth) \le C$.
\end{proposition}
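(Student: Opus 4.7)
The plan is to express $\bm\Sigma_n^\xi(\bm\theta)$ as the Gram matrix of the (infinite-column) transfer matrix of the filter in Assumption \ref{Ass.MA}(ii), and then dominate $\lambda_{n1}^\xi(\bm\theta)$ by the induced $\ell^\infty$ (row-sum) norm. The key technical input will be the two summability conditions built into Assumption \ref{Ass.coef1}(ii)---one over $j$ and one over $\ell$---which together will yield a bound independent of both $n$ and $\bm\theta$.

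First, I would use that $\bm\varepsilon$ is an i.i.d.\ orthonormal infinite-dimensional white noise rf, so its spectral density is the identity. Setting $b_{\ell j}^\xi(\bm\theta)=\sum_{\bm\kappa}\beta_{\ell j,\bm\kappa}e^{-i\langle\bm\kappa,\bm\theta\rangle}$, the $(\ell,\ell')$ entry of the spectral density of $\bm\xi_n$ is then $\sigma_{\ell\ell'}^\xi(\bm\theta)=\sum_{j=1}^\infty b_{\ell j}^\xi(\bm\theta)\,\overline{b_{\ell' j}^\xi(\bm\theta)}$. The geometric decay of $\{\beta_{\ell j,\bm\kappa}\}$ from Assumption \ref{Ass.coef1}(ii) both legitimises all the series involved and, after summing two two-sided and one one-sided geometric series in $\bm\kappa$, gives the uniform bound $|b_{\ell j}^\xi(\bm\theta)|\le M\,A_{\ell j}^\xi$, where $M=\tfrac{(1+\rho_1^\xi)(1+\rho_2^\xi)}{(1-\rho_1^\xi)(1-\rho_2^\xi)(1-\rho_3^\xi)}<\infty$ is independent of $\ell,j,n$, and $\bm\theta$.

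Next, since $\bm\Sigma_n^\xi(\bm\theta)$ is Hermitian, its spectral norm is dominated by its induced $\ell^\infty$ norm, so
\[
\lambda_{n1}^\xi(\bm\theta)\le\max_{1\le\ell\le n}\sum_{\ell'=1}^n|\sigma_{\ell\ell'}^\xi(\bm\theta)|\le M^2\max_{1\le\ell\le n}\sum_{j=1}^\infty A_{\ell j}^\xi\left(\sum_{\ell'=1}^n A_{\ell' j}^\xi\right)\le M^2(A^\xi)^2,
\]
after interchanging the two outer sums (justified by absolute convergence) and then applying first $\sum_{\ell'=1}^\infty A_{\ell' j}^\xi\le A^\xi$ and second $\sum_{j=1}^\infty A_{\ell j}^\xi\le A^\xi$, both from Assumption \ref{Ass.coef1}(ii). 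Taking the supremum over $\bm\theta\in\bm\Theta$ and over $n\in\mathbb N$ then produces the claim with $C=M^2(A^\xi)^2$.

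The only delicate point is that closing the estimate uses \emph{both} summability conditions on $\{A_{\ell j}^\xi\}$ at distinct stages; using only row-summability or only column-summability would leave a residual factor growing with $n$. Apart from this bookkeeping, no genuine analytic obstacle is anticipated.
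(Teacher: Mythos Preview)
Your proof is correct and follows essentially the same route as the paper: write $\sigma_{\ell\ell'}^\xi(\bm\theta)=\sum_j b_{\ell j}^\xi(\bm\theta)\overline{b_{\ell' j}^\xi(\bm\theta)}$, bound $|b_{\ell j}^\xi(\bm\theta)|\le M A_{\ell j}^\xi$ via the geometric decay, dominate the spectral norm by the maximum absolute row sum, and close with the two summability conditions on $\{A_{\ell j}^\xi\}$. The only cosmetic difference is that the paper passes through $\|\cdot\|_F^2\le\|\cdot\|_1\|\cdot\|_\infty$ (with $\|\cdot\|_1=\|\cdot\|_\infty$ by Hermitianity), whereas you invoke $\|\cdot\|\le\|\cdot\|_\infty$ directly; these are equivalent standard facts.
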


\begin{proof}
Let $\sigma_{ij}^{\xi}(\thbf)$ denote the cross-spectral density of $\xi_i$ and $\xi_j$. Then, by Assumptions~\ref{Ass.MA} and \ref{Ass.coef1}, for all $j=1,\ldots, n$ and all $\thbf\in\Thbf$,
\begin{align}
\sum_{i=1}^\infty \vert\sigma_{ij}^{\xi}(\thbf)\vert&\le \sum_{i=1}^\infty\sum_{\ell=1}^\infty \l\vert\beta_{i\ell}(\thbf)\bar{\beta}_{j\ell}(\thbf)
\r\vert\le 
\sum_{i=1}^\infty\sum_{\ell=1}^\infty
\l(\sum_{\kappa_1,\kappa_2\in\mathbb Z}\sum_{\kappa_3=0}^\infty \vert \beta_{i\ell,\kbf}\vert \r)\l(\sum_{\kappa_1,\kappa_2\in\mathbb Z}\sum_{\kappa_3=0}^\infty \vert \bar{\beta}_{j\ell,\kbf}\vert \r)\nn\\
&\le\sum_{i=1}^\infty\sum_{\ell=1}^\infty
\l(\sum_{\kappa_1,\kappa_2\in\mathbb Z}\sum_{\kappa_3=0}^\infty A^\xi_{i\ell} {\rho_1^{\xi|\kappa_1|}} {\rho_2^{\xi|\kappa_2|}} {\rho_3^{\xi\kappa_3}}\r)
\l(\sum_{\kappa_1,\kappa_2\in\mathbb Z}\sum_{\kappa_3=0}^\infty A^\xi_{j\ell} {\rho_1^{\xi|\kappa_1|}} {\rho_2^{\xi|\kappa_2|}} {\rho_3^{\xi\kappa_3}}\r)\nn\\
&\le (A^\xi)^2C_{{\rho_1^\xi}{\rho_2^\xi}{\rho_3^\xi}},\nn
\end{align}
for some finite $C_{{\rho_1^\xi}{\rho_2^\xi}{\rho_3^\xi}}>0$ depending only on ${\rho_1^\xi}$, ${\rho_2^\xi}$, and ${\rho_3^\xi}$. This yields that
$$
\sup_{\thbf\in\Thbf}\max_{j=1, \ldots, n} \sum_{i=1}^{n} \vert \sigma_{ij}^{\xi}(\thbf) \vert  \leq \sup_{\thbf\in\Thbf} \max_{j=1, \ldots, n} \sum_{i=1}^{\infty} \vert  \sigma_{ij}^{\xi}(\thbf) \vert \leq(A^\xi)^2C_{{\rho_1^\xi}{\rho_2^\xi}{\rho_3^\xi}}.
$$
Let $\bm\Sigma_n^\xi(\thbf)$ be the spectral density matrix of $\bm\xi_n$. By H\"older inequality, for all $n\in\mathbb N$ and all $\thbf\in\Thbf$,
$$
[\lambda_{n1}^\xi(\bth)]^2= \Vert \bm\Sigma_n^\xi(\thbf)\Vert^2\le\Vert \bm\Sigma_n^\xi(\thbf)\Vert^2_F\le \Vert \bm\Sigma_n^\xi(\thbf)\Vert_1\Vert \bm\Sigma_n^\xi(\thbf)\Vert_\infty =\Vert \bm\Sigma_n^\xi(\thbf)\Vert_1^2=\left[ \max_{j=1, \ldots, n} \sum_{i=1}^{n} \vert \sigma_{ij}^{\xi}(\thbf) \vert \right]^2,
$$
where $\Vert\cdot\Vert$ denotes the spectral norm, $\Vert\cdot\Vert_F$ denotes the Frobenius norm, and $\Vert\cdot\Vert_1$ and $\Vert\cdot\Vert_\infty$ denote the column-wise and row-wise norms, respectively.
It follows that,
$$
\sup_{\thbf\in\Thbf}\sup_{n\in\mathbb N}\lambda_{n1}^\xi(\bth) \leq \sup_{\thbf\in\Thbf}\sup_{n\in\mathbb N}  \max_{j=1, \ldots, n} \sum_{i=1}^{n} \vert \sigma_{ij}^{\xi}(\thbf) \vert\le 
(A^\xi)^2C_{{\rho_1^\xi}{\rho_2^\xi}{\rho_3^\xi}}.
$$
This completes the proof.
\end{proof}

\begin{proposition} \label{Our_FHLZ17}
Let Assumptions~\ref{Ass.A}, \ref{Ass.MA}, \ref{Ass.coef1}, and \ref{Ass.moments} hold. Then, for all $\ell \in \mathbb{N}$, there exist a finite $p>4$, and $\widetilde{\rho}_1, \widetilde{\rho}_2, \widetilde{\rho}_3 \in (0, 1)$, and $\widetilde{A}_1,\widetilde{A}_2 >0$, such that
\begin{enumerate}
\item [(i)] ${\rm E}\left(|x_{\ell \vsbf}|^p\right) \leq \widetilde{A}_1$;
\item [(ii)] $\delta^{[\ell]}_{\vsbf, p}(\x) \leq \widetilde{A}_2 \widetilde{\rho}_1^{|s_1|} \widetilde{\rho}_2^{|s_2|} \widetilde{\rho}_3^{|t|}.$
\end{enumerate}
\end{proposition}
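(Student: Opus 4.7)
The plan is to exploit the linear moving-average representations in Assumption \ref{Ass.MA} together with the geometric decay of the coefficients in Assumption \ref{Ass.coef1} and the uniformly bounded $p$-th moments in Assumption \ref{Ass.moments}. The key observation is that everything can be done by Minkowski's inequality alone---no martingale or Rosenthal-type inequality is needed---because what we ultimately control is a sum of random variables with uniformly bounded $L_p$ norm and coefficients that sum absolutely (both over $\boldsymbol\kappa$ and over the innovation index $j$).

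I would begin by writing $x_{\ell\vsbf}=\chi_{\ell\vsbf}+\xi_{\ell\vsbf}$ and using $\Vert x_{\ell\vsbf}\Vert_p\le \Vert \chi_{\ell\vsbf}\Vert_p + \Vert \xi_{\ell\vsbf}\Vert_p$ to reduce part (i) to bounding each component separately. For the common component, apply Minkowski to the representation in Assumption \ref{Ass.MA}(i) to obtain
\[
\bigl({\rm E}|\chi_{\ell\vsbf}|^p\bigr)^{1/p}\le \bar A^{1/p}\sum_{\kappa_1,\kappa_2\in\mathbb Z}\sum_{\kappa_3\ge0}\sum_{j=1}^q|c_{\ell j,\kbf}|
\le \bar A^{1/p}A^\chi\!\!\sum_{\kappa_1,\kappa_2\in\mathbb Z}\!\sum_{\kappa_3\ge 0}\rho_1^{\chi|\kappa_1|}\rho_2^{\chi|\kappa_2|}\rho_3^{\chi\kappa_3},
\]
where the inner sum over $j$ is controlled by $A^\chi$ using part (i) of Assumption \ref{Ass.coef1}, and the outer triple sum is a finite geometric series. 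The same argument applied to $\xi_{\ell\vsbf}$ (using the infinite-dimensional Minkowski inequality together with Assumption \ref{Ass.coef1}(ii)) yields a uniform bound, proving part (i) with an explicit $\widetilde A_1$ depending only on $\bar A$, $A^\chi$, $A^\xi$, and the six contraction rates.

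For part (ii), the central observation is that in the MA representations only the coefficients indexed by $\kbf=\vsbf$ multiply the innovation at position $\bm 0$. Hence, splitting again as $x_{\ell\vsbf}-x^*_{\ell\vsbf}=(\chi_{\ell\vsbf}-\chi^*_{\ell\vsbf})+(\xi_{\ell\vsbf}-\xi^*_{\ell\vsbf})$, if $t\ge 0$ (so that $\vsbf\in\mathbb Z^2\times\mathbb N_0$) we have
\[
\chi_{\ell\vsbf}-\chi^*_{\ell\vsbf}=\sum_{j=1}^q c_{\ell j,\vsbf}(v_{j\bm 0}-\widetilde v_{j\bm 0}),\qquad \xi_{\ell\vsbf}-\xi^*_{\ell\vsbf}=\sum_{j=1}^\infty\beta_{\ell j,\vsbf}(\varepsilon_{j\bm 0}-\widetilde\varepsilon_{j\bm 0}),
\]
while for $t<0$ no innovation at $\bm 0$ enters either representation (the time sum starts at $\kappa_3=0$), so both differences vanish and the bound is trivial. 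Applying Minkowski to the finite sum in $\chi$ and to the $\ell^1$-summable series in $\xi$, together with $\Vert v_{j\bm 0}-\widetilde v_{j\bm 0}\Vert_p\le 2\bar A^{1/p}$, yields for $t\ge 0$
\[
\delta_{\vsbf,p}^{[\ell]}\le 2\bar A^{1/p}\bigl\{A^\chi\rho_1^{\chi|s_1|}\rho_2^{\chi|s_2|}\rho_3^{\chi t}+A^\xi\rho_1^{\xi|s_1|}\rho_2^{\xi|s_2|}\rho_3^{\xi t}\bigr\}.
\]
Setting $\widetilde\rho_k=\max(\rho_k^\chi,\rho_k^\xi)\in(0,1)$ and $\widetilde A_2=2\bar A^{1/p}(A^\chi+A^\xi)$ absorbs both terms into a single geometric bound $\widetilde A_2\widetilde\rho_1^{|s_1|}\widetilde\rho_2^{|s_2|}\widetilde\rho_3^{|t|}$, which is valid for all $\vsbf$.

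There is no real obstacle in this argument; the only delicate point is a bookkeeping one, namely verifying that the two summability conditions on the coefficient envelopes $A^\chi_{\ell j}$ and $A^\xi_{\ell j}$ in Assumption \ref{Ass.coef1} give the bounds $\sum_j|c_{\ell j,\kbf}|\le A^\chi\rho^{\chi|\kappa_1|}_1\rho^{\chi|\kappa_2|}_2\rho^{\chi\kappa_3}_3$ and the analogous bound for $\beta$, uniformly in $\ell$. Once this is in hand, the rest is a routine application of Minkowski's inequality and summation of three independent geometric series.
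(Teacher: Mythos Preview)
Your proof is correct and follows essentially the same route as the paper: split $x_{\ell\vsbf}=\chi_{\ell\vsbf}+\xi_{\ell\vsbf}$, apply Minkowski to each MA representation, use the geometric bounds of Assumption \ref{Ass.coef1} for the coefficient sums, and for part (ii) observe that only the term $\kbf=\vsbf$ survives in the difference (vanishing when $t<0$). The paper's proof is identical in structure; you are in fact slightly more careful with the constant, writing $2\bar A^{1/p}$ where the paper writes $\bar A$.
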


\begin{proof}
The Minkowski inequality yields
$$\left\lbrace {\rm E}\left(\vert x_{\ell \vsbf}\vert^p\right)\right\rbrace^{1/p} \leq \left\lbrace {\rm E}\left(\vert \chi_{\ell \vsbf}\vert^p\right)\right\rbrace^{1/p} + \left\lbrace {\rm E}\left(\vert \xi_{\ell \vsbf}\vert^p\right)\right\rbrace^{1/p},$$
so it suffices to bound the two terms on the RHS. Using again the Minkowski inequality and Assumptions~\ref{Ass.MA}, \ref{Ass.coef1} and \ref{Ass.moments}, there is a constant $\widetilde{A}_1 >0$ such that
\begin{align*}
\left\lbrace {\rm E}\left(\vert \xi_{\ell \vsbf}\vert^p\right)\right\rbrace^{1/p} &= \left\lbrace {\rm E}\left(\left\vert \sum_{\kappa_1,\kappa_2\in\mathbb Z}  \sum_{\kappa_3 =0}^\infty  \sum_{j =1}^\infty \beta_{\ell j, \kbf} \varepsilon_{j, \vsbf-\kbf} \right\vert^p\right)\right\rbrace^{1/p} \\
&\leq  \sum_{\kappa_1,\kappa_2\in\mathbb Z}  \sum_{\kappa_3 =0}^\infty  \sum_{j =1}^\infty \left\lbrace {\rm E}\left(\left\vert  \beta_{\ell j, \kbf} \varepsilon_{j, \vsbf-\kbf} \right\vert^p\right)\right\rbrace^{1/p} \\
&\leq  \sum_{\kappa_1,\kappa_2\in\mathbb Z}  \sum_{\kappa_3 =0}^\infty  \sum_{j =1}^\infty \vert \beta_{\ell j, \kbf} \vert \left\lbrace {\rm E}\left(\left\vert   \varepsilon_{j, \vsbf-\kbf} \right\vert^p\right)\right\rbrace^{1/p}\\
& \leq A^\xi  \widetilde A_{{\rho_1^\xi}{\rho_2^\xi}{\rho_3^\xi}},\nn
\end{align*}
for some finite $\widetilde A_{{\rho_1^\xi}{\rho_2^\xi}{\rho_3^\xi}}>0$ depending only on ${\rho_1^\xi}$, ${\rho_2^\xi}$, and ${\rho_3^\xi}$.
Similarly, we have $\left\lbrace {\rm E}\left(\vert \chi_{\ell \vsbf}\vert^p\right)\right\rbrace^{1/p} \leq {A}^\chi\widetilde A_{{\rho_1^\chi}{\rho_2^\chi}{\rho_3^\chi}}$ 
for some finite $\widetilde A_{{\rho_1^\chi}{\rho_2^\chi}{\rho_3^\chi}}>0$ depending only on ${\rho_1^\chi}$, ${\rho_2^\chi}$, and ${\rho_3^\chi}$.
Letting $\widetilde{A}_1 = (A^\xi  \widetilde A_{{\rho_1^\xi}{\rho_2^\xi}{\rho_3^\xi}} + {A}^\chi\widetilde A_{{\rho_1^\chi}{\rho_2^\chi}{\rho_3^\chi}})^p$ yields ${\rm E}\left(|x_{\ell \vsbf}|^p\right) \leq \widetilde{A}_1$. This proves part (i).

Turning to part (ii), for $t<0$, the inequality holds trivially since $\delta^{[\ell]}_{\vsbf, p}(\x) = 0$. For $t \geq 0$,
notice that
\begin{align*}
\delta^{[\ell]}_{\vsbf, p}(\x) &= \{ {\rm E} \vert x_{\ell\vsbf} - x^*_{\ell\vsbf}\vert^p \}^{1/p} \\
&\leq \{ {\rm E} \vert \chi_{\ell\vsbf} - \chi^*_{\ell\vsbf}\vert^p \}^{1/p} + \{ {\rm E} \vert\xi_{\ell\vsbf} - \xi^*_{\ell\vsbf}\vert^p \}^{1/p} \\
&= \delta^{[\ell]}_{\vsbf, p}(\bchi) + \delta^{[\ell]}_{\vsbf, p}(\bfxi).
\end{align*}
Assumptions~\ref{Ass.MA}, \ref{Ass.coef1} and \ref{Ass.moments} entail that
\begin{align*}
\delta^{[\ell]}_{\vsbf, p}(\bfxi) &= \left\lbrace {\rm E} \left\vert \sum_{j =1}^\infty \beta_{\ell j, \vsbf} (\varepsilon_{j, \0} -\varepsilon_{j, \0}^*) \right\vert^p \right\rbrace^{1/p}  \\
&\leq  \sum_{j =1}^\infty \vert \beta_{\ell j, \vsbf}\vert  \left\lbrace {\rm E} \left\vert(\varepsilon_{j, \0} -\varepsilon_{j, \0}^*) \right\vert^p \right\rbrace^{1/p} \leq  
\bar A A^\xi{\rho_1^\xi}^{|s_1|} {\rho_2^\xi}^{|s_2|}  {\rho_3^\xi}^{t}.
\end{align*}
Similarly, we have that $\delta^{[\ell]}_{\vsbf, p}(\bchi) \leq \bar AA^\chi {\rho_1^\chi}^{|s_1|} {\rho_2^\chi}^{|s_2|}  {\rho_3^\chi}^{t}$. The result then follows by letting 
$\widetilde A_2=\bar A\max(A^\chi,A^\xi)$
and $\widetilde{\rho}_{h} = \max({\rho_{h}^{\xi}}, {\rho_{h}^{\chi}})$, $h = 1, 2, 3$.
\end{proof}

\begin{proposition} \label{Our_FHLZ17bis}
Let Assumptions~\ref{Ass.A}, \ref{Ass.MA}, \ref{Ass.coef1}, and \ref{Ass.moments} hold. Then, letting $\Gamma_{ij, \vsbf}^x = {\rm E} (x_{i\vsbf} x_{j\0})$,  for all $i, j \in  \mathbb{N}$, there exist finite $\widetilde{\rho}_4, \widetilde{\rho}_5, \widetilde{\rho}_6 \in (0, 1)$ and $\widetilde{A}^*>0$, such that $\vert \Gamma_{ij, \vsbf}^x \vert \leq \widetilde{A}^* \widetilde{\rho}_4^{|s_1|} \widetilde{\rho}_5^{|s_2|} \widetilde{\rho}_6^{|t|}$.
\end{proposition}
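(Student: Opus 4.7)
\textbf{Proof plan for Proposition \ref{Our_FHLZ17bis}.}

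The plan is to exploit the orthogonality of the common and idiosyncratic components to split the autocovariance, and then use the geometric decay of the MA coefficients to control each piece separately.

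First, I would observe that Assumption \ref{Ass.MA}(iii) together with parts (i) and (ii) of the same assumption implies $\text{Cov}(\chi_{i\vsbf},\xi_{j\vsbf'})=0$ for all $i,j\in\mathbb N$ and all $\vsbf,\vsbf'\in\mathbb Z^3$. Consequently, for any $\vsbf=(s_1\ s_2\ t)^\top\in\mathbb Z^3$,
\[
\Gamma_{ij,\vsbf}^x = \text{E}(\chi_{i\vsbf}\chi_{j\0}) + \text{E}(\xi_{i\vsbf}\xi_{j\0}) =: \Gamma_{ij,\vsbf}^\chi+\Gamma_{ij,\vsbf}^\xi,
\]
so it suffices to prove that each of $\Gamma_{ij,\vsbf}^\chi$ and $\Gamma_{ij,\vsbf}^\xi$ is bounded by an expression of the form $C\,\widetilde\rho_4^{|s_1|}\widetilde\rho_5^{|s_2|}\widetilde\rho_6^{|t|}$ uniformly in $i,j$.

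Next, plugging in the one-sided-in-time MA representations of Assumption \ref{Ass.MA}(i)--(ii) and using that $\bm v$ and $\bm\varepsilon$ are i.i.d.~orthonormal rfs across $\mathbb Z^3$, the cross terms collapse to diagonal index and a single spatio-temporal shift:
\[
\Gamma_{ij,\vsbf}^\chi =\sum_{h=1}^q\sum_{\kbf\in\mathbb Z^2\times\mathbb N_0} c_{ih,\kbf}\, c_{jh,\kbf-\vsbf},\qquad
\Gamma_{ij,\vsbf}^\xi =\sum_{h=1}^\infty\sum_{\kbf\in\mathbb Z^2\times\mathbb N_0} \beta_{ih,\kbf}\,\beta_{jh,\kbf-\vsbf},
\]
where in each sum the index $\kbf-\vsbf$ must also lie in $\mathbb Z^2\times\mathbb N_0$, forcing $\kappa_3\ge\max(0,t)$.

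The main technical step is then the explicit bounding of the resulting triple series. Using Assumption \ref{Ass.coef1}, $|c_{ih,\kbf}c_{jh,\kbf-\vsbf}|\le A^\chi_{ih}A^\chi_{jh}\,(\rho_1^\chi)^{|\kappa_1|+|\kappa_1-s_1|}(\rho_2^\chi)^{|\kappa_2|+|\kappa_2-s_2|}(\rho_3^\chi)^{\kappa_3+(\kappa_3-t)}$, and likewise for the $\beta$'s. A direct computation splitting the summation region and using a geometric series gives, for any $\rho\in(0,1)$ and $s\in\mathbb Z$, a bound of the form $\sum_{\kappa\in\mathbb Z}\rho^{|\kappa|+|\kappa-s|}\le (|s|+1)\rho^{|s|}+2\rho^{|s|+2}/(1-\rho^2)$, and in turn
\[
\sum_{\kappa\in\mathbb Z}\rho^{|\kappa|+|\kappa-s|}\le C_\rho\,\widetilde\rho^{\,|s|}
\]
for any $\widetilde\rho\in(\rho,1)$ (the linear factor is absorbed by moving to a strictly larger rate). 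For the time dimension the constraint $\kappa_3\ge\max(0,t)$ yields $\sum_{\kappa_3\ge\max(0,t)}(\rho_3^\chi)^{\kappa_3+(\kappa_3-t)}\le (\rho_3^\chi)^{|t|}/(1-(\rho_3^\chi)^2)$ in both cases $t\ge0$ and $t<0$.

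Finally I would sum over $h$ using $\sum_{h=1}^q A^\chi_{ih}A^\chi_{jh}\le(A^\chi)^2$ (since $A^\chi_{jh}\le A^\chi$ for every $h$), and analogously $\sum_{h=1}^\infty A^\xi_{ih}A^\xi_{jh}\le(A^\xi)^2$ thanks to the two-sided summability in Assumption \ref{Ass.coef1}(ii). Combining the bounds for $\Gamma_{ij,\vsbf}^\chi$ and $\Gamma_{ij,\vsbf}^\xi$ and taking $\widetilde\rho_4,\widetilde\rho_5,\widetilde\rho_6$ strictly above the common rate $\max(\rho_l^\chi,\rho_l^\xi)$, $l=1,2,3$, and $\widetilde A^*$ as a suitable multiple of $(A^\chi)^2+(A^\xi)^2$, yields the claim. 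The only mildly delicate point is the polynomial factor $(|s|+1)$ arising from the diagonal band in the two-sided spatial convolution: this is the obstacle to keeping the original rates, and must be absorbed by a strictly larger geometric rate, hence the use of $\widetilde\rho_l>\max(\rho_l^\chi,\rho_l^\xi)$ rather than equality.
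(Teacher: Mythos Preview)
Your proposal is correct and follows essentially the same route as the paper: decompose $\Gamma_{ij,\vsbf}^x=\Gamma_{ij,\vsbf}^\chi+\Gamma_{ij,\vsbf}^\xi$ via orthogonality, insert the MA representations, collapse the expectation using the i.i.d.\ orthonormal innovations, and bound the resulting convolution $\sum_{\kbf}|c_{ih,\kbf}c_{jh,\kbf-\vsbf}|$ (resp.\ $\beta$'s) via the coefficient decay in Assumption~\ref{Ass.coef1}.

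The only substantive difference is in the bookkeeping of the spatial convolution. The paper factors $\rho^{|\kappa|+|\kappa-s|}$ directly as $\rho^{|s|}\rho^{2|\kappa|}$ and sums, arriving at the original rates $\widetilde\rho_l=\max(\rho_l^\chi,\rho_l^\xi)$. You instead compute $\sum_{\kappa\in\mathbb Z}\rho^{|\kappa|+|\kappa-s|}$ exactly, obtain the extra linear factor $(|s|+1)$ from the diagonal band $0\le\kappa\le s$, and absorb it by passing to any strictly larger rate $\widetilde\rho_l\in(\max(\rho_l^\chi,\rho_l^\xi),1)$. Your treatment is the more careful one---the paper's factorization $\rho^{|\kappa|+|\kappa-s|}\le\rho^{|s|}\rho^{2|\kappa|}$ is not an inequality that holds termwise (take $\kappa=s$)---but since the proposition only asks for \emph{some} $\widetilde\rho_l\in(0,1)$, your slightly weaker rates are perfectly adequate and the argument is airtight.
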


\begin{proof}
We have $\vert \Gamma_{ij, \vsbf}^x \vert = \vert \Gamma_{ij, \vsbf}^{\chi} +  \Gamma_{ij, \vsbf}^{\xi} \vert \leq \vert \Gamma_{ij, \vsbf}^{\chi} \vert + \vert \Gamma_{ij, \vsbf}^{\xi} \vert$. Since $\{{\boldsymbol \varepsilon}_{\vsbf} = (\varepsilon_{1\vsbf} \ \varepsilon_{2\vsbf} \ \cdots)^\top, \vsbf \in \mathbb{Z}^3\}$  is an i.i.d.~infinite dimensional zero-mean orthonormal rf, by Assumptions~\ref{Ass.MA} and \ref{Ass.coef1} we have
\begin{align*}
\vert \Gamma_{ij, \vsbf}^{\xi} \vert &= \left\vert {\rm E} \left\lbrace \left(\sum_{\kappa_1,\kappa_2\in\mathbb Z}  \sum_{\kappa_3 =0}^\infty  \sum_{h =1}^\infty \beta_{i h, \kbf} \varepsilon_{h, \vsbf-\kbf}\right)  \left(\sum_{\tilde{\kappa}_1,\tilde{\kappa}_2\in\mathbb Z}  \sum_{\tilde{\kappa}_3 =0}^\infty  \sum_{\tilde{h} =1}^\infty \beta_{j \tilde{h}, \tilde{\kbf}} \varepsilon_{\tilde{h}, -\tilde{\kbf}}\right) \right\rbrace \right\vert \\
&= \left\vert {\rm E} \left\lbrace \left(\sum_{\kappa_1,\kappa_2\in\mathbb Z}  \sum_{\kappa_3 =0}^\infty  \sum_{h =1}^\infty \beta_{i h, \kbf} \beta_{j h, {\kbf}-\vsbf} \varepsilon_{h, \vsbf-\kbf}^2 \right)  \right\rbrace \right\vert \\
&\leq \sum_{\kappa_1,\kappa_2\in\mathbb Z}  \sum_{\kappa_3 =0}^\infty   \sum_{h =1}^\infty \left\vert \beta_{i h, \kbf} \beta_{j h, {\kbf}-\vsbf} \right\vert  {\rm E}(\varepsilon_{h, \vsbf-\kbf}^2) \\
&\leq \sum_{\kappa_1,\kappa_2\in\mathbb Z}  \sum_{\kappa_3 =0}^\infty   \sum_{h =1}^\infty A_{ih}^\xi A_{jh}^\xi \rho_1^{\xi,|\kappa_1|+|\kappa_1-s_1|} 
\rho_2^{\xi,|\kappa_2|+|\kappa_2-s_2|} \rho_3^{\xi, \kappa_3+\kappa_3-t}
\\
&\leq \l\{ \sum_{h =1}^\infty A_{ih}^\xi A_{jh}^\xi \sum_{\kappa_1,\kappa_2\in\mathbb Z}  \sum_{\kappa_3 =0}^\infty 
\rho_1^{\xi,2|\kappa_1|} 
\rho_2^{\xi,2|\kappa_2|} \rho_3^{\xi, 2\kappa_3} \r\} {\rho^{\xi s_1|}_1} {\rho_2^{\xi|s_2|}} {\rho^\xi_3}^{|t|}\\
&\leq (A^\xi)^2 \widetilde{A}^*_{\rho_1^\xi,\rho_2^\xi,\rho_3^\xi}
 {\rho^{\xi s_1|}_1} {\rho_2^{\xi|s_2|}} {\rho^\xi_3}^{|t|}
\end{align*}
for some finite $\widetilde{A}^*_{\rho_1^\xi,\rho_2^\xi,\rho_3^\xi}>0$ depending only on ${\rho_1^\xi}$, ${\rho_2^\xi}$, and ${\rho_3^\xi}$.

Similarly, there exists a finite $\widetilde{A}^*_{\rho_1^\chi,\rho_2^\chi,\rho_3^\chi}>0$ depending only on ${\rho_1^\chi}$, ${\rho_2^\chi}$, and ${\rho_3^\chi}$, such that
$\vert \Gamma_{ij, \vsbf}^{\chi} \vert \leq \widetilde{A}^*_{\rho_1^\chi,\rho_2^\chi,\rho_3^\chi} {\rho^\chi_1}^{|s_1|} {\rho^\chi_2}^{|s_2|} {\rho^\chi_3}^{|t|}$. Letting 
$\widetilde A^* = \max (\widetilde{A}^*_{\rho_1^\chi,\rho_2^\chi,\rho_3^\chi}, \widetilde{A}^*_{\rho_1^\xi,\rho_2^\xi,\rho_3^\xi})$ and
$\widetilde{\rho}_4 = \max({\rho_1^\xi}, {\rho^\chi_1})$, $\widetilde{\rho}_5 = \max({\rho^\xi_2}, {\rho^\chi_2})$, and $\widetilde{\rho}_6 = \max({\rho^\xi_3}, {\rho^\chi_3})$ yields the desired result.
\end{proof}

It immediately follows from Assumption \ref{Ass.A} and Proposition \ref{Our_FHLZ17}, that for any $n\in\mathbb N$ each component $x_\ell$, $\ell=1,\ldots, n$ of the rf $\bm x_n$ satisfies Conditions \ref{condA} and \ref{condB} in Appendix \ref{Spec.Y}. Moreover, by Proposition \ref{Our_FHLZ17bis}, we have that also Condition \ref{condC} is satisfied by all couples $x_i$ and $x_j$, $i,j=1,\ldots, n$.
Therefore, we can apply Propositions \ref{ThmSpecDen1}
 and
 \ref{ThmSpecDen2} to the entries of the estimator of the spectral density matrix of $\bm x_n$, defined in \eqref{hatSimgax}. This yields the desired result.


\subsection{Proof of Theorem~\ref{Thm.hatChin}}\label{App.Pfrate}

Hereafter, let
\[
\alpha_{\bar{\vsbf}}=\max\left\lbrace \frac{(\log B_{S_1} \log B_{S_2}  \log B_{T} )^2 B_{S_1} B_{S_2} B_{T}}{S_1 S_2 T},  \frac 1{B_{S_1}^{2\vartheta_1}},  \frac 1{B_{S_2}^{2\vartheta_2}},  \frac 1{B_T^{2\vartheta_3}} \right\rbrace.
\]
Moreover, we let $C_1, C_2, \ldots$ and $C_1^*, C_2^*, \ldots$  denote generic finite positive constants independent of $n$. We let also 
$\e_{ni}$, $i= 1, \ldots, n,$ denote the $i$-th canonical basis of $\mathbb{R}^n$. Throughout, we make use of the spectral norm, denoted as $\Vert\cdot\Vert$, and the Frobenius norm, denoted as $\Vert\cdot\Vert_F$.
The proof relies on some lemmas introduced below.

\begin{lemma}\label{lem.LamHat}
Let Assumptions \ref{Ass.A}, \ref{Ass.B}, \ref{Ass.MA},  \ref{Ass.coef1}, \ref{Ass.moments}, and~\ref{AssSpec2}  hold. Then, for all $n\in\mathbb N$, 
\begin{enumerate}[(i)]
\item $\sup_{\thbf \in \Thbf} n^{-2} {\rm E} \left\Vert \widehat{\boldsymbol \Sigma}_n^x(\thbf) - {\boldsymbol \Sigma}_n^x(\thbf) \right\Vert^2  \leq  C \alpha_{\bar{\vsbf}}$, where $C$ is the same constant as in Theorem~\ref{Thmsigmax};

\item $\max_{1\leq i \leq n} \sup_{\thbf \in \Thbf} n^{-1} {\rm E} \left\Vert \e_{ni}^\top \left(\widehat{\boldsymbol \Sigma}_n^x(\thbf) - {\boldsymbol \Sigma}_n^x(\thbf)\right) \right\Vert^2  \leq  C \alpha_{\bar{\vsbf}}$, where $C$ is the same constant as in Theorem~\ref{Thmsigmax};

\item $\sup_{\thbf \in \Thbf} n^{-1}  \left\Vert {\boldsymbol \Sigma}_n^x(\thbf) - {\boldsymbol \Sigma}_n^{\chi}(\thbf) \right\Vert  \leq  C_1^* n^{-1}$;

\item $\sup_{\thbf \in \Thbf} n^{-2} {\rm E} \left\Vert \widehat{\boldsymbol \Sigma}_n^x(\thbf) - {\boldsymbol \Sigma}_n^{\chi}(\thbf) \right\Vert^2  \leq  C_1 \max(n^{-2}, \alpha_{\bar{\vsbf}})$;

\item $\max_{1\leq i \leq n} \sup_{\thbf \in \Thbf} n^{-1} {\rm E} \left\Vert \e_{ni}^\top \left(\widehat{\boldsymbol \Sigma}_n^x(\thbf) - {\boldsymbol \Sigma}_n^{\chi}(\thbf)\right) \right\Vert^2  \leq  C_2 \max(n^{-1}, \alpha_{\bar{\vsbf}})$.

\end{enumerate}
\end{lemma}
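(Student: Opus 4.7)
The plan is to reduce all five statements to two basic ingredients: the entrywise rate for $\widehat{\sigma}_{ij}^x(\thbf)-\sigma_{ij}^x(\thbf)$ from Theorem~\ref{Thmsigmax}, and the uniform boundedness of the largest dynamic eigenvalue of the idiosyncratic spectral density from Proposition~\ref{lukagu}, which in turn gives $\sup_{\thbf\in\Thbf}\sup_{n}\Vert \bm\Sigma_n^\xi(\thbf)\Vert\le C$. The guiding idea is that the spectral norm of an $n\times n$ random matrix can be crudely controlled by the Frobenius norm (at a cost of an $n$-factor), and that the spectral norm of a single row can similarly be controlled by the $\ell_2$-norm of that row.

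For part (i), I would simply use $\Vert\widehat{\bm\Sigma}_n^x(\thbf)-\bm\Sigma_n^x(\thbf)\Vert^2\le\Vert\widehat{\bm\Sigma}_n^x(\thbf)-\bm\Sigma_n^x(\thbf)\Vert_F^2=\sum_{i,j=1}^n|\widehat\sigma_{ij}^x(\thbf)-\sigma_{ij}^x(\thbf)|^2$, take expectations and a sup over $\thbf$, bound each of the $n^2$ summands by $C\alpha_{\bar{\vsbf}}$ via Theorem~\ref{Thmsigmax}, and divide by $n^2$. Part (ii) is the same argument restricted to row $i$: $\Vert\e_{ni}^\top(\widehat{\bm\Sigma}_n^x(\thbf)-\bm\Sigma_n^x(\thbf))\Vert^2=\sum_{j=1}^n|\widehat\sigma_{ij}^x(\thbf)-\sigma_{ij}^x(\thbf)|^2$, so one gets an $n$-factor rather than an $n^2$-factor and again applies Theorem~\ref{Thmsigmax}.

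For part (iii), Definition~\ref{Def. q_DFS}(iii) (orthogonality of $\bm u$ and $\bm\xi$) and the decomposition $\bm\Sigma_n^x(\thbf)=\bm\Sigma_n^\chi(\thbf)+\bm\Sigma_n^\xi(\thbf)$ reduce the estimate to $\sup_\thbf\Vert\bm\Sigma_n^\xi(\thbf)\Vert=\sup_\thbf\lambda_{n1}^\xi(\thbf)$, which is uniformly bounded in $n$ by Proposition~\ref{lukagu}; dividing by $n$ yields the $C_1^*/n$ bound. Parts (iv) and (v) are then obtained by the triangle inequality $\widehat{\bm\Sigma}_n^x-\bm\Sigma_n^\chi=(\widehat{\bm\Sigma}_n^x-\bm\Sigma_n^x)+\bm\Sigma_n^\xi$ followed by $(a+b)^2\le 2a^2+2b^2$: the first summand is handled by part (i) (for (iv)) or part (ii) (for (v)), and the second by part (iii), so the two resulting bounds combine into $\max(n^{-2},\alpha_{\bar{\vsbf}})$ and $\max(n^{-1},\alpha_{\bar{\vsbf}})$ respectively.

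There is essentially no hard step here; the only conceptual point to stress is that the crude Frobenius/row-norm upper bound for the spectral norm is sharp enough for our purposes precisely because the uniform entrywise rate in Theorem~\ref{Thmsigmax} is fast enough to absorb the extra factor $n$ or $n^2$ after the normalisations $n^{-1}$ and $n^{-2}$ appearing in the statements. The rest is bookkeeping of constants and the use of the triangle inequality together with the uniform bound on $\Vert\bm\Sigma_n^\xi(\thbf)\Vert$ supplied by Proposition~\ref{lukagu}.
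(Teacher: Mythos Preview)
Your proposal is correct and matches the paper's proof essentially line for line: parts (i)–(ii) via the Frobenius/row-norm bound plus Theorem~\ref{Thmsigmax}, part (iii) via $\bm\Sigma_n^x-\bm\Sigma_n^\chi=\bm\Sigma_n^\xi$ and Proposition~\ref{lukagu}, and parts (iv)–(v) via the triangle inequality combining the previous pieces. The only cosmetic difference is that you make the step $(a+b)^2\le 2a^2+2b^2$ explicit, whereas the paper leaves it implicit.
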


\begin{proof}
Part (i). For any $\thbf\in\Thbf$ and any $n\in\mathbb N$, we have
\begin{align}
\left\Vert \widehat{\boldsymbol \Sigma}_n^x(\thbf) - {\boldsymbol \Sigma}_n^x(\thbf) \right\Vert^2 &\le  
\left\Vert \widehat{\boldsymbol \Sigma}_n^x(\thbf) - {\boldsymbol \Sigma}_n^x(\thbf) \right\Vert^2_F\nn\\
& = {\rm trace}\left[ \left(\widehat{\boldsymbol \Sigma}_n^x(\thbf) - {\boldsymbol \Sigma}_n^x(\thbf)\right)^\dag \left(\widehat{\boldsymbol \Sigma}_n^x(\thbf) - {\boldsymbol \Sigma}_n^x(\thbf)\right) \right] 
= \sum_{i=1}^n \sum_{j=1}^n \left| \widehat{\sigma}_{ij}^x(\thbf)  - \sigma_{ij}^x(\thbf) \right|^2. \nn
\end{align}
The result then follows from Theorem~\ref{Thmsigmax}.  


Part (ii). Noticing that 
$$n^{-1} {\rm E} \left\Vert \e_{ni}^\top \left(\widehat{\boldsymbol \Sigma}_n^x(\thbf) - {\boldsymbol \Sigma}_n^x(\thbf)\right) \right\Vert^2 = n^{-1} \sum_{j=1}^n {\rm E} \left| \widehat{\sigma}_{ij}^x(\thbf)  - \sigma_{ij}^x(\thbf) \right|^2,$$
the result then follows from Theorem~\ref{Thmsigmax}.

Part (iii). Since
$\left\Vert {\boldsymbol \Sigma}_n^x(\thbf) - {\boldsymbol \Sigma}_n^{\chi}(\thbf) \right\Vert = \left\Vert {\boldsymbol \Sigma}_n^{\xi}(\thbf) \right\Vert$, the result follows immediately from Proposition~\ref{lukagu}.


Part (iv). Denote by ${\boldsymbol \Sigma}_n^{\xi}(\thbf)$ the $n\times n$ spectral density matrix of ${\boldsymbol \xi}_{n}$. We have
\begin{align*}
\left\Vert \widehat{\boldsymbol \Sigma}_n^x(\thbf) - {\boldsymbol \Sigma}_n^{\chi}(\thbf) \right\Vert 
&= \left\Vert \widehat{\boldsymbol \Sigma}_n^x(\thbf) - \left({\boldsymbol \Sigma}_n^x(\thbf)- {\boldsymbol \Sigma}_n^{\xi}(\thbf) \right)\right\Vert \leq \left\Vert \widehat{\boldsymbol \Sigma}_n^x(\thbf) - {\boldsymbol \Sigma}_n^x(\thbf) \right\Vert + \left\Vert {\boldsymbol \Sigma}_n^{\xi}(\thbf) \right\Vert. 
\end{align*}
The result then follows immediately from part (i) and Proposition~\ref{lukagu}.

Part (v). Note that
$$\left\Vert \e_{ni}^\top \left(\widehat{\boldsymbol \Sigma}_n^x(\thbf) - {\boldsymbol \Sigma}_n^{\chi}(\thbf)\right) \right\Vert 
\leq \left\Vert \e_{ni}^\top \left(\widehat{\boldsymbol \Sigma}_n^x(\thbf) - {\boldsymbol \Sigma}_n^x(\thbf)\right) \right\Vert + \left\Vert \e_{ni}^\top 
 {\boldsymbol \Sigma}_n^{\xi}(\thbf) \right\Vert.
$$
The result then follows immediately from part (ii) and Proposition~\ref{lukagu}.
\end{proof}
%
%

\begin{lemma}\label{lem.lamChi2}
Let Assumptions \ref{Ass.A}, \ref{Ass.B}, \ref{Ass.MA},  \ref{Ass.coef1}, \ref{Ass.moments}, and~\ref{AssSpec2}  hold. Then, for all $n\in\mathbb N$ and  all $j = 1, \ldots, q$,
\begin{enumerate}[(i)]
\item $\sup_{\thbf \in \Thbf} n^{-2} {\rm E} \vert \widehat{\lambda}_{nj}^x(\bth) - \lambda_{nj}^x (\bth)\vert^2 \leq C \alpha_{\bar{\vsbf}}$, where $C$ is the same constant as in Lemma~\ref{lem.LamHat}(i);
\item $\sup_{\thbf \in \Thbf} n^{-1}  \vert {\lambda}_{nj}^x(\bth) - \lambda_{nj}^{\chi}(\bth)\vert \leq C_1^* n^{-1}$, where $C_1^*$ is the same constant as in Lemma~\ref{lem.LamHat}(iii);
\item 
$
\sup_{\thbf \in \Thbf} n^{-2} {\rm E} \vert \widehat{\lambda}_{nj}^x(\bth) - \lambda_{nj}^{\chi}(\bth)\vert^2 \leq C_1 \max(n^{-2}, \alpha_{\bar{\vsbf}}),
$ where $C_1$ is the same constant as in Lemma~\ref{lem.LamHat}(iv);

%
\end{enumerate}
\end{lemma}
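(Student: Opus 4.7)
The plan is to exploit Weyl's inequality (Appendix \ref{App:Weyl}) together with the spectral-norm bounds on $\widehat{\boldsymbol{\Sigma}}_n^x(\bth) - \boldsymbol{\Sigma}_n^x(\bth)$ and $\boldsymbol{\Sigma}_n^x(\bth) - \boldsymbol{\Sigma}_n^\chi(\bth)$ that have already been established in Lemma \ref{lem.LamHat}. Weyl's inequality, applied to the Hermitian matrices $\widehat{\boldsymbol{\Sigma}}_n^x(\bth)$, $\boldsymbol{\Sigma}_n^x(\bth)$, and $\boldsymbol{\Sigma}_n^\chi(\bth)$, tells us that for every $j$ and every $\bth$,
\[
\bigl|\widehat{\lambda}_{nj}^x(\bth) - \lambda_{nj}^x(\bth)\bigr| \leq \bigl\Vert \widehat{\boldsymbol{\Sigma}}_n^x(\bth) - \boldsymbol{\Sigma}_n^x(\bth) \bigr\Vert, \qquad \bigl|\lambda_{nj}^x(\bth) - \lambda_{nj}^\chi(\bth)\bigr| \leq \bigl\Vert \boldsymbol{\Sigma}_n^x(\bth) - \boldsymbol{\Sigma}_n^\chi(\bth) \bigr\Vert,
\]
since the spectral norm dominates the maximum perturbation of any eigenvalue in absolute value.

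For part (i), I would square both sides of the first Weyl bound, take expectations, pass the supremum over $\bth \in \Thbf$ inside the expectation, and then invoke Lemma \ref{lem.LamHat}(i) directly to obtain the constant $C$ from Theorem \ref{Thmsigmax}. For part (ii), the second Weyl bound combined with Lemma \ref{lem.LamHat}(iii) immediately yields the $C_1^* n^{-1}$ rate (note that this part is deterministic, not an expected value, which matches the statement). For part (iii), I would use the triangle inequality
\[
\bigl|\widehat{\lambda}_{nj}^x(\bth) - \lambda_{nj}^\chi(\bth)\bigr| \leq \bigl|\widehat{\lambda}_{nj}^x(\bth) - \lambda_{nj}^x(\bth)\bigr| + \bigl|\lambda_{nj}^x(\bth) - \lambda_{nj}^\chi(\bth)\bigr|,
\]
square it (using $(a+b)^2 \le 2a^2 + 2b^2$), take expectation and supremum, and combine parts (i) and (ii) to obtain the $\max(n^{-2}, \alpha_{\bar{\vsbf}})$ rate with a constant that matches $C_1$ from Lemma \ref{lem.LamHat}(iv). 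Alternatively, part (iii) follows even more cleanly by a direct application of Weyl's inequality to $\widehat{\boldsymbol{\Sigma}}_n^x(\bth) - \boldsymbol{\Sigma}_n^\chi(\bth)$, which gives the bound directly from Lemma \ref{lem.LamHat}(iv).

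There is no real obstacle in this proof: the only subtlety is keeping track of the $n^{-1}$ and $n^{-2}$ normalizations so that the stated constants ($C$, $C_1^*$, $C_1$) exactly match those appearing in Lemma \ref{lem.LamHat}. Since the bounds from Lemma \ref{lem.LamHat} are already expressed in the rescaled form $n^{-2}\Vert \cdot\Vert^2$ or $n^{-1}\Vert \cdot \Vert$, the constants carry over verbatim through Weyl's inequality, because Weyl is a pointwise-in-$\bth$ inequality that commutes with the scalar normalization. The result is therefore a short corollary of Lemma \ref{lem.LamHat} and Weyl's inequality.
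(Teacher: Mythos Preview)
Your proposal is correct and matches the paper's proof essentially line for line: the paper also applies Weyl's inequality pointwise in $\bth$ to bound each eigenvalue difference by the corresponding spectral-norm difference, and then invokes Lemma~\ref{lem.LamHat}(i), (iii), and (iv) respectively to obtain the three parts with the stated constants. For part (iii) the paper uses exactly the ``cleaner'' direct route you mention at the end---applying Weyl to $\widehat{\boldsymbol{\Sigma}}_n^x(\bth) - \boldsymbol{\Sigma}_n^\chi(\bth)$ and then Lemma~\ref{lem.LamHat}(iv)---rather than the triangle-inequality detour.
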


\begin{proof}
For any two $n\times n$ matrices $\A_1$ and $\A_2$, Wely's inequality (see Appendix  \ref{App:Weyl}) implies that
\begin{equation}\label{eq.Wely}
\vert\nu_{\ell}(\A_1+\A_2) - \nu_{\ell}(\A_1)\vert \leq \Vert \A_2 \Vert, \ell = 1, \ldots, n.
\end{equation}

Part (i). Letting $\A_1 = {\boldsymbol \Sigma}_n^{x}(\thbf)$, $\A_2 = \widehat{\boldsymbol \Sigma}_n^x(\thbf) - {\boldsymbol \Sigma}_n^{x}(\thbf)$,
 the result then follows from \eqref{eq.Wely}, which implies $\vert \widehat{\lambda}_{nj}^x(\bth) - \lambda_{nj}^{x}(\bth)\vert \leq \Vert \widehat{\boldsymbol \Sigma}_n^x(\thbf) - {\boldsymbol \Sigma}_n^{x}(\thbf) \Vert$
and Lemma~\ref{lem.LamHat}(i).

Part (ii). Letting 
 $\A_1 = {\boldsymbol \Sigma}_n^{\chi}(\thbf)$, $\A_2 = {\boldsymbol \Sigma}_n^x(\thbf) - {\boldsymbol \Sigma}_n^{\chi}(\thbf)$ yields, because of \eqref{eq.Wely},
 $\vert {\lambda}_{nj}^x(\bth) - \lambda_{nj}^{\chi}(\bth)\vert \leq \left\Vert {\boldsymbol \Sigma}_n^x(\thbf) - {\boldsymbol \Sigma}_n^{\chi}(\thbf) \right\Vert.$ The result hence follows from Lemma \ref{lem.LamHat}(iii).

Part (iii).  Letting $\A_1 = {\boldsymbol \Sigma}_n^{\chi}(\thbf)$, $\A_2 = \widehat{\boldsymbol \Sigma}_n^x(\thbf) - {\boldsymbol \Sigma}_n^{\chi}(\thbf)$,
 the result then follows from \eqref{eq.Wely}, which implies
$\vert \widehat{\lambda}_{nj}^x(\bth) - \lambda_{nj}^{\chi}(\bth)\vert \leq \Vert \widehat{\boldsymbol \Sigma}_n^x(\thbf) - {\boldsymbol \Sigma}_n^{\chi}(\thbf) \Vert$
and Lemma~\ref{lem.LamHat}(iv).
%
%
\end{proof}

\begin{lemma}\label{lem.PLamW}
Let Assumptions \ref{Ass.A}, \ref{Ass.B}, 
\ref{Ass.coef1}, and \ref{Ass.lamChi}
 hold. Then, for all $n\in\mathbb N$,
\begin{equation}
\max_{1\leq i \leq n} \sup_{\thbf \in \Thbf} \left\Vert \e_{ni}^\top  n^{1/2}({{\Pbf}^{\chi}_n}(\bth))^\dag \right\Vert \leq C_{2}.\nn
\end{equation}
\end{lemma}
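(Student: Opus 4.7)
The plan is to exploit the spectral decomposition of $\bm\Sigma_n^\chi(\thbf)$, combined with the linear divergence of $\lambda_{nq}^\chi$ from Assumption \ref{Ass.lamChi} and the uniform boundedness of the diagonal entries of $\bm\Sigma_n^\chi(\thbf)$. Observe first that since $\text{rk}(\bm\Sigma_n^\chi(\thbf))=q$ for all $n\ge q$ and all $\thbf\in\Thbf$ (see Remark \ref{rem:freddo}), we can write
\[
\bm\Sigma_n^\chi(\thbf) = (\boldsymbol{P}_n^\chi(\thbf))^\dag\,\boldsymbol{\Lambda}_n^\chi(\thbf)\,\boldsymbol{P}_n^\chi(\thbf),
\]
with $\boldsymbol{\Lambda}_n^\chi(\thbf)=\text{diag}(\lambda_{n1}^\chi(\thbf),\dots,\lambda_{nq}^\chi(\thbf))$. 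Taking the $(i,i)$-entry gives $\sigma_{ii}^\chi(\thbf)=\sum_{j=1}^q \lambda_{nj}^\chi(\thbf)\,|p_{nj,i}^\chi(\thbf)|^2$, where $p_{nj,i}^\chi(\thbf)$ denotes the $i$-th entry of the $j$-th row of $\boldsymbol{P}_n^\chi(\thbf)$. Because $\lambda_{nj}^\chi(\thbf)\ge \lambda_{nq}^\chi(\thbf)>0$ for $j\le q$, this yields the key inequality
\[
\bigl\Vert \e_{ni}^\top (\boldsymbol{P}_n^\chi(\thbf))^\dag\bigr\Vert^2 \;=\; \sum_{j=1}^q |p_{nj,i}^\chi(\thbf)|^2 \;\le\; \frac{\sigma_{ii}^\chi(\thbf)}{\lambda_{nq}^\chi(\thbf)},
\]
so that $n\bigl\Vert \e_{ni}^\top (\boldsymbol{P}_n^\chi(\thbf))^\dag\bigr\Vert^2 \le \sigma_{ii}^\chi(\thbf)/(\lambda_{nq}^\chi(\thbf)/n)$.

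Next I would control the numerator uniformly. By Assumption \ref{Ass.MA}(i) the spectral density of $\chi_i$ is $\sigma_{ii}^\chi(\thbf)=\sum_{j=1}^q |c_{ij}(\thbf)|^2$, and the geometric decay in Assumption \ref{Ass.coef1}(i) gives, for some constant $C_{\bm\rho^\chi}$ depending only on $\rho_1^\chi,\rho_2^\chi,\rho_3^\chi$,
\[
|c_{ij}(\thbf)|\le \sum_{\kbf}|{\mathrm c}_{ij,\kbf}|\le A_{ij}^\chi \, C_{\bm\rho^\chi}.
\]
Using $\sum_{j=1}^q A_{ij}^\chi \le A^\chi$ (and the elementary inequality $\sum_j a_j^2\le (\sum_j a_j)^2$ for nonnegative $a_j$), one obtains $\sup_{\thbf\in\Thbf}\sup_{i\in\mathbb N}\sigma_{ii}^\chi(\thbf)\le C_{\bm\rho^\chi}^2 (A^\chi)^2$, exactly along the lines of Proposition \ref{lukagu}.

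For the denominator, Assumption \ref{Ass.lamChi} delivers $\lim_{n\to\infty}\lambda_{nq}^\chi(\thbf)/n \ge \undertilde{\omega}_q(\thbf)$ where $\undertilde{\omega}_q$ is continuous and strictly positive on the compact set $\Thbf$, so $\undertilde{\omega}_q^*:=\min_{\thbf\in\Thbf}\undertilde{\omega}_q(\thbf)>0$. Hence for every $\varepsilon\in(0,\undertilde{\omega}_q^*)$ there exists $n_0=n_0(\varepsilon)$ such that for all $n\ge n_0$ and all $\thbf\in\Thbf$, $\lambda_{nq}^\chi(\thbf)/n\ge \undertilde{\omega}_q^*-\varepsilon>0$. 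Combining everything,
\[
\max_{1\le i\le n}\sup_{\thbf\in\Thbf} n\bigl\Vert \e_{ni}^\top (\boldsymbol{P}_n^\chi(\thbf))^\dag\bigr\Vert^2 \;\le\; \frac{C_{\bm\rho^\chi}^2 (A^\chi)^2}{\undertilde{\omega}_q^*-\varepsilon} \;=:\; C_2^2,
\]
for $n\ge n_0$. For the finitely many $n$ with $q\le n<n_0$, the eigenvalues $\lambda_{nq}^\chi(\thbf)$ are strictly positive and continuous in $\thbf$ on the compact set $\Thbf$, so $\inf_{\thbf}\lambda_{nq}^\chi(\thbf)>0$; enlarging $C_2$ by taking the maximum over these finitely many values of $n$ preserves the uniform bound.

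The one technical subtlety, which I expect to be the main obstacle, is extracting a uniform-in-$\thbf$ lower bound on $\lambda_{nq}^\chi(\thbf)/n$ from an asymptotic pointwise statement. This is handled through the compactness of $\Thbf$ together with the continuity of $\undertilde{\omega}_q$; strictly speaking one may also need the eigenvalues $\lambda_{nq}^\chi$ to be continuous in $\thbf$ (which follows from continuity of the entries of $\bm\Sigma_n^\chi(\thbf)$, itself a consequence of the summability in Assumption \ref{Ass.coef1}(i)) to rule out pathological oscillatory behavior in the pre-asymptotic regime.
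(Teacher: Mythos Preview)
Your proof is correct and follows essentially the same route as the paper: express the diagonal entry $\sigma_{ii}^\chi(\thbf)=\sum_{j=1}^q \lambda_{nj}^\chi(\thbf)|p_{nj,i}^\chi(\thbf)|^2$, bound the numerator uniformly via the geometric decay in Assumption~\ref{Ass.coef1}(i), and bound the denominator from below via Assumption~\ref{Ass.lamChi}. If anything, your treatment of the uniformity in $\thbf$ (using compactness of $\Thbf$ and continuity of $\undertilde{\omega}_q$ to pass from the pointwise limit to a uniform lower bound, plus handling finitely many small $n$ separately) is more careful than the paper's, which simply asserts the bound without spelling out this passage.
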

\begin{proof}
Let ${p}_{n,jk}^{\chi}(\thbf)$, $j=1, \ldots, q$, $k= 1, \ldots, n,$ denote $(j,k)$-th entry of ${\Pbf}_{n}^{\chi}(\thbf)$.
Then, 
$$
\max_{1\leq k \leq n} \sup_{\thbf \in \Thbf} \sigma^{\chi}_{k k}(\bth) = \max_{1\leq k \leq n} \sup_{\thbf \in \Thbf} \sum_{j=1}^q \lambda_{nj}^{\chi}(\bth) \vert {p}_{n,jk}^{\chi}(\bth)\vert^2 < \infty.
$$
Indeed, by Assumptions~\ref{Ass.MA} and \ref{Ass.coef1}
\[
\sigma^{\chi}_{k k}(\bth) = \sum_{\hbf} e^{-i\langle \hbf,\thbf\rangle} \text E[\chi_{k\vsbf}\chi_{k\vsbf-\hbf}] \le \sum_{\hbf} \vert\text E[\chi_{k\vsbf}\chi_{k\vsbf-\hbf}] \vert\le\sum_{h_1}\sum_{h_2}\sum_{h_3} \widetilde{A}^*_{\rho_1^\chi,\rho_2^\chi,\rho_3^\chi} {\rho^\chi_1}^{|h_1|} {\rho^\chi_2}^{|h_2|} {\rho^\chi_3}^{|h_3|}<\infty,
\]
where  $\widetilde{A}^*_{\rho_1^\chi,\rho_2^\chi,\rho_3^\chi}$, $\rho^\chi_1, \rho^\chi_2$, and $\rho^\chi_3$ are defined in the proof of Proposition \ref{Our_FHLZ17bis}. Moreover, $\sigma^{\chi}_{k k}(\bth)\ge 0$ \citep[Corollary 4.3.2, p.120]{BD06}. 

Now, since, by Assumption \ref{Ass.lamChi}, for all $j=1,\ldots, q$,
$$
\sup_{\thbf\in\Thbf} \frac {\lambda_{nj}^{\chi}(\bth)}n > \sup_{\thbf\in\Thbf} \undertilde{\omega}(\thbf)_{j}>0,
$$ 
we must have $\sup_{\thbf\in\Thbf} n \vert {p}_{n,jk}^{\chi}(\bth)\vert^2 < C_{2}$, where $C_2$ is independent of $j$ and $k$. This completes the proof.
\end{proof}
%
%
%

\begin{lemma}\label{lem.Lamxxhat}
Let Assumptions \ref{Ass.A}, \ref{Ass.B}, 
\ref{Ass.coef1}, and \ref{Ass.lamChi}
 hold. Then, for all $n\in\mathbb N$,
%
%

For all $j=1,\ldots, q-1$ there exist continuous functions 
$\bth \mapsto \widetilde{\omega}_j^*(\bth)$ and $\bth \mapsto \undertilde{\omega}_j^*(\bth)$
such that for all $\thbf \in\Thbf$
$$
0<
\undertilde{\omega}_{j+1}^*(\bth)\leq
\lim_{n\to\infty}\frac{\lambda_{n,j+1}^{x}(\bth)}n\leq \widetilde{\omega}_{j+1}^*(\bth)<\undertilde{\omega}_j^*(\bth) \leq \lim_{n\to\infty}\frac{ \lambda_{nj}^{x}(\bth)}n \leq \widetilde{\omega}_j^*(\bth)<\infty. 
$$
%
%
%
%
\end{lemma}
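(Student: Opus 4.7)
The plan is to reduce the statement to Assumption~\ref{Ass.lamChi} via Weyl's inequality, exploiting the fact that the idiosyncratic spectral density has uniformly bounded largest eigenvalue (Proposition~\ref{lukagu}).

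First, since by Definition~\ref{Def. q_DFS}(iii) the common and idiosyncratic components are orthogonal at all leads and lags, we have the additive decomposition $\bm\Sigma_n^x(\thbf)=\bm\Sigma_n^\chi(\thbf)+\bm\Sigma_n^\xi(\thbf)$ for every $\thbf\in\Thbf$ and every $n\in\mathbb N$. Applying Weyl's inequality (Appendix~\ref{App:Weyl}) to this sum, and using that $\bm\Sigma_n^\xi(\thbf)$ is Hermitian non-negative definite, yields
\[
\bigl|\lambda_{nj}^x(\thbf)-\lambda_{nj}^\chi(\thbf)\bigr|\le \lambda_{n1}^\xi(\thbf)
\]
for every $j=1,\dots,n$, every $\thbf\in\Thbf$ and every $n\in\mathbb N$. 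By Proposition~\ref{lukagu}, there exists a finite constant $C>0$, independent of $n$ and $\thbf$, such that $\sup_{\thbf\in\Thbf}\sup_{n\in\mathbb N}\lambda_{n1}^\xi(\thbf)\le C$, so that
\[
\sup_{\thbf\in\Thbf}\biggl|\frac{\lambda_{nj}^x(\thbf)}{n}-\frac{\lambda_{nj}^\chi(\thbf)}{n}\biggr|\le \frac{C}{n}.
\]

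Next, I would take $n\to\infty$ in the previous display. Because of Remark~\ref{rem:dyneval}, the limits $\lim_{n\to\infty}\lambda_{nj}^{x}(\thbf)/n$ and $\lim_{n\to\infty}\lambda_{nj}^{\chi}(\thbf)/n$ are well defined in the extended sense, and the inequality shows that they coincide pointwise in $\thbf\in\Thbf$:
\[
\lim_{n\to\infty}\frac{\lambda_{nj}^x(\thbf)}{n}=\lim_{n\to\infty}\frac{\lambda_{nj}^\chi(\thbf)}{n},\qquad j=1,\dots,q.
\]
Setting $\widetilde{\omega}_j^*(\thbf)=\widetilde{\omega}_j(\thbf)$ and $\undertilde{\omega}_j^*(\thbf)=\undertilde{\omega}_j(\thbf)$, the full chain of inequalities stated in the lemma follows immediately from Assumption~\ref{Ass.lamChi}, and continuity of $\widetilde{\omega}_j^*$ and $\undertilde{\omega}_j^*$ in $\thbf$ is inherited from the assumption.

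The only genuine issue is to ensure that the argument works for $j=q+1$ as well, since Assumption~\ref{Ass.lamChi} only provides quantitative bounds for $j=1,\dots,q$. However, for $j\ge q+1$, Theorem~\ref{Th. q-DFS}(i) (which holds via Definition~\ref{Def. q_DFS}(iv) and the proof of the necessary condition in Appendix~\ref{app:necFL}) gives $\lim_{n\to\infty}\lambda_{n,q+1}^x(\thbf)\le C$, hence $\lim_{n\to\infty}\lambda_{n,q+1}^x(\thbf)/n=0$; thus the upper-lower separation between the $q$-th and $(q+1)$-th eigenvalues is preserved at the scale $n$. There are no technical obstacles beyond bookkeeping; the content of the lemma is essentially that a uniformly bounded perturbation $\bm\Sigma_n^\xi$ is asymptotically negligible relative to the linearly diverging eigenvalues of $\bm\Sigma_n^\chi$.
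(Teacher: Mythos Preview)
Your argument is correct and is essentially the same as the paper's: the paper invokes Lemma~\ref{lem.lamChi2}(ii) together with Assumption~\ref{Ass.lamChi}, and Lemma~\ref{lem.lamChi2}(ii) is itself obtained by Weyl's inequality and the uniform bound on $\lambda_{n1}^\xi(\thbf)$ from Proposition~\ref{lukagu}, which is exactly what you do directly. Your extra paragraph on $j=q+1$ is unnecessary for the lemma as stated (the chain only involves indices $1,\dots,q$), but it does no harm.
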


\begin{proof}
%
The result follows immediately from Lemma \ref{lem.lamChi2}(ii)  and Assumption~\ref{Ass.lamChi}. 
\end{proof}

\begin{lemma}\label{Lem.KhatK}
Let Assumptions \ref{Ass.A}, \ref{Ass.B}, \ref{Ass.MA},  \ref{Ass.coef1}, \ref{Ass.lamChi}, \ref{Ass.moments}, \ref{AssSpec2}, and \ref{Ass.ExpK}  hold.  Then, for all $n\in\mathbb N$,
$$\max_{1\leq \ell \leq n} \sup_{\thbf \in \Thbf} {\rm E} \left\Vert \e_{n\ell}^\top  n^{1/2} \left[{\K}^{\chi}_{n}(\thbf) - \widehat{\K}^x_{n}(\thbf)\right] \right\Vert^2  \leq C_3^* \max(n^{-1}, \alpha_{\bar{\vsbf}}).$$
\end{lemma}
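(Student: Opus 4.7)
The argument is a Davis--Kahan type perturbation analysis, refined to produce a \emph{row-wise} rather than operator-norm bound on the difference of the two orthogonal projectors $\widehat{\bm\Pi}:=\widehat{\K}_n^x=\widehat{\bm P}_n^\dag\widehat{\bm P}_n$ and $\bm\Pi:=\K_n^\chi=\bm P_n^{\chi\dag}\bm P_n^\chi$, both of rank $q$ (the first projects onto the top-$q$ eigenspace of $\widehat{\bm\Sigma}_n^x$, the second onto the range of $\bm\Sigma_n^\chi$). Setting $\bm E:=\widehat{\bm\Sigma}_n^x-\bm\Sigma_n^\chi$, the plan starts from the identity
\[
\widehat{\bm\Pi}-\bm\Pi=(\bm I-\bm\Pi)\widehat{\bm\Pi}-\bm\Pi(\bm I-\widehat{\bm\Pi}),
\]
so that it suffices to bound the $\ell$-th row of each summand in mean square at rate $\max(n^{-2},\alpha_{\bar\vsbf}/n)$.

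For the first summand, the key observation is $(\bm I-\bm\Pi)\bm\Sigma_n^\chi=\bm 0$, because $\bm\Sigma_n^\chi$ has range contained in the range of $\bm\Pi$. Multiplying $\widehat{\bm\Sigma}_n^x\widehat{\bm P}_n^\dag=\widehat{\bm P}_n^\dag\widehat{\bm\Lambda}_n$ on the left by $(\bm I-\bm\Pi)$ and on the right by $\widehat{\bm\Lambda}_n^{-1}\widehat{\bm P}_n$ yields the explicit representation $(\bm I-\bm\Pi)\widehat{\bm\Pi}=(\bm I-\bm\Pi)\bm E\,\widehat{\bm P}_n^\dag\widehat{\bm\Lambda}_n^{-1}\widehat{\bm P}_n$, whose rightmost factor has operator norm $1/\widehat\lambda_{nq}^x=O(n^{-1})$ on the high-probability event ensured by Lemma~\ref{lem.Lamxxhat}. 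The $\ell$-th row is then controlled via the submultiplicative bound
\[
\Vert\e_{n\ell}^\top(\bm I-\bm\Pi)\bm E\Vert\le\Vert\e_{n\ell}^\top\bm E\Vert+\Vert\e_{n\ell}^\top\bm\Pi\Vert\,\Vert\bm E\Vert,
\]
where Lemma~\ref{lem.PLamW} gives $\Vert\e_{n\ell}^\top\bm\Pi\Vert\le C/\sqrt n$ and Lemma~\ref{lem.LamHat}(iv)--(v) control $\mathbb E\Vert\bm E\Vert^2$ and $\mathbb E\Vert\e_{n\ell}^\top\bm E\Vert^2$. Collecting terms yields the desired $O(\max(n^{-2},\alpha_{\bar\vsbf}/n))$ contribution from this summand.

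For the second summand, Hermitianness of both projectors gives $\Vert\e_{n\ell}^\top\bm\Pi(\bm I-\widehat{\bm\Pi})\Vert\le\Vert(\bm I-\widehat{\bm\Pi})\bm P_n^{\chi\dag}\Vert\,\Vert\bm P_n^\chi\e_{n\ell}\Vert$, where the second factor is $\le C/\sqrt n$ by Lemma~\ref{lem.PLamW}. Introducing $\bm Y:=\widehat{\bm Q}_n\bm P_n^{\chi\dag}$ with $\widehat{\bm Q}_n$ collecting the remaining $n-q$ eigenvectors of $\widehat{\bm\Sigma}_n^x$, a short computation combining the two eigen-relations $\widehat{\bm\Sigma}_n^x\widehat{\bm Q}_n^\dag=\widehat{\bm Q}_n^\dag\widehat{\bm\Phi}_n$ and $\bm\Sigma_n^\chi\bm P_n^{\chi\dag}=\bm P_n^{\chi\dag}\bm\Lambda_n^\chi$ delivers the Sylvester equation
\[
\widehat{\bm\Phi}_n\bm Y-\bm Y\bm\Lambda_n^\chi=\widehat{\bm Q}_n\bm E\bm P_n^{\chi\dag}.
\]
Since $\Vert\widehat{\bm\Phi}_n\Vert=\widehat\lambda_{n,q+1}^x$ is bounded (Lemma~\ref{lem.Lamxxhat}) while $\bm\Lambda_n^\chi$ has spectrum of order $n$ (Assumption~\ref{Ass.lamChi}), the spectra are separated by at least $cn$, so the solution satisfies $\Vert\bm Y\Vert\le C\Vert\bm E\Vert/n$. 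Since $\Vert(\bm I-\widehat{\bm\Pi})\bm P_n^{\chi\dag}\Vert=\Vert\bm Y\Vert$, Lemma~\ref{lem.LamHat}(iv) gives an $O(\max(n^{-3},\alpha_{\bar\vsbf}/n))$ contribution that is absorbed by the first. Summing and multiplying by $n$ to account for the $\sqrt n$ scaling in the statement yields the claimed $C_3^*\max(n^{-1},\alpha_{\bar\vsbf})$ rate.

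The main obstacle lies in the Sylvester step, because the gap between $\widehat\lambda_{n,q+1}^x$ and $\lambda_{nq}^\chi$ is random and only of order $n$ on a high-probability event. The bound $\Vert\bm Y\Vert\le C\Vert\bm E\Vert/n$ must therefore be complemented by a truncation argument on the complementary event, where the uniformly bounded higher moments of $\widehat\lambda_{nj}^x$ supplied by Proposition~\ref{Our_FHLZ17}, Theorem~\ref{Thmsigmax}, and Weyl's inequality ensure that the contribution is negligible. This truncation, together with the careful bookkeeping of the submultiplicative inequalities above, is the principal technical point of the proof.
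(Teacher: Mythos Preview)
Your approach is correct and genuinely different from the paper's. The paper does \emph{not} work with the projector identity $\widehat{\bm\Pi}-\bm\Pi=(\bm I-\bm\Pi)\widehat{\bm\Pi}-\bm\Pi(\bm I-\widehat{\bm\Pi})$ or with a Sylvester equation. Instead, it invokes the Davis--Kahan variant of Yu, Wang and Samworth (2015), which supplies an orthogonal alignment matrix $\bm O_q$ and the bound $\Vert\bm P_n^\chi-\bm O_q\widehat{\bm P}_n^x\Vert_{\rm F}\le C\Vert\bm E\Vert/(\lambda_{nq}^\chi-\lambda_{n,q+1}^\chi)$. The decomposition is then
\[
\e_{n\ell}^\top n^{1/2}\bigl[\K_n^\chi-\widehat{\K}_n^x\bigr]=\e_{n\ell}^\top n^{1/2}\bm P_n^{\chi\dag}\bigl[\bm P_n^\chi-\bm O_q\widehat{\bm P}_n^x\bigr]+\e_{n\ell}^\top n^{1/2}\bigl[\bm P_n^{\chi\dag}\bm O_q-\widehat{\bm P}_n^{x\dag}\bigr]\widehat{\bm P}_n^x,
\]
and the two terms are controlled via Lemma~\ref{lem.PLamW} and Lemma~\ref{lem.LamHat}(iv)--(v). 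The crucial advantage of the paper's route is that the gap in the denominator is the \emph{population} gap $\lambda_{nq}^\chi-\lambda_{n,q+1}^\chi=\lambda_{nq}^\chi\asymp n$, which is deterministic; no truncation on a high-probability event is needed, and the expectation bounds follow immediately. Your route, by contrast, carries the random quantities $\widehat\lambda_{nq}^x$ and $\widehat\lambda_{n,q+1}^x$ through the Sylvester step, forcing the truncation argument you flag at the end. Two small corrections to your writeup: Lemma~\ref{lem.Lamxxhat} concerns the population $\lambda_{nj}^x$ for $j\le q$, not the estimated $\widehat\lambda_{n,q+1}^x$; the boundedness (in probability) of the latter comes from Weyl's inequality together with Lemma~\ref{lem.LamHat}(iv), and $\widehat\lambda_{nq}^x\asymp n$ with high probability requires Lemma~\ref{lem.lamChi2}(iii) combined with Assumption~\ref{Ass.lamChi}. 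With those citations fixed and the truncation carried out, your argument goes through; it is more self-contained (no external Davis--Kahan result) but at the cost of the extra event-splitting step that the paper's approach avoids entirely.
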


\begin{proof}
Theorem~2 in \cite{yu2015useful} implies that there exists a $q\times q$ orthogonal matrix ${\boldsymbol O}_q$ such that
\begin{align}\label{Ineq.Yu2015}
\left\Vert {\Pbf}_n^{\chi}(\thbf) - {\boldsymbol O}_q \widehat{\Pbf}_n^{x}(\thbf) \right\Vert &\leq \left\Vert {\Pbf}_n^{\chi}(\thbf) - {\boldsymbol O}_q \widehat{\Pbf}_n^{x}(\thbf) \right\Vert_{\rm F}\nn \\
&\leq C_4^* \frac{\left\Vert \widehat{\boldsymbol \Sigma}_n^x(\thbf) - {\boldsymbol \Sigma}_n^{\chi}(\thbf) \right\Vert}{{\lambda}_{nq}^{\chi}(\bth) - {\lambda}_{n,q+1}^{\chi}(\bth)} \leq C_5^* n^{-1} \left\Vert \widehat{\boldsymbol \Sigma}_n^x(\thbf) - {\boldsymbol \Sigma}_n^{\chi}(\thbf) \right\Vert,
\end{align}
where we used the fact that ${\lambda}_{n,q+1}^{\chi}(\bth)=0$ and the last inequality is due to Lemma~\ref{lem.Lamxxhat}. Note that
\begin{align*}
& \e_{n\ell}^\top  n^{1/2} \left[{\K}^{\chi}_{n}(\thbf) - \widehat{\K}^x_{n}(\thbf)\right] \\
=&\, \e_{n\ell}^\top  n^{1/2} {\Pbf}_n^{\chi\dag}(\thbf) \left[ {\Pbf}_n^{\chi}(\thbf) - {\boldsymbol O}_q \widehat{\Pbf}_n^{x}(\thbf) \right] + \e_{n\ell}^\top  n^{1/2} \left[ {\Pbf}_n^{\chi\dag}(\thbf) {\boldsymbol O}_q - \widehat{\Pbf}_n^{x\dag}(\thbf) \right] \widehat{\Pbf}_n^{x}(\thbf)\\
=&\,  {\boldsymbol d}_{n\ell}^{(1)}(\thbf) + {\boldsymbol d}_{n\ell}^{(2)}(\thbf), \qquad \text{say}.
\end{align*}
For ${\boldsymbol d}_{n\ell}^{(1)}(\thbf)$,
in view of  \eqref{Ineq.Yu2015} and Lemmas  \ref{lem.LamHat}(iv) and \ref{lem.PLamW}, we have
$$\max_{1\leq \ell \leq n} \sup_{\thbf \in \Thbf} {\rm E} \left\Vert {\boldsymbol d}_{n\ell}^{(1)}(\thbf) \right\Vert^2 \leq C_6^* \max(n^{-2}, \alpha_{\bar{\vsbf}}).$$
As for ${\boldsymbol d}_{n\ell}^{(2)}(\thbf)$, note that
\begin{align*}
\sum_{\ell = 1}^n  \left\Vert n^{1/2} \e_{n\ell}^\top \left[{\Pbf}_n^{\chi}(\thbf) - {\boldsymbol O}_q \widehat{\Pbf}_n^{x}(\thbf)\right] \right\Vert^2 &= \left\Vert n^{1/2} \left[{\Pbf}_n^{\chi}(\thbf) - {\boldsymbol O}_q \widehat{\Pbf}_n^{x}(\thbf)\right] \right\Vert_{\rm F}^2\\
&\leq (C_5^*)^2 n^{-2} \left\Vert n^{1/2} \left[\widehat{\boldsymbol \Sigma}_n^x(\thbf) - {\boldsymbol \Sigma}_n^{\chi}(\thbf)\right] \right\Vert_{\rm F}^2 \\
&= (C_5^*)^2 n^{-1} \sum_{\ell = 1}^n \left\Vert  \e_{n\ell}^\top \left[\widehat{\boldsymbol \Sigma}_n^x(\thbf) - {\boldsymbol \Sigma}_n^{\chi}(\thbf)\right] \right\Vert^2.
\end{align*}
Hence, for all $\ell = 1, \ldots, n$, ${\rm E}\left\Vert n^{1/2} \e_{n\ell}^\top \left[{\Pbf}_n^{\chi}(\thbf) - {\boldsymbol O}_q \widehat{\Pbf}_n^{x}(\thbf)\right] \right\Vert^2$ is of order no greater than $$\max_{1\leq i\leq n} n^{-1} {\rm E} \left\Vert  \e_{ni}^\top \left[\widehat{\boldsymbol \Sigma}_n^x(\thbf) - {\boldsymbol \Sigma}_n^{\chi}(\thbf)\right] \right\Vert^2,$$
which, in view of Lemma~\ref{lem.LamHat}(v), entails
$$\max_{1\leq \ell \leq n} \sup_{\thbf \in \Thbf} {\rm E} \left\Vert {\boldsymbol d}_{n\ell}^{(2)}(\thbf) \right\Vert^2 \leq C_7^* \max(n^{-1}, \alpha_{\bar{\vsbf}}).$$
The result then follows.
\end{proof}

We can then prove the theorem as follows. 
Let $\bchi_{n \vsbf} = (\chi_{1\vsbf}, \ldots, \chi_{n\vsbf})^\top$ and $\bfxi_{n \vsbf} = (\xi_{1\vsbf}, \ldots, \xi_{n\vsbf})^\top$.
Recall from \eqref{chatbot2} in Remark \ref{rem:freddo} that
\begin{align*}
{\underline{\K}}^{\chi}_{n}(L) \x_{n\vsbf} = {\underline{\K}}^{\chi}_{n}(L) \bchi_{n \vsbf} + {\underline{\K}}^{\chi}_{n}(L) \bfxi_{n \vsbf} 
= {\underline{\K}}^{\chi}_{n}(L) \bchi_{n \vsbf} = \bchi_{n \vsbf}.
\end{align*}

Letting $$\mathcal{D}_{{\vsbf}} = \{(s_1, s_2, t): \underline{\kappa}_1(s_1) \leq s_1 \leq \overline{\kappa}_1(s_1), \underline{\kappa}_2(s_2) \leq s_2 \leq \overline{\kappa}_2(s_2), \underline{\kappa}_3(t)   \leq t \leq \overline{\kappa}_3(t)\},$$
we then have 
\begin{align*}
  \widehat{\chi}^{(n)}_{\ell\vsbf} - \chi^n_{\ell\vsbf}  
&=  \sum_{\vsbf\pr \in \mathcal{D}_{{\vsbf}}} \e_{n\ell}^\top \left( \widehat{\underline{\K}}^x_{n\vsbf\pr} - \underline{\K}^{\chi}_{n\vsbf\pr} \right)  \x_{n\vsbf - \vsbf\pr} + \sum_{\vsbf\pr \in \mathbb{Z}^3 \setminus \mathcal{D}_{{\vsbf}}} \e_{n\ell}^\top \underline{\K}^{\chi}_{n\vsbf\pr} \x_{n\vsbf - \vsbf\pr} = {a}_{n\ell\vsbf} +  \widetilde{a}_{n\ell\vsbf}, \; \text{say}.
\end{align*}
We first derive convergence rate of ${a}_{n\ell\vsbf}$. 
Letting $\bm C_{n\kbf}$ be the $n\times q$ matrix with entries  $c_{\ell j, \kbf}$, $\ell=1,\ldots,n$, $j=1,\ldots, q$, $\kbf\in\mathbb Z^2\times \mathbb N_0$, from Assumption \ref{Ass.MA}, we have
\[
\bm\chi_{n\vsbf-\vsbf\pr}= \sum_{\kbf} \bm C_{n\kbf} \bm v_{\vsbf-\vsbf\pr-\kbf}.
\]
Therefore,
\begin{align*}
\vert  {a}_{n\ell\vsbf} \vert
&\leq \left\vert \sum_{\vsbf\pr \in \mathcal{D}_{{\vsbf}}} \e_{n\ell}^\top \left( \widehat{\underline{\K}}^x_{n\vsbf\pr} - \underline{\K}^{\chi}_{n\vsbf\pr} \right)  \l( \sum_{\kbf} \bm C_{n\kbf} \bm v_{\vsbf-\vsbf\pr-\kbf}\r)\right\vert + \left\vert \sum_{\vsbf\pr \in \mathcal{D}_{{\vsbf}}} \e_{n\ell}^\top \left( \widehat{\underline{\K}}^x_{n\vsbf\pr} - \underline{\K}^{\chi}_{n\vsbf\pr} \right) \bfxi_{n \vsbf-\vsbf\pr}  \right\vert
=  a_{n\ell\vsbf}^{(1)} + a_{n\ell\vsbf}^{(2)}, \; \text{say}.
\end{align*}

Consider $a_{n\ell\vsbf}^{(1)}$. By Cauchy-Schwarz inequality
\begin{align}
 {\rm E}(a_{n\ell\vsbf}^{(1)}) &\leq  \sum_{\vsbf\pr \in \mathcal{D}_{{\vsbf}}} {\rm E} \left\vert  \e_{n\ell}^\top \left( \widehat{\underline{\K}}^x_{n\vsbf\pr} - \underline{\K}^{\chi}_{n\vsbf\pr} \right) \left( \sum_{\kbf} \bm C_{n\kbf} \bm v_{\vsbf-\vsbf\pr-\kbf} 
 \right) \right\vert\nn \\
&\leq  \sum_{\vsbf\pr \in \mathcal{D}_{{\vsbf}}} \left\lbrace {\rm E} \left\Vert  n^{1/2} \e_{n\ell}^\top \left( \widehat{\underline{\K}}^x_{n\vsbf\pr} - \underline{\K}^{\chi}_{n\vsbf\pr} \right) \right\Vert^2\right\rbrace^{1/2} \left\lbrace n^{-1} {\rm E} \left\Vert 
\sum_{\kbf} \bm C_{n\kbf} \bm v_{\vsbf-\vsbf\pr-\kbf} \right\Vert^2\right\rbrace^{1/2}. \label{CSI}
\end{align}
Now, because of Assumption \ref{Ass.MA}, which implies that $\bm v$ is a white noise rf, and Assumption \ref{Ass.coef1}
\begin{align}
\frac 1n {\rm E}\left\Vert  \sum_{\kbf} \bm C_{n\kbf} \bm v_{\vsbf-\vsbf\pr-\kbf}  \right\Vert^2 
&\le \frac 1n {\rm E}\left\Vert  \sum_{\kbf} \bm C_{n\kbf} \bm v_{\vsbf-\vsbf\pr-\kbf}  \right\Vert^2_F\nn\\ 
&= \frac 1n  \sum_{\kbf} \sum_{\kbf\pr} {\rm trace} \left(\bm C_{n\kbf} {\rm E} (\bm v_{\vsbf-\vsbf\pr-\kbf} \bm v_{\vsbf-\vsbf\pr-\kbf\pr}^\dag)   \bm C_{n\kbf\pr}^\dag \right)\nn \\ 
&=\frac 1n \sum_{\kbf} {\rm trace} \left( \bm C_{n\kbf}  \bm C_{n\kbf}^\dag \right) = \frac 1n \sum_{\kbf} \sum_{j=1}^q \sum_{\ell =1}^n \vert c_{\ell j, \kbf}\vert^2 \nn\\
&\le \frac 1n \sum_{\kbf} \sum_{j=1}^q \sum_{\ell =1}^n\l(A^{\chi}_{\ell j} {\rho_1^{\chi|\kappa_1|}} {\rho_2^{\chi|\kappa_2|}} {\rho_3^{\chi\kappa_3}}\r)^2\le q\max_{1\le \ell \le n}\max_{1\le j\le q} (A^{\chi}_{\ell j})^2 \widetilde A_{{\rho_1^\chi}{\rho_2^\chi}{\rho_3^\chi}}\nn\\
&\le q (A^{\chi})^2\widetilde A_{{\rho_1^\chi}{\rho_2^\chi}{\rho_3^\chi}}= C_8^*, \;\text {say},\label{primopezzo}
\end{align}
for some finite $\widetilde A_{{\rho_1^\chi}{\rho_2^\chi}{\rho_3^\chi}}>0$ depending only on ${\rho_1^\chi}$, ${\rho_2^\chi}$, and ${\rho_3^\chi}$.
Furthermore, noting that 
\begin{align*}
\left\Vert  n^{1/2} \e_{n\ell}^\top \left( \widehat{\underline{\K}}^x_{n\vsbf\pr} - \underline{\K}^{\chi}_{n\vsbf\pr} \right) \right\Vert 
&= \left\Vert\frac{1}{8\pi^3} \int_{\Thbf}   n^{1/2} \e_{n\ell}^\top \left[\widehat{\K}^x_{n}(\thbf) - {\K}^{\chi}_{n}(\thbf)\right]    e^{i\langle\vsbf\pr,\thbf\rangle}{\rm d} \thbf
\right\Vert \nn\\
&\leq
\frac{1}{8\pi^3} \int_{\Thbf} \left\Vert  n^{1/2} \e_{n\ell}^\top \left[\widehat{\K}^x_{n}(\thbf) - {\K}^{\chi}_{n}(\thbf)\right] \right\Vert  {\rm d} \thbf,
\end{align*}
we have
\begin{align*}
\left\Vert  n^{1/2} \e_{n\ell}^\top \left( \widehat{\underline{\K}}^x_{n\vsbf\pr} - \underline{\K}^{\chi}_{n\vsbf\pr} \right) \right\Vert^2 &\leq
\left( \frac{1}{8\pi^3}\right)^2 \left\lbrace \int_{\Thbf} \left\Vert  n^{1/2} \e_{n\ell}^\top \left[\widehat{\K}^x_{n}(\thbf) - {\K}^{\chi}_{n}(\thbf)\right] \right\Vert  {\rm d} \thbf \right\rbrace^2 \\
&\leq   \frac{1}{8\pi^3}  \int_{\Thbf} \left\Vert  n^{1/2} \e_{n\ell}^\top \left[\widehat{\K}^x_{n}(\thbf) - {\K}^{\chi}_{n}(\thbf)\right] \right\Vert^2 {\rm d} \thbf,
\end{align*}
which, because of Lemma~\ref{Lem.KhatK}, entails that
\begin{align}
{\rm E} \left\Vert  n^{1/2} \e_{n\ell}^\top \left( \widehat{\underline{\K}}^x_{n\vsbf\pr} - \underline{\K}^{\chi}_{n\vsbf\pr} \right) \right\Vert^2 &\leq  \frac{1}{8\pi^3}  \int_{\Thbf} {\rm E} \left\Vert  n^{1/2} \e_{n\ell}^\top \left[\widehat{\K}^x_{n}(\thbf) - {\K}^{\chi}_{n}(\thbf)\right] \right\Vert^2 {\rm d} \thbf\nn \\
&\leq  \sup_{\thbf \in \Thbf}  {\rm E} \left\Vert  n^{1/2} \e_{n\ell}^\top \left[\widehat{\K}^x_{n}(\thbf) - {\K}^{\chi}_{n}(\thbf)\right] \right\Vert^2\nn\\
&\leq C_3^* \max(n^{-1}, \alpha_{\bar{\vsbf}}).\label{secondopezzo}
\end{align}
 By using \eqref{primopezzo} and \eqref{secondopezzo}, into \eqref{CSI}, and recalling \eqref{def.underlineK} and \eqref{troncamento}, we have
\begin{align}
\max_{1\leq \ell \leq n} {\rm E}(a_{n\ell\vsbf}^{(1)})  &\leq \sqrt{C_3^*C_8^*} \max(n^{-1/2}, \alpha_{\bar{\vsbf}}^{1/2}) (\overline{\kappa}_1(s_1)-\underline{\kappa}_1(s_1)) (\overline{\kappa}_2(s_2)-\underline{\kappa}_2(s_2)) (\overline{\kappa}_3(t)-\underline{\kappa}_3(t))\nn\\
&\leq \sqrt{C_3^*C_8^*} \max(n^{-1/2}, \alpha_{\bar{\vsbf}}^{1/2}) M_{S_1}M_{S_2} M_T.\nn
\end{align}
Applying the same arguments will yield the same convergence rate of $a_{n\ell\vsbf}^{(2)}$. We hence have 
$$
\max_{1\leq \ell \leq n} {\rm E}(a_{n\ell\vsbf})  \leq C \max(n^{-1/2}, \alpha_{\bar{\vsbf}}^{1/2}) M_{S_1}M_{S_2} M_T,
$$
for some finite $C>0$ independent of $n, S_1, S_2$ and $T$.

Turning to $\widetilde{a}_{n\ell\vsbf}$, we have
\begin{align*}
\vert  \widetilde{a}_{n\ell\vsbf} \vert
&\leq \left\vert \sum_{\vsbf\pr \in \mathbb{Z}^3 \setminus \mathcal{D}_{{\vsbf}}} \e_{n\ell}^\top \underline{\K}^{\chi}_{n\vsbf\pr} \left( \sum_{\kbf} \bm C_{n\kbf} \bm v_{\vsbf-\vsbf\pr-\kbf} 
 \right) \right\vert + \left\vert \sum_{\vsbf\pr \in \mathbb{Z}^3 \setminus \mathcal{D}_{{\vsbf}}} \e_{n\ell}^\top \underline{\K}^{\chi}_{n\vsbf\pr} \bfxi_{n \vsbf-\vsbf\pr}  \right\vert
=  \widetilde{a}_{n\ell\vsbf}^{(1)} + \widetilde{a}_{n\ell\vsbf}^{(2)}, \; \text{say}.
\end{align*}
Consider $\widetilde{a}_{n\ell\vsbf}^{(1)}$. By repeating the same arguments for bounding ${\rm E}(a_{n\ell\vsbf}^{(1)})$, and using Assumption~\ref{Ass.ExpK} and Lemma~\ref{lem.PLamW}, we have
\begin{align}\label{eq.a1tilde}
{\rm E}(\widetilde{a}_{n\ell\vsbf}^{(1)}) &\leq \sum_{\vsbf\pr \in \mathbb{Z}^3 \setminus \mathcal{D}_{{\vsbf}}} \left\lbrace {\rm E} \left\Vert  n^{1/2} \e_{n\ell}^\top \underline{\K}^{\chi}_{n\vsbf\pr} \right\Vert^2\right\rbrace^{1/2} \left\lbrace n^{-1} {\rm E} \left\Vert 
\sum_{\kbf} \bm C_{n\kbf} \bm v_{\vsbf-\vsbf\pr-\kbf} \right\Vert^2\right\rbrace^{1/2}\nn\\
&\le \sqrt{C_8^*} \left\lbrace {\rm E} \left\Vert  n^{1/2} \e_{n\ell}^\top \underline{\K}^{\chi}_{n(0\ 0\ 0)}\right\Vert^2\right\rbrace^{1/2} \sum_{\vsbf\pr \in \mathbb{Z}^3 \setminus \mathcal{D}_{{\vsbf}}} (1+\varepsilon_1)^{-|s_1\pr|}  (1+\varepsilon_2)^{-|s_2\pr|} (1+\varepsilon_3)^{-|t\pr|}\nn \\
&\leq  {C}_9^* \left\lbrace {\rm E} \left\Vert  n^{1/2} \e_{n\ell}^\top \underline{\K}^{\chi}_{n(0\ 0\ 0)} \right\Vert^2\right\rbrace^{1/2} (1+\varepsilon_1)^{-\kappa^*_1(s_1)}  (1+\varepsilon_2)^{-\kappa^*_2(s_2)} (1+\varepsilon_3)^{-\kappa^*_3(t)}\nn \\
&\leq C^* (1+\varepsilon_1)^{-\kappa^*_1(s_1)}  (1+\varepsilon_2)^{-\kappa^*_2(s_2)} (1+\varepsilon_3)^{-\kappa^*_3(t)},\nn
\end{align}
for some finite $C^*>0$ independent of $\ell,n,S_1,S_2$, and $T$. 
Applying the same arguments will yield the same convergence rate of $\widetilde{a}_{n\ell\vsbf}^{(2)}$.

All of the above arguments yield
\begin{align*}
\max_{1\leq \ell \leq n} {\rm E} \left\vert \widehat{\chi}^{(n)}_{\ell\vsbf} - \chi_{\ell\vsbf} \right\vert 
\leq  \ &  C \max(n^{-1/2}, \alpha_{\bar{\vsbf}}^{1/2}) M_{S_1}M_{S_2} M_T + C^* (1+\varepsilon_1)^{-\kappa^*_1(s_1)}  (1+\varepsilon_2)^{-\kappa^*_2(s_2)} (1+\varepsilon_3)^{-\kappa^*_3(t)}.
\end{align*}
This proves part (i). 

Furthermore,  we have that 
\[
(1+\varepsilon_1)^{-\kappa^*_1(s_1)}  (1+\varepsilon_2)^{-\kappa^*_2(s_2)} (1+\varepsilon_3)^{-\kappa^*_3(t)}=o\l\{(1+\varepsilon_1)^{-M_{S_1}}  (1+\varepsilon_2)^{-M_{S_2}} (1+\varepsilon_3)^{-M_T} \r\},
\]
which by Assumption \ref{AssSpec2} is dominated by the first term of part (i). This proves part (ii) and completes the proof.

\section{Proofs of Results of of Section 7 }\label{App.PFqselect} 

In order to prove the theorem, we need the following result

\begin{lemma}\label{Lem: lamdahat}
Let Assumptions \ref{Ass.A}, \ref{Ass.B},  \ref{Ass.MA},  \ref{Ass.coef1}, \ref{Ass.lamChi}, \ref{Ass.moments}, \ref{AssSpec2}, and  \ref{Ass.pnvs} hold. Then, for all $\epsilon >0$ there  exist $\delta_{\epsilon}$, $S_{1\epsilon}$, $S_{2\epsilon}$ and $T_{\epsilon}$ such that for any fixed $q_{\max}$, $n$, $S_1 > S_{1\epsilon}$, $S_2 > S_{2\epsilon}$, and $T > T_{\epsilon}$,
\begin{align*}
& \max_{1\leq k \leq q_{\max}} \sup_{\thbf \in \Thbf} {\rm P}\left\lbrace \min\left[ \frac 1{ \log B_{S_1}  \log B_{S_2}  \log B_{T} }\sqrt{\frac{S_1 S_2 T}{B_{S_1} B_{S_2} B_T}}, B_{S_1}^{\vartheta_1}, B_{S_2}^{\vartheta_2}, B_T^{\vartheta_3}\right]   \frac{\vert \widehat{\lambda}^x_{nk}(\thbf) - {\lambda}^x_{nk}(\thbf) \vert}n >  \delta_{\epsilon} \right\rbrace < \epsilon.
\end{align*}

\end{lemma}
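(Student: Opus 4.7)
The plan is to reduce the statement to a one-line application of Markov's inequality, using the mean-square rate on the eigenvalue perturbation already established in Lemma~\ref{lem.lamChi2}(i). The key algebraic observation is that
\[
\min\!\left[\frac{1}{\log B_{S_1}\log B_{S_2}\log B_{T}}\sqrt{\frac{S_1 S_2 T}{B_{S_1} B_{S_2} B_T}},\ B_{S_1}^{\vartheta_1},\ B_{S_2}^{\vartheta_2},\ B_T^{\vartheta_3}\right]
\;=\;\alpha_{\bar{\vsbf}}^{-1/2},
\]
where $\alpha_{\bar{\vsbf}}$ is the rate appearing in Theorem~\ref{Thmsigmax} and in Lemma~\ref{lem.lamChi2}(i). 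So the event inside the probability is exactly
$\{n^{-1}\,|\widehat{\lambda}^x_{nk}(\thbf)-\lambda^x_{nk}(\thbf)|/\alpha_{\bar{\vsbf}}^{1/2}>\delta_\epsilon\}$.

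Next I would invoke the second-moment bound
\[
\sup_{\thbf\in\Thbf}\,n^{-2}\,{\rm E}\bigl|\widehat{\lambda}^x_{nk}(\thbf)-\lambda^x_{nk}(\thbf)\bigr|^2\ \le\ C\,\alpha_{\bar{\vsbf}},
\]
which, although stated in Lemma~\ref{lem.lamChi2}(i) for $j=1,\ldots,q$, is immediately valid for every $k\le q_{\max}$: its proof only uses Weyl's inequality $|\widehat{\lambda}^x_{nk}-\lambda^x_{nk}|\le\|\widehat{\bm\Sigma}_n^x-\bm\Sigma_n^x\|$ combined with Lemma~\ref{lem.LamHat}(i), both of which hold uniformly in the eigenvalue index. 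I would state this extension explicitly (it is the only place where some care is needed, since $q_{\max}$ is allowed to exceed the true $q$). Markov's inequality then yields, uniformly in $k\le q_{\max}$ and $\thbf\in\Thbf$,
\[
{\rm P}\!\left(\frac{|\widehat{\lambda}^x_{nk}(\thbf)-\lambda^x_{nk}(\thbf)|}{n\,\alpha_{\bar{\vsbf}}^{1/2}}>\delta_\epsilon\right)\ \le\ \frac{{\rm E}|\widehat{\lambda}^x_{nk}(\thbf)-\lambda^x_{nk}(\thbf)|^2}{n^2\,\alpha_{\bar{\vsbf}}\,\delta_\epsilon^2}\ \le\ \frac{C}{\delta_\epsilon^2},
\]
with the constant $C$ independent of $n,S_1,S_2,T$ and of $k$.

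Setting $\delta_\epsilon=\sqrt{C/\epsilon}$ then produces the required upper bound $\epsilon$. The role of $S_{1\epsilon}, S_{2\epsilon}, T_\epsilon$ is purely technical: they only need to be large enough so that the bandwidths $B_{S_1},B_{S_2},B_T$ fall in the admissible range of Assumption~\ref{AssSpec2} (so that $\widehat{\bm\Sigma}_n^x(\thbf)$, its eigenvalues, and the rate $\alpha_{\bar{\vsbf}}$ are all well defined) and so that the bound from Lemma~\ref{lem.lamChi2}(i) applies. No divergence condition on $n$ is required here, which is consistent with the statement of the lemma.

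Because the argument is essentially Markov applied to a result that is already in hand, there is no substantive obstacle. The only thing to watch out for is (i) recognizing the algebraic identity between the stated $\min$ and $\alpha_{\bar{\vsbf}}^{-1/2}$, and (ii) pointing out that Lemma~\ref{lem.lamChi2}(i) extends from $j\le q$ to all $k\le q_{\max}$ via the same Weyl-plus-operator-norm chain used in its proof. Uniformity in $\thbf$ is inherited from Lemma~\ref{lem.LamHat}(i), and uniformity in $k$ follows from the maximum over a finite index set.
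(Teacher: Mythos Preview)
Your proposal is correct and follows essentially the same route as the paper: the paper's proof invokes Lemma~\ref{lem.lamChi2}(i) together with Lemma~A.1 of \cite{HallinLiska2007}, the latter being precisely the Chebyshev/Markov conversion of a uniform second-moment bound into the stated probability inequality, which you carry out explicitly. Your observation that the Weyl-plus-operator-norm argument behind Lemma~\ref{lem.lamChi2}(i) works for every $k\le q_{\max}$ (not only $k\le q$) is exactly the small extension needed and is implicit in the paper's appeal to that lemma.
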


\begin{proof}
The result follows immediately from Lemma~A.1 in \cite{HallinLiska2007} and Lemma~\ref{lem.lamChi2}(i).
\end{proof}


It then suffices to prove that for all $k\neq q$, $0\leq k \leq q_{\max}$, as $n,S_1,S_2,T\to\infty$,
\beq\label{desiderio}
{\rm P}\left[ \wh{\rm IC}^{(n)}(k) - \wh{\rm IC}^{(n)}(q) >0 \right] \rightarrow 1.
\eeq
Start with the case that $k < q$. Letting
$$
\widehat{D}_{nk} = \frac{1}{n} \sum_{j=k+1}^n \frac{1}{8\pi^3} 
\int_{\Thbf} \widehat{\lambda}^x_{nj}(\thbf)\mathrm d\thbf
\quad\text{ and }\quad
{D}_{nk} = \frac{1}{n} \sum_{j=k+1}^n\frac 1{8\pi^3}\int_{\Thbf} {\lambda}^x_{nj}(\thbf)\mathrm d\thbf,
$$
we have
\begin{align}
\widehat{\rm IC}^{(n)}(k) - \widehat{\rm IC}^{(n)}(q)
 = \log\left[\frac{(\widehat{D}_{nk} - \widehat{D}_{nq})}{\widehat{D}_{nq}} + 1\right] +(k-q) p(n, S_1,S_2,T).\label{eq.DDbarhat}
\end{align}

Now Lemma~\ref{lem.Lamxxhat}, implies that $\inf_{\thbf \in \Thbf} n^{-1} \lambda^x_{nq} >0$ and this yields that, as $n\rightarrow \infty$

\begin{equation}\label{eq.DDbar}
\log\left[\frac{({D}_{nk} - {D}_{nq})}{{D}_{nq}} + 1\right] > 0.
\end{equation}
The desired result \eqref{desiderio} follows by using \eqref{eq.DDbar}, Lemma~\ref{Lem: lamdahat}, and Assumption \ref{Ass.pnvs} (specifically, the assumption  $p(n, S_1,S_2,T \rightarrow 0$)) in \eqref{eq.DDbarhat}.

  
Let us then consider the case $k > q$. Note that 
\beq\label{FFFF}
{D}_{nq} - {D}_{nk} = \frac{1}{n} \sum_{j=q+1}^k \frac 1{8\pi^3}\int_{\Thbf} {\lambda}^x_{nj}(\thbf)\mathrm d\thbf\le \frac{(k-q)C}n,
\eeq
for some finite $C>0$ independent of $n$. Indeed, by Weyl's inequality and Proposition \ref{lukagu},
$$
\sup_{\thbf \in \Thbf}\sup_{n\in\mathbb N} \lambda^x_{n,q+1}(\thbf)\le \sup_{\thbf \in \Thbf}\sup_{n\in\mathbb N} \l\{\lambda^{\chi}_{n,q+1}(\thbf)+ \lambda^{\xi}_{n,q+1}(\thbf)\r\}=\sup_{\thbf \in \Thbf}\sup_{n\in\mathbb N} \lambda^{\chi}_{n,q+1}(\thbf)\le C.
$$ 
From \eqref{FFFF} and Lemma~\ref{Lem: lamdahat} we have
$$\widehat{D}_{nq} - \widehat{D}_{nk} = O_{\rm P}\left\lbrace \max\left[\frac 1n, \log B_{S_1}  \log B_{S_2}  \log B_{T}\sqrt{\frac{B_{S_1} B_{S_2} B_T}{S_1 S_2 T}}, \frac 1{B_{S_1}^{\vartheta_1}}, \frac 1{B_{S_2}^{\vartheta_2}},\frac 1{B_{T}^{\vartheta_3}}\right]\right\rbrace,
$$
which implies also
\begin{align*}
&\log\left[\frac{(\widehat{D}_{nq} - \widehat{D}_{nk})}{\widehat{D}_{nk}} +1 \right]= O_{\rm P}\left\lbrace \max\left[\frac 1n, \log B_{S_1}  \log B_{S_2}  \log B_{T}\sqrt{\frac{B_{S_1} B_{S_2} B_T}{S_1 S_2 T}}, \frac 1{B_{S_1}^{\vartheta_1}}, \frac 1{B_{S_2}^{\vartheta_2}},\frac 1{B_{T}^{\vartheta_3}}\right]\right\rbrace.
\end{align*}
Now, given that
\begin{align*}
\widehat{\rm IC}_{n}(q) - \widehat{\rm IC}_{n}(k)
 = \log\left[\frac{(\widehat{D}_{nk} - \widehat{D}_{nq})}{\widehat{D}_{nq}} + 1\right] +(q-k) p(n, S_1,S_2,T),
 \end{align*}
 \eqref{desiderio} follows from Assumption~\ref{Ass.pnvs}, which yields 
$$
{\rm P}\l((k-q) p(n, S_1,S_2,T) >  \log\left[\frac{(\widehat{D}_{nq} - \widehat{D}_{nk})}{\widehat{D}_{nk}}+ 1\right]\r)\to 1 \quad \text{as} \quad n,S_1,S_2,T\to\infty.
$$
This completes the proof.


\section{Selection of the number of factors in practice}\label{sec:HLABC}

Clearly, $c \mapsto \widehat{q}_{c}^{(n)}$ is a non-increasing map: a small (large) value of $c$ corresponds to underpenalization (overpenalization). Therefore, the correct identification of  $q$ should be based on a sequence of $c$, starting from a small value until appropriate penalization is reached. A thorough discussion and numerical analysis on this aspect in the context of GDFM is available in \cite{HallinLiska2007}. Adopting it to our setting of spatio-temporal setting, we propose the following procedure for the choice of $c$.
For a given sample  of dimension $n$ and a fixed $c>0$, consider a sequence of estimator $\widehat{q}_{c}^{(n_j)}, j = 1, \ldots, J$, where $0 < n_1 < \ldots < n_J = n$, and define a measure of variability by
\begin{equation}\label{eq.Sc}
S_c = \frac 1J \sum_{j=1}^J \left( \widehat{q}_{c}^{(n_j)} - \frac 1J \sum_{j=1}^J \widehat{q}_{c}^{(n_j)} \right)^2.
\end{equation}
Notice that for $c$ close to zero, due to underpenalization issue, one always obtains $\widehat{q}_{c}^{(n)} = q_{\max}$, so that this yields the first ``stability interval" of the map $c \mapsto S_c$, where $S_c = 0$ for any $c$ in this interval. On the contrary, for a large $c$, overpenalization leads to a stability interval of the map $c \mapsto S_c$, where $S_c = 0$ and $\widehat{q}_{c}^{(n)} = 0$ for any $c$ in this interval. Numerical studies in \cite{HallinLiska2007} suggest choosing $c$ and the corresponding $\widehat{q}_{c}^{(n)}$ that belongs to the second stability interval of the map $c \mapsto S_c$.  

\begin{algorithm}[H] \label{Algm2}
\SetAlgoLined
\KwIn{data $\{x_{\ell\vsbf},\ \ell =1,\ldots, n, \vsbf=(s_1\ s_2\ t)^\top,  s_1=1,\ldots, S_1,  s_2=1,\ldots, S_2, t=1,\ldots, T\}$;
upper bound $q_{\max}$;\linebreak
a sequence $n_1, \ldots, n_J$ integers for subsample dimensions; \linebreak
a sequence values $c_1, \ldots, c_L$ of reals; \linebreak
kernel functions $K_1(\cdot)$, $K_2(\cdot)$, and $K_3(\cdot)$;\linebreak 
bandwidths integers $B_{S_1}, B_{S_2}$, and $B_T$; \linebreak 
 penalty $p(n, S_1,S_2,T)$.}
\KwOut{$\widehat{q}_{\widehat{c}}^{(n)}$.}

Choose a random permutation of the $n$ cross-sectional items.

\For{$\ell \gets 1$ to $L$}{
\For{$j \gets 1$ to $J$}{
Compute $\widehat{\boldsymbol \Sigma}_{n_j}^x(\thbf_{\hbf})$ as in \eqref{hatSimgax}, with $\thbf_{\hbf}$ as in Remark \ref{rem:freq}.

Compute the $q_{\max}$ largest eigenvalues of $\widehat{\boldsymbol \Sigma}_{n_j}^x(\thbf)$. 

Compute the information criterion $\widehat{\rm IC}^{(n_j)}(k)$ as in \eqref{eq.hatICtext}, with the penalty $k c_{\ell} p(n_j, , S_1,S_2,T)$.

Obtain $\widehat{q}_{c_{\ell}}^{(n_j)}$.
}

Compute $S_{c_{\ell}}$ as in \eqref{eq.Sc}.
}

Plot $c \mapsto S_c$ and choose a $\widehat{c}$ that belongs to the second stability interval of the plot.

\KwRet {$\widehat{q}_{\widehat{c}}^{(n)}$.}

\caption{Algorithm for selection of $q$}
\label{algorithm.Estq}
\end{algorithm}

%
%
%
%
%
%
%


In Figure~\ref{selectq3_AR_MA}, we plot $c \mapsto 4S_c$ (in blue) and $c \mapsto \widehat{q}_{{c}}^{(n)}$ (in red)  for  Model (a) in \eqref{modelAR} and Model (b) in \eqref{modelMA}. The first stability interval, where $S_c=0$ and $c$ is close to 0, corresponds to $\widehat q_c^{(n)} = q_{\max}$. For $c$ in the second stability interval, we have $\widehat{q}_{\widehat{c}}^{(n)} = q$ for both models, as expected. These plots and the results in Table \ref{Tab.selectq_AR_MA} are obtained with $n = 100$, $(S_1, S_2, T) = (25, 25, 25)$, $q=3$, $n_j = n - 5j, j = 1, 2, \ldots, 16$, $c_{\ell} = \ell/2000, \ell = 0, 1, \ldots, 6000$, $q_{\max} = 10$ 
 and penalty 
$$
p(n, S_1,S_2,T) = (n^{-1} + B_{S_1}^{-\vartheta_1} + B_{S_2}^{-\vartheta_2} + B_T^{-\vartheta_3} +  V_{\bar{\vsbf}}^{-1}) \log [\min(n, B_{S_1}^{\vartheta_1}, B_{S_2}^{\vartheta_2}, B_T^{\vartheta_3}, V_{\bar{\vsbf}})],
$$
where
$$
V_{\bar{\vsbf}} = \frac{(S_1 S_2 T)^{1/2}}{(B_{S_1} B_{S_2} B_T)^{1/2} \log(B_{S_1}) \log(B_{S_2}) \log(B_{T})}.
$$

 \begin{figure}[h]
 \begin{center}
 \includegraphics[scale=0.3]{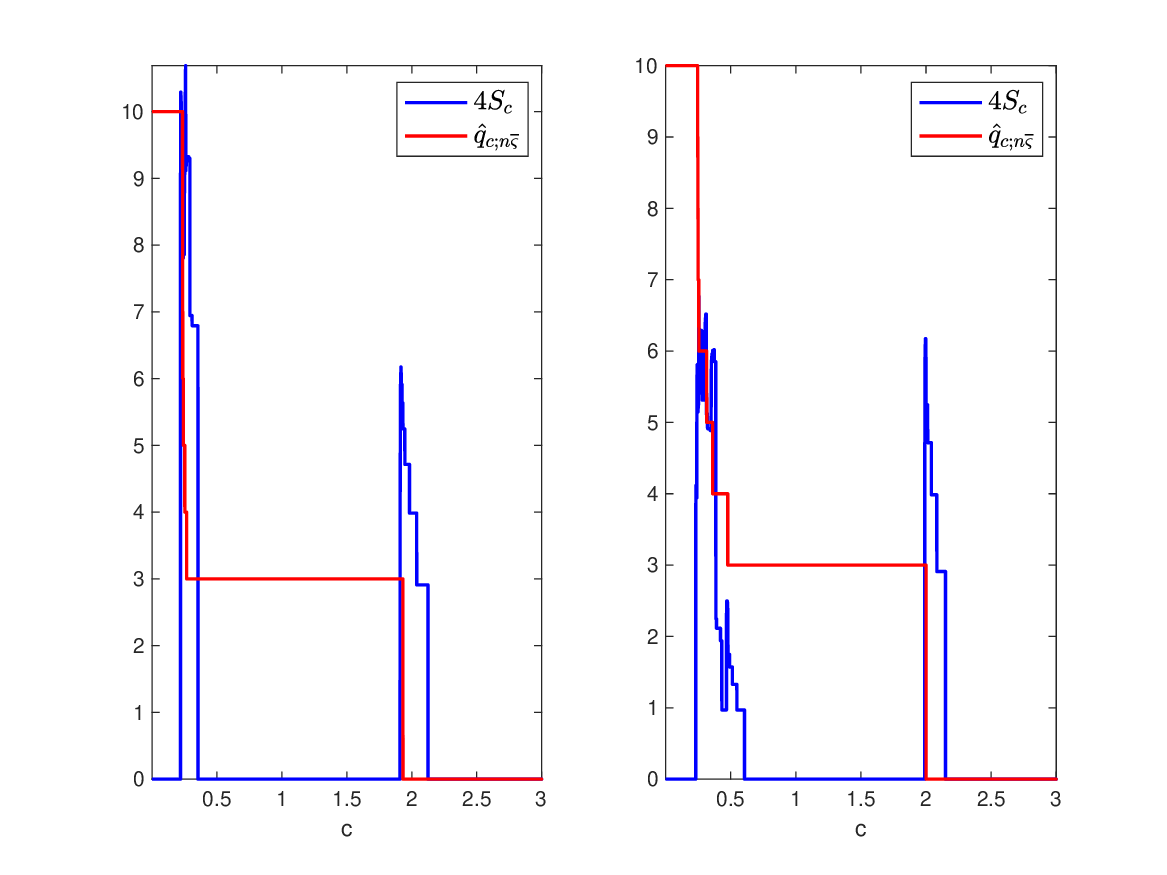} 
  \caption{Plots of $c \mapsto 4S_c$ (blue) and $c \mapsto \widehat{q}_{{c}}^{(n)}$ (red) for model \eqref{modelMA} (left) and model \eqref{modelAR} (right), with $n=100$, $(S_1,S_2,T)=(25,25,25)$, and $q=3$ }\label{selectq3_AR_MA}
 \end{center}
 \end{figure}

\clearpage
\section{Real data example} \label{Sec: real}

\subsection{Data preparation}

We apply the proposed method to model a resting-state cerebral functional magnetic resonance imaging (rs-fMRI) dataset obtained from the Alzhimer’s Disease Neuroimaging Initiative (ADNI) research project (\url{https://adni.loni.usc.edu}). The primary goal of ANDI is to 
measure the progression of mild cognitive impairment and early Alzheimer’s disease. 
The subjects in our rs-fMRI dataset consists of 139 aged 55-90 years old from 59 research centers in the U.S. and Canada. Their first brain scans after enrolled in ANDI are examined in our dataset. Out of the 139 subjects, we have 41 participants that are cognitively normal (CN), and 98 participants with different conditions or diseases, which can be further divided into several sub-groups: 32 Alzheimer’s disease (AD) patients, 40 mild cognitive impairment (MCI) patients
and 26 significant memory concern (SMC) patients. 

Following the protocol pre-processing steps as described in Appendix \ref{subsec:preprocess}, we obtain numerical brain activity measurements for each subject at 130 time points and at 116 spatial locations, which are corresponding to the Anatomical Automatic Labeling brain atlases template \citep{T-M02_Auto}. 
These locations in human brain are characterised by irregular coordinates. To apply our methodology we need to transfer the data into a regular spatial lattice. This can be performed resorting on optimal transportation (OT) theory, which provides a transportation map from irregular data in $\mathbb{R}^d$ to regular data in   $\mathbb{R}^{d^\prime}$, with $d,d^\prime \geq 1$. Here we simply say that thanks to the use of OT, we obtain an optimal coupling that allows to map irregular spatial 3D data  to regular data on a 2D lattice,  while preserving the spatial structure to the maximum extent. Thanks this procedure,  we obtain a  $10 \times 11$ grid over a 2D spatial network. Moreover, as it is customary in the statistical analysis of fMRI data, for each subject we smooth the time available series by applying moving average filters---with window length $(2, 2, 4)$. 

\subsection{Spatio-temporal correlations}
We first analyse the strength of spatial-temporal dependence in each group of patients. To do so, 
consider the estimator 
$\widehat{\bm\Gamma}_n^x(\boldsymbol h)$ and
let $\widehat{\gamma}_{n, ij}^x(\boldsymbol h)$, $i, j=1,\ldots, n,$ denote the $(i,j)$-th entry of $\widehat{\bm\Gamma}_n^x(\boldsymbol h)$, and denote by
$\widehat{\rho}_{ni}^x(\boldsymbol h) = \widehat{\gamma}_{n, ii}^x(\boldsymbol h)/\widehat{\gamma}_{n, ii}^x(\boldsymbol 0)$ the sample spatio-temporal autocorrelation for the rf $x_{i}$.

To investigate the spatial correlations, Figure~\ref{autoCorrfMRI_averagePatient} shows the heatmap of $n^{-1} \sum_{i = 1}^n \widehat{\rho}_{ni}^x(\boldsymbol h)$ (that is, the average, over patients, of $\widehat{\rho}_{ni}^x(\boldsymbol h)$) for each subgroup, where we set $ h_1, h_2=0,\ldots, 4$ and $h_3 = 0$. Clearly, for all subgroups, strong spatial dependence exists along both spatial directions. For the temporal correlation,  in Figure~\ref{autoCorrfMRI_TdependPatient}, we display the heatmap of $n^{-1} \sum_{i = 1}^n \widehat{\rho}_{ni}^x(\boldsymbol h)$ for each subgroup, where we set $h_1, h_2 = 0$ and $h_3=0,\ldots, 4$. Hence, to model these spatio-temporal dynamics,  we can apply a GSTFM.

 \begin{figure}[h]
  \centering
 \includegraphics[scale = 0.5]{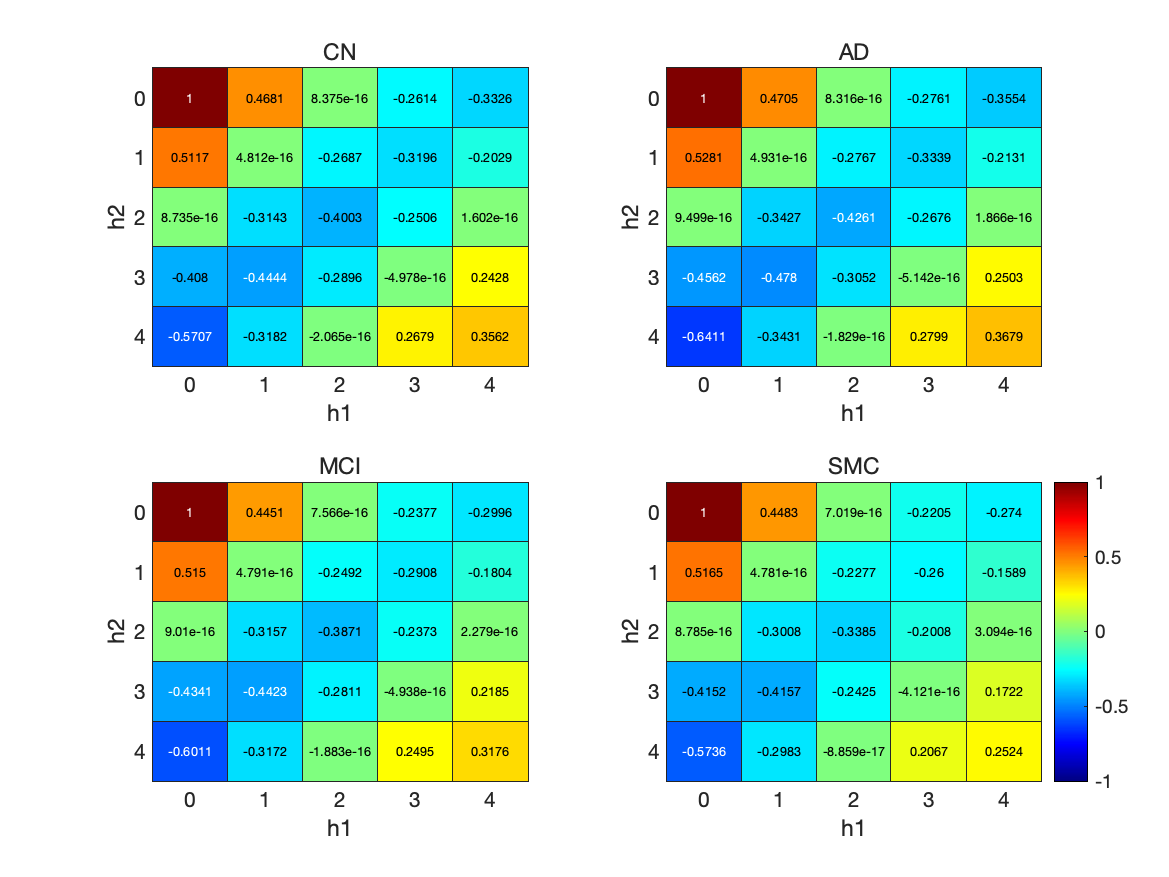} 
 \caption{Spatial correlations. Heatmaps of the $n^{-1} \sum_{i = 1}^n \widehat{\rho}_{ni}^x(\boldsymbol h)$ for the subgroups (CN, AD, MCI, and SMC) ($h_1,h_2=0,\ldots, 4$ and $h_3 = 0$).} \label{autoCorrfMRI_averagePatient}
 \end{figure}
 
 \begin{figure}[h] 
 \centering
 \includegraphics[scale = 0.3]{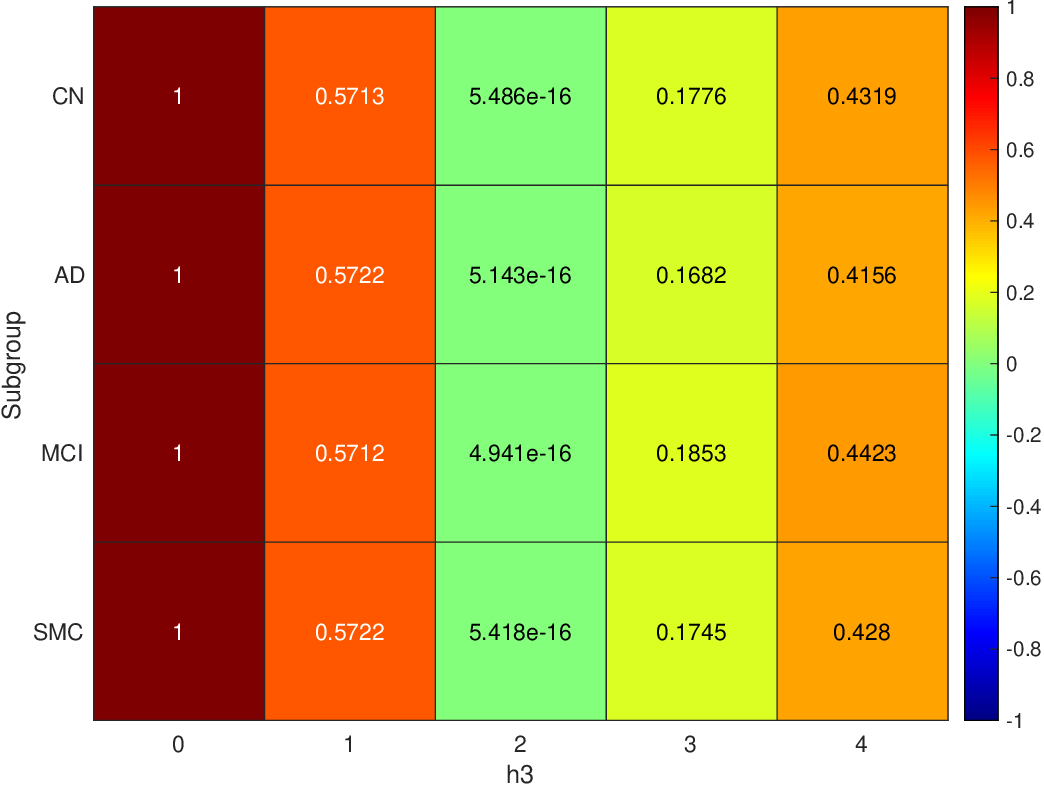} 
 \caption{Temporal correlations. Heatmaps of the $n^{-1} \sum_{i = 1}^n \widehat{\rho}_{ni}^x(\boldsymbol h)$ for the subgroups (CN, AD, MCI, and SMC) ($h_1, h_2 = 0$ and $h_3=0,\ldots, 4$).} \label{autoCorrfMRI_TdependPatient}
 \end{figure} 
 

\subsection{Number of factors} \label{App_Selq}
For all the subgroups, we plot the largest 20 eigenvalues in Figure~\ref{EigenfMRI_piTheta} for frequency $(\pi, \pi, \pi)$ (top panel) and for the averaged values over $8\times 8 \times 8$ frequencies on the regular grid of $[-\pi, \pi]^3$ (bottom panel). A rapid inspection of both figures reveals difference between the CN and the other subgroups: the gaps between the first and second eigenvalues of the CN are significantly larger than the other subgroups. 

 \begin{figure}[h]
 \centering
 \includegraphics[scale = 0.6]{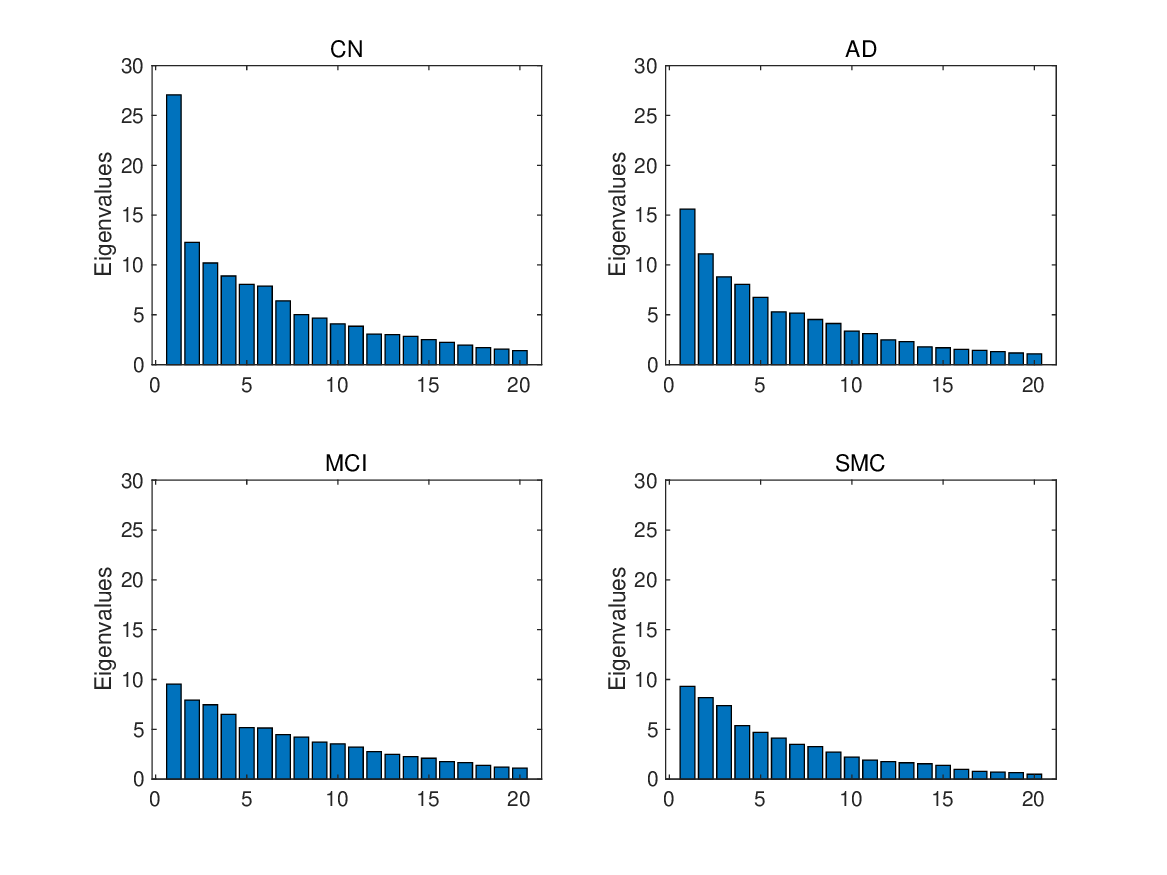}
 \includegraphics[scale = 0.6]{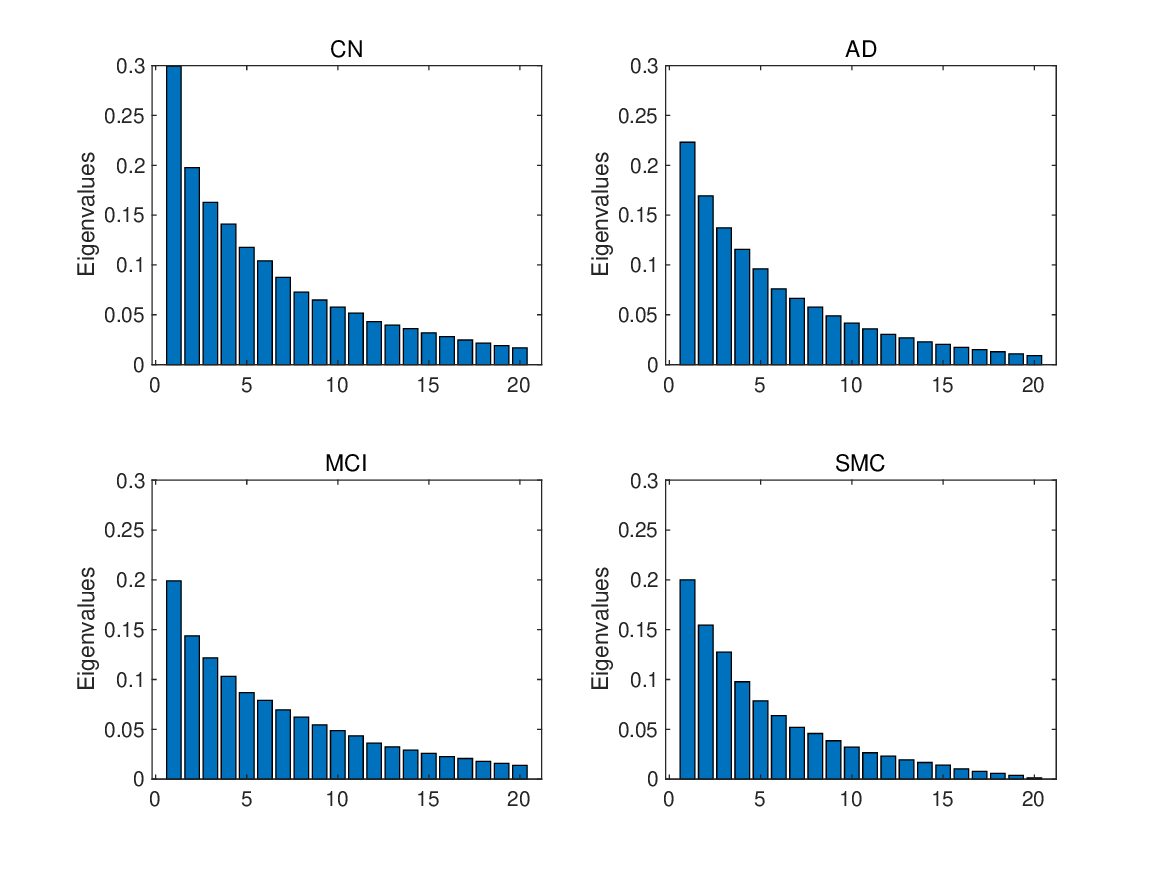} 
 \caption{Plot of the largest 20 dynamic spatio-temporal eigenvalues  at frequency $(\pi, \pi, \pi)$ (top) and averaged over $8\times 8 \times 8$ frequencies on the regular grid of $[-\pi, \pi]^3$ (bottom), for the subgroups (CN, AD, MCI, and SMC).} \label{EigenfMRI_piTheta}
 \end{figure}


 \begin{figure}[h]
 \begin{center}
 \includegraphics[scale = 0.6]{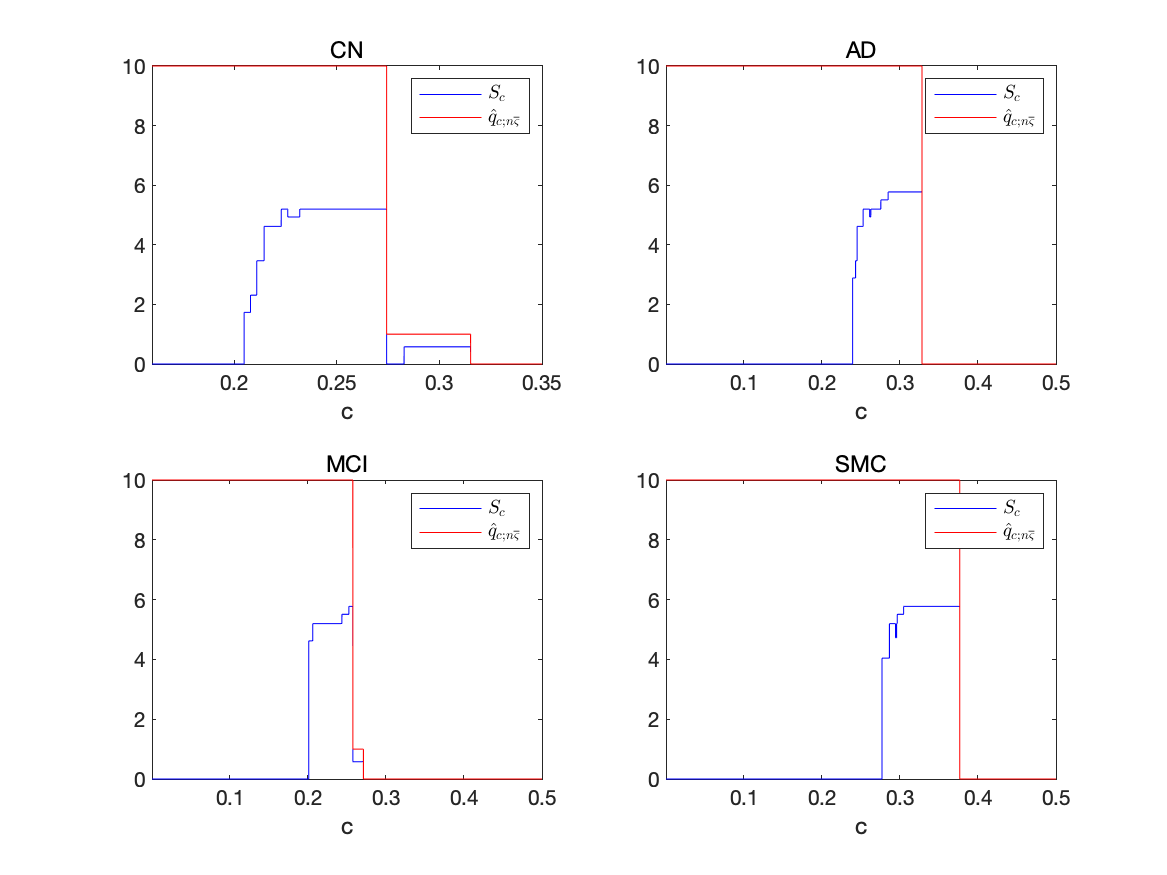} 
  \caption{Plots of $c \mapsto S_c$ (blue) and $c \mapsto \widehat{q}_{\widehat{c}}^{(n)}$ (red) for the subgroups (CN, AD, MCI, and SMC).} \label{fMRI_smooth_selectq}
 \end{center}
 \end{figure}

 To gain further understanding, we then apply Algorithm~\ref{algorithm.Estq} to select the number of latent factors $q$. In Figure~\ref{fMRI_smooth_selectq} we display the plots of $c \mapsto S_c$ (blue line) and $c \mapsto \widehat{q}_{{c}}^{(n)}$ (red line). The plots suggest that $\widehat{q}^{(n)}_{\wh c} = 1$ for the CN group, while $\widehat{q}^{(n)}_{\wh c} = 0$ for the other subgroups. 
Computing all of the averaged dynamic eigenvalues of the CN subgroup, we note that the explained variance  of the common component is $20\%$. 


\subsection{Estimation of the common component} 

Next, we analyse the common component ${\chi}_{\ell\vsbf}$, $\ell = 1, \ldots, n,$ of the CN subgroup based on the estimator $\widehat{\chi}^{(n)}_{\ell\vsbf}$. In order to construct a quantity that represents strength, over time, of the common component for all subjects in the CN subgroup, we define a {temporal coincident indicator} based on the weighted average of the estimated common component (averaged over space in advance). More specifically, the temporal coincident indicator is defined as 
$${\rm CI}_{t}^{(n)} =  \sum_{\ell =1}^n  w_{\ell} \bar{\chi}^{(n)}_{\ell t}, \quad t = 1, \ldots, T,$$ 
where 
$\bar{\chi}^{(n)}_{\ell t} = S_1^{-1} S_2^{-1} \sum_{s_1 =1}^{S_1} \sum_{s_1 =1}^{S_2} \widehat{\chi}^{(n)}_{\ell\vsbf}$
is the average of $\widehat{\chi}^{(n)}_{\ell\vsbf}$ taken over space, and the weight
$$w_{\ell} = \frac{\sum_{t=1}^T (\bar{\chi}^{(n)}_{\ell t})^2}{\sum_{\ell=1}^n \sum_{t=1}^T (\bar{\chi}^{(n)}_{\ell t})^2}$$
is defined according to the level (in time domain) of the common component of subject $\ell$. Heuristically, by this construction of weights, the subjects that are the main drivers of the common factor can gain more weights in ${\rm CI}_{t}^{(n)}$. 

Figure~\ref{Fig.CoIndicator} shows plots of  ${\rm CI}_{t}^{(n)}$, $t=1,\ldots, T$, and of the observed spatio-temporal rf (averaged over space)  for four randomly selected subjects in the CN subgroup. Obvious co-movements of 
the CI and rf can be observed and the CI seems to capture the magnitude of fluctuations of the rf for all subjects. 
To investigate further the behaviour of ${\rm CI}_{t}^{(n)}$ and rf in frequency domain, we plot their periodograms  
in Figure~\ref{Fig.PeriodogramT} (to make a fair comparison, the periodograms, smoothed with Hamming window, of the rf are rescaled through dividing the ratios of their integral over frequency with respect to that of  ${\rm CI}_{t}^{(n)}$).   
{In general, we observe that the periodogram of  ${\rm CI}_{t}^{(n)}$ and the subject specific periodogram have similar shapes: they both have several peaks between $0$ and $0.4\pi$, with the former showing a slight phase shift to higher frequencies compared to the latter.} 

 \begin{figure}[h]
 \begin{center}
 \includegraphics[scale = 0.6]{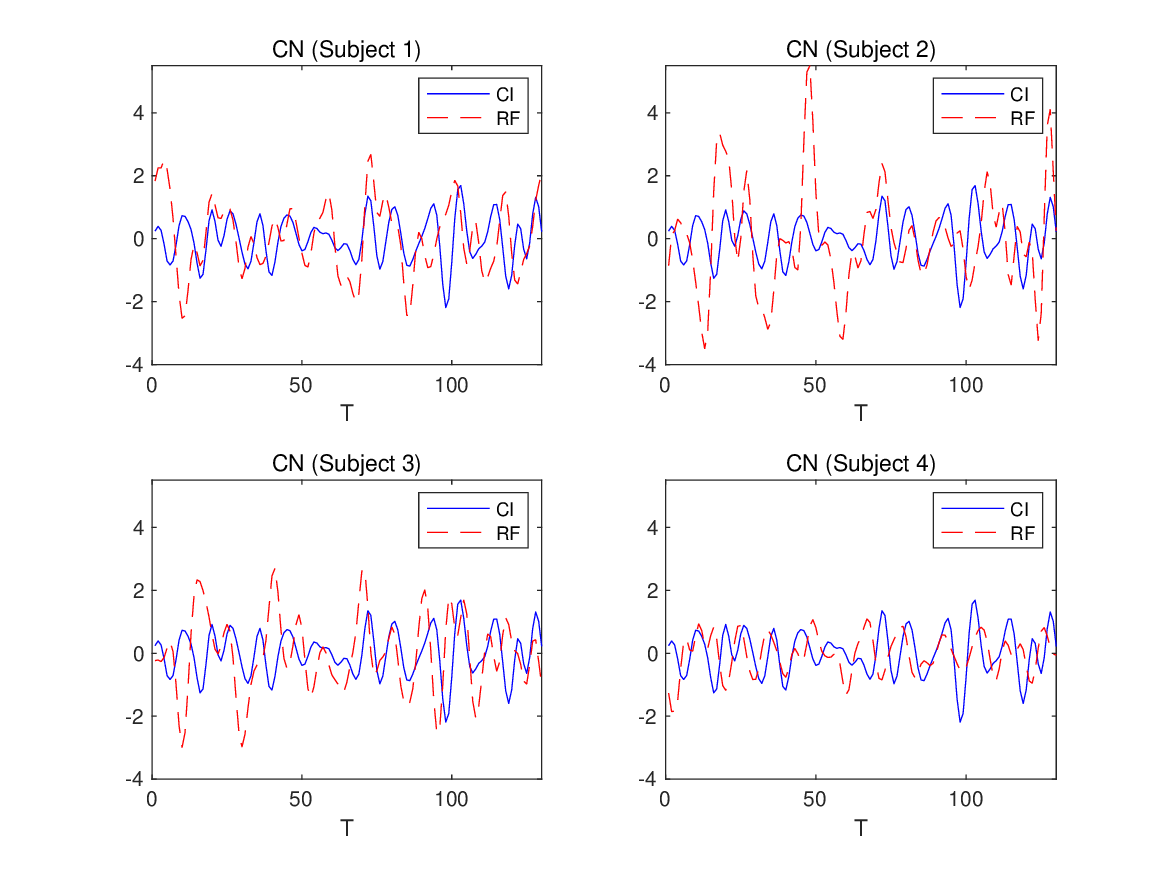}
 \caption{Plots of the temporal coincident indicator $\rm{CI}_t^{(n)}$ and rf  (averaged over space) for four randomly selected subjects in the CN subgroup.} \label{Fig.CoIndicator}
 \end{center}
 \end{figure}

\begin{figure}[h]
 \begin{center}
 \includegraphics[scale = 0.6]{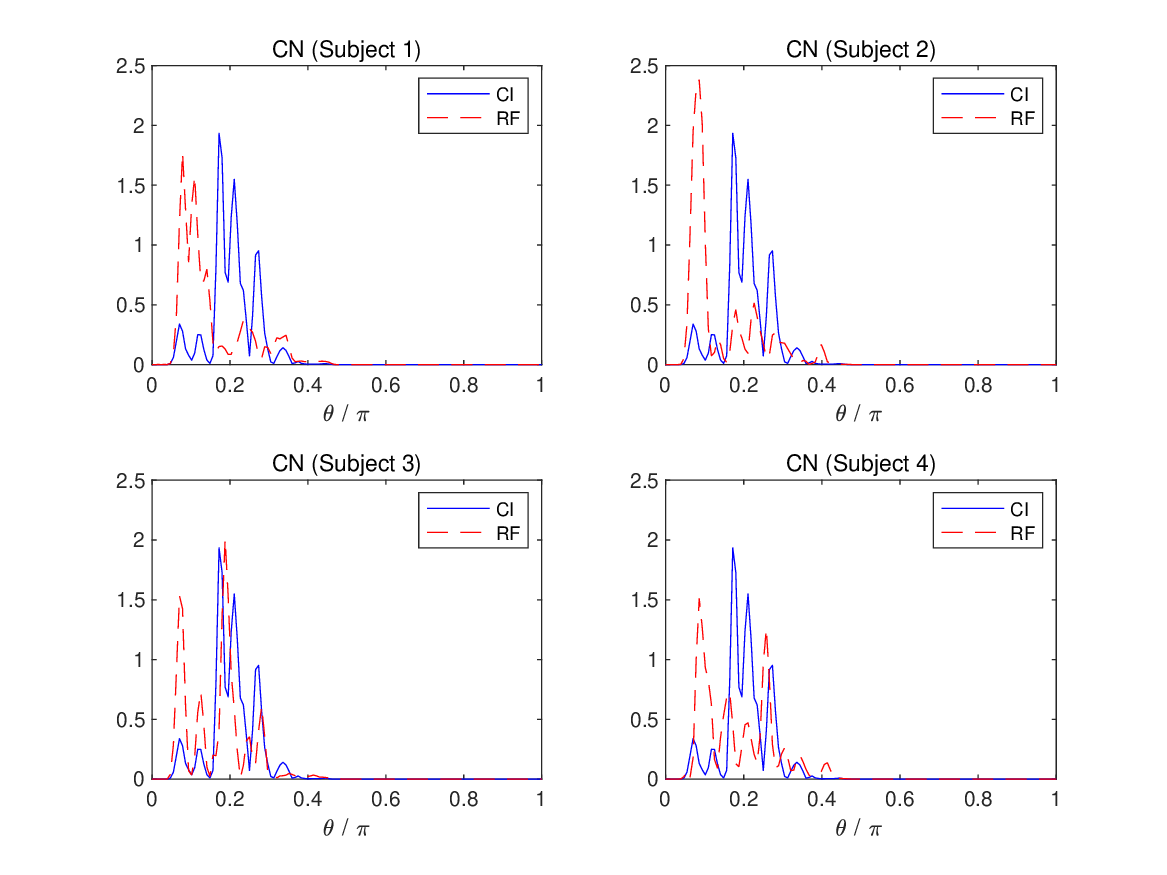}
 \caption{Plots of the periodograms of the temporal coincident indicator and rf  (averaged over space) over frequency $[0, \pi]$ for four randomly selected subjects in the CN subgroup.} \label{Fig.PeriodogramT}
 \end{center}
 \end{figure}

In a similar fashion, we can analyse the common component  in space domain by defining a { spatial coincident indicator} based on the weighted average of the estimated common component (averaged over time). More precisely, the spatial coincident indicator  in space domain is defined as 
$${\rm CI}_{\s}^{(n)} =  \sum_{\ell =1}^n  \widetilde{w}_{\ell} \widetilde{\chi}^{(n)}_{\ell \s}, \quad \s=(s_1\ s_2)^\top, \quad  s_1 = 1, \ldots, S_1, \  s_2 = 1, \ldots, S_2,$$
where 
$\widetilde{\chi}^{(n)}_{\ell \s} = T^{-1} \sum_{t =1}^{T}  \widehat{\chi}^{(n)}_{\ell\vsbf}$
is the average of $\widehat{\chi}^{(n)}_{\ell\vsbf}$ taken over time, and the weight
$$\widetilde{w}_{\ell} = \frac{\sum_{s_1 =1}^{S_1} \sum_{s_1 =1}^{S_2} (\widetilde{\chi}^{(n)}_{\ell t})^2}{\sum_{\ell=1}^n \sum_{s_1 =1}^{S_1} \sum_{s_1 =1}^{S_2} (\widetilde{\chi}^{(n)}_{\ell t})^2}$$
is defined according to the level (in space domain) of the common component of subject $\ell$. 
%
%
%
%
To elaborate more on the ${\rm CI}_{\s}^{(n)}$, we now turn to the analysis of its spatial periodogram. Following \cite[Ch.~4]{MR17}, the spatial periodogram of rf $\{y_{s_1, s_2}; 1\leq s_1 \leq S_1, 1\leq s_2 \leq S_2\}$ is
$$I_{S_1, S_2}(\h) = \frac{1}{(2\pi)^2 S_1S_2} \left\Vert \sum_{s_1 =1}^{S_1} \sum_{s_1 =1}^{S_2} y_{s_1, s_2} e^{-i 2\pi \langle{\bf s} \cdot \h\rangle} \right\Vert^2, \quad \h \in \mathbb{R}^2.$$
  \begin{figure}[h]
 \begin{center}
 \includegraphics[scale = 0.6]{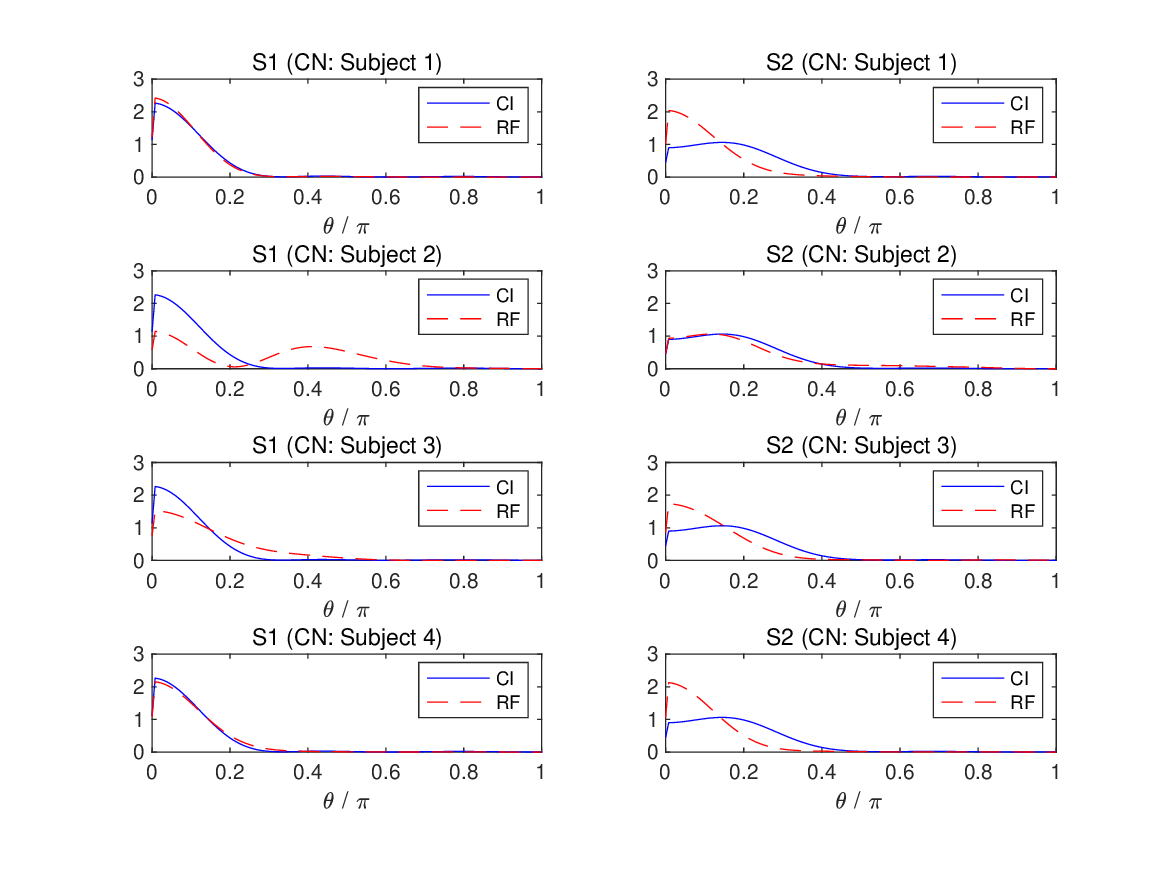}
 \caption{Plots of the periodograms of coincident indicator and rf  (averaged over time) for four randomly selected subjects in the CN subgroup. The spatial CI and rf are averaged over one spatial direction first. 
 } \label{Fig.PeriodogramS1S2}
 \end{center}
 \end{figure}
To evaluate the spatial periodograms over two marginal spatial directions, we process ${\rm CI}_{\s}^{(n)}$ and the rf  (averaged over time) in the following way: (i) take the average over one spatial direction of the data and (ii) compute the periodogram  as in the time domain for the other spatial direction. The periodograms of  ${\rm CI}_{\s}^{(n)}$ and rf along $s_1$ and $s_2$ directions are available in the left and right panels of Figure~\ref{Fig.PeriodogramS1S2}, respectively (coherently with the previous analysis for the time component,  the periodograms of the rf are rescaled). %
%
%
%
%
%
%
  We notice that for the second subject, the periodograms,  along $s_1$ direction, of the rf have two peaks, where the first one has the same frequency as the peak of the periodograms of ${\rm CI}_{\s}^{(n)}$. For all the other cases, the periodograms of  ${\rm CI}_{\s}^{(n)}$ and rf have a single peak that occurs at almost the same frequency. {In general, similarly to what we concluded looking at the time component only, we observe that the periodogram of  ${\rm CI}_{S}^{(n)}$ and the subject specific spatial periodogram have similar shapes.}

%
%
%

\color{black}

\clearpage
\subsection{fMRI data pre-processing} \label{subsec:preprocess}  
In the ADNI research, each subject's brain was scanned by a 3.0T Philips MR Ingenia Elition scanner and produced a 3D map of the voxels measuring the brain activity. The rs-fMRI ouput is firstly preprocessed using the software Data Processing Assistant for Resting-State fMRI (DPARSF) based on Statistical Parametric Mapping 12 (SPM12) on MATLAB. Next, for each participant we discard the first 10 time points to avoid the instability of the initial MRI signals. Moreover, we introduced an extra correction for the acquisition time delay and head motion in the images. The inclusion criteria is below 3 mm translation and  below 3$^\circ$ rotation for the head movements during the fMRI scan. After these corrections, the images were normalized to the standard Montreal Neurological Institute (MNI) template at a 3 mm $\times$ 3 mm $\times$ 3 mm resolution. The final resultant data were filtered through a temporal band-pass (0.01–0.1 Hz) to avoid the interferences of low-frequency drift and physiological noises.


\section{Additional simulation results}\label{App.sim}

We show that the proposed estimator $\widehat{\boldsymbol \Sigma}_n^x(\thbf)$ yields estimated  spatio-temporal dynamic eigenvalues having an eigen-gap under GSTFM (see Theorem~\ref{Th. q-DFS}). To this end, we set   $n = 100$,  $(S_1, S_2, T) = (15, 15, 15)$, and, for each MC run, we estimate the spectral density, as in  \eqref{hatSimgax}, and the spatio-temporal dynamic eigenvalues at 
selected  frequencies.  For the ease-of-computation, we set the frequencies on a $5\times 5\times 5$ equally spaced grid over $[0, \pi]^3$. To see how the eigen-gap changes with the dimension of the rf, we simulate different rf $\x_{m}$, with increasing dimension $m=1,\ldots,n$.  For each frequency over the grid, we treat each estimated eigenvalue as a function of $m$ and obtain $100$  curves (one curve for each MC run). 
To summarize the behavior of these curves,  we consider the largest $q+1$   eigenvalues averaged over the $125$ discrete frequencies and we plot the resulting average curves against $m$. To complete the picture, we repeat this analysis for different number of factors: $q = 2, 3, 4$.

In Figure~\ref{EigenAR_fbplot_q234_freqAve} and Figure \ref{EigenMA_fbplot_q234_freqAve} we display the related  {functional boxplots} (\cite{sun2011functional}). The figures illustrate that, even for small values of $m$, an eigen-gap is clearly detectable: 
 the first $q$ eigenvalues seem to diverge with $m$ almost linearly, while the $(q+1)$-th eigenvalue remains close to zero.  
 

 \begin{figure}[h!]
 \begin{center}
\begin{tabular}{ccc}
 \includegraphics[width=0.3\textwidth, height=0.4\textwidth]{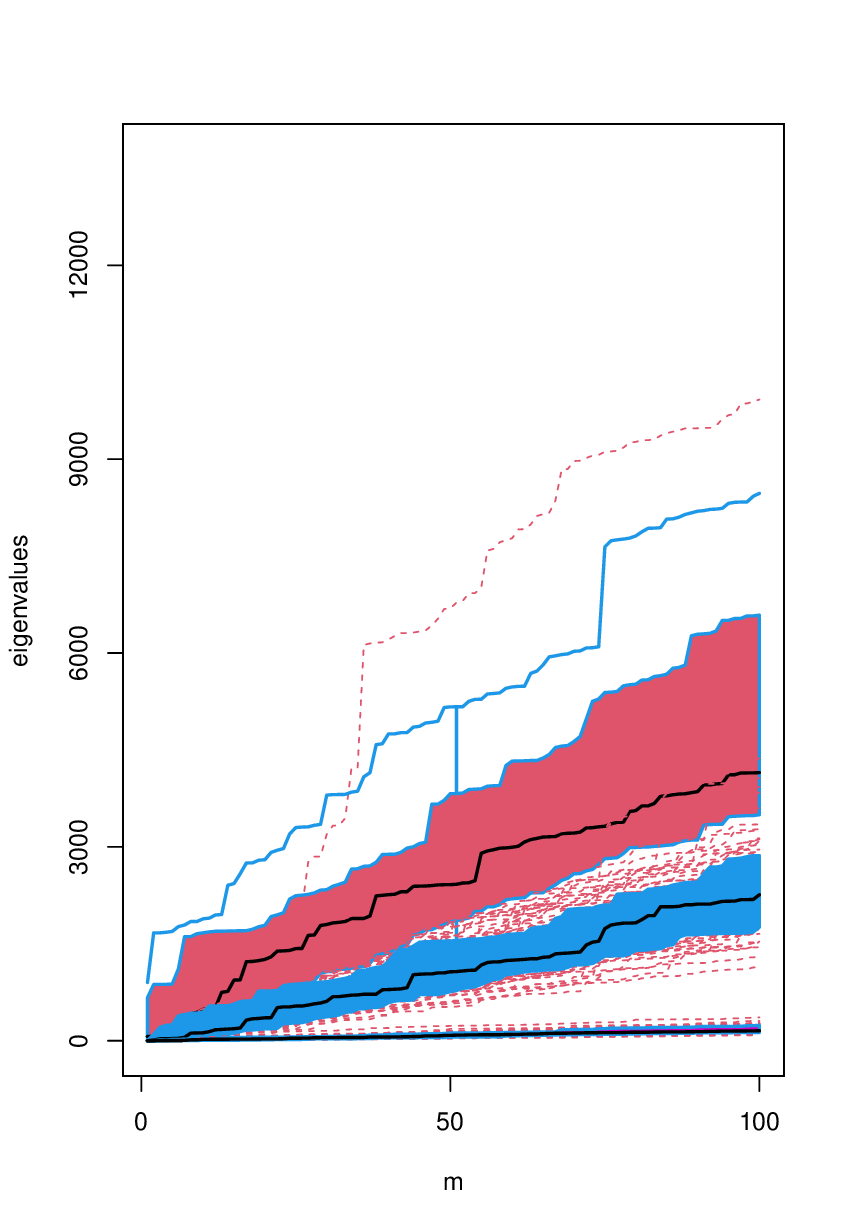}&
  \includegraphics[width=0.3\textwidth, height=0.4\textwidth]{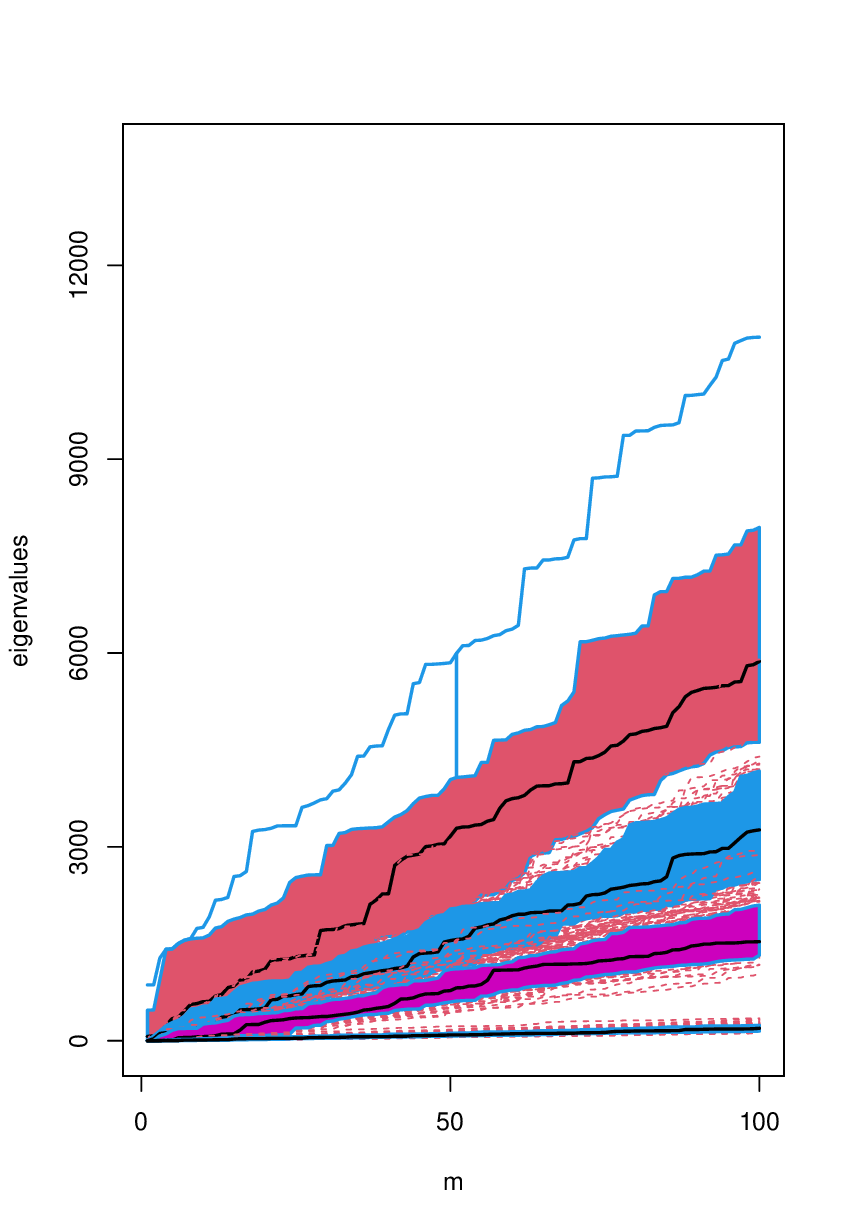}&
   \includegraphics[width=0.3\textwidth, height=0.4\textwidth]{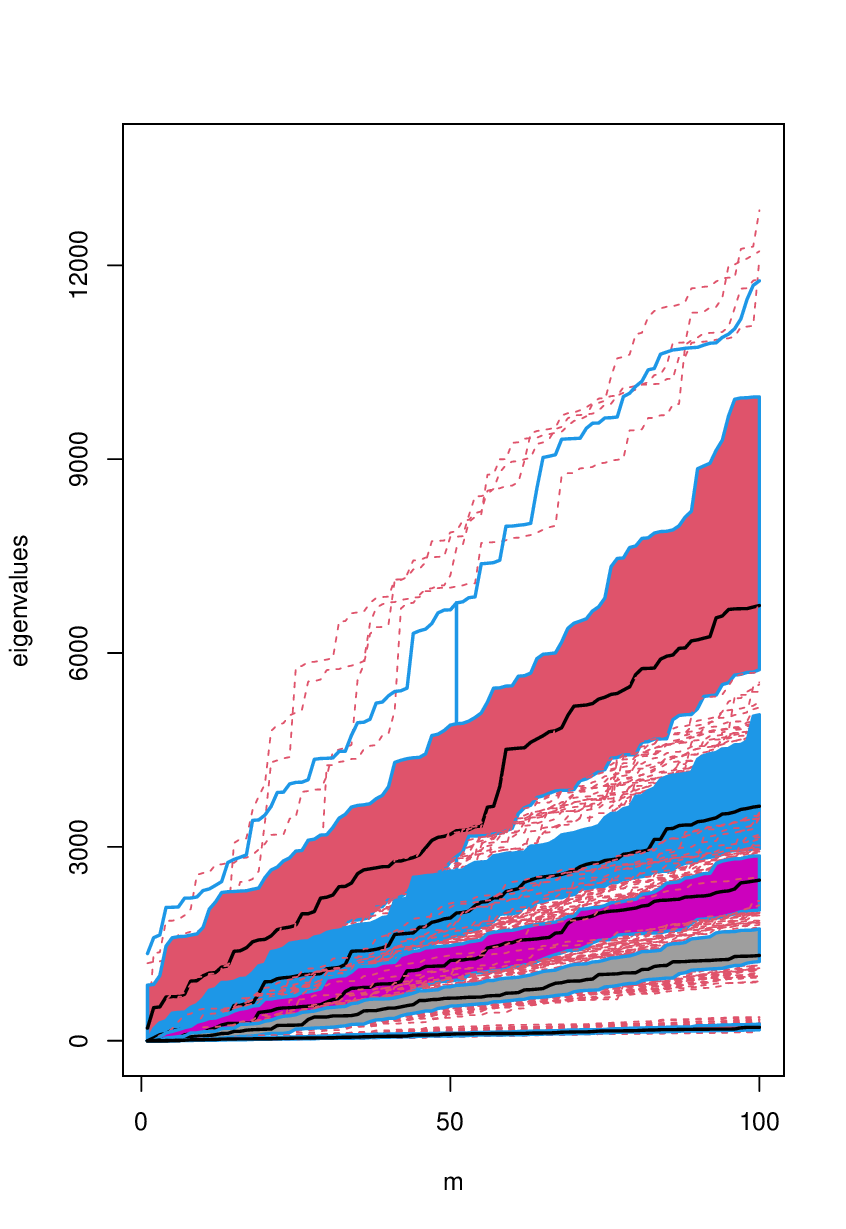}\\
   \end{tabular}
   \caption{Functional boxplots  of the largest $q+1$ spatio-temporal dynamic eigenvalues (averaged over all frequencies) for Model (a) in \eqref{modelAR}, with $n=100$, $(S_1,S_2,T)=(15,15,15)$, and $q = 2$ (left),  $q = 3$ (middle), and $q = 4$ (right). }
   \label{EigenAR_fbplot_q234_freqAve}
%
%
 \begin{tabular}{ccc}
 \includegraphics[width=0.3\textwidth, height=0.4\textwidth]{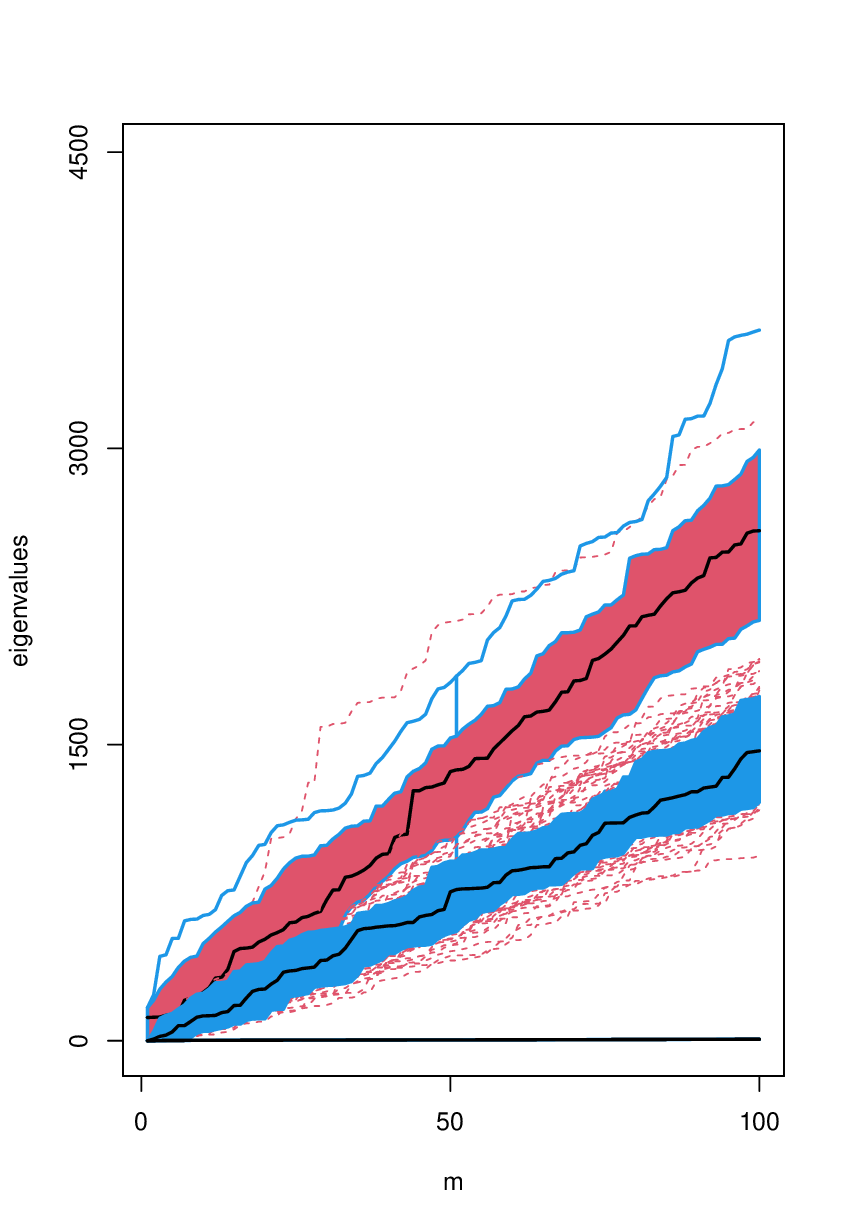}&
  \includegraphics[width=0.3\textwidth, height=0.4\textwidth]{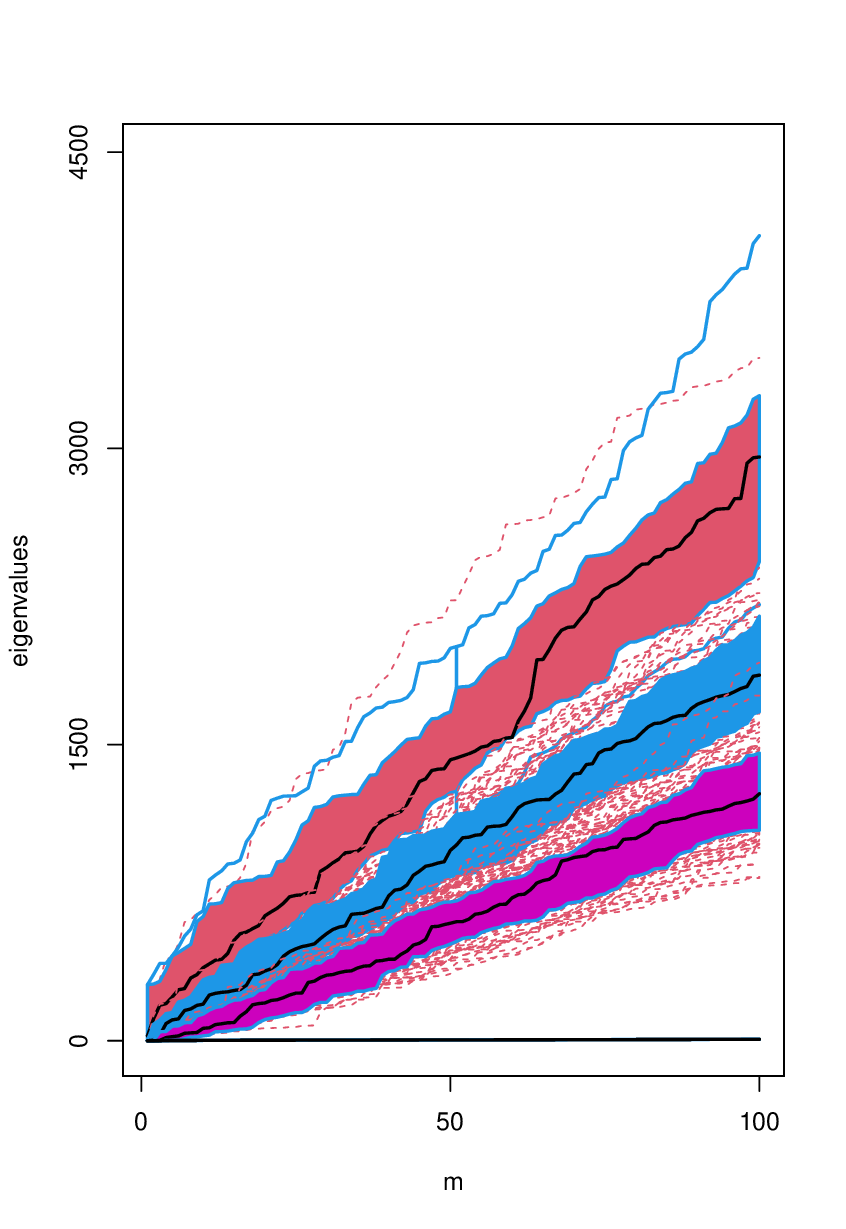}&
   \includegraphics[width=0.3\textwidth, height=0.4\textwidth]{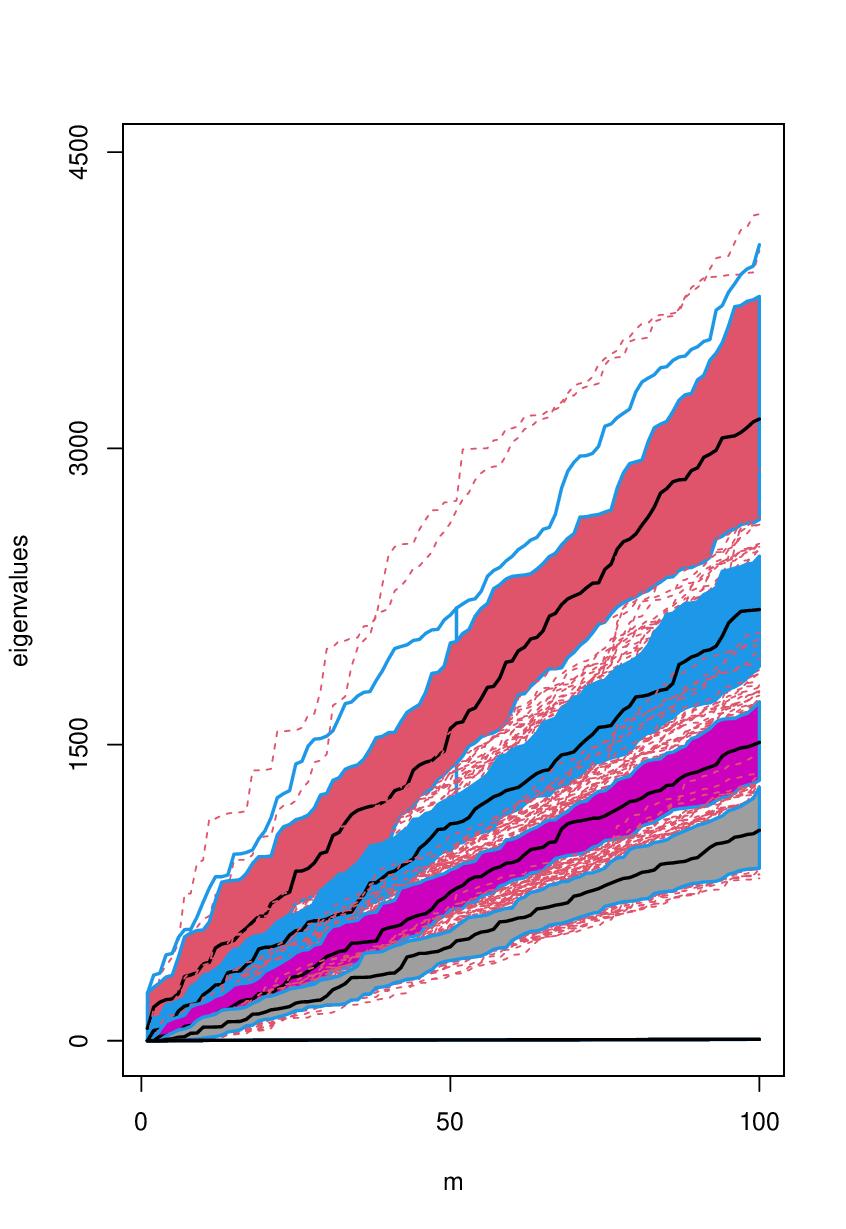}\\
      \end{tabular}
 \caption{Functional boxplots  of the largest $q+1$ spatio-temporal dynamic eigenvalues (averaged over all frequencies) for Model (b) in \eqref{modelMA}, with $n=100$, $(S_1,S_2,T)=(15,15,15)$, and $q = 2$ (left),  $q = 3$ (middle), and $q = 4$ (right).  } \label{EigenMA_fbplot_q234_freqAve} 
 \end{center} 
 \end{figure}


 \begin{figure}[h!]
 \begin{center}
 \begin{tabular}{ccc}
 \includegraphics[width=0.3\textwidth, height=0.4\textwidth]{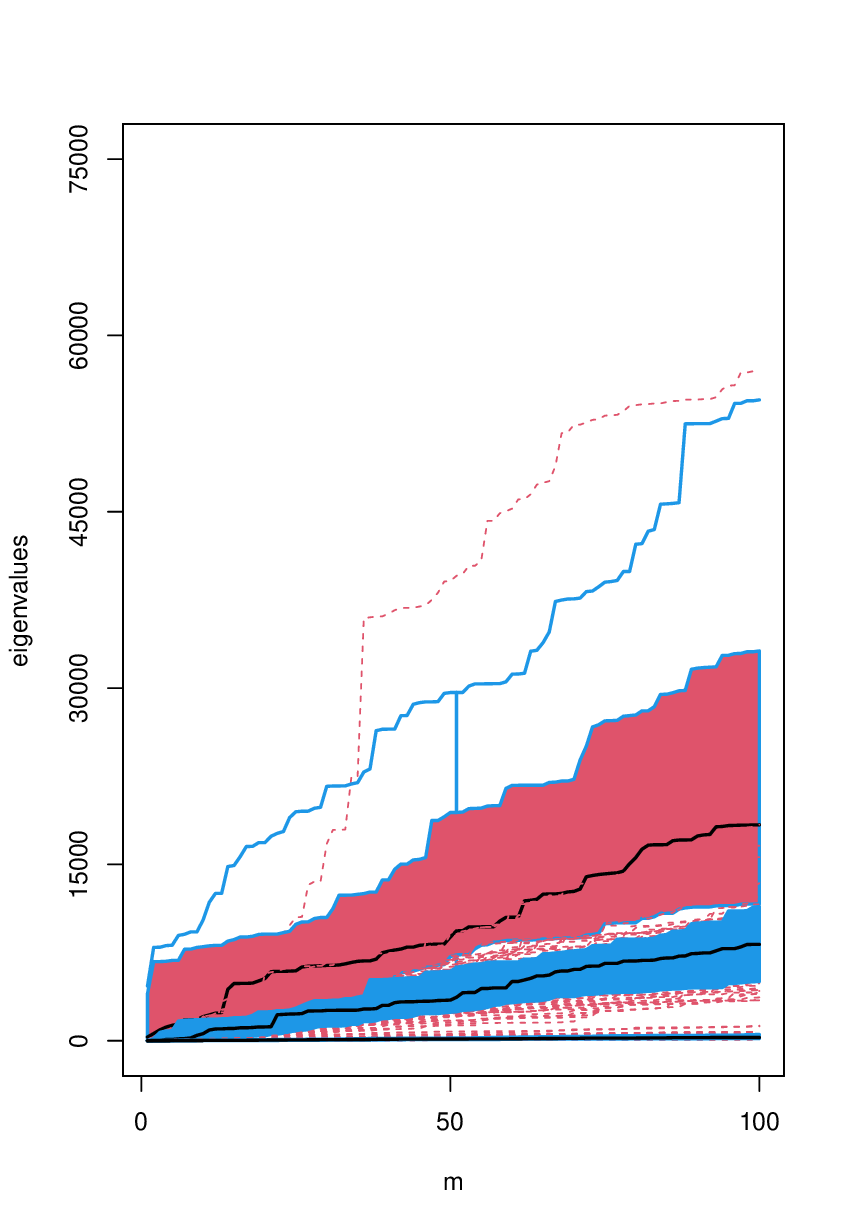}&
  \includegraphics[width=0.3\textwidth, height=0.4\textwidth]{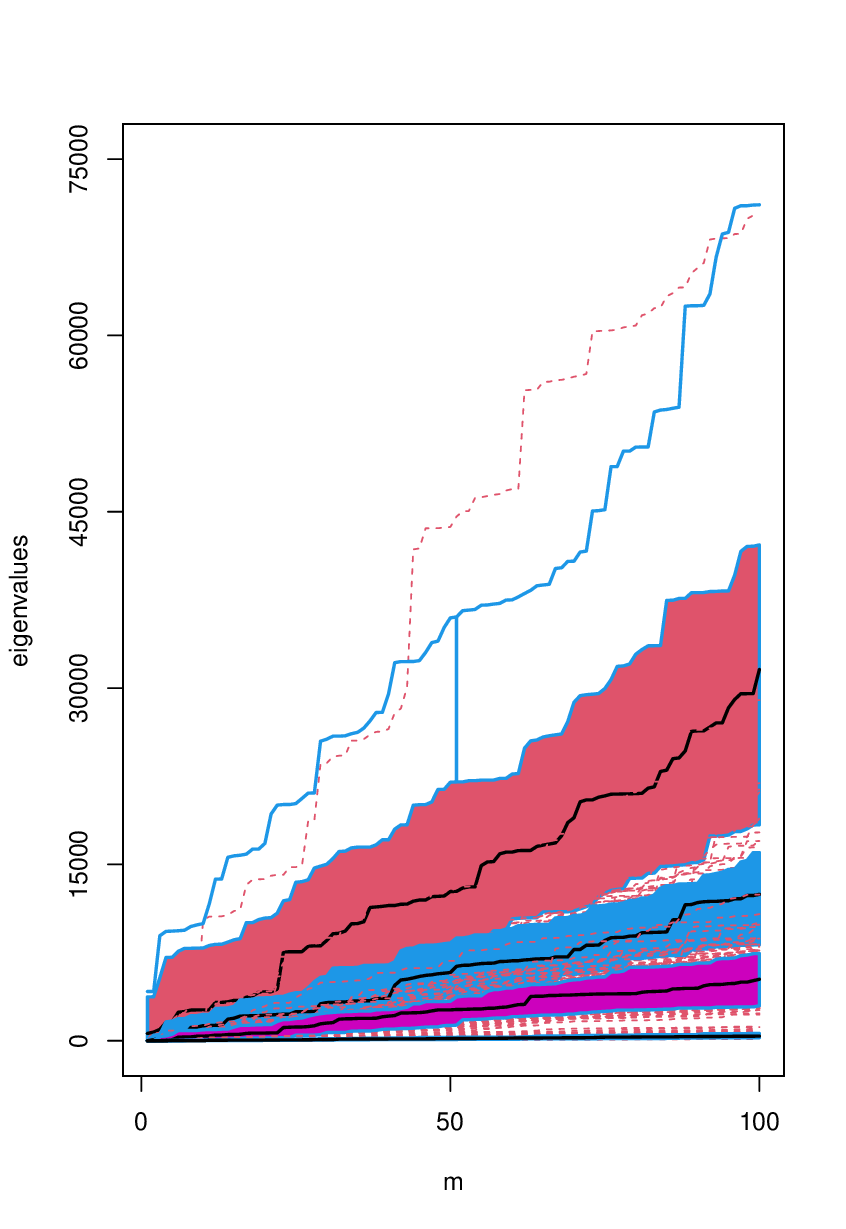}&
   \includegraphics[width=0.3\textwidth, height=0.4\textwidth]{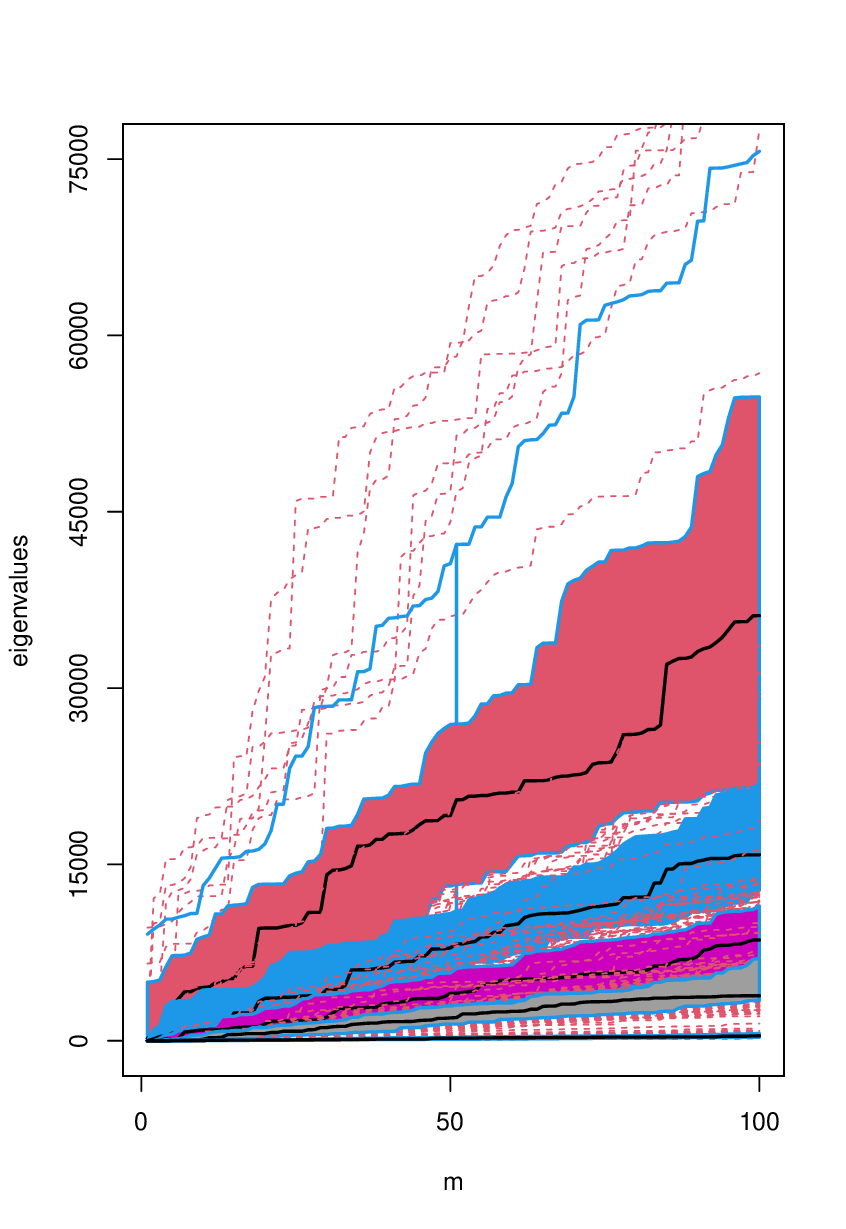}\\
      \end{tabular}
  \caption{Functional boxplots,  over $100$ replications,  of the largest $q+1$ spatio-temporal dynamic eigenvalues at frequency $0$ for Model (a) in \eqref{modelAR}, with $n = 100, (S_1, S_2, T) = (15, 15, 15)$, and $ q = 2$ (left),  $q = 3$ (middle), and $q=4$ (right).} \label{EigenAR_fbplot_q234_freq0} 
%
 \begin{tabular}{ccc}
 \includegraphics[width=0.3\textwidth, height=0.4\textwidth]{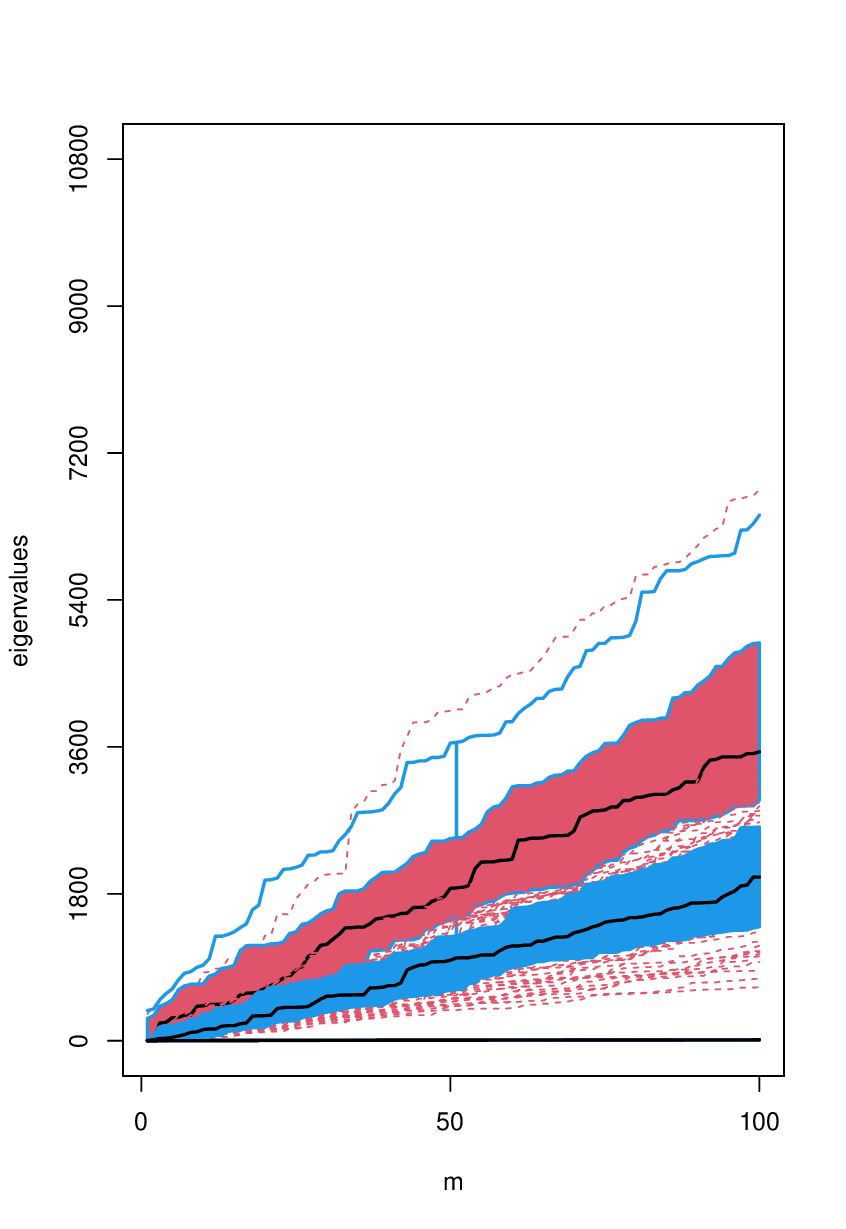}&
  \includegraphics[width=0.3\textwidth, height=0.4\textwidth]{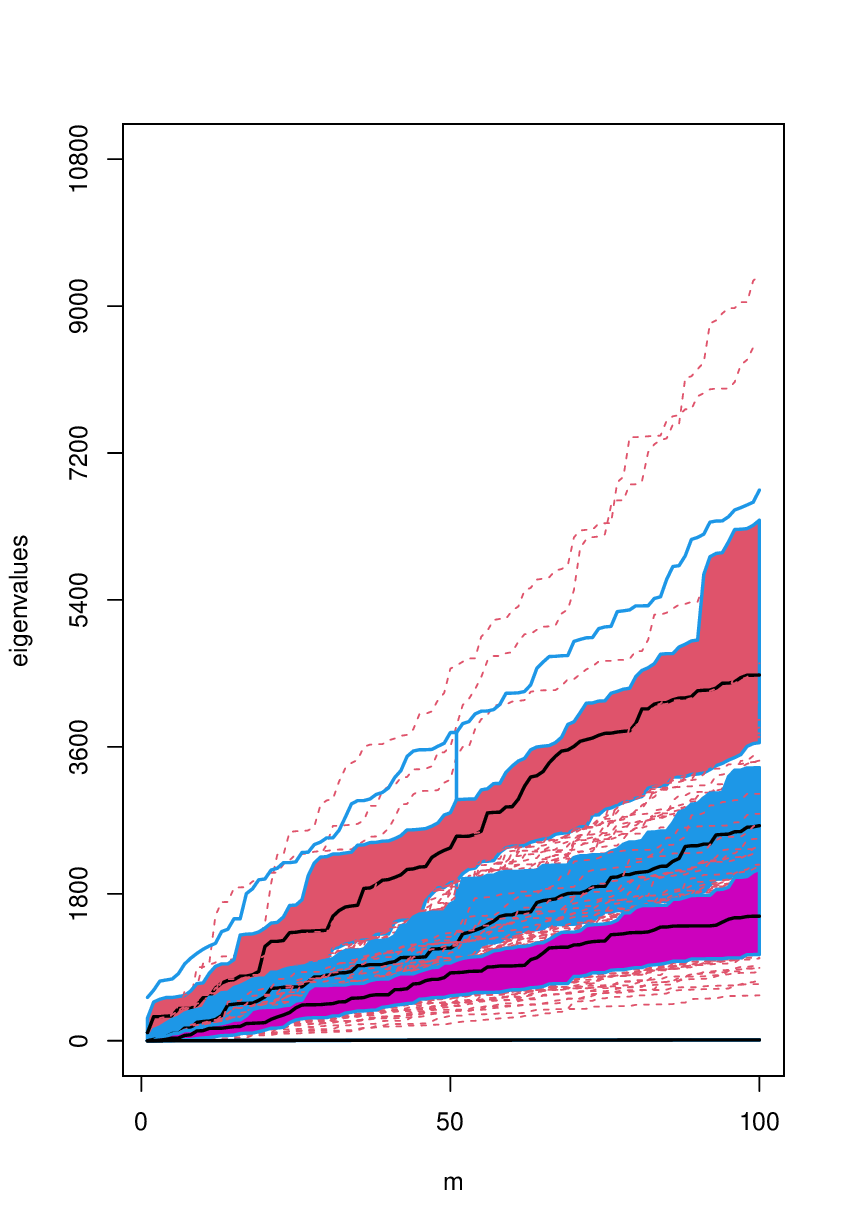}&
   \includegraphics[width=0.3\textwidth, height=0.4\textwidth]{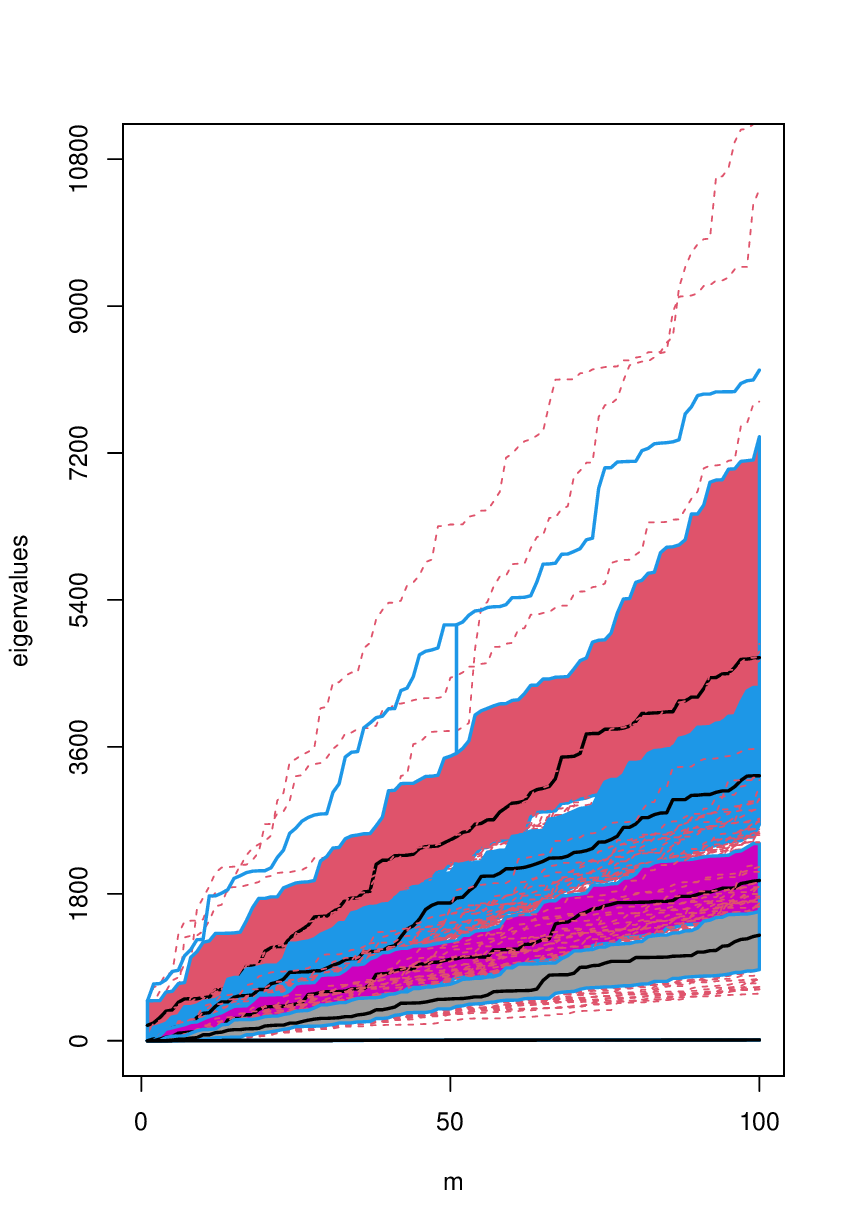}\\
      \end{tabular}
 \caption{Functional boxplots,  over $100$ replications,  of the largest $q+1$ spatio-temporal dynamic eigenvalues at frequency $0$ for Model (b) in  \eqref{modelMA}, with $n = 100, (S_1, S_2, T) = (15, 15, 15)$, and  $q = 2$ (left),  $q = 3$ (middle), and $q=4$ (right).} \label{EigenMA_fbplot_q234_freq0} 
 \end{center}
 \end{figure} 
 Figures~\ref{EigenAR_fbplot_q234_freq0} and \ref{EigenMA_fbplot_q234_freq0} display the functional boxplots of the largest $q+1$  eigenvalues at $0$ frequency for Model (a) in \eqref{modelAR} and Model (b) in \eqref{modelMA}, respectively. The left, middle and right panels in each figure are for $q = 2, 3, 4$, respectively. In each plot, the $50\%$ central regions of different eigenvalues are shown in different colours. The black curves in the central regions represent the sample median functions. The blue curves represent the envelope (i.e., 1.5 times the $50\%$ central region). The red dashed curves are the outliers outside the envelope.

 \begin{figure}[h!]
 \begin{center}
 \begin{tabular}{ccc}
 \includegraphics[width=0.3\textwidth, height=0.4\textwidth]{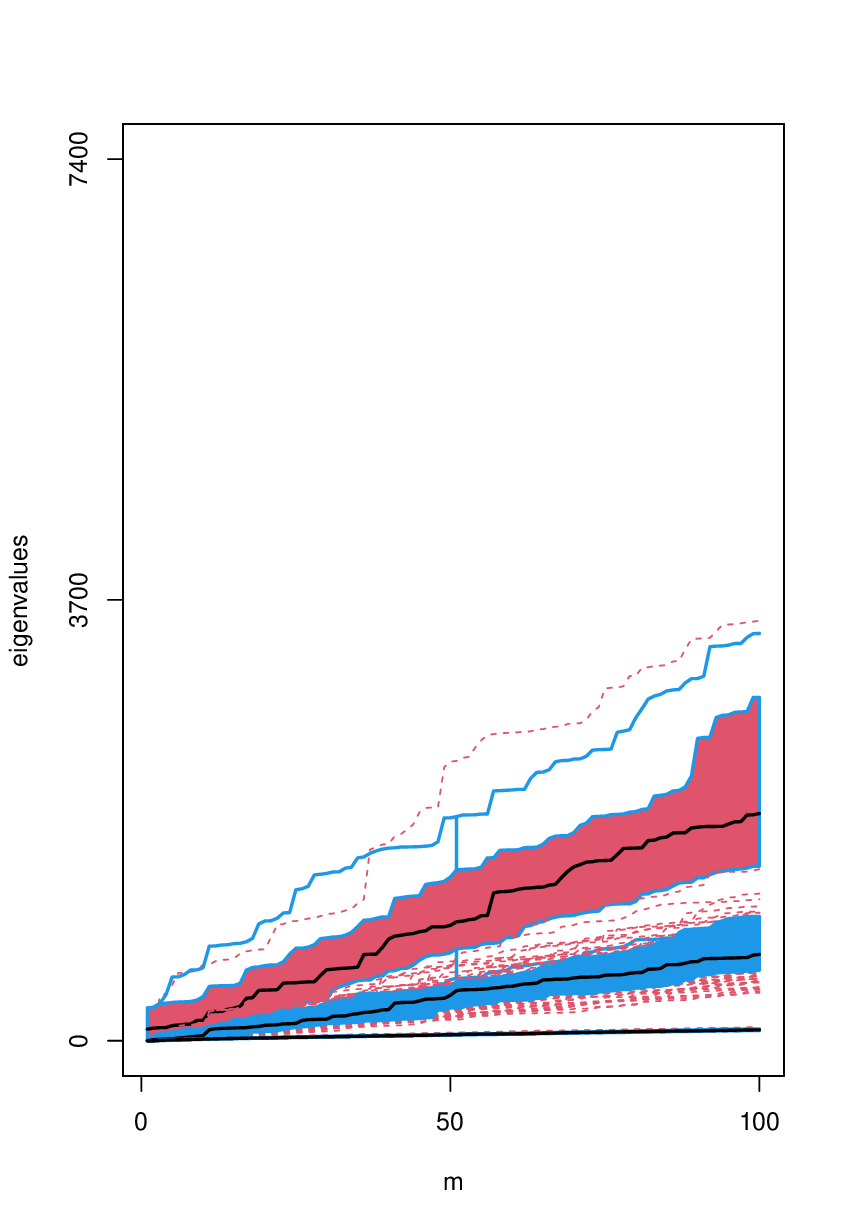} &
\includegraphics[width=0.3\textwidth, height=0.4\textwidth]{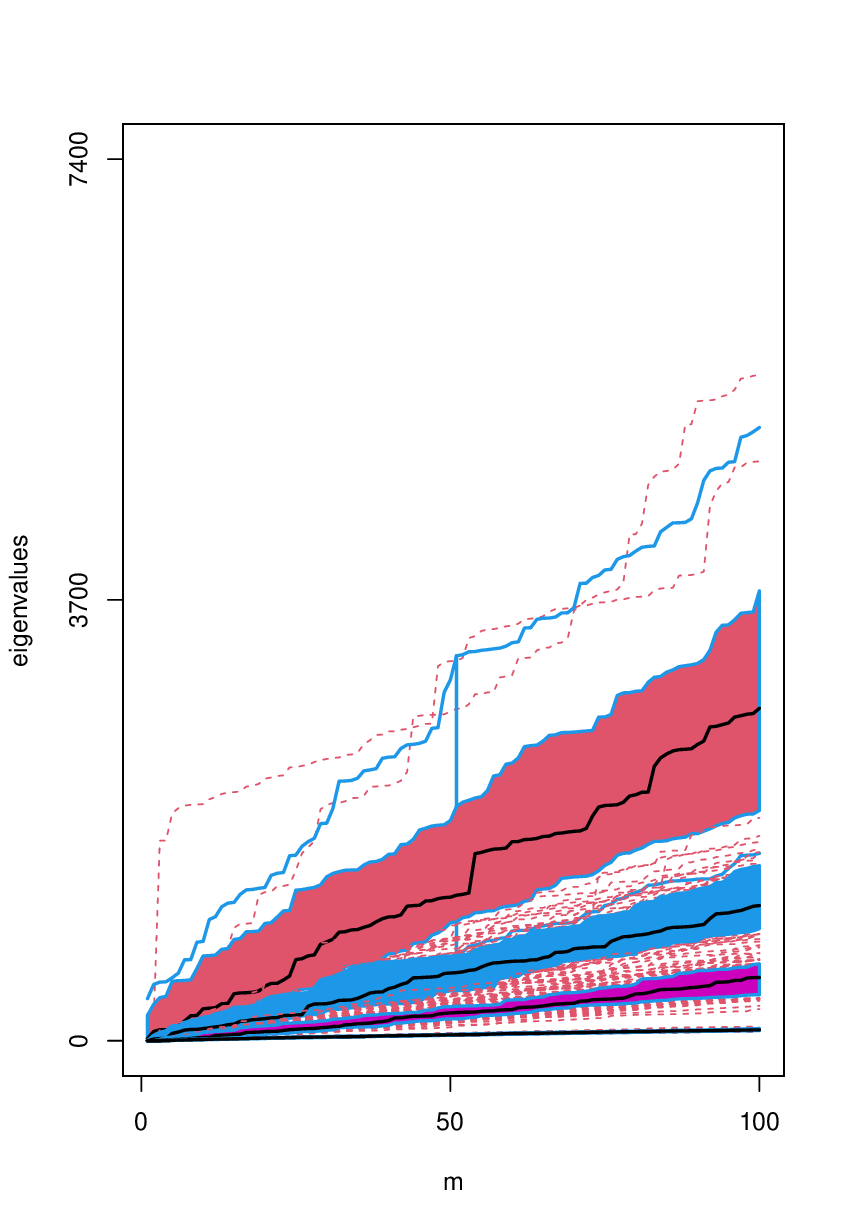} &
\includegraphics[width=0.3\textwidth, height=0.4\textwidth]{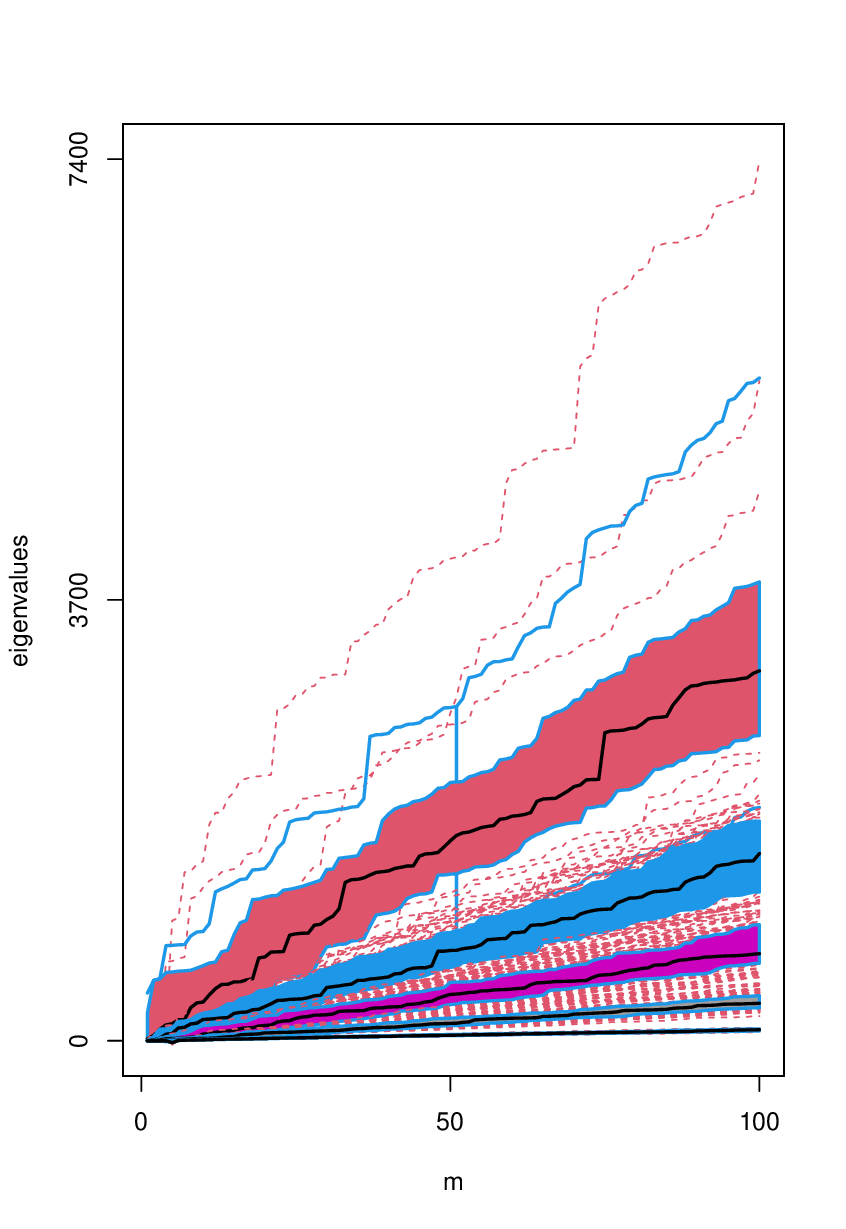}\\
 \end{tabular}
 \caption{Functional boxplots,  over $100$ replications,  of the largest $q+1$ spatio-temporal dynamic eigenvalues at frequency $0$ for Model (a) in \eqref{modelAR} with the idiosyncratic components generated as in \eqref{eq:neve}, and  with $n = 100, (S_1, S_2, T) = (15, 15, 15)$, and $ q = 2$ (left),  $q = 3$ (middle), and $q=4$ (right).} \label{EigenAR_fbplot_q234_freq0}
 \begin{tabular}{ccc}
 \includegraphics[width=0.3\textwidth, height=0.4\textwidth]{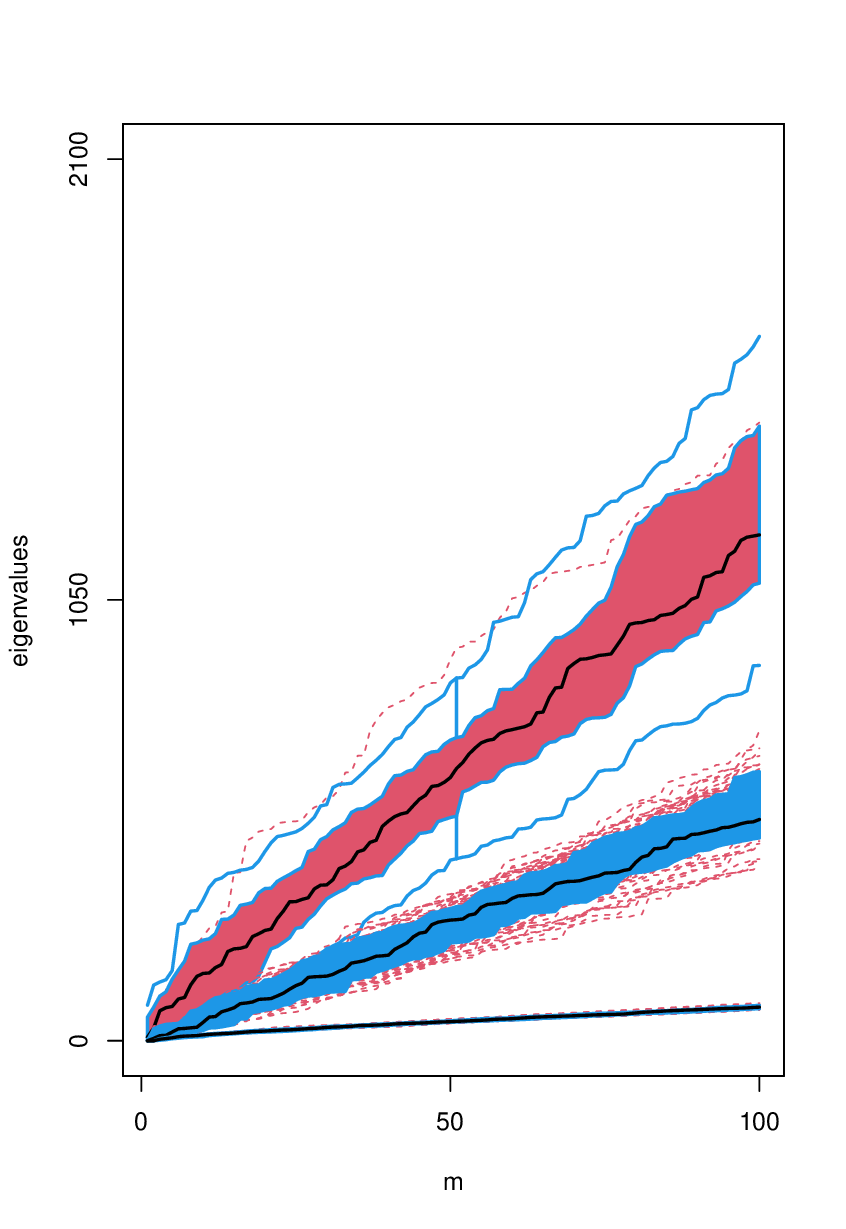} &
\includegraphics[width=0.3\textwidth, height=0.4\textwidth]{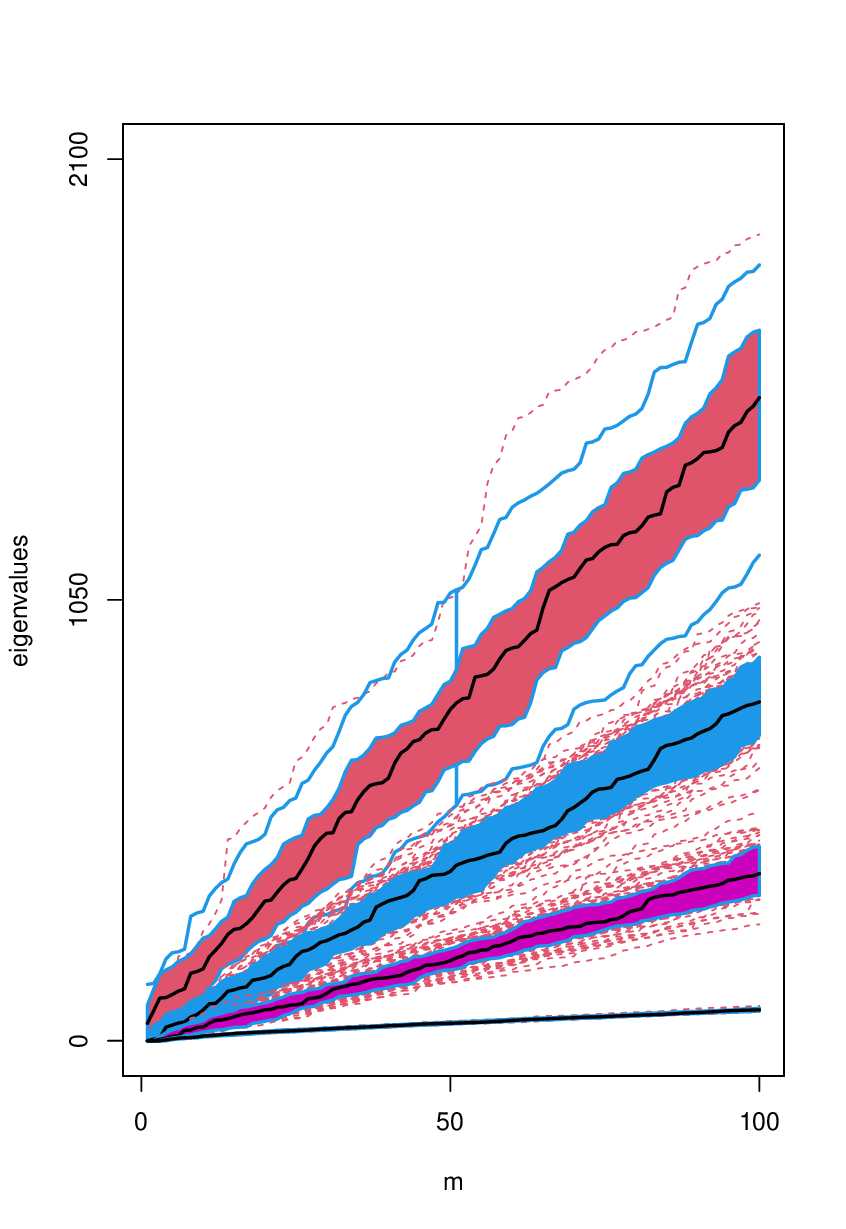} &
\includegraphics[width=0.3\textwidth, height=0.4\textwidth]{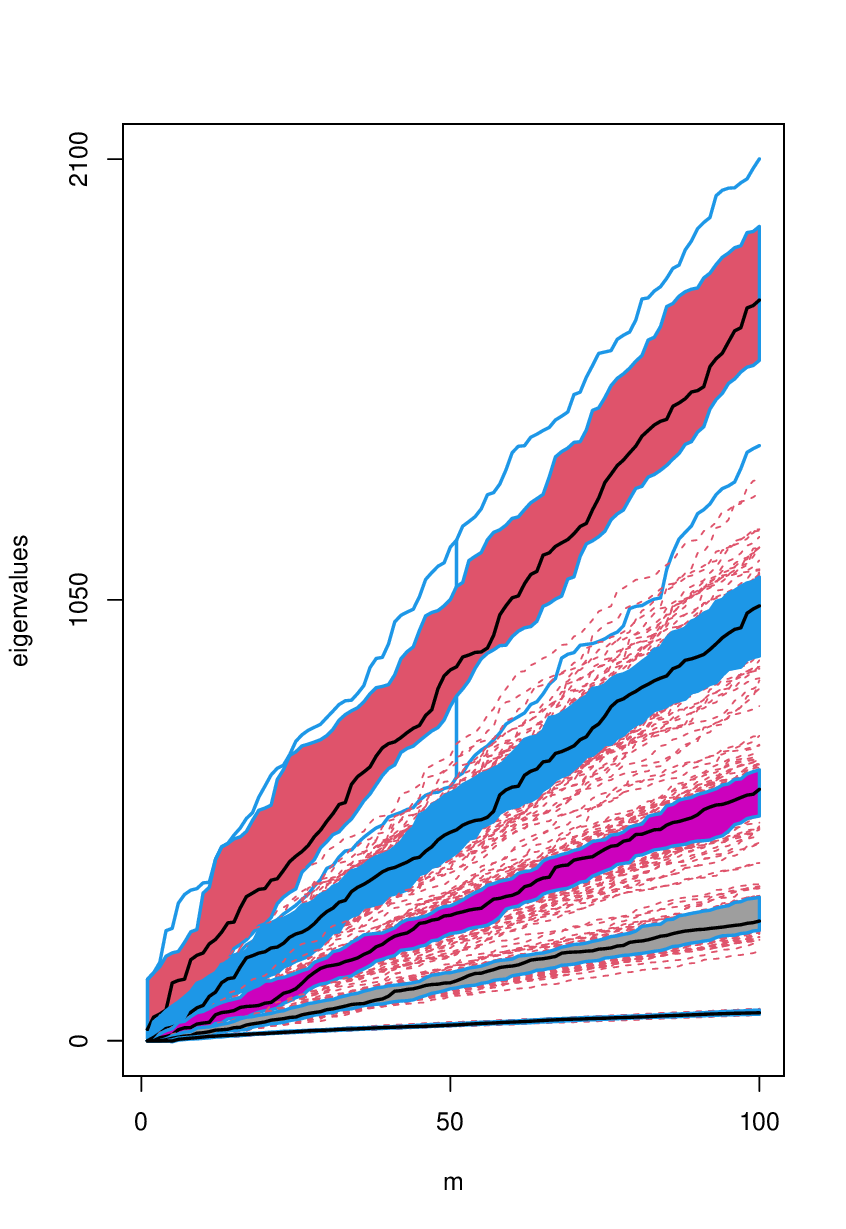}\\
 \end{tabular}
 \caption{Functional boxplots,  over $100$ replications,  of the largest $q+1$ spatio-temporal dynamic eigenvalues at frequency $0$ for Model (b) in  \eqref{modelMA} with the idiosyncratic components generated as in \eqref{eq:neve}, and with $n = 100, (S_1, S_2, T) = (15, 15, 15)$, and  $q = 2$ (left),  $q = 3$ (middle), and $q=4$ (right).} \label{Sim_Eigen_Modelb_GSTFR_idioAC} 
 \end{center} 
 \end{figure}
 Finally, since the GSTFM is able to accommodate also the presence of mildly cross-sectionally correlated idiosyncratic component, we illustrate the presence of the eigen-gap also in a novel  setting. Specifically, we assume that in Model (a) and Model (b),  the idiosyncratic component $\xi_{\ell \vsbf}$ is
\beq\label{eq:neve}
\xi_{\ell \vsbf} = \sum_{\kbf = (-1 \ -1 \ 0)}^{(1 \ 1 \ 1)} \sum_{j=0}^4 0.5^{\vert \kappa_1 \vert + \vert \kappa_2 \vert + \vert \kappa_3  \vert + j} c_{\ell j \kbf} L^{\kbf} {v}_{\ell+j,\vsbf},
\eeq
where ${v}_{\ell,\vsbf}$ from i.i.d. standard normal distribution and $c_{\ell i \kbf}$ from i.i.d. uniform distributions on $[0.5, 0.8]$ are independent. By design, the $\xi_{\ell \vsbf}$'s are autocorrelated in cross-section, space and time. In Figures \ref{EigenAR_fbplot_q234_freq0} and \ref{Sim_Eigen_Modelb_GSTFR_idioAC}, we show that also in this case the eigen-gap is clearly detectable.
$\,$
 \end{supplement}
 
\end{document}